\tikzset{->, auto, >=stealth', font=\small}
\tikzset{state/.style={shape=circle, draw, fill=white, initial text=,
    inner sep=.5mm, minimum size=1.5mm}}
\tikzset{accepting/.style=accepting by arrow}
\tikzset{state with output/.style={shape=rectangle split, rectangle
    split parts=2, draw, fill=white,
    initial text=, inner sep=1mm}}
\newcommand\pomsetwop[4]{
  \vcenter{\xymatrix@1@R=#1@C=#2@M=#3{#4}}%
}
\newcommand\pomset[2][1.3]{%
  \left(%
    \pomsetwop{0ex}{#1em}{2pt}{#2}%
  \right)%
}
\newcommand\ipomset[2][1.5]{%
  \hspace*{.6em}%
  \left(%
    \hspace*{-#1em}%
    \pomsetwop{2.7ex}{#1em}{1pt}{#2}%
    \hspace*{-#1em}%
  \right)%
  \hspace*{.6em}%
}
\newcommand*\ie{\textit{i.e.,}\xspace}
\newcommand*\para{\mathrel{\|}}
\newcommand*\twotwo{\textsf{2+2}}
\newcommand*\etal{\textit{et al.}\xspace}
\newcommand*\bang{\mathord{!}}
\newcommand*\bbang{\mathord{!\hspace{-.2ex}!}}
\newcommand*\intord{\dashrightarrow} 
\newcommand*\face{\triangleleft}
\newcommand*\ecaf{\triangleright}
\newcommand*\dotsquare{%
  \ensuremath{\text{%
      $\square$\llap{\raisebox{.2ex}{$\cdot$\hspace*{.55ex}}}}}}
\newcommand*\pobj[1]{\square^{#1}}
\newcommand*\incomp{\mathrel{\|}}
\newcommand*\cat[1]{\text{\textup{\textsf{#1}}}}
\newcommand*\Pos{\cat{Pos}}
\newcommand*\Poms{\cat{Poms}}
\newcommand*\exec{%
  \raisebox{1pt}{%
    \begin{tikzpicture}[x=.8ex,y=1ex,-]
      \draw (0,0) -- (1,0) -- (1,1) -- (2,1);
    \end{tikzpicture}}}
\newcommand*\dotDelta{\ensuremath{\text{%
      $\Delta$\llap{\raisebox{-.5pt}{$\cdot\,$}}}}}
\newcommand*\tPos{\dotDelta}
\newcommand*\op{\textup{\textsf{op}}}
\newcommand*\Set{\cat{Set}}
\newcommand*\mcal[1]{\mathcal{#1}}
\newcommand*\rest[1]{{}_{| #1}}
\newcommand*\yoneda{\mathbf{y}}
\newcommand*\ineda{\mathbf{i}}
\newcommand*\jneda{\mathbf{j}}
\newcommand*\Inj{\cat{Inj}}
\newcommand*\pcSet{\Set^{\square^\op}}
\newcommand*\ev{\textup{\textsf{ev}}}
\newcommand*\eveq{\sim_\ev}
\newcommand*\subid[3]{{}_{#1}#2_{#3}}
\newcommand*\subsu{\sqsubseteq}
\newcommand*\id{\textup{\textsf{id}}}
\newcommand*\iPoms{\cat{iPoms}}
\newcommand*\iiPoms{\cat{iiPoms}}
\newcommand*\Track[1]{\cat{Track}(#1)}
\newcommand*\precchu{<_{\textup{C}}}
\newcommand\red[1]{\textcolor{red}{#1}}
\let\po\vec
\newcommand*\Real{\mathbbm{R}}
\newcommand*\dTop{\cat{dTop}}
\newcommand*\georel[1]{\mathopen| #1\mathclose|}
\newcommand*\intimg[1]{\mathopen] #1\mathclose[}
\newcommand*\carr{\textup{\textsf{carr}}}
\newcommand*\oil{\mathopen]}
\newcommand*\oir{\mathclose[}
\newcommand*\bigoil{\bigl]}
\newcommand*\bigoir{\bigr[}
\newcommand*\IFF{\Longleftrightarrow}
\newcommand*\down{\mathord{\downarrow}}
\newcommand*\cf{\textit{cf.}\xspace}
\newcommand*\biggdown{\mathord{\bigg\downarrow}}
\newcommand*\from{\leftarrow}
\def\X{\mathcal{X}}
\def\Y{\mathcal{Y}}
\theoremstyle{remarkstyle}\newtheorem{example}[therm]{Example}
\begin{document}

\jnlPage{1}{00}
\jnlDoiYr{2020}
\doival{10.1017/xxxxx}

\lefttitle{U.\ Fahrenberg, C.\ Johansen, G.\ Struth and K.\ Ziemiański}
\righttitle{Languages of Higher-Dimensional Automata}
\title{Languages of Higher-Dimensional Automata}

\begin{authgrp}
\author{Uli Fahrenberg}
\affiliation{{\'E}cole Polytechnique, Palaiseau, France}

\author{Christian Johansen}
\affiliation{Norwegian University of Science and Technology, Norway}

\author{Georg Struth}
\affiliation{University of Sheffield, UK}

\author{~~Krzysztof Ziemia{\'n}ski}
\affiliation{University of Warsaw, Poland}
\end{authgrp}

\begin{abstract}
  We introduce languages of higher-dimensional automata (HDAs) and
  develop some of their properties.  To this end, we define a new
  category of precubical sets, uniquely naturally isomorphic to the
  standard one, and introduce a notion of event consistency.  HDAs are
  then finite, labeled, event-consistent precubical sets with
  distinguished subsets of initial and accepting cells.  Their
  languages are sets of interval orders closed under subsumption; as a
  major technical step we expose a bijection between interval orders
  and a subclass of HDAs.  We show that any finite subsumption-closed
  set of interval orders is the language of an HDA, that languages of
  HDAs are closed under binary unions and parallel composition, and
  that bisimilarity implies language equivalence.

  MSC 2020: Primary 68Q70, 68Q85
\end{abstract}

\begin{keywords}
  Higher-dimensional automaton; concurrency theory; pomset; directed
  topology
\end{keywords}

\maketitle

\section{Introduction}

Higher-dimensional automata (HDAs) are a formalism for modeling and
reasoning about behaviors of concurrent systems, introduced by Pratt
\cite{Pratt91-geometry} and van Glabbeek \cite{Glabbeek91-hda}.  Like
Petri nets \cite{book/Petri62}, event structures
\cite{DBLP:journals/tcs/NielsenPW81}, configuration structures
\cite{DBLP:conf/lics/GlabbeekP95, DBLP:journals/tcs/GlabbeekP09},
asynchronous transition systems \cite{Bednarczyk87-async,
  DBLP:journals/cj/Shields85}, and similar approaches
\cite{pratt95chu, DBLP:journals/acta/GlabbeekG01, Pratt03trans_cancel,
  P15jlamp_STstruct}, they form a model of non-interleaving
concurrency as they differentiate between interleaving and ``truly''
concurrent computations, \ie $a\| b\ne a. b+ b. a$ (using CCS notation
\cite{book/Milner89}).  Van Glabbeek
\cite{DBLP:journals/tcs/Glabbeek06} has shown that HDAs generalize
``the main models of concurrency proposed in the literature'',
including those mentioned above.

\begin{figure}[bp]
  \centering
  \begin{tikzpicture}
    \begin{scope}[x=1.5cm, state/.style={shape=circle, draw,
        fill=white, initial text=, inner sep=1mm, minimum size=3mm}]
      \node[state, black] (10) at (0,0) {};
      \node[state, rectangle] (20) at (0,-1) {$\vphantom{b}a$};
      \node[state] (30) at (0,-2) {};
      \node[state, black] (11) at (1,0) {};
      \node[state, rectangle] (21) at (1,-1) {$b$};
      \node[state] (31) at (1,-2) {};
      \path (10) edge (20);
      \path (20) edge (30);
      \path (11) edge (21);
      \path (21) edge (31);
      \node[state, black] (m) at (.5,-1) {};
      \path (20) edge[out=15, in=165] (m);
      \path (m) edge[out=-165, in=-15] (20);
      \path (21) edge[out=165, in=15] (m);
      \path (m) edge[out=-15, in=-165] (21);
    \end{scope}
    \begin{scope}[xshift=4cm]
      \node[state] (00) at (0,0) {};
      \node[state] (10) at (-1,-1) {};
      \node[state] (01) at (1,-1) {};
      \node[state] (11) at (0,-2) {};
      \path (00) edge node[left] {$\vphantom{b}a$\,} (10);
      \path (00) edge node[right] {\,$b$} (01);
      \path (10) edge node[left] {$b$\,} (11);
      \path (01) edge node[right] {\,$\vphantom{b}a$} (11);
    \end{scope}
  \end{tikzpicture}
  \qquad\qquad
  \begin{tikzpicture}
    \begin{scope}[xshift=8cm]
      \path[fill=black!15] (0,0) to (-1,-1) to (0,-2) to (1,-1);
      \node[state] (00) at (0,0) {};
      \node[state] (10) at (-1,-1) {};
      \node[state] (01) at (1,-1) {};
      \node[state] (11) at (0,-2) {};
      \path (00) edge node[left] {$\vphantom{b}a$\,} (10);
      \path (00) edge node[right] {\,$b$} (01);
      \path (10) edge node[left] {$b$\,} (11);
      \path (01) edge node[right] {\,$\vphantom{b}a$} (11);
    \end{scope}
    \begin{scope}[x=1.5cm, state/.style={shape=circle, draw,
        fill=white, initial text=, inner sep=1mm, minimum size=3mm},
      xshift=10.5cm]
      \node[state, black] (10) at (0,0) {};
      \node[state, rectangle] (20) at (0,-1) {$\vphantom{b}a$};
      \node[state] (30) at (0,-2) {};
      \node[state, black] (11) at (1,0) {};
      \node[state, rectangle] (21) at (1,-1) {$b$};
      \node[state] (31) at (1,-2) {};
      \path (10) edge (20);
      \path (20) edge (30);
      \path (11) edge (21);
      \path (21) edge (31);
    \end{scope}
  \end{tikzpicture}
  \bigskip\bigskip
  \caption{Petri net and HDA models distinguishing interleaving (left)
    from non-interleaving (right) concurrency.  Left: Petri net and
    HDA models for $a. b+ b. a$; right: HDA and Petri net models for
    $a\para b$.}
  \label{fi:int-conc}
\end{figure}
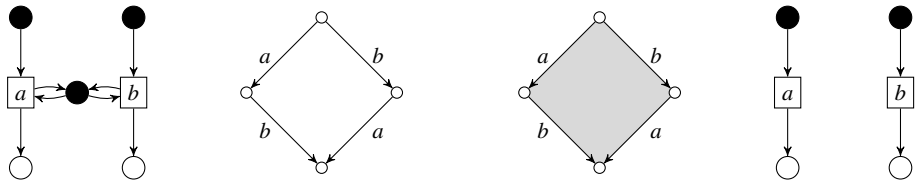

HDAs extend finite automata with additional structure that
distinguishes interleavings from concurrency.  As an example, Figure
\ref{fi:int-conc} shows Petri net and HDA models for a system with two
events, labeled $a$ and $b$.  The Petri net and HDA on the left model
the (mutually exclusive) interleaving of $a$ and $b$ as either $a. b$
or $b. a$; those on the right model concurrent execution of $a$ and
$b$.  In the HDA, this independence is indicated by a filled-in
square.

HDAs thus have states and transitions like finite automata, but may
also contain squares, cubes, and higher-dimensional cubical
structures.  A square stands for the concurrent execution of two
events; a cube for the concurrent execution of three events; and so
on.

\begin{figure}
  \centering
  \begin{tikzpicture}[x=1.5cm, y=1.5cm]
    \path[fill=black!15] (0,0) to (2,0) to (2,1) to (1,1) to
    (0,1);
    \node[state, initial] (00) at (0,0) {};
    \node[state] (10) at (1,0) {};
    \node[state] (20) at (2,0) {};
    \node[state] (01) at (0,1) {};
    \node[state] (11) at (1,1) {};
    \node[state, accepting] (21) at (2,1) {};
    \path (00) edge node[below] {$\vphantom{d}c$} (10);
    \path (10) edge node[below] {$d$} (20);
    \path (01) edge node[above] {$c$} (11);
    \path (11) edge node[above] {$d$} (21);
    \path (00) edge node[left] {$\vphantom{y}a$} (01);
    \path (10) edge (11);
    \path (20) edge node[right] {$\vphantom{y}a$} (21);
  \end{tikzpicture}
  \bigskip\medskip
  \caption{HDA which executes $a$ in parallel with $c. d$.  Initial
    and accepting cells marked with incoming and outgoing arrows.}
  \label{fi:hda-a|cd}
\end{figure}
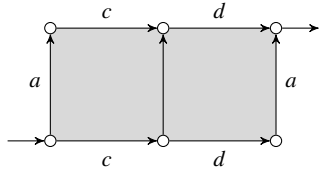

This paper is concerned with \emph{languages} of HDAs.  Like languages
related to other formalisms for concurrency, these need to account for
both the sequential and the concurrent nature of computations. Their
elements will therefore be finite \emph{pomsets} or \emph{partial
  words} \cite{DBLP:journals/ipl/Winkowski77}.  As an example, Figure
\ref{fi:hda-a|cd} displays an HDA consisting of two squares, with
three events labeled $a$, $c$, and $d$.  Here the $a$-labeled event is
executed concurrently to the sequence $c. d$, so that the language of
this HDA will contain the pomset
\begin{equation}
  \label{eq:a|cd}
  \pomset[.4]{ & a \\ c \ar[rr] && d}.
\end{equation}
(It will contain other elements; but in a sense to be made precise
below, they are all generated by this one pomset.)

Partial words and pomsets have been introduced by Winkowski
\cite{DBLP:journals/ipl/Winkowski77} and have a long history as
semantics for concurrent systems \cite{Pratt86pomsets,
  DBLP:books/sp/Vogler92}.  The subclass of \emph{interval orders},
introduced by Fishburn \cite{journals/mpsy/Fishburn70}, has seen
abundant attention in concurrency theory and distributed systems
\cite{DBLP:journals/dc/Lamport86,
  DBLP:journals/jacm/Lamport86a, 
  DBLP:journals/dc/Vogler91, DBLP:books/sp/Vogler92,
  DBLP:journals/tcs/JanickiK93, DBLP:journals/toplas/HerlihyW90,
  DBLP:journals/iandc/JanickiY17, DBLP:conf/RelMiCS/FahrenbergJST20}.
A pomset is an interval order precisely if it is
\emph{$\twotwo$-free}, that is, does not contain an induced subpomset
of the form
\begin{equation*}
  \twotwo= \pomset{\bullet \ar[r] & \bullet \\ \bullet \ar[r] & \bullet}.
\end{equation*}

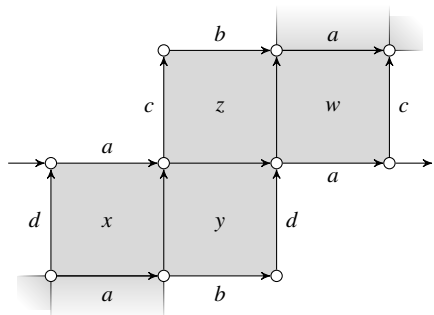
\begin{figure}[bp]
  \centering
  \begin{tikzpicture}[x=1.5cm,y=1.5cm]
    \path[fill=black!15] (0,0) to (2,0) to (2,1) to (3,1) to (3,2) to
    (1,2) to (1,1) to (0,1);
    \fill[-, fill=black!15, path fading=east, fading
    transform={rotate=45}] (3,2) -- (3.3,2) -- (3.3,2.3) -- (3,2.3) --
    (3,2);
    \filldraw[-, fill=black!15, path fading=north] (2,2.4) -- (2,2) --
    (3,2) -- (3,2.4);
    \draw[-, path fading=east] (3,2) -- (3.3,2);
    \fill[-, fill=black!15, path fading=west, fading
    transform={rotate=45}] (0,0) -- (-.3,0) -- (-.3,-.3) -- (0,-.3) --
    (0,0);
    \filldraw[-, fill=black!15, path fading=south] (0,-.4) -- (0,0) --
    (1,0) -- (1,-.4);
    \draw[-, path fading=west] (0,0) -- (-.3,0);
    \node[state] (00) at (0,0) {};
    \node[state] (10) at (1,0) {};
    \node[state] (20) at (2,0) {};
    \node[state, initial] (01) at (0,1) {};
    \node[state] (11) at (1,1) {};
    \node[state] (21) at (2,1) {};
    \node[state, accepting] (31) at (3,1) {};
    \node[state] (12) at (1,2) {};
    \node[state] (22) at (2,2) {};
    \node[state] (32) at (3,2) {};
    \node at (.5,.5) {$\vphantom{d}x$};
    \node at (1.5,.5) {$\vphantom{d}\smash[b]{y}$};
    \node at (1.5,1.5) {$z$};
    \node at (2.5,1.5) {$w$};
    \path (00) edge node[below] {$\vphantom{b}a$} (10);
    \path (10) edge node[below] {$b$} (20);
    \path (01) edge node[above] {$a$} (11);
    \path (11) edge (21);
    \path (21) edge node[below] {$a$} (31);
    \path (12) edge node[above] {$b$} (22);
    \path (22) edge node[above]{$a$} (32);
    \path (00) edge node[left] {$d$} (01);
    \path (10) edge (11);
    \path (11) edge node[left] {$c$} (12);
    \path (20) edge node[right] {$d$} (21);
    \path (21) edge (22);
    \path (31) edge node[right] {$c$} (32);
  \end{tikzpicture}
  \bigskip
  \caption{HDA which generates infinite set of pomsets (bottom left
    and top right edges identified).}
  \label{fi:hda-loop}
\end{figure}

We will show that languages of HDAs are sets of interval orders, and
that any interval order may be generated by an HDA.  For another
example, the HDA in Figure \ref{fi:hda-loop} has a two-dimensional
\emph{loop} created by identifying the horizontal edges in the
bottom-left and top-right of the automaton (together with their
corresponding faces).  Its language includes the infinite set
\begin{equation}
  \label{eq:hda-loop-lang}
  \bigg\{ \pomset{ a \ar[r] & b \ar[r] & a},%
  \pomset{
    a \ar[r] \ar[dr] & b \ar[r] \ar[dr] & a \ar[r] & b \ar[r] & a\\
    & c \ar[r] \ar[urr] & d \ar[urr]},\dotsc \bigg\},
\end{equation}
where the second pomset is obtained by traversing the squares $z$,
$w$, $x$ and $y$ in that order.  We will only be concerned with
\emph{finite} HDAs in this paper, yet as the above example shows,
languages of finite HDAs may well be infinite.

A precursor to this work is van Glabbeek's
\cite{DBLP:journals/tcs/Glabbeek06}, which introduces \emph{tracks} in
HDAs (there called paths) and then defines their observable content in
terms of \emph{ST-traces}.  We have shown in
\cite{DBLP:conf/RelMiCS/FahrenbergJST20} that there is a bijective
correspondence between ST-traces and interval orders.  Another
precursor is Fajstrup \etal's \cite{DBLP:journals/tcs/FajstrupRG06},
where the authors define computations as \emph{directed paths} through
geometric cubical complexes.  We introduce languages based on van
Glabbeek's tracks and languages based on Fajstrup \etal's directed
paths, and show that they define the same objects.

Grabowski \cite{DBLP:journals/fuin/Grabowski81} has introduced a
notion of \emph{smoothing} for pomsets which is nowadays mostly called
\emph{subsumption} \cite{DBLP:journals/tcs/Gischer88,
  DBLP:conf/concur/FanchonM02}: a pomset $P$ subsumes a pomset $Q$ if $Q$
is at least as ordered as $P$.  Sets of pomsets closed under
subsumption are generally called \emph{weak}
\cite{DBLP:journals/fuin/Grabowski81, DBLP:conf/concur/FanchonM02}.
We show that languages of HDAs are weak sets of interval orders.

\begin{figure}
  \centering
  \begin{tikzpicture}[x=1.5cm, y=1.5cm]
    \begin{scope}
      \path[fill=black!15] (0,0) to (2,0) to (2,1) to (1,1) to
      (0,1);
      \node[state, initial] (00) at (0,0) {};
      \node[state] (10) at (1,0) {};
      \node[state] (20) at (2,0) {};
      \node[state] (01) at (0,1) {};
      \node[state] (11) at (1,1) {};
      \node[state, accepting] (21) at (2,1) {};
      \path (00) edge node[below] {$\vphantom{d}c$} (10);
      \path (10) edge node[below] {$d$} (20);
      \path (01) edge node[above] {$c$} (11);
      \path (11) edge node[above] {$d$} (21);
      \path (00) edge node[left] {$a$} (01);
      \path (10) edge (11);
      \path (20) edge node[right] {$a$} (21);
      \draw[-, very thick, orange] (0,0) -- (2,1);
      \node[font=\normalsize] at (1,-.7) {$\pomset[.4]{ & a \\ c
          \ar[rr] && d}$};
    \end{scope}
    \begin{scope}[shift={(3.2,0)}]
      \path[fill=black!15] (0,0) to (2,0) to (2,1) to (1,1) to
      (0,1);
      \node[state, initial] (00) at (0,0) {};
      \node[state] (10) at (1,0) {};
      \node[state] (20) at (2,0) {};
      \node[state] (01) at (0,1) {};
      \node[state] (11) at (1,1) {};
      \node[state, accepting] (21) at (2,1) {};
      \path (00) edge node[below] {$\vphantom{d}c$} (10);
      \path (10) edge node[below] {$d$} (20);
      \path (01) edge node[above] {$c$} (11);
      \path (11) edge node[above] {$d$} (21);
      \path (00) edge node[left] {$a$} (01);
      \path (10) edge (11);
      \path (20) edge node[right] {$a$} (21);
      \draw[-, very thick, orange] (0,0) -- (1,1) -- (2,1);
      \node[font=\normalsize] at (1,-.7) {$\pomset{a\ar[dr] \\ c\ar[r] & d}$};
    \end{scope}
    \begin{scope}[shift={(6.4,0)}]
      \path[fill=black!15] (0,0) to (2,0) to (2,1) to (1,1) to
      (0,1);
      \node[state, initial] (00) at (0,0) {};
      \node[state] (10) at (1,0) {};
      \node[state] (20) at (2,0) {};
      \node[state] (01) at (0,1) {};
      \node[state] (11) at (1,1) {};
      \node[state, accepting] (21) at (2,1) {};
      \path (00) edge node[below] {$\vphantom{d}c$} (10);
      \path (10) edge node[below] {$d$} (20);
      \path (01) edge node[above] {$c$} (11);
      \path (11) edge node[above] {$d$} (21);
      \path (00) edge node[left] {$a$} (01);
      \path (10) edge (11);
      \path (20) edge node[right] {$a$} (21);
      \draw[-, very thick, orange] (0,0) -- (1,0) -- (2,1);
      \node[font=\normalsize] at (1,-.7) {$\pomset{ & a \\ c\ar[r]\ar[ur] & d}$};
    \end{scope}
    \begin{scope}[shift={(0,-2.5)}]
    \begin{scope}
      \path[fill=black!15] (0,0) to (2,0) to (2,1) to (1,1) to
      (0,1);
      \node[state, initial] (00) at (0,0) {};
      \node[state] (10) at (1,0) {};
      \node[state] (20) at (2,0) {};
      \node[state] (01) at (0,1) {};
      \node[state] (11) at (1,1) {};
      \node[state, accepting] (21) at (2,1) {};
      \path (00) edge node[below] {$\vphantom{d}c$} (10);
      \path (10) edge node[below] {$d$} (20);
      \path (01) edge node[above] {$c$} (11);
      \path (11) edge node[above] {$d$} (21);
      \path (00) edge node[left] {$a$} (01);
      \path (10) edge (11);
      \path (20) edge node[right] {$a$} (21);
      \draw[-, very thick, orange] (0,0) -- (0,1) -- (2,1);
      \node[font=\normalsize] at (1,-.6) {$\pomset{a\ar[r] & c\ar[r] &
          d}$};
    \end{scope}
    \begin{scope}[shift={(3.2,0)}]
      \path[fill=black!15] (0,0) to (2,0) to (2,1) to (1,1) to
      (0,1);
      \node[state, initial] (00) at (0,0) {};
      \node[state] (10) at (1,0) {};
      \node[state] (20) at (2,0) {};
      \node[state] (01) at (0,1) {};
      \node[state] (11) at (1,1) {};
      \node[state, accepting] (21) at (2,1) {};
      \path (00) edge node[below] {$\vphantom{d}c$} (10);
      \path (10) edge node[below] {$d$} (20);
      \path (01) edge node[above] {$c$} (11);
      \path (11) edge node[above] {$d$} (21);
      \path (00) edge node[left] {$a$} (01);
      \path (10) edge (11);
      \path (20) edge node[right] {$a$} (21);
      \draw[-, very thick, orange] (0,0) -- (1,0) -- (1,1) -- (2,1);
      \node[font=\normalsize] at (1,-.6) {$\pomset{c\ar[r] & a\ar[r] &
          d}$};
    \end{scope}
    \begin{scope}[shift={(6.4,0)}]
      \path[fill=black!15] (0,0) to (2,0) to (2,1) to (1,1) to
      (0,1);
      \node[state, initial] (00) at (0,0) {};
      \node[state] (10) at (1,0) {};
      \node[state] (20) at (2,0) {};
      \node[state] (01) at (0,1) {};
      \node[state] (11) at (1,1) {};
      \node[state, accepting] (21) at (2,1) {};
      \path (00) edge node[below] {$\vphantom{d}c$} (10);
      \path (10) edge node[below] {$d$} (20);
      \path (01) edge node[above] {$c$} (11);
      \path (11) edge node[above] {$d$} (21);
      \path (00) edge node[left] {$a$} (01);
      \path (10) edge (11);
      \path (20) edge node[right] {$a$} (21);
      \draw[-, very thick, orange] (0,0) -- (2,0) -- (2,1);
      \node[font=\normalsize] at (1,-.6) {$\pomset{c\ar[r] & d\ar[r] &
          a}$};
    \end{scope}
    \end{scope}
  \end{tikzpicture}
  \smallskip
  \caption{Directed paths in HDA of Figure \ref{fi:hda-a|cd} together
    with corresponding pomsets.}
  \label{fi:hda-a|cd-dpaths}
\end{figure}
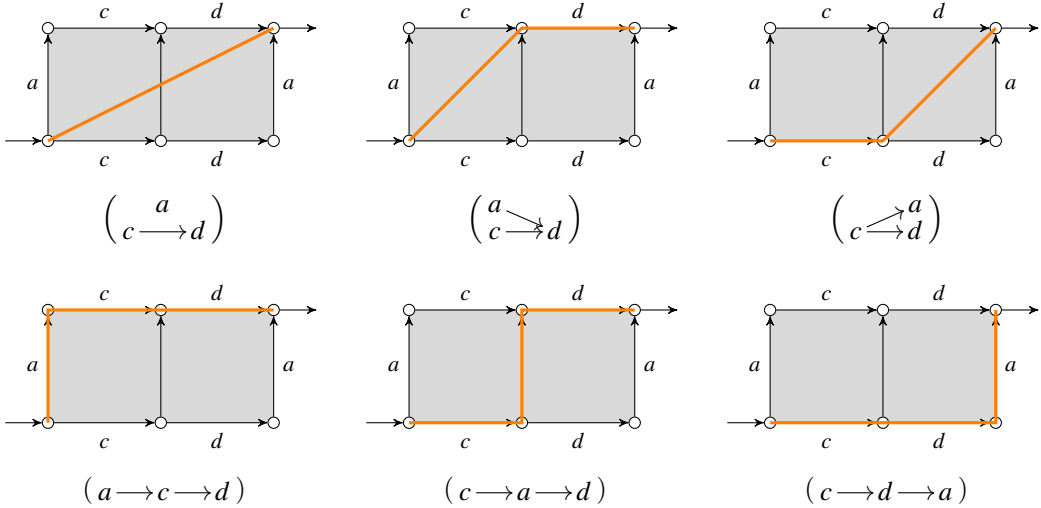

Figure \ref{fi:hda-a|cd-dpaths} exhibits six directed paths through
the HDA $A$ of Figure \ref{fi:hda-a|cd} together with the
corresponding pomsets.  The language of $A$ consists precisely of
these six pomsets; it is also the weak closure of the pomset in
\eqref{eq:a|cd} corresponding to the first directed path displayed.
The language of the HDA in Figure \ref{fi:hda-loop} is the weak
closure of the infinite set in \eqref{eq:hda-loop-lang}.

We finish the paper by showing that languages of HDAs are closed under
binary union and parallel composition, and further that
\emph{bisimilarity} of HDAs \cite{DBLP:conf/fossacs/Fahrenberg05,
  DBLP:journals/tcs/Glabbeek06} implies language equivalence.  A
comprehensive treatment of regular operations on HDAs and their
languages is left for future work.

We start this paper with an overview section which introduces the main
concepts and results without going into too much technical detail.  In
order to properly define and develop languages of HDAs, we first
introduce a new base category for precubical sets, identify a new
subclass of \emph{event consistent} precubical sets, and make clear
the relationship between tracks and interval orders.  This is also why
we define HDAs only on page \pageref{de:hda}.

We detail the main technical contributions of this paper at the end of
the overview Section \ref{se:overview}.  Afterwards, we introduce
precubical sets, event consistency, and HDAs in Section \ref{se:hda}.
Section \ref{se:ipoms} is concerned with pomsets with interfaces,
their gluing composition, and representations of interval orders.  The
connection between interval orders and tracks in precubical sets is
made in Section \ref{se:tracks}, and directed paths are introduced in
Section \ref{se:geo}.  Section \ref{se:lang} concludes the paper by
defining languages of HDAs and developing some basic properties.

\section{Overview}
\label{se:overview}

\begin{figure}
  \centering
  \begin{tikzpicture}[x=.9cm,y=.9cm]
    \path[fill=black!15] (0,0) to (2,0) to (2,2) to (0,2) to (0,0);
    \node[state] (00) at (0,0) {};
    \node[state] (10) at (2,0) {};
    \node[state] (01) at (0,2) {};
    \node[state] (11) at (2,2) {};
    \path (00) edge (01);
    \path (00) edge (10);
    \path (01) edge (11);
    \path (10) edge (11);
    \node at (1,1.02) {$x$};
    \node at (-.4,1.05) {$\delta_1^0 x$};
    \node at (2.4,1.05) {$\delta_1^1 x$};
    \node at (1,-.25) {$\delta_2^0 x$};
    \node at (1,2.25) {$\delta_2^1 x$};
    \node at (-1,-.35) {$\delta_1^0 \delta_2^0 x= \delta_1^0
      \delta_1^0 x$};
    \node at (-1,2.35) {$\delta_1^0 \delta_2^1 x= \delta_1^1
      \delta_1^0 x$};
    \node at (3,-.35) {$\delta_1^1 \delta_2^0 x= \delta_1^0
      \delta_1^1 x$};
    \node at (3,2.35) {$\delta_1^1 \delta_2^1 x= \delta_1^1
      \delta_1^1 x$};
  \end{tikzpicture}
  \bigskip
  \caption{%
    \label{fi:2cubefaces-full}
    A square $x$ with its four elementary faces $\delta_1^0 x$,
    $\delta_1^1 x$, $\delta_2^0 x$, $\delta_2^1 x$ and four corners.
  }
\end{figure}
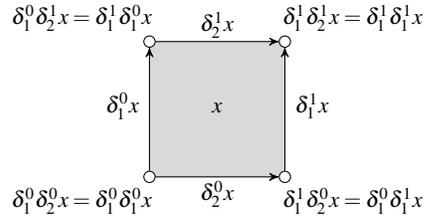

HDAs are built on \emph{precubical sets} \cite{thesis/Serre51,
  book/Grandis09}, a generalization of directed graphs to higher
dimensions.  To be precise, a precubical set consists of a graded set
$X= \bigcup_{ n\ge 0} X_n$ of $n$-cells together with \emph{elementary
  face maps} $\delta_{ i, n}^\nu: X_n\to X_{ n- 1}$,
$i\in\{ 1,\dotsc, n\}$, $\nu\in\{ 0, 1\}$ that specify boundaries of
$n$-cells.  These are required to satisfy the \emph{precubical
  identities}
\begin{equation*}
  \delta_{ i, n- 1}^\nu \delta_{ j, n}^\mu= \delta_{ j- 1, n- 1}^\mu
  \delta_{ i, n}^\nu,
\end{equation*}
for every $i< j\le n$, which identify common elementary faces of
elementary faces.  Figure \ref{fi:2cubefaces-full} shows an example of
a $2$-cell with all its faces; we will generally omit parentheses for
elementary faces and the subscript $n$ and thus, for example, write
$\delta_1^0 x$ instead of $\delta_{ 1, n}^0( x)$.

A precubical set $X$ with $X_1= \emptyset$, hence $X_i= \emptyset$ for
all $i\ge 1$, is simply a set (of $0$-cells or \emph{points}).  A
one-dimensional precubical set $X$, with $X_2= \emptyset$, is a
directed graph.  $1$-cells are generally called \emph{edges},
$2$-cells, \emph{squares}, and $3$-cells, \emph{cubes}.  Modifying the
standard setting \cite{book/Grandis09}, we introduce precubical sets
as presheaves over a category of linearly ordered sets with suitable
morphisms (Definition \ref{de:widepc}).  From a technical point of
view this does not matter, as our ``large'' category of precubical sets
is uniquely isomorphic to the standard one (Proposition
\ref{pr:precubidot}); yet it clarifies the relation between ordered
sets, presimplicial sets and precubical sets.  This simplifies later
developments.

An HDA is a tuple $( X, I, F, \lambda)$ with $X$ a precubical set,
$I, F\subseteq X$ subsets of initial and accepting cells, and
$\lambda$ a labeling on $X$.  This labeling is generated by a function
$\lambda_1: X_1\to \Sigma$, into an alphabet $\Sigma$, which satisfies
$\lambda_1( \delta_1^0 x)= \lambda_1( \delta_1^1 x)$ and
$\lambda_1( \delta_2^0 x)= \lambda_1( \delta_2^1 x)$ for every
$x\in X_2$; but we will extend it to a precubical morphism
$\lambda: X\to \bang \Sigma$ into a special labeling object
$\bang \Sigma$ (Definition \ref{de:labelob}).

One-dimensional HDAs are equivalent to ordinary finite automata, with
$0$-cells as states and $1$-cells as transitions.  Two-dimensional
HDAs are equivalent to asynchronous transition systems, with the
$2$-cells denoting independence of events.

Most formalisms for non-interleaving concurrency have a notion of
\emph{events}: unique occurrences of actions in space and time.
HDAs, on the other hand, do not have a well-defined notion of event
\cite{Pratt00Sculptures, DBLP:journals/corr/FahrenbergJTZ18}.  Going
back to the example in Figure \ref{fi:int-conc}, we see that the Petri
nets on each side of the figure have two events each, induced by their
transitions and labeled $a$ and $b$, respectively.  In the two HDAs on
the other hand, every label appears twice, and there is no immediate
conception of events.  For the HDA on the right, we may deduce from
the presence of the square, which indicates \emph{two} events running
concurrently, that there are indeed precisely two events in the
system; but on the left, there might as well be four.

We make the notion of event identification precise in Definition
\ref{de:labelob} and identify a subclass of \emph{event consistent}
precubical sets: precubical sets $X$ that admit an equivalence
relation $\sim$ on $X_1$ such that for all $x\in X_2$,
$\delta_1^0 x\sim \delta_1^1 x$, $\delta_2^0 x\sim \delta_2^1 x$, and
$\delta_1^0 x\not\sim \delta_2^0 x$ (Lemma \ref{le:evcons}).  The
equivalence classes of the smallest such equivalence are called the
\emph{universal events} of $X$: the largest possible identification of
events which is consistent with the structure of the HDA.

In the example in Figure \ref{fi:int-conc}, the HDA on the left has
four universal events, whereas the one on the right has two.  (An
example of a precubical set which is not event consistent is shown in
Figure \ref{fi:pcsselflev} on page \pageref{fi:pcsselflev}.)

Any labeling factors uniquely through the universal events
(Proposition \ref{prp:EventsAreUniversalLabeling}), so that we could
have written this paper only with \emph{unlabeled} (but event
consistent) HDAs in mind and then added labels as an afterthought,
much in the spirit of \cite{WinskelN95-Models}.  For sake of
readability we have refrained from doing so.

\begin{figure}
  \begin{equation*}
    \ipomset{ & & a \ar[r] & \\ & b \ar[r] & c &}\quad*\quad
    \ipomset{ \ar[r] & a & \\ & d &}\quad=\quad
    \ipomset{ & & a & & \\ & b \ar[r] & c \ar[r] & d &}
  \end{equation*}
  \bigskip
  \caption{Two ipomsets and their gluing composition (interfaces
    marked with incoming and outgoing arrows).}
  \label{fi:ipoms-glue}
\end{figure}

In Section \ref{se:ipoms} we recapitulate the notion of \emph{pomset
  with interfaces} (\emph{ipomset}) from
\cite{DBLP:conf/RelMiCS/FahrenbergJST20}.  An ipomset
$( P, \mathord<_P, \lambda_P, S_P, T_P)$ consists of a labeled partial
order $( P, \mathord<_P, \lambda_P)$ together with subsets
$S_P, T_P\subseteq P$ of minimal and maximal elements which designate
starting and terminating interfaces.  Ipomsets may be \emph{glued}
along their interfaces: if $( Q, \mathord<_Q, \lambda_Q, S_Q, T_Q)$ is
another ipomset such that $P\cap Q= T_P= S_Q$, then $P* Q$ is the
ipomset
\begin{equation*}
  ( P\cup Q, \mathord<_P\cup \mathord<_Q\cup( P\setminus T_P)\times(
  Q\setminus S_Q), \lambda_P\cup \lambda_Q, S_P, T_Q),
\end{equation*}
with the order defined by those of $P$ and $Q$ together with imposing
that every event not in $Q$ precedes every event not in $P$.
Hence events in the overlap $P\cap Q$ are continued across the gluing
composition; Figure \ref{fi:ipoms-glue} shows an example.

\begin{figure}
  \centering
  \begin{tikzpicture}[x=1.3cm, y=1.3cm]
    \begin{scope}
      \path[fill=black!20] (0,0) to (2,0) to (2,1) to (0,1);
      \node[state, initial] (00) at (0,0) {};
      \node[state] (10) at (1,0) {};
      \node[state] (20) at (2,0) {};
      \node[state] (01) at (0,1) {};
      \node[state] (11) at (1,1) {};
      \node[state] (21) at (2,1) {};
      \path (00) edge node[below] {$\vphantom{d}b$} (10);
      \path (10) edge node[below] {$\vphantom{d}c$} (20);
      \path (01) edge (11);
      \path (11) edge (21);
      \path (00) edge node[left] {$a$} (01);
      \path (10) edge (11);
      \path (20) edge (21);
      \path (2.05,.5) edge (2.4,.5);
    \end{scope}
    \begin{scope}[shift={(3.5,0)}]
      \path[fill=black!20] (0,0) to (1,0) to (1,1) to (0,1);
      \node[state] (00) at (0,0) {};
      \node[state] (10) at (1,0) {};
      \node[state] (01) at (0,1) {};
      \node[state, accepting] (11) at (1,1) {};
      \path (00) edge node[below]{$\vphantom{d}d$} (10);
      \path (01) edge (11);
      \path (00) edge (01);
      \path (-.4,.5) edge (-.05,.5);
      \path (10) edge node[right] {$a$} (11);
    \end{scope}
    \begin{scope}[shift={(6,0)}]
      \path[fill=black!20] (0,0) to (3,0) to (3,1) to (0,1);
      \node[state, initial] (00) at (0,0) {};
      \node[state] (10) at (1,0) {};
      \node[state] (20) at (2,0) {};
      \node[state] (30) at (3,0) {};
      \node[state] (01) at (0,1) {};
      \node[state] (11) at (1,1) {};
      \node[state] (21) at (2,1) {};
      \node[state, accepting] (31) at (3,1) {};
      \path (00) edge node[below]{$\vphantom{d}b$} (10);
      \path (10) edge node[below]{$\vphantom{d}c$} (20);
      \path (20) edge node[below]{$\vphantom{d}d$} (30);
      \path (00) edge node[left]{$a$} (01);
      \path (10) edge (11);
      \path (20) edge (21);
      \path (30) edge (31);
      \path (01) edge (11);
      \path (11) edge (21);
      \path (21) edge (31);
    \end{scope}
  \end{tikzpicture}
  \bigskip
  \caption{Three HDAs corresponding to ipomsets of Figure
    \ref{fi:ipoms-glue}.}
  \label{fi:ipoms-glue-hda}
\end{figure}
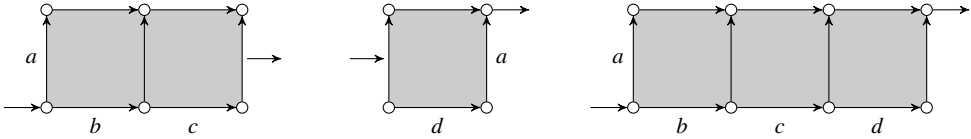

We extend ipomsets with an \emph{event order}: a second strict order,
denoted $\intord_P$, which is required to be linear on
$<_P$-antichains.  This allows us to assign which interfaces are
identified in gluing compositions and also establishes a close
relation between interval-ordered ipomsets and a subclass of HDAs, see
Definition \ref{de:pobj}.  As an example, Figure
\ref{fi:ipoms-glue-hda} shows the three HDAs corresponding to the
ipomsets of Figure \ref{fi:ipoms-glue}; we show in Lemma
\ref{le:pobjJoin} that gluing compositions of interval ipomsets
correspond to pushouts of their induced HDAs.

Most papers in concurrency theory define pomsets as isomorphism
classes of labeled partial orders.  We find it convenient to work
directly with labeled partial orders instead and consider properties
up to isomorphism.  As any isomorphic ipomsets are uniquely isomorphic
(Lemma \ref{le:CanonicalIso}), the difference is without significance.

One central mathematical insight on which this paper is built is that
both precubical sets and interval ipomsets can be obtained by gluing
linear orders, \ie precubical sets glued as presheaves, and interval
ipomsets as gluing compositions of discrete ipomsets.  An ipomset
$( P, \mathord<_P, \mathord{\intord_P}, \lambda_P, S_P, T_P)$ is
\emph{discrete} if $<_P$ is trivial, thus $\intord_P$ is a linear
order.  We always think of $<_P$ as a \emph{precedence} order; hence
all events are concurrent in a discrete ipomset, and the event order
$\intord_P$ is used as a book-keeping device.  Seen as linear
$\intord_P$-ordered sets, discrete ipomsets form our base category for
precubical sets; as trivial $<_P$-ordered sets, we may glue them into
interval orders.

The subclass of precubical sets that correspond to interval ipomsets
under this identification is comprised of \emph{tracks}: sequences of
cells connected at intermediate faces.  This notion generalizes paths
in finite automata to higher dimensions.  Figure \ref{fi:track} shows
an example, a track consisting of six cells $x_1,\dotsc, x_6$ ($x_4$
is the central cube in the figure) with face relations
\begin{equation*}
  x_1\face x_2\ecaf x_3\face^3 x_4\ecaf^2 x_5\face x_6,
\end{equation*}
where $x_1\face x_2$ denotes that $x_1$ is a lower face of $x_2$, and
$x_4\ecaf^2 x_5$ that $x_5$ is an upper face of an upper face of $x_4$
(in anticipation of notation introduced later on).  We show in Section
\ref{se:tracks} how tracks give rise to interval ipomsets, but also
how interval ipomsets can be converted into tracks.

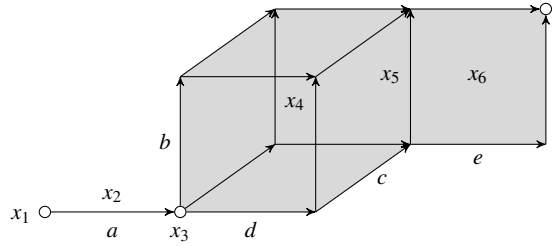
\begin{figure}
  \centering
  \begin{tikzpicture}[x=1.8cm, y=1.8cm]
    \path[fill=black!15] (1,0) -- (2,0) -- (2.7,.5) -- (3.7,.5) --
    (3.7,1.5) -- (1.7,1.5) -- (1,1);
    \node[state] (0) at (0,0) {};
    \node[state] (1) at (1,0) {};
    \node[coordinate] (1a) at (1,1) {};
    \node[coordinate] (2) at (2,0) {};
    \node[coordinate] (2a) at (2,1) {};
    \node[coordinate] (3) at (2.7,.5) {};
    \node[coordinate] (3a) at (1.7,.5) {};
    \node[coordinate] (4) at (2.7,1.5) {};
    \node[coordinate] (4a) at (1.7,1.5) {};
    \node[coordinate] (5) at (3.7,.5) {};
    \node[state] (6) at (3.7,1.5) {};
    \node[left] at (0.south west) {$x_1$};
    \path (0) edge node[above] {$x_2$} node[below] {$\vphantom{d}a$} (1);
    \node[below] at (1.south) {$x_3$};
    \path (1) edge node[below] {$d$} (2);
    \node at (1.85,.8) {$x_4$};
    \path (1) edge node[left] {$b$} (1a);
    \path (1) edge (3a);
    \path (1a) edge (2a);
    \path (1a) edge (4a);
    \path (2) edge node[below, pos=.7] {$c$} (3);
    \path (2) edge (2a);
    \path (2a) edge (4);
    \path (3) edge node[left] {$x_5$} (4);
    \node at (3.2,1) {$x_6$};
    \path (3) edge node[below] {$e$} (5);
    \path (3a) edge (3);
    \path (3a) edge (4a);
    \path (4a) edge (4);
    \path (4) edge (6);
    \path (5) edge (6);
  \end{tikzpicture}
  \medskip
  \caption{A track in a precubical set.}
  \label{fi:track}
\end{figure}

In Section \ref{se:geo} we give a \emph{geometric} interpretation of
executions in HDAs, following \cite{DBLP:journals/tcs/FajstrupRG06}
and subsequent related work \cite{DBLP:books/sp/FajstrupGHMR16,
  DBLP:journals/aaecc/Ziemianski17, journals/act/Ziemianski20}.
Precubial sets may be realized geometrically as \emph{directed spaces}
\cite{book/Grandis09}, and executions of HDAs may then be seen as
\emph{directed paths} through their geometric realization.  We
introduce the \emph{interval arrangement} of a directed path, which
tracks the events that are active during different phases of the
execution, and use this to define labels of directed paths.

In Section \ref{se:lang} we show that languages of HDAs defined by
directed paths are the same as languages defined by tracks.  We also
see that languages of HDAs are weak sets of interval orders, and that
any \emph{finite} weak set of interval orders may be generated by an
HDA.

In summary, the main contributions of this paper are as follows:
\begin{itemize}
\item New Definitions \ref{de:widepc} and \ref{de:precubi} of
  precubical sets as presheaves over a category $\dotsquare$.  This
  has linearly ordered sets as objects, and the morphism are pairs
  $(f, \epsilon)$ of a poset map $f$ and a function $\epsilon$ which
  partitions elements not in the image of $f$ into two classes.  This
  is similar to constructions in \cite{DBLP:conf/types/BezemCH13,
    DBLP:journals/apal/Awodey18}; the standard base category $\square$
  \cite{book/Grandis09} of precubical sets is uniquely isomorphic to
  the skeleton of $\dotsquare$.
\item The identification of a new subclass of \emph{event consistent}
  precubical sets in Definitions \ref{de:labelob} and \ref{de:evcons}
  and the introduction of universal events for such precubical
  sets.
\item The exposition of a bijection, in Definitions
  \ref{de:label-track} and \ref{de:pobj}, between interval-ordered
  ipomsets and HDA tracks.  The first of these definitions introduces
  the \emph{label} of a track in an HDA $X$, which forms the basis on
  which we define track-based languages of HDAs; the second defines the
  \emph{track object} $\pobj{P}$ pertaining to an interval ipomset
  $P$.  These notions unite in the important Proposition
  \ref{pr:Representability}: $P$ is contained in the language of $X$
  precisely if there is an HDA morphism from $\pobj{P}$ into $X$.
\item The notion of \emph{interval arrangement} of a directed path
  through the geometric realization of an HDA and the subsequent
  Definition \ref{de:label-dpath} of labels of directed paths.
\item Definition \ref{de:lang} of the language of an HDA, the closure
  properties (under binary union and parallel composition) in Theorems
  \ref{th:hdlang-union} and \ref{th:hdapara}, and Theorems
  \ref{th:bisim-lang} and \ref{th:stbis-lang} that bisimilarity
  implies language equivalence.  We expect that together with
  Proposition \ref{pr:Representability} this may form the basis of a
  theory of regular pomset languages, but leave this for future work.
\end{itemize}

\section{Precubical Sets and Higher Dimensional Automata}
\label{se:hda}

In this section we introduce precubical sets and HDAs, but we start
with order-theoretic definitions, mainly to fix notation.  We will
return to posets and posets with interfaces in the next section.

A \emph{poset} is a pair $( P, <)$ consisting of a set $P$ and a
\emph{strict} partial order $<$ on $P$.  We henceforth assume tacitly
that the set $P$ is finite.  For any \emph{alphabet}, \ie finite set
$\Sigma$, a \emph{pomset} is a triple $( P, <, \lambda)$ with
$( P, <)$ a poset and $\lambda: P\to \Sigma$ the \emph{labeling} of
$P$.  If the order is linear, \ie a total relation in which $x= y$,
$x< y$, or $y< x$ for all $x, y\in P$, then we will speak of
\emph{linear posets} and \emph{linear pomsets} (and generally denote
linear po(m)sets by $S$, $T$, $U$ instead of $P$, $Q$, $R$).

Elements $x, y\in P$ of a po(m)set $P$ are \emph{comparable} if
$x= y$, $x< y$, or $y< x$; otherwise they are \emph{incomparable},
denoted $x\incomp y$.
An element $x\in P$ is \emph{minimal} if there exists no $y\in P$ with
$y< x$; and \emph{maximal} if there is no $y\in P$ with $x< y$.

A subset $Q\subseteq P$ of a po(m)set $P$ is an \emph{antichain} if
its elements are pairwise incomparable.  A \emph{maximal antichain} is
one which is not a proper subset of any other antichain.  The sets of
minimal, respectively maximal elements of $P$ are both maximal
antichains.

A function $f: P\to Q$ between posets $P$, $Q$ is a \emph{poset
  map}\footnote{We use \emph{map} and \emph{morphism} interchangeably
  for the arrows in a category.}  if $x<_P y$ implies $f( x)<_Q f( y)$
for all $x, y\in P$.  By irreflexivity, $f$ is \emph{injective} on
comparable elements: if $x<_P y$ or $y<_P x$, then $f( x)\ne f( y)$.
If $P$ is linearly ordered, then $f$ must be injective.

A function $f: P\to Q$ between pomsets $P$, $Q$ is a \emph{pomset map}
if it is a poset map that preserves the labeling, \ie
$\lambda_Q\circ f= \lambda_P$.  Posets and poset maps form the
category $\Pos$, and pomsets and pomset maps form the category
$\Poms$.  Isomorphism in these and all subsequent categories will be
denoted $\cong$.

\subsection{Precube Categories}

Precubical sets are usually defined as presheaves over a small
skeletal category $\square$, see Definition \ref{de:precubecat} below.
We find it more convenient to work with a large version of $\square$,
denoted $\dotsquare$ and defined below, which as objects has all
linear posets.

\begin{definition}
  \label{de:widepc}
  The \emph{large precube category} $\dotsquare$ consists of the
  following data:
  \begin{itemize}
  \item objects are linear posets $( S, \intord)$;
  \item morphisms $S\to T$ in $\dotsquare( S, T)$ are pairs
    $(f, \epsilon)$, where $f: S\to T$ is a poset map and
    $\epsilon: T\to \{0,\exec,1\}$ a function such that
    $f(S)= \epsilon^{-1}(\exec)$;
  \item the composition of morphisms $(f,\epsilon):S\to T$ and
    $(g,\zeta):T\to U$ is $(g\circ f,\eta)$, where
    \begin{equation*}
      \eta(u)=
      \begin{cases}
        \epsilon(g^{-1}(u)) & \text{for $u\in g(T)$}, \\
        \zeta(u) & \text{otherwise}.
      \end{cases}
    \end{equation*}
  \end{itemize}
\end{definition}

The function $\epsilon$ distinguishes events that have not yet started
(labelled by $0$) from those that have finished (labelled by $1$) and
those that are executing (labelled by $\exec$).  This notation is
inspired by Chu spaces \cite{pratt95chu}; see also
\cite{DBLP:journals/corr/FahrenbergJTZ18} for the relation between
HDAs and Chu spaces.  For every morphism $(f,\epsilon)$, the
isomorphism $f: S\to \epsilon^{ -1}( \exec)\subseteq T$ is unique; the
map $f$ is therefore determined by $\epsilon$.

\begin{figure}
  \centering
  \begin{tikzpicture}
    \begin{scope}
      \node[font=\normalsize] at (0,0) {$S$};
      \node (s1) at (0,-1) {$\bullet$};
      \node (s2) at (0,-2) {$\bullet$};
      \node (s3) at (0,-3) {$\bullet$};
      \path (s1) edge[dotted] (s2);
      \path (s2) edge[dotted] (s3);
    \end{scope}
    \begin{scope}[shift={(3,0)}]
      \node[font=\normalsize] at (0,0) {$T$};
      \node (t1) at (0,-1) {$\exec$};
      \node (t2) at (0,-2) {$\exec$};
      \node (t3) at (0,-3) {$0$};
      \node (t4) at (0,-4) {$\exec$};
      \path (t1) edge[dotted] (t2);
      \path (t2) edge[dotted] (t3);
      \path (t3) edge[dotted] (t4);
    \end{scope}
    \begin{scope}[shift={(6,0)}]
      \node[font=\normalsize] at (0,0) {$U$};
      \node (u1) at (0,-1) {$\exec$};
      \node (u2) at (0,-2) {$\exec$};
      \node (u3) at (0,-3) {$1$};
      \node (u4) at (0,-4) {$0$};
      \node (u5) at (0,-5) {$\exec$};
      \path (u1) edge[dotted] (u2);
      \path (u2) edge[dotted] (u3);
      \path (u3) edge[dotted] (u4);
      \path (u4) edge[dotted] (u5);
    \end{scope}
    \path (s1) edge (t1);
    \path (s2) edge (t2);
    \path (s3) edge (t4);
    \path (t1) edge (u1);
    \path (t2) edge (u2);
    \path (t3) edge (u4);
    \path (t4) edge (u5);
  \end{tikzpicture}
  \caption{Composition of morphisms $S\to T\to U$ in $\dotsquare$}
  \label{fi:dotsquare-comp}
\end{figure}

In an inclusion $( f, \epsilon): S\to T$, the events in
$f( S)= \epsilon^{ -1}( \exec)$ are executing, whereas the events in
$T\setminus f( S)$ are either not started or terminated.  In the
composition $( g\circ f, \eta)$, $\eta$ is defined such that events in
$U\setminus g( T)$ retain their status from the inclusion of $T$ in
$U$, events properly in $T$ preserve their status from the inclusion
of $S$ in $T$, and events coming from $S$ are executing.  See Figure
\ref{fi:dotsquare-comp} for an example.

For each $n\ge 1$ denote by $[ n]$ the linear poset
\begin{equation*}
  [ n]=\{ 1\intord 2\intord\dotsm\intord n\},
\end{equation*}
together with $[ 0]= \emptyset$.

\begin{definition}
  \label{de:precubecat}
  The \emph{precube category} is the full subcategory
  $\square\subseteq \dotsquare$ on objects $[n]$ for all $n\ge 0$.
\end{definition}

\begin{proposition}
  \label{pr:wide-cube}
  The category $\square$ is skeletal, and the inclusion
  $\square\subseteq \dotsquare$ is an equivalence of categories and
  admits a unique left inverse.
\end{proposition}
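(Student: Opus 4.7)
The plan is to verify in turn that (i) $\square$ is skeletal, (ii) the inclusion $i:\square\hookrightarrow\dotsquare$ is fully faithful and essentially surjective, and (iii) there is a unique left inverse. For skeletality, I would use the observation already made in the text that any morphism $(f,\epsilon)$ of $\dotsquare$ has $f$ injective (a poset map out of a linear poset is injective). An isomorphism $(f,\epsilon)\colon[m]\to[n]$ in $\square$ must then have $f$ a bijective poset map between finite linear orders, forcing $m=n$; since $\square$ contains exactly one object of each cardinality, distinct objects are non-isomorphic.

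Fullness and faithfulness of $i$ are automatic from the definition as a full subcategory. For essential surjectivity, given a linear poset $S$ with $|S|=n$, I would take the unique order bijection $\phi_S\colon S\to[n]$ and lift it to the morphism $\tilde{\phi}_S=(\phi_S,\mathrm{const}_{\exec})$ in $\dotsquare$; its two-sided inverse is $(\phi_S^{-1},\mathrm{const}_{\exec})$, so $S\cong[n]$ in $\dotsquare$, and $i$ is an equivalence.

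For the left inverse $L\colon\dotsquare\to\square$, the object assignment $L(S)=[|S|]$ is forced: $L$ must be the identity on the image of $i$, must preserve isomorphisms, and $\square$ is skeletal. The crucial rigidity step, which I expect to be the main subtlety, is that the only endo-isomorphism of $[n]$ in $\square$ is $\id_{[n]}$, because a poset automorphism of a linear order is trivial and $\epsilon$ is then forced to equal $\mathrm{const}_{\exec}$. Consequently $L(\tilde{\phi}_S)=\id_{[|S|]}$ for every $S$, and applying $L$ to the relation $\tilde{\phi}_T\circ(f,\epsilon)=(\phi_T\circ f\circ\phi_S^{-1},\,\epsilon\circ\phi_T^{-1})\circ\tilde{\phi}_S$ forces
\begin{equation*}
  L(f,\epsilon) = (\phi_T\circ f\circ\phi_S^{-1},\,\epsilon\circ\phi_T^{-1}).
\end{equation*}
Existence will then follow by checking directly that this same formula is functorial and restricts to the identity on $\square$, while uniqueness is precisely the rigidity argument just sketched.
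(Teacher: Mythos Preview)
Your proposal is correct and follows essentially the same approach as the paper: your $\tilde{\phi}_S$ is the paper's unique isomorphism $\iota_S:S\to[n]$, and your formula for $L(f,\epsilon)$ coincides verbatim with the paper's $\rho(f,\epsilon)=(\iota_T\circ f\circ\iota_S^{-1},\,\epsilon\circ\iota_T^{-1})$. The paper's proof is simply a terser version of yours, invoking the uniqueness of $\iota_S$ directly rather than spelling out the rigidity argument via trivial automorphisms of $[n]$.
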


\begin{proof}
  It is clear that $\square$ contains no non-trivial isomorphisms,
  hence is skeletal.  For every $S\in\dotsquare$ of cardinality
  $|S|=n$ there is a unique isomorphism $\iota_S:S\to[ n]$ in
  $\dotsquare$.  Hence there is a unique functor
  $\rho:\dotsquare\to\square$, which is a left inverse of the
  inclusion $\square\subseteq \dotsquare$; it is given by
  $\rho(S)=[\,|S|\,]=[n]$ on objects and for any $(f,\epsilon):S\to T$
  by
  $\rho(f,\epsilon)=(\iota_T\circ f\circ \iota_S^{-1}, \epsilon\circ
  \iota_T^{-1})$ on morphisms.
\end{proof}

\begin{remark}
  The construction above mimics the situation for the base category of
  \emph{presimplicial sets}.  Let $\tPos$ be the full subcategory of
  $\Pos$ spanned by the linear posets and $\Delta\subseteq \tPos$ the
  full subcategory on objects $[ n]$ for $n\ge 0$.  Except for the
  maps being injective, $\Delta$ is the \emph{augmented simplex
    category}, see \cite[VII.5]{maclane-cwm}, and presheaves on
  $\Delta$, \ie functors from the opposite category $\Delta^\op$ into
  $\Set$, are presimplicial sets.  The category $\Delta$ is skeletal,
  and the inclusion $\Delta\subseteq \tPos$ is an equivalence of
  categories and admits a unique left inverse.  Consequently, the
  presheaf categories $\Set^{ \Delta^\op}$ and $\Set^{ \tPos^\op}$ are
  uniquely naturally isomorphic, and one may be used as drop-in
  replacement of the other.  See \cite{nlab:simplex_category} for more
  discussion on this subject.
\end{remark}

\begin{remark}
  A similar base category is introduced in
  \cite{DBLP:conf/types/BezemCH13} for cubical homotopy type theory,
  see also \cite{DBLP:journals/jar/BezemCH19,
    DBLP:journals/apal/Awodey18}.  Let $\mcal B$ be the category with
  objects linear posets and morphisms in $\mcal B( S, T)$ those
  functions $f: S\to T\sqcup\{ 0, 1\}$ (disjoint union) for which the
  restrictions $f\rest{ f^{ -1}(T)}$ to elements which do not map to
  $0$ or $1$ are poset isomorphisms.  Then $\mcal B= \dotsquare^\op$,
  as any $f\in \mcal B( S, T)$ is uniquely determined by
  $\epsilon: S\to\{ 0, \exec, 1\}$ given by
  \begin{equation*}
    \epsilon( x)=
    \begin{cases}
      \exec &\text{if } f( x)\in T,\\
      f(x) &\text{if } f(x)\in\{ 0, 1\}.
    \end{cases}
  \end{equation*}
  The category defined in \cite{DBLP:conf/types/BezemCH13} uses
  unordered sets and also permits morphisms $f$ for which
  $f\rest{ f^{ -1}(T)}$ is merely injective.  These two extensions are
  independent of each other; removing the order amounts to introducing
  \emph{symmetries}, and removing surjectivity equips precubical sets
  with degeneracies (thus passing to \emph{cubical} sets).  See
  \cite{GrandisM03-Site} for the presheaf categories of cubical and
  symmetric cubical sets.
\end{remark}

We proceed to show that the (reduced) precube category $\square$ is
isomorphic to the standard base category for precubical sets
\cite{book/Grandis09, GrandisM03-Site}.  For any $n\ge 1$, $i\in[ n]$,
and $\nu\in\{0,1\}$, define a $\square$-map
$d^\nu_{i,n}=( d_{i,n}, \epsilon^\nu_i):[ n- 1]\to[ n]$ by
\begin{equation*}
  d_{i,n}(k)=
  \begin{cases}
    k & \text{for $1\leq k<i$,}\\
    k+1 & \text{for $i\leq k\leq n-1$}
  \end{cases}
  \qquad\text{ and }\qquad
  \epsilon^\nu_i(k)=
  \begin{cases}
    \nu & \text{for $k=i$,}\\
    \exec & \text{for $k\neq i$.}
  \end{cases}
\end{equation*}

\begin{lemma}\label{le:PCIdElem}
  Let $(f,\epsilon)\in\square([m],[n])$, $m<n$, and let
  $s=\max\{i\in [n]\mid i\not\in f([m])\}$. Then
  $(f,\epsilon)=d^{\epsilon(s)}_{s,n}\circ (g,\zeta)$, where
  $(g,\zeta)\in\square([m],[n-1])$ is given by
  \begin{equation*}
    g(i)=
    \begin{cases}
      f(i) & \text{for $f(i)<s$,} \\
      f(i-1) & \text{for $f(i)>s$}
    \end{cases}
    \qquad\text{ and }\qquad
    \zeta(j)=
    \begin{cases}
      \epsilon(j) & \text{for $j<s$,}\\
      \epsilon(j+1) & \text{for $j\geq s$.}
    \end{cases}
  \end{equation*}
\end{lemma}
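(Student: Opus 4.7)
The plan is a direct verification of the claimed factorization, organized around the split provided by the position $s$. Before doing anything else, I would observe that $s$ is well-defined: since $m<n$ and $f\colon[m]\to[n]$ is injective (a poset map between linear posets), the set $[n]\setminus f([m])$ is nonempty, so it has a maximum. Note also that $s\notin f([m])$, so $\epsilon(s)\in\{0,1\}$, which is what legitimizes writing $d^{\epsilon(s)}_{s,n}$.

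Next I would check that $(g,\zeta)$ is genuinely a morphism in $\square([m],[n-1])$. For $g$ I need it to land in $[n-1]$: when $f(i)<s$ we have $g(i)=f(i)\leq s-1\leq n-1$; when $f(i)>s$ we have $g(i)=f(i)-1\leq n-1$. For monotonicity, suppose $i<j$; then $f(i)<f(j)$, neither value equals $s$, and the three cases $f(i),f(j)<s$, $f(i)<s<f(j)$, and $s<f(i)<f(j)$ each give $g(i)<g(j)$ by direct inspection (the fourth case $f(i)>s>f(j)$ is impossible). Finally I must verify $g([m])=\zeta^{-1}(\exec)$, which follows from $\epsilon^{-1}(\exec)=f([m])$ by handling $j<s$ and $j\geq s$ separately: in the first, $\zeta(j)=\epsilon(j)=\exec$ iff $j=f(i)$ for some $i$ with $f(i)<s$, i.e.\ iff $j=g(i)$; in the second, $\zeta(j)=\epsilon(j+1)=\exec$ iff $j+1=f(i)$ with $f(i)>s$, i.e.\ iff $j=g(i)$.

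Then I would unpack the composition according to Definition~\ref{de:widepc}. Writing $d^{\epsilon(s)}_{s,n}\circ(g,\zeta)=(d_{s,n}\circ g,\eta)$, the underlying poset-map equality $d_{s,n}\circ g=f$ is immediate from the two cases in the definition of $g$, using that $f(i)\neq s$: when $f(i)<s$, $d_{s,n}(g(i))=d_{s,n}(f(i))=f(i)$; when $f(i)>s$, $d_{s,n}(g(i))=d_{s,n}(f(i)-1)=(f(i)-1)+1=f(i)$. For the label function $\eta$, I would evaluate at $k=s$ and at $k\neq s$. At $k=s$, since $s\notin d_{s,n}([n-1])$, the composition rule gives $\eta(s)=\epsilon^{\epsilon(s)}_s(s)=\epsilon(s)$. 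At $k\neq s$, we are in the image of $d_{s,n}$, so $\eta(k)=\zeta(d_{s,n}^{-1}(k))$, which evaluates to $\zeta(k)=\epsilon(k)$ when $k<s$ and to $\zeta(k-1)=\epsilon(k)$ when $k>s$, using the two-case definition of $\zeta$.

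There is no substantial obstacle here; the lemma is essentially a bookkeeping exercise designed to set up an induction (presumably a decomposition of any non-identity $\square$-morphism into elementary face maps). The only place where care is needed is the clean separation of the cases $f(i)<s$ versus $f(i)>s$ and $k<s$ versus $k\geq s$, and the observation that $s\notin f([m])$ makes these dichotomies exhaustive. Once those cases are laid out, all equalities reduce to substituting definitions.
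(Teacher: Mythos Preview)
Your verification is correct and is exactly the ``elementary calculations'' the paper leaves to the reader; there is no alternative approach to compare. One small remark: you (rightly) compute with $g(i)=f(i)-1$ in the case $f(i)>s$, whereas the statement as printed has $f(i-1)$; the former is what is meant (otherwise $g$ would fail to satisfy $d_{s,n}\circ g=f$ already when $f(i)=s+1$, and would even be undefined at $i=1$ if $f(1)>s$), so your silent correction is the intended reading.
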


\begin{proof}
  Elementary calculations.
\end{proof}

\begin{lemma}\label{le:PCIdFull}
  Let $(f,\epsilon)\in \square([n-s], [n])$ and denote
  $[n]\setminus f([n-s])=\{a_1<\dots<a_{s}\}$. Then
  \begin{equation*}
    (f,\epsilon)=d^{\epsilon(a_{s})}_{a_s,n}\circ
    d^{\epsilon(a_{s-1})}_{a_{s-1},n-1}\circ \dotsm \circ
    d^{\epsilon(a_{2})}_{a_2,n-s+2}\circ
    d^{\epsilon(a_{1})}_{a_1,n-s+1}.
  \end{equation*}
\end{lemma}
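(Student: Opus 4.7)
The plan is a straightforward induction on $s$, the number of elements missing from the image of $f$, using Lemma \ref{le:PCIdElem} as the engine of the inductive step. The base case $s=0$ is trivial: here $f$ is a bijection $[n]\to[n]$, hence the identity in $\square$, and the empty composition is also the identity. (If one prefers, take $s=1$ as the base case, where the claim reduces directly to Lemma \ref{le:PCIdElem}.)

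For the inductive step, suppose the claim holds for all morphisms in $\square$ whose image complement has size less than $s$, and consider $(f,\epsilon)\in\square([n-s],[n])$ with $[n]\setminus f([n-s])=\{a_1<\dots<a_s\}$. Since $a_s=\max\{i\in[n]\mid i\notin f([n-s])\}$, Lemma \ref{le:PCIdElem} applies with its ``$s$'' being our $a_s$ and yields a factorization $(f,\epsilon)=d^{\epsilon(a_s)}_{a_s,n}\circ(g,\zeta)$ with $(g,\zeta)\in\square([n-s],[n-1])$. It then suffices to verify two things about the inner morphism: first, that its image complement in $[n-1]$ is exactly $\{a_1<\dots<a_{s-1}\}$, and second, that $\zeta(a_i)=\epsilon(a_i)$ for each $i<s$. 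The first is immediate from the formula defining $g$ in Lemma \ref{le:PCIdElem}: elements of $f([n-s])$ smaller than $a_s$ are preserved, so $g([n-s])\cap\{1,\dots,a_s-1\}=f([n-s])\cap\{1,\dots,a_s-1\}$, whose complement in $\{1,\dots,a_s-1\}$ is $\{a_1,\dots,a_{s-1}\}$; above $a_s$ the map $g$ simply reindexes down by one, and no further gaps are introduced. The second follows from the formula $\zeta(j)=\epsilon(j)$ for $j<a_s$, together with $a_i<a_s$ for $i<s$.

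Applying the inductive hypothesis to $(g,\zeta)$ gives
\begin{equation*}
  (g,\zeta)=d^{\zeta(a_{s-1})}_{a_{s-1},n-1}\circ\dotsm\circ
  d^{\zeta(a_1)}_{a_1,n-s+1},
\end{equation*}
noting that the codomain index in the last factor is $(n-1)-(s-1)+1=n-s+1$, matching the statement. Substituting into the factorization from Lemma \ref{le:PCIdElem} and replacing each $\zeta(a_i)$ with $\epsilon(a_i)$ yields the claimed expression for $(f,\epsilon)$.

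There is no genuine obstacle; the only thing one must take care with is the bookkeeping of indices, namely that peeling off the \emph{largest} missing index $a_s$ at each step leaves the smaller $a_1,\dots,a_{s-1}$ intact in the reduced codomain $[n-1]$ and does not disturb the values of $\epsilon$ on them. This is precisely what Lemma \ref{le:PCIdElem} guarantees, which is why the induction proceeds cleanly from the top index downward.
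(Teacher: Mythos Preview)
Your proof is correct and follows exactly the approach the paper takes: the paper's own proof is the single line ``From Lemma \ref{le:PCIdElem} by induction,'' and you have faithfully spelled out that induction, including the necessary bookkeeping that the smaller missing indices $a_1,\dots,a_{s-1}$ and their $\epsilon$-values survive unchanged when the largest one is peeled off.
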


\begin{proof}
  From Lemma \ref{le:PCIdElem} by induction.
\end{proof}

\begin{proposition}\label{prp:CatEq}
  The category $\square$ is generated by morphisms $d^\nu_{i,n}$ and
  the co-precubical identities
  $d^\nu_{j,n}\circ d^\mu_{i,n-1}= d^\mu_{i,n}\circ d^\nu_{j-1,n-1}$
  for $1\leq i<j<n$ and $\nu,\mu\in\{0,1\}$. Every $\square$-map
  $[n-s]\to [n]$ can be written uniquely as a composition
  \begin{equation*}
    d^{\nu_1,\dotsc,\nu_s}_{a_1,\dotsc,a_s}=d^{\nu_s}_{a_s,n}\circ
    d^{\nu_{s-1}}_{a_{s-1},n-1}\circ \dotsm \circ
    d^{\nu_2}_{a_2,n-s+2}\circ d^{\nu_{1}}_{a_1,n-s+1}
  \end{equation*}
  where $\nu_i\in\{0,1\}$ and $1\leq a_1<\dots<a_s\leq n$.
\end{proposition}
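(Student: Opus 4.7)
The proposition has two assertions: that $\square$ is presented by the face maps $d^\nu_{i,n}$ modulo the co-precubical identities, and that every $\square$-map $[n-s] \to [n]$ admits a unique canonical factorization. I will establish four items: (i) validity of the identities, (ii) existence of the canonical form, (iii) its uniqueness, and (iv) sufficiency of the identities to present $\square$.

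For (i), I would verify the co-precubical identities directly from the two layers in the definition of $\square$. On underlying poset maps the identity reduces to the classical equality $d_{j,n} \circ d_{i,n-1} = d_{i,n} \circ d_{j-1,n-1}$ for $i < j$. For the composed $\epsilon$-function, the composition rule of Definition \ref{de:widepc} combined with a case split on the position $u \in [n]$ (whether $u = i$, $u = j$, or neither) shows that both sides assign $\mu$ at $i$, $\nu$ at $j$, and $\exec$ elsewhere. Existence (ii) is then immediate from Lemma \ref{le:PCIdFull}, applied with $\{a_1 < \cdots < a_s\} = [n] \setminus f([n-s])$ (listed in increasing order) and $\nu_k = \epsilon(a_k)$.

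For uniqueness (iii), the data $(a_1,\dots,a_s,\nu_1,\dots,\nu_s)$ are recoverable from $(f,\epsilon)$: since $d^\nu_{a,m}$ has image $[m]\setminus\{a\}$, a straightforward induction gives that $d^{\nu_s}_{a_s,n}\circ\cdots\circ d^{\nu_1}_{a_1,n-s+1}$ has image $[n]\setminus\{a_1,\dots,a_s\}$, so the set $\{a_1<\cdots<a_s\}$ is determined as $[n]\setminus f([n-s])$, and then $\nu_k=\epsilon(a_k)$ follows by unwinding the composition rule of Definition \ref{de:widepc}. For sufficiency (iv), any word $d^{\mu_r}_{b_r}\circ\cdots\circ d^{\mu_1}_{b_1}$ not in canonical form contains an adjacent pair with $b_{k+1}\le b_k$; applying the co-precubical identity with $i=b_{k+1}$ and $j=b_k+1$ (so $i<j$ is guaranteed) rewrites this pair as $d^{\mu_k}_{b_k+1}\circ d^{\mu_{k+1}}_{b_{k+1}}$, strictly increasing the total sum of indices. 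Since that sum is bounded above by the canonical sum $a_1+\cdots+a_s$ for the represented morphism, the procedure terminates at a canonical form, which by (iii) is uniquely determined by the morphism; hence any two words representing the same morphism are related by the identities alone. The main obstacle is the case analysis on $\epsilon$ in step (i); everything else is a formal consequence of Lemma \ref{le:PCIdFull} together with a standard bubble-sort argument.
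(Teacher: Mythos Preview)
Your approach is essentially the same as the paper's: existence of the canonical form comes from Lemma~\ref{le:PCIdFull}, uniqueness from recovering $(a_k,\nu_k)$ as the complement of the image together with the values of $\epsilon$ there, and validity of the identities is a direct check. You go further than the paper in spelling out step~(iv)---that the co-precubical identities alone suffice to rewrite any word to canonical form via a terminating bubble-sort---which the paper's proof leaves implicit.
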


\begin{proof}
  Lemma \ref{le:PCIdFull} implies that every morphism
  $(f,\epsilon)\in \square([m], [n])$ can be presented as a
  composition of elementary morphisms in which the sequence of lower
  indices is strictly decreasing. Such presentations are unique since
  $f$ can be recovered from the sequences $a_{s},\dotsc,a_1$ and
  $\epsilon(a_s),\dots,\epsilon(a_1)$. It remains to show that the
  co-precubical relations hold, which is elementary.
\end{proof}

\subsection{Precubical Sets}
\label{se:precubical}

Precubical sets are usually defined as presheaves over $\square$, \ie
functors $\square^\op \to \Set$ \cite{book/Grandis09}.  Using
Proposition \ref{pr:wide-cube}, we may instead use presheaves over
$\dotsquare$:

\begin{proposition}
  \label{pr:precubidot}
  The presheaf categories $\Set^{ \square^\op}$ and
  $\Set^{ \dotsquare^\op}$ are uniquely naturally isomorphic.
\end{proposition}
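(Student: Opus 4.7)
The plan is to transport the equivalence $\iota: \square \hookrightarrow \dotsquare$ of Proposition \ref{pr:wide-cube} through the contravariant 2-functor of precomposition, $\Set^{(-)^{\op}}$. Concretely, I would introduce the restriction functor $R = (-) \circ \iota^{\op} : \Set^{\dotsquare^{\op}} \to \Set^{\square^{\op}}$ and the extension functor $E = (-) \circ \rho^{\op} : \Set^{\square^{\op}} \to \Set^{\dotsquare^{\op}}$, where $\rho : \dotsquare \to \square$ is the unique left inverse of $\iota$ supplied by Proposition \ref{pr:wide-cube}.

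The identity $R \circ E = \id_{\Set^{\square^{\op}}}$ follows strictly (on the nose) from $\rho \circ \iota = \id_{\square}$, since precomposition is 2-functorial. For the other composition, I would construct the natural isomorphism $\alpha : \id_{\dotsquare} \Rightarrow \iota \circ \rho$ whose component at $S \in \dotsquare$ is the unique isomorphism $\iota_S : S \to [\,|S|\,]$: the two underlying linear posets admit a unique order isomorphism, and the only $\dotsquare$-isomorphism lifting it is given by $\epsilon \equiv \exec$. Applying precomposition to $\alpha$ then yields a natural isomorphism $E \circ R \cong \id_{\Set^{\dotsquare^{\op}}}$.

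For the uniqueness claim, I would argue that the uniqueness of $\rho$ in Proposition \ref{pr:wide-cube} forces $E$ to be the only functor whose composite with $R$ equals the identity, and that the uniqueness of each $\iota_S$ forces the comparison natural isomorphism $E \circ R \cong \id$ to be unique. The only real subtlety is conceptual rather than technical: clarifying that ``uniquely naturally isomorphic'' refers precisely to this data — the pair $(R, E)$ together with the canonical comparison — being determined without choice. Once this convention is fixed, the proof is essentially a direct application of the 2-functoriality of precomposition to Proposition \ref{pr:wide-cube}, with no further calculation required.
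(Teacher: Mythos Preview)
Your proposal is correct and follows essentially the same approach as the paper: both use precomposition with the unique retraction $\rho:\dotsquare\to\square$ from Proposition~\ref{pr:wide-cube} to extend presheaves from $\square$ to $\dotsquare$, and restriction along the inclusion in the other direction. Your write-up is simply more explicit than the paper's, spelling out both composites and the role of the unique isomorphisms $\iota_S$ in establishing $E\circ R\cong \id$, whereas the paper records only the extension and its restriction in a single sentence.
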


\begin{proof}
  Each functor $F: \square^\op\to \Set$ extends uniquely to a functor
  $\dot F= \rho\circ F: \dotsquare^\op\to \Set$ by composition with the
  functor $\rho$ from the proof of Proposition \ref{pr:wide-cube}.
  The functor $\dot F$, in turn, restricts to $F$ on
  $\square^\op$.
\end{proof}

\begin{definition}
  \label{de:precubi}
  The category of \emph{precubical sets} is the presheaf category
  $\Set^{ \square^\op}$ or, equivalently, $\Set^{ \dotsquare^\op}$.
  That is, a precubical set is a functor $\square^\op\to\Set$ or
  $\dotsquare^\op\to\Set$, and a precubical map is a natural
  transformation of precubical sets.
\end{definition}

We write $X_n$ for $X([n])$,
$\delta^{\nu_1,\dotsc,\nu_s}_{\{a_1,\dotsc,a_s\}}$ for
$X(d^{\nu_1,\dotsc,\nu_s}_{a_1,\dotsc,a_s})$, and
$\delta^{\nu}_{\{a_1,\dotsc,a_s\}}$ for
$X(d^{\nu,\dotsc,\nu}_{a_1,\dotsc,a_s})$.  The map
$X(d^{\nu}_{i,n}):X_n\to X_{n-1}$ is denoted by $\delta^\nu_i$.  For
any $x\in X_n$, $n$ is called the \emph{dimension} of $x$ and
indicated by $\dim x= n$.  The maps $\delta^\nu_i$ are called
\emph{elementary face maps} and the maps
$\delta^{\nu_1,\dotsc,\nu_s}_{\{a_1,\dotsc,a_s\}}$, \emph{face maps}.

\begin{definition}
  \label{de:pobj-S}
  The \emph{standard $S$-cube} on a linear poset
  $( S, \mathord{\intord})$ is the precubical set $\square^S$, where
  \begin{itemize}
  \item $\square^S_k$ is the set of functions $x:S\to \{0,\exec,1\}$
    taking value $\exec$ on exactly $k$ elements;
  \item $\delta^\nu_i$ converts the $i$-th occurence of $\exec$ into
    $\nu\in\{ 0, 1\}$, \ie if
    $x^{-1}(\exec)=\{p_1\intord\dotsm\intord p_k\}$, then
    \begin{equation*}
      \delta^\nu_i x( p)=
      \begin{cases}
        \nu & \text{for $p=p_i$}, \\
        x( p) & \text{otherwise}.
      \end{cases}
    \end{equation*}
  \end{itemize}
\end{definition}

Every function $x:S\to \{0,\exec,1\}$ in $\square^S_k$
determines a unique poset map $f_x:[ k]\to S$ by the isomorphism
$f_x:[ k]\to f_x([ k])= x^{ -1}( \exec)$.
Denote the unique top-dimensional cell of $\square^S$ (the unique
element of $\square^S( S)$) by $\yoneda_S$. Then $\yoneda_S( x)=\exec$
for all $x\in S$. The order on $S$ is necessary to define face maps:
it determines which of the $\exec$ values should be converted into
$\nu$.

For $n\ge 0$, the \emph{standard $n$-cube} is
$\square^n:=\square^{[n]}$, and its unique $n$-cell is denoted by
$\yoneda_n$.

Regarded as a presheaf, $\square^S$ is the functor represented by $S$,
\ie $\square^S( T)=\dotsquare( T, S)$.  The cell $\yoneda_S$
corresponds to the identity morphism on $S$.  The following is an
immediate consequence of the Yoneda lemma.

\begin{lemma}
  \label{le:ineda}
  Let $X$ be a precubical set and $x\in X_n$. Then there exists a unique
  precubical map $\ineda_x: \square^n\to X$ such that
  $\ineda_x(\yoneda_n)=x$. \qed
\end{lemma}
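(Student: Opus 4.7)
The plan is to recognize this as a direct instance of the Yoneda lemma and then spell out the construction and uniqueness. The paragraph preceding the statement has done the essential setup: $\square^n = \square^{[n]}$ is, as a presheaf, the representable $\dotsquare(-,[n])$, and the cell $\yoneda_n$ corresponds to $\id_{[n]}$. Thus the Yoneda lemma in $\Set^{\dotsquare^\op}$ yields a natural bijection
\[
  \mathrm{Nat}(\square^n, X) \;\xrightarrow{\;\cong\;}\; X_n, \qquad \eta \mapsto \eta_{[n]}(\yoneda_n),
\]
and $\ineda_x$ is the preimage of $x$.

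For existence, I would construct $\ineda_x$ explicitly. For each linear poset $T$ (or each $[k]$, if one prefers the skeletal version via Proposition \ref{pr:precubidot}), and for each $y \in \square^n(T) = \dotsquare(T,[n])$, set $(\ineda_x)_T(y) = X(y)(x)$. Naturality along any $g\colon T'\to T$ in $\dotsquare$ follows from the functoriality of $X$: both $X(g)\circ (\ineda_x)_T$ and $(\ineda_x)_{T'}\circ \square^n(g)$ send $y$ to $X(y\circ g)(x)$. Taking $T=[n]$ and $y=\id_{[n]}=\yoneda_n$ gives $(\ineda_x)_{[n]}(\yoneda_n) = X(\id_{[n]})(x) = x$, as required.

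For uniqueness, suppose $\eta\colon \square^n \to X$ satisfies $\eta_{[n]}(\yoneda_n)=x$. Any $y \in \square^n(T)$ is a morphism $y\colon T\to[n]$, and under the representability isomorphism we have $y = \square^n(y)(\yoneda_n)$. Naturality of $\eta$ then forces
\[
  \eta_T(y) \;=\; \eta_T\bigl(\square^n(y)(\yoneda_n)\bigr) \;=\; X(y)\bigl(\eta_{[n]}(\yoneda_n)\bigr) \;=\; X(y)(x) \;=\; (\ineda_x)_T(y),
\]
so $\eta = \ineda_x$.

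There is no real obstacle here — the only substantive ingredient is that $\square^n$ is representable with universal element $\yoneda_n$, which has already been established. The mild subtlety, if any, is that the paper works with presheaves on the large category $\dotsquare$ rather than on its skeleton $\square$; but Proposition \ref{pr:precubidot} makes either viewpoint valid, and the Yoneda lemma applies verbatim in both.
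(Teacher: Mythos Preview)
Your proposal is correct and takes essentially the same approach as the paper, which simply states that the lemma ``is an immediate consequence of the Yoneda lemma'' and omits the proof. You have spelled out the standard Yoneda argument in detail, which is exactly what underlies the paper's one-line justification.
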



\subsection{Labelings and Events}
\label{se:labelevent}

\begin{definition}
  \label{de:labelob}
  Let $A$ be a finite set.  The \emph{labeling object} on $A$ is the
  precubical set $\bang A$ with $\bang A_n= A^n$ and $\delta_i^\nu$
  defined by
  \begin{equation*}
    \delta^\nu_i(( a_1,\dotsc, a_n))=( a_1,\dotsc, a_{i-1},
    a_{i+1},\dotsc, a_n).
  \end{equation*}
  The \emph{event object} on $A$ is the precubical subset
  $\bbang A\subseteq \bang A$ given by
  \begin{equation*}
    \bbang A_n=\{( a_1,\dotsc, a_n)\mid a_i\ne a_j\text{ whenever }
  i\ne j\}.
  \end{equation*}
\end{definition}

Regarded as a presheaf, $\bang A( S)=\Set( S, A)$, hence $\bang A$ is
representable in $\Set$ via the forgetful functor
$\dotsquare\to \Set$.  In particular, $\bang A_n$ is exactly the set
of isomorphism classes of linear posets over $A$ with $n$ elements.
Similarly, $\bbang A( S)= \Inj( S, A)$, where $\Inj$ is the category
of sets and \emph{injective} maps, so that $\bbang A$ is representable
in $\Inj$ via the forgetful functor $\dotsquare\to \Inj$.  Also note
that $\bang A$ is infinite, whereas $\bbang A$ is finite: if $A$ has
$m$ elements, then $\bbang A_m$ consists of all permutations of these
elements and $\bbang A_n= \emptyset$ for $n> m$.

Every function $f\in \Set( A, B)$ induces a precubical map
$\bang f: \bang A\to \bang B$, and every injective function
$g\in \Inj( A, B)$ induces a precubical map
$\bbang g: \bbang A\to \bbang B$, turning them into functors
$\bang: \Set\to \pcSet$ and $\bbang: \Inj\to \Inj^{ \square^\op}$.
These are left adjoint to the functors $\pcSet\to \Set$ and
$\Inj^{ \square^\op}\to \Inj$ mapping $X$ to $X_1$, hence $\bang A$
and $\bbang A$ are \emph{free} in the following sense.

\begin{lemma}
  \label{le:bang-free}
  Let $X$ be a precubical set and $A$ a finite set.
  \begin{enumerate}
  \item Any function $\lambda_1: X_1\to A$ for which
    $\lambda_1( \delta_1^0 x)= \lambda_1( \delta_1^1 x)$ and
    $\lambda_1( \delta_2^0 x)= \lambda_1( \delta_2^1 x)$ for all
    $x\in X_2$ extends uniquely to a precubical map
    $\lambda: X\to \bang A$.
  \item Any function $\ev_1: X_1\to A$ for which
    $\ev_1( \delta_1^0 x)= \ev_1( \delta_1^1 x)$,
    $\ev_1( \delta_2^0 x)= \ev_1( \delta_2^1 x)$, and
    $\ev_1( \delta_1^0 x)\ne \ev_1( \delta_2^0 x)$ for all $x\in X_2$
    extends uniquely to a precubical map $\ev: X\to \bbang A$.
  \end{enumerate}
\end{lemma}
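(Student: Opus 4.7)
The plan is to build $\lambda$ dimension by dimension and to reduce all well-definedness questions to the given 2-cell compatibility condition. For each $x \in X_n$ and each $i \in [n]$, let $e_i(x) = \delta^{0}_{\{1,\dotsc,n\}\setminus\{i\}}(x) \in X_1$ denote the ``edge of $x$ in direction $i$'', obtained by contracting every other direction to its $0$-face. Then define
\[
  \lambda_n(x) = (\lambda_1(e_1(x)), \dotsc, \lambda_1(e_n(x))) \in A^n = \bang A_n.
\]

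The technical heart of the argument is a $\nu$-independence claim: for any $x \in X_n$, any $i \in [n]$, and any $\nu : \{1,\dotsc,n\}\setminus\{i\} \to \{0,1\}$,
\[
  \lambda_1\bigl(\delta^{\nu}_{\{1,\dotsc,n\}\setminus\{i\}}(x)\bigr) = \lambda_1(e_i(x)).
\]
I would prove this by induction on the number of coordinates where $\nu$ differs from the all-zero choice. Flipping a single coordinate $j \neq i$ reduces, after permuting face maps via the precubical identities so that all the other contractions sit on the outside, to the 2-cell hypothesis applied to the 2-cell $\delta^{0}_{\{1,\dotsc,n\}\setminus\{i,j\}}(x)$; iterating gives the general statement. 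Granted this, $\lambda$ is compatible with every generating face map $\delta_j^\nu$: the precubical identities rewrite $e_k(\delta_j^\nu x)$ as $e_{k}(x)$ or $e_{k+1}(x)$ (depending on $k < j$ or $k \geq j$), up to some $0$/$1$-values on the suppressed coordinates whose irrelevance is exactly $\nu$-independence. This matches the coordinate-dropping definition of $\delta_j^\nu$ on $\bang A$, so $\lambda$ is a precubical map.

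Uniqueness is then immediate: for any extension $\mu : X \to \bang A$ of $\lambda_1$ and any $x \in X_n$, applying the $\bang A$-face map $\delta^{0}_{\{1,\dotsc,n\}\setminus\{i\}}$ to $\mu_n(x) \in A^n$ simply extracts its $i$-th coordinate, and by naturality this must equal $\mu_1(e_i(x)) = \lambda_1(e_i(x))$; hence $\mu_n(x)$ is forced to coincide with the value given above.

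Part 2 is proved by the same construction applied to $\ev_1$. The distinctness hypothesis $\ev_1(\delta_1^0 x) \neq \ev_1(\delta_2^0 x)$ gives distinctness of the two entries of $\ev_2(y)$ for every 2-cell $y$, and by applying it to each of the induced 2-cells $\delta^{0}_{\{1,\dotsc,n\}\setminus\{i,j\}}(x)$ of an arbitrary $x \in X_n$ (combined, as before, with $\nu$-independence to rewrite), one obtains pairwise distinctness of all $n$ entries of $\ev_n(x)$. The image of $\ev$ thus lies in $\bbang A \subseteq \bang A$. The main obstacle throughout is the bookkeeping in the $\nu$-independence step; once that is in place, the remaining verifications unfold routinely from the precubical identities.
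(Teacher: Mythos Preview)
Your proposal is correct and follows essentially the same approach as the paper. The paper phrases the construction using $\square$-morphisms $(f_i,\epsilon_i):[1]\to[n]$ with $f_i(1)=i$, so that your $e_i(x)$ is $(f_i,\epsilon_i)^\op(x)$; your ``$\nu$-independence'' step is exactly what the paper abbreviates as ``the choices of $\epsilon_i$ do not matter'', and your distinctness argument for Part~2 via the $2$-cells $\delta^0_{\{1,\dotsc,n\}\setminus\{i,j\}}(x)$ matches the paper's use of a morphism $(f,\epsilon):[2]\to[n]$ with $f(1)=i$, $f(2)=j$.
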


\begin{proof}
  For the first claim, define functions $f_i:[ 1]\to[ n]$, for all
  $n\ge 0$ and $i\in[ n]$, by $f_i( 1)= i$.  Then define $\lambda$ by
  $\lambda( x)=( \lambda_1(( f_1, \epsilon_1)^\op( x)),\dotsc,
  \lambda_1(( f_n, \epsilon_n)^\op( x)))$ for $x\in X_n$.  Because of
  $\lambda_1\circ \delta_i^0= \lambda_1\circ \delta_i^1$, the choices
  of $\epsilon_i$ do not matter.  It is clear that $\lambda$ is the
  unique extension of $\lambda_1$.

  For the second claim, we already know that $\ev_1$ extends uniquely
  to $\ev: X\to \bang A$. We show that the image of $\ev$ lies in
  $\bbang A$.  With a slight abuse of notation, write
  $\ev( x)=( \ev_1( x),\dotsc, \ev_n( x))$ for $x\in X_n$, and suppose
  there exists an $x\in X_n$ with $\ev_i(x)= \ev_j(x)$ for $i<j$.  Let
  $(f, \epsilon)\in \square([2],[n])$ be the morphism that satisfies
  $f( 1)= i$ and $f( 2)= j$ ($\epsilon$ is again irrelevant), and let
  $y=( f, \epsilon)^\op( x)\in X_2$. Then
  $\ev_1( \delta_1^0 y)= \ev_i( x)= \ev_j( x)= \ev_1( \delta_2^0 y)$,
  in contradiction to the second property of $\ev_1$.
\end{proof}

Let henceforth $\Sigma$ be a fixed finite set.

\begin{definition}
  Let $X$ be a precubical set.  A \emph{labeling} of\/ $X$ is a
  precubical map $\lambda: X\to \bang \Sigma$.  An \emph{event
    identification} on $X$ is a map $\ev: X\to \bbang \Sigma$.
\end{definition}

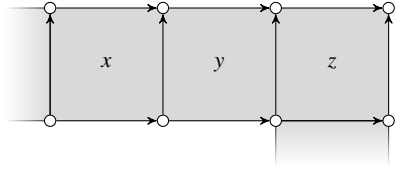
\begin{figure}
  \centering
  \begin{tikzpicture}[x=1.5cm, y=1.5cm]
    \begin{scope}
      \path[fill=black!15] (0,0) to (3,0) to (3,1) to (0,1);
      \filldraw[-, fill=black!15, path fading=west] (-.4,0) -- (0,0)
      -- (0,1) -- (-.4,1);
      \filldraw[-, fill=black!15, path fading=south] (2,-.4) -- (2,0)
      -- (3,0) -- (3,-.4);
      \foreach \x in {0, 1, 2, 3} \foreach \y in {0, 1} \node[state] (\x\y)
      at (\x,\y) {};
      \foreach \x in {0, 1, 2, 3} \path (\x0) edge (\x1);
      \foreach \y in {0, 1} \path (0\y) edge (1\y);
      \foreach \y in {0, 1} \path (1\y) edge (2\y);
      \foreach \y in {0, 1} \path (2\y) edge (3\y);
      \node at (.5,.5) {$\vphantom{y}x$};
      \node at (1.5,.5) {$y$};
      \node at (2.5,.5) {$\vphantom{y}z$};
    \end{scope}
  \end{tikzpicture}
  \bigskip
  \caption{Example of a precubical set that is not event consistent
    (left and bottom right edges identified).}
  \label{fi:pcsselflev}
\end{figure}

Every event identification on $X$ is also a labeling on $X$, but the
converse does not hold; in fact, all precubical sets admit labelings,
but not necessarily event identifications:

\begin{example}
  \label{ex:pcsselflev}
  Figure \ref{fi:pcsselflev}\, shows a precubical set with three
  $2$-cells $x, y, z$ and $\delta_1^0 x= \delta_2^0 z$.  Any event
  identification $\ev: X\to \bbang \Sigma$ must fulfill
  \begin{equation*}
    \ev( \delta_2^0 z)= \ev( \delta_1^0 x)= \ev( \delta_1^1 x)= \ev(
    \delta_1^0 y)= \ev( \delta_1^1 y)= \ev( \delta_1^0 z),
  \end{equation*}
  a contradiction.
\end{example}

\begin{definition}
  \label{de:evcons}
  A precubical set $X$ is \emph{event consistent} if it admits an event
  identification $\ev: X\to \bbang \Sigma$.
\end{definition}

\begin{lemma}
  \label{le:evcons}
  A precubical set $X$ is event consistent iff there exists an
  equivalence relation $\sim$ on $X_1$ such that for all $x\in X_2$,
  $\delta_1^0 x\sim \delta_1^1 x$, $\delta_2^0 x\sim \delta_2^1 x$,
  and $\delta_1^0 x\not\sim \delta_2^0 x$.
\end{lemma}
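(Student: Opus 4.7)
The plan is to establish the two directions separately, with the forward direction being essentially a read-off from the definitions and the reverse direction invoking the freeness statement in Lemma~\ref{le:bang-free}(2).

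For the forward direction, I would suppose an event identification $\ev: X \to \bbang \Sigma$ exists and define $e \sim e'$ on $X_1$ iff $\ev_1(e) = \ev_1(e')$, where $\ev_1$ is the restriction of $\ev$ to $X_1$ (noting that $\bbang\Sigma_1 = \Sigma$). This is an equivalence relation by construction. For any $x \in X_2$, applying the precubical identities to $\ev$ gives $\ev(\delta_1^0 x)$ and $\ev(\delta_1^1 x)$ in $\bbang\Sigma_1 = \Sigma$; unfolding the face maps of $\bbang\Sigma$ from Definition~\ref{de:labelob} shows $\ev_1(\delta_1^0 x) = \ev_1(\delta_1^1 x)$ and $\ev_1(\delta_2^0 x) = \ev_1(\delta_2^1 x)$, hence $\delta_1^0 x \sim \delta_1^1 x$ and $\delta_2^0 x \sim \delta_2^1 x$. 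The injectivity condition defining $\bbang\Sigma_2 \subseteq \bang\Sigma_2$ forces $\ev_1(\delta_1^0 x) \neq \ev_1(\delta_2^0 x)$, so $\delta_1^0 x \not\sim \delta_2^0 x$.

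For the reverse direction, assume the equivalence relation $\sim$ is given. Enumerate the $\sim$-classes of $X_1$ (finitely many, as $X$ is finite) and choose an injection of the quotient $X_1/\mathord\sim$ into $\Sigma$, which is possible provided $\Sigma$ is large enough (and if not, one enlarges $\Sigma$ — this is harmless since the whole theory is parametric in the alphabet). Composing the quotient with this injection produces a function $\ev_1: X_1 \to \Sigma$. The three properties of $\sim$ translate directly into the hypotheses of Lemma~\ref{le:bang-free}(2): the first two classes of relations give $\ev_1(\delta_1^\nu x) = \ev_1(\delta_1^{\nu'} x)$ and $\ev_1(\delta_2^\nu x) = \ev_1(\delta_2^{\nu'} x)$, while $\delta_1^0 x \not\sim \delta_2^0 x$ gives $\ev_1(\delta_1^0 x) \neq \ev_1(\delta_2^0 x)$. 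Applying Lemma~\ref{le:bang-free}(2) extends $\ev_1$ uniquely to a precubical map $\ev: X \to \bbang\Sigma$, witnessing event consistency.

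The only mildly delicate point is the choice of injection into $\Sigma$ in the reverse direction; the rest is routine unwinding of the definitions of $\bbang\Sigma$ and its face maps and application of the appropriate part of Lemma~\ref{le:bang-free}. No further obstacle is anticipated.
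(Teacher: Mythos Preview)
Your proposal is correct and follows essentially the same approach as the paper: in the forward direction you define $\sim$ as the kernel of $\ev_1$ and read off the three properties from the face maps of $\bbang\Sigma$, and in the reverse direction you invoke Lemma~\ref{le:bang-free}(2) on the quotient map. The only cosmetic difference is that the paper simply takes $\Sigma = X_1/\mathord\sim$ and uses the quotient map itself as $\ev_1$, whereas you insert an extra injection into a pre-given alphabet; your version is slightly more careful about the ambient $\Sigma$, but the substance is identical.
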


\begin{proof}
  First suppose that $X$ is event consistent. Let
  $\ev: X\to \bbang \Sigma$ and define the equivalence relation
  $\mathord{\sim}$ on $X_1$ by $x\sim y$ iff $\ev( x)= \ev( y)$.  
  From the definition of $\bang \Sigma$ we have
  $\ev( \delta_i^0 x)= \delta_i^0 \ev( x)= \delta_i^1 \ev( x)= \ev(
  \delta_i^1 x)$ and therefore $\delta_i^0 x\sim \delta_i^1 x$ holds
  for all $x\in X_2$ and for $i\in\{ 1, 2\}$.  From the definition of
  $\bang\bang \Sigma$ we have
  $\ev( \delta_1^0 x)= \delta_1^0 \ev( x)\ne \delta_2^0 \ev( x)= \ev(
  \delta_2^0 x)$ and therefore $\delta_1^0 x\not\sim \delta_2^0 x$
  holds for all $x\in X_2$.
  
  For the other direction, suppose there is a relation
  $\mathord{\sim}$ that satisfies the properties in the lemma and let
  $\Sigma= X_1/ \mathord{\sim}$ be the set of equivalence classes.
  The quotient map $X_1\to \Sigma$ then extends uniquely to
  $\ev: X\to \bbang \Sigma$ by Lemma \ref{le:bang-free}, which yields
  the event identification needed.
\end{proof}

Any event consistent precubical set admits a \emph{smallest}
equivalence relation owing to Lemma \ref{le:evcons}, denoted $\eveq$.
It is given as the transitive closure of
$\{(\delta_i^0 x, \delta_i^1 x)\mid x\in X_2, i\in\{ 1, 2\}\}$, we
call its equivalence classes the \emph{universal events} of $X$.

We will generally only concern ourselves with event consistent
precubical sets in the rest of this work, but come back to the more
general case at the end of Section \ref{se:langlang}.

\begin{example}
  \label{ex:scube-evcont}
  The standard $S$-cube $\square^S$ from Definition \ref{de:pobj-S}
  is event consistent for any linear poset $S$.  Its universal event
  equivalence is given by $x\eveq y$ iff
  $x^{ -1}( \exec)= y^{ -1}( \exec)$, induced by the event
  identification $\ev: \square^S\to \bbang S$ with
  $\ev( x)= x^{ -1}( \exec)$.  If $X$ is a precubical subset of
  $\square^S$,
  then $X$ is also event consistent.  Such precubical subsets of
  standard cubes are called \emph{sculptures} in
  \cite{DBLP:journals/corr/FahrenbergJTZ18}, where it is shown that
  they correspond to Chu spaces over $\{ 0, \exec, 1\}$
  \cite{pratt95chu, Pratt00Sculptures}.
\end{example}

The term ``universal events'' is justified by the following
factorization property, which follows immediately from the
definitions.

\begin{proposition}
  \label{prp:EventsAreUniversalLabeling}
  Every labeling $\lambda:X\to \bang \Sigma$ factors uniquely through
  $E_X$, \ie there is a unique \mbox{factorization}
  $\lambda_1= \lambda^\ev_1\circ \ev_1:X_1\to E_X\to \Sigma$ that
  extends to a factorization
  $\lambda= \lambda^\ev\circ \ev: X\to \bbang E_X\to \bang \Sigma$.
  \qed
\end{proposition}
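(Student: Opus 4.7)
The plan is to construct $\lambda^\ev$ explicitly from $\lambda_1^\ev$, then invoke Lemma \ref{le:bang-free} twice (once for existence of the precubical extension, once for uniqueness). The whole argument is formal, driven by (i) the universal property of the quotient $X_1 \twoheadrightarrow E_X$ by a generating equivalence relation and (ii) the fact that labelings are freely generated by their $1$-cell data.

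First I would define $\lambda_1^\ev: E_X \to \Sigma$ by $\lambda_1^\ev([x]_\eveq) = \lambda_1(x)$. For well-definedness I need that $\lambda_1$ is constant on $\eveq$-classes. Since $\eveq$ is by definition the transitive closure of the pairs $(\delta_i^0 y, \delta_i^1 y)$ with $y \in X_2$ and $i \in \{1,2\}$, it suffices to check that $\lambda_1(\delta_i^0 y) = \lambda_1(\delta_i^1 y)$ for each such $y,i$. But this is exactly the first condition in Lemma \ref{le:bang-free}(1), which is forced on $\lambda_1$ by the fact that $\lambda: X \to \bang\Sigma$ is a precubical map. Transitivity and symmetry of the equality $=$ on $\Sigma$ then give the property for all $\eveq$-related pairs.

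Next I would lift $\lambda_1^\ev$ to a precubical map $\lambda^\ev : \bbang E_X \to \bang\Sigma$. The function $\lambda_1^\ev : E_X \to \Sigma$ induces by functoriality of $\bang$ a precubical map $\bang\lambda_1^\ev : \bang E_X \to \bang\Sigma$, which I restrict along the inclusion $\bbang E_X \hookrightarrow \bang E_X$. To verify $\lambda = \lambda^\ev \circ \ev$, I check it on $1$-cells: $(\lambda^\ev \circ \ev)_1 = \lambda_1^\ev \circ \ev_1 = \lambda_1$ by construction. Both $\lambda$ and $\lambda^\ev \circ \ev$ are precubical maps $X \to \bang\Sigma$ whose restrictions to $X_1$ coincide and satisfy the hypotheses of Lemma \ref{le:bang-free}(1), so the uniqueness part of that lemma forces them to agree on all $X_n$.

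For uniqueness of the factorization, suppose $\lambda = \mu \circ \ev$ for some precubical $\mu : \bbang E_X \to \bang\Sigma$. Restricting to $1$-cells gives $\mu_1 \circ \ev_1 = \lambda_1$, and since $\ev_1$ is surjective (quotient map), this determines $\mu_1 = \lambda_1^\ev$ on $E_X$. Applying Lemma \ref{le:bang-free}(1) once more to the precubical map $\mu$, it is uniquely determined by its $1$-cell component, so $\mu = \lambda^\ev$. I do not anticipate a genuine obstacle: the only nontrivial observation is that the precubical-morphism condition on $\lambda$ is precisely what encodes the generators of $\eveq$, making $\lambda_1$ automatically descend to $E_X$.
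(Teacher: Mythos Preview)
Your proof is correct and fills in precisely the details the paper omits: the paper states this proposition with a \qed\ and the remark that it ``follows immediately from the definitions,'' giving no argument at all. Your approach---descend $\lambda_1$ along the quotient $\ev_1$ using that the generators of $\eveq$ are exactly the pairs on which Lemma~\ref{le:bang-free}(1) forces $\lambda_1$ to agree, then use the free-extension/uniqueness part of that lemma to pass from $1$-cells to all of $X$---is the natural unpacking of ``immediate from the definitions,'' so there is nothing to compare.
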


Hence also any event identification factors uniquely through the
universal events.

\begin{proposition}
  \label{prp:LabelsOfCells}
  Let $\lambda: X\to \bang \Sigma$ be a labeling.  If $x\in X_n$ and
  $(f,\epsilon)\in \square([m], [n])$, then
  $\lambda((f,\epsilon)^\op( x))= \lambda(x)\circ f$. In particular,
  \begin{equation*}
    \lambda( \delta^\nu_i x)=( \lambda_1(x),\dotsc, \lambda_{i-1}(x),
    \lambda_{i+1}(x),\dotsc, \lambda_n(x)).
  \end{equation*}
\end{proposition}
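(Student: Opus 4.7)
The plan is to prove this purely by unwinding the naturality of $\lambda$ together with the fact that $\bang \Sigma$ is representable (via the forgetful functor $\dotsquare\to \Set$), as noted right after Definition \ref{de:labelob}.

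First I would make precise the contravariant action of $\bang \Sigma$ on morphisms of $\square$. Under the identification $\bang \Sigma_n= \Sigma^n\cong \Set([n], \Sigma)$, a tuple $(a_1,\dotsc, a_n)$ corresponds to the function $\mu: [n]\to \Sigma$ with $\mu(k)= a_k$. Since $\bang \Sigma$ is representable via the forgetful functor, the natural guess is that for $(f,\epsilon)\in \square([m], [n])$,
\begin{equation*}
  \bang \Sigma((f,\epsilon)^\op)(\mu)= \mu\circ f.
\end{equation*}
I would verify this on the elementary generators $d^\nu_{i,n}= (d_{i,n}, \epsilon^\nu_i)$ of $\square$ (Proposition \ref{prp:CatEq}): precomposition with $d_{i,n}$ sends $\mu$ to the function that agrees with $\mu$ below $i$ and is shifted by one above $i$, which gives exactly $(a_1,\dotsc, a_{i-1}, a_{i+1},\dotsc, a_n)$, matching the formula for $\delta^\nu_i$ in Definition \ref{de:labelob}. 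Since $\square$ is generated by the $d^\nu_{i,n}$ (Proposition \ref{prp:CatEq}), functoriality then yields the formula for arbitrary $(f,\epsilon)$.

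Next, naturality of $\lambda:X\to \bang \Sigma$ applied to the morphism $(f,\epsilon)$ gives the commutative square
\begin{equation*}
  \lambda_m\circ X((f,\epsilon)^\op)= \bang \Sigma((f,\epsilon)^\op)\circ \lambda_n.
\end{equation*}
Evaluating both sides at $x\in X_n$ and using the identification above yields
\begin{equation*}
  \lambda((f,\epsilon)^\op(x))= \lambda(x)\circ f,
\end{equation*}
which is the first assertion. For the ``in particular'' part, I would specialize to $(f,\epsilon)= d^\nu_{i,n}$ and read off the components of $\lambda(x)\circ d_{i,n}$ as already computed.

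There is no real obstacle here: the argument is entirely mechanical once the identification $\bang \Sigma((f,\epsilon)^\op)= (-)\circ f$ is in place. The only thing to be slightly careful about is checking that this identification genuinely matches the combinatorial definition of $\delta^\nu_i$ on tuples, after which everything is just naturality applied at an element.
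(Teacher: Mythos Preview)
Your proposal is correct and is exactly the unpacking of what the paper means by ``Straightforward from the definitions'': naturality of $\lambda$ combined with the description of the $\bang\Sigma$-action as precomposition. The paper gives no further detail, so your version is simply a more explicit rendering of the same argument.
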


\begin{proof}
  Straightforward from the definitions.
\end{proof}

\begin{lemma}
  \label{le:SemiNSI}
  Let $X$ be an event consistent precubical set, $n\ge 0$, $x\in X_n$,
  $A,B\subseteq [n]$, and $\nu\in\{ 0, 1\}$. Then
  $\delta^\nu_A x=\delta^\nu_B x$ implies $A= B$.
\end{lemma}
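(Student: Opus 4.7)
The plan is to use the event identification supplied by event consistency to translate the identity $\delta^\nu_A x = \delta^\nu_B x$ into a statement about tuples in the event object $\bbang\Sigma$, where the ``injectivity'' built into $\bbang$ will force $A=B$.

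First I would observe that for $\delta^\nu_A x$ and $\delta^\nu_B x$ even to live in the same graded piece of $X$, we must have $|A|=|B|$; call this common value $s$. Then, using event consistency, pick any event identification $\ev\colon X\to \bbang\Sigma$ (Definition \ref{de:evcons}). Applying the precubical map $\ev$ to both sides of $\delta^\nu_A x = \delta^\nu_B x$ gives $\ev(\delta^\nu_A x)=\ev(\delta^\nu_B x)$ in $\bbang\Sigma_{n-s}\subseteq \bang\Sigma_{n-s}$.

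Now I would invoke Proposition \ref{prp:LabelsOfCells} applied to the labeling $\ev\colon X\to \bang\Sigma$ (obtained by composing with the inclusion $\bbang\Sigma\hookrightarrow\bang\Sigma$). Writing $A=\{a_1<\dotsm<a_s\}$, the face map $\delta^\nu_A$ is induced by $d^{\nu,\dotsc,\nu}_{a_1,\dotsc,a_s}=(f_A,\epsilon_A)$, whose underlying poset map $f_A\colon[n-s]\to[n]$ is the unique order-preserving bijection onto $[n]\setminus A$. Hence $\ev(\delta^\nu_A x)=\ev(x)\circ f_A$ and likewise $\ev(\delta^\nu_B x)=\ev(x)\circ f_B$. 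Writing $\ev(x)=(c_1,\dotsc,c_n)\in\bbang\Sigma_n$, these two tuples are the subtuples of $\ev(x)$ indexed in increasing order by $[n]\setminus A$ and $[n]\setminus B$ respectively.

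Finally, the key point is that $\ev(x)\in\bbang\Sigma_n$ has pairwise distinct entries by definition of the event object. Therefore the map $\{i_1<\dotsm<i_{n-s}\}\mapsto (c_{i_1},\dotsc,c_{i_{n-s}})$ from $(n-s)$-subsets of $[n]$ to $(n-s)$-tuples is injective. From $\ev(x)\circ f_A=\ev(x)\circ f_B$ we conclude $[n]\setminus A=[n]\setminus B$, and hence $A=B$. There is no real obstacle here beyond invoking the right earlier facts; the only subtle step is recognizing that an event identification is in particular a labeling, so that Proposition \ref{prp:LabelsOfCells} applies and the distinctness condition of $\bbang\Sigma$ does the rest of the work.
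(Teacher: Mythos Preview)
Your proof is correct and follows essentially the same approach as the paper: apply an event identification to both sides, use Proposition~\ref{prp:LabelsOfCells} to express $\ev(\delta^\nu_A x)$ and $\ev(\delta^\nu_B x)$ as the subtuples of $\ev(x)$ indexed by $[n]\setminus A$ and $[n]\setminus B$, and then invoke the pairwise distinctness of the entries of $\ev(x)\in\bbang\Sigma_n$ to conclude $A=B$. The paper's proof is the same argument in condensed form, using the universal event identification $\ev\colon X\to\bbang E_X$ rather than an arbitrary one, which makes no difference here.
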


\begin{proof}
  Applying Proposition \ref{prp:LabelsOfCells} to
  $\ev: X\to \bbang E_X$ yields
  $\ev(\delta^\nu_A x)=(\ev_i(x))_{i\in [n]\setminus A}$ and
  $\ev(\delta^\nu_B x)=(\ev_i(x))_{i\in [n]\setminus B}$. Since events
  $\ev_i(x)$ are pairwise distinct, we obtain $A=B$.
\end{proof}

\begin{remark}
  For $n= 2$, the above lemma reduces to the definition of event
  consistency.  The lemma will show its importance once we consider
  \emph{tracks}, \ie sequences of cells connected at faces, in Section
  \ref{se:tracks}.  There is a related property of being
  \emph{non-selfintersecting} which has been used for the same
  purpose, see for example \cite{Fajstrup05-cubcomp}: a precubical set
  $X$ is non-selfintersecting if $\delta_A^\nu x= \delta_B^\mu x$
  implies $A= B$ and $\nu= \mu$ for all $x\in X$.  Example
  \ref{ex:pcsselflev} shows that precubical sets may be
  non-selfintersecting, but not event consistent; similarly, event
  consistency does not imply the non-selfintersecting property, see
  \cite{DBLP:journals/corr/FahrenbergJTZ18}.  Finally, also Section 4
  of \cite{Glabbeek96-pg} contains some precursors to our notion of
  event consistency.
\end{remark}

If $f:X\to Y$ is a precubical map, then $x\eveq y$ implies
$f(x)\eveq f(y)$ for $x,y\in X_1$, and  $f_1:X_1\to Y_1$ induces a
map $E_f:E_X\to E_Y$. This defines a functor $E:\pcSet\to
\Set$.

\subsection{Higher-Dimensional Automata}

Higher-dimensional automata are labeled precubical sets with initial
and accepting cells.

\begin{definition}
  A \emph{labeled precubical set} is an event consistent precubical
  set $X$ together with a labeling $\lambda: X\to \bang \Sigma$.  Maps
  $f: X\to Y$ of labeled precubical sets preserve labelings:
  $\lambda_X= \lambda_Y\circ f$.
\end{definition}

That is to say, the category of labeled precubical sets is the full
subcategory of the slice category $\pcSet/ \bang \Sigma$ on event
consistent objects.

\begin{definition}
  Let $( X, \lambda)$ be a labeled precubical set.  The \emph{label}
  of a cell $x\in X_n$ is the linear pomset
  $\ell( x)=( \ev( x), \mathord{\intord}, \lambda_{ \ev(x)})$ with
  \begin{equation*}
    \ev(x)=(\ev_1(x)\intord\ev_2(x)\intord\dotsm \intord\ev_n(x))
  \end{equation*}
  and $\lambda_{\ev(x)}(\ev_i(x))=\lambda_i(x)$.
\end{definition}

Note that $\ell( x)$ is basically the tuple $\lambda( x)$, but using
Proposition \ref{prp:EventsAreUniversalLabeling} we now regard it as a
pomset.  Event consistency is essential for this to make sense.  Of
the two following elementary lemmas, the first follows from
Proposition \ref{prp:LabelsOfCells} and Lemma \ref{le:evcons}; the
second one is trivial.

\begin{lemma}
  \label{le:EventsAreSubsets}
  If $x$ is a face of $y$, then there is a pomset inclusion
  $\ell(x)\subseteq \ell(y)$.  \qed
\end{lemma}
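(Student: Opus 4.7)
The plan is to reduce the claim to a direct application of Proposition~\ref{prp:LabelsOfCells}, applied first to the event identification $\ev : X \to \bbang E_X$ (which exists by event consistency and Proposition~\ref{prp:EventsAreUniversalLabeling}) and then to the labeling $\lambda : X \to \bang \Sigma$. The key observation is that each face-map morphism in $\square$ is induced by an injective, order-preserving poset map, and both $\ev$ and $\lambda$ are natural in the precubical structure.

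First I would unpack ``$x$ is a face of $y$'': by Proposition~\ref{prp:CatEq}, this means there exist $n \geq m \geq 0$, a cell $y \in X_n$, and a morphism $(f,\epsilon) \in \square([m],[n])$ with $x = (f,\epsilon)^\op(y)$. Here $f : [m] \to [n]$ is a strictly increasing injection. Applying Proposition~\ref{prp:LabelsOfCells} to $\ev$ gives $\ev(x) = \ev(y) \circ f$, i.e.\ $\ev_i(x) = \ev_{f(i)}(y)$ for all $i \in [m]$. Applying it to $\lambda$ gives $\lambda(x) = \lambda(y) \circ f$, so $\lambda_i(x) = \lambda_{f(i)}(y)$.

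Next I would use event consistency to extract an actual subset inclusion on the underlying event sets. Since $X$ is event consistent, Lemma~\ref{le:SemiNSI} (or directly Lemma~\ref{le:evcons}) ensures that the events $\ev_1(y), \dots, \ev_n(y)$ are pairwise distinct elements of $E_X$; hence so are $\ev_1(x), \dots, \ev_m(x)$, and the set $\ev(x) = \{\ev_{f(1)}(y), \dots, \ev_{f(m)}(y)\}$ is a genuine $m$-element subset of $\ev(y) = \{\ev_1(y), \dots, \ev_n(y)\}$. Because $f$ is order-preserving, the linear order $\intord$ on $\ev(x)$ inherited from its enumeration agrees with the restriction of $\intord$ on $\ev(y)$, so the inclusion $\ev(x) \hookrightarrow \ev(y)$ is a linear poset map.

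Finally I would verify compatibility of labelings: the identity $\lambda_i(x) = \lambda_{f(i)}(y)$ says precisely that the labeling $\lambda_{\ev(x)}$ of $\ell(x)$ equals the restriction of $\lambda_{\ev(y)}$ to $\ev(x)$. Thus the inclusion $\ev(x) \hookrightarrow \ev(y)$ is a pomset map, establishing $\ell(x) \subseteq \ell(y)$. There is no real obstacle here; the only subtle point is ensuring that we work with the \emph{smallest} event equivalence $\eveq$ (so that $E_X$ is well-defined and $\ev_i$-values are distinct), which is exactly what event consistency and Lemma~\ref{le:SemiNSI} provide.
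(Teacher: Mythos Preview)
Your proposal is correct and follows essentially the same route as the paper, which simply cites Proposition~\ref{prp:LabelsOfCells} and Lemma~\ref{le:evcons} without further detail; you have spelled out precisely those ingredients. One minor simplification: the pairwise distinctness of $\ev_1(y),\dotsc,\ev_n(y)$ is immediate from the fact that $\ev$ lands in $\bbang E_X$ (Definition~\ref{de:labelob}), so you need not invoke Lemma~\ref{le:SemiNSI} separately.
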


\begin{lemma}
  \label{le:CellParmsetFun}
  Let $f:X\to Y$ be a map of labeled precubical sets and $x\in
  X_n$. Then $\ell(f(x))=E_f(\ell(x))\cong \ell(x)$. \qed
\end{lemma}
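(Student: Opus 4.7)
The plan is to unwind the definitions and observe that everything is forced by naturality. Writing $\ell(x)$ as the linear pomset on the set $\ev(x)=\{\ev_1(x)\intord\dotsm\intord \ev_n(x)\}\subseteq E_X$ with labeling $\ev_i(x)\mapsto \lambda_i(x)$, and similarly $\ell(f(x))$ on $\ev(f(x))\subseteq E_Y$ with labeling $\ev_i(f(x))\mapsto \lambda_i(f(x))$, I only need to check that $E_f$ carries the carrier, the order, and the labels of $\ell(x)$ onto those of $\ell(f(x))$.

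First I would use the naturality remark at the end of Section \ref{se:labelevent}: the induced map $E_f:E_X\to E_Y$ satisfies $E_f(\ev_i(x))=\ev_i(f(x))$ for every $x\in X_n$ and $i\in[n]$. This is immediate in dimension $1$ from the definition of $E_f$, and extends to arbitrary dimensions because $\ev_i(x)$ is the image of $x$ under a face map $[1]\to[n]$ and $\ev$, $f$ are precubical. This shows that $E_f$ sends the carrier of $\ell(x)$ to the carrier of $\ell(f(x))$, preserving the linear order $\intord$ (which is determined by the indexing $i=1,\dotsc,n$). Next, since $f$ is a map of labeled precubical sets, $\lambda_Y\circ f=\lambda_X$, which by Proposition \ref{prp:LabelsOfCells} applied coordinate-wise yields $\lambda_i(f(x))=\lambda_i(x)$; hence $E_f$ is label-preserving on $\ell(x)$. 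Together these give the equality $\ell(f(x))=E_f(\ell(x))$.

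Finally, for the isomorphism $\ell(f(x))\cong \ell(x)$, I would observe that the events $\ev_1(x),\dotsc,\ev_n(x)$ are pairwise distinct by Lemma \ref{le:SemiNSI} (since $X$ is event consistent), and the same for $\ev_i(f(x))$, so $E_f$ restricts to a bijection between the $n$-element carriers that preserves both $\intord$ and the labeling. No substantial obstacle is expected here; the only point requiring care is the extension of the naturality identity $E_f\circ \ev=\ev\circ f$ from dimension $1$ to all dimensions, which is handled by factoring an $n$-cell through its projections to $1$-cells via the morphisms $(f_i,\epsilon_i):[1]\to[n]$ used in the proof of Lemma \ref{le:bang-free}.
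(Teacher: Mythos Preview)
Your proposal is correct and follows the only natural route: unwind the definitions of $\ell$, $\ev$, and $E_f$ and use that $f$ is a precubical map preserving labels. The paper gives no proof at all (it declares the lemma ``trivial'' and places a \textsf{qed} box), so there is nothing to compare against beyond saying that your argument is exactly the routine verification the paper omits.

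One small remark: your citation of Lemma~\ref{le:SemiNSI} for the pairwise distinctness of the $\ev_i(x)$ is slightly off. That lemma is about face maps; the distinctness of the $\ev_i(x)$ is rather a direct consequence of the fact that $\ev:X\to\bbang E_X$ lands in the \emph{event} object, whose $n$-cells are by definition tuples with no repeated entries. (Indeed, this distinctness is \emph{used} in the proof of Lemma~\ref{le:SemiNSI}, not established by it.) This does not affect the validity of your argument.
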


\begin{definition}
  \label{de:hda}
  A \emph{higher-dimensional automaton} (HDA) is a tuple
  $( X, I, F, \lambda)$, where $( X, \lambda)$ is a labeled precubical
  set, $I\subseteq X$ is a set of \emph{initial cells} and
  $F\subseteq X$ a set of \emph{accepting cells}.  If $X$ and $Y$ are
  HDAs, then a precubical map $f:X\to Y$ is an \emph{HDA map} if it
  preserves labels and initial and accepting cells:
  $\lambda_X= \lambda_Y\circ f$, $f(I_X)\subseteq I_Y$, and
  $f(F_X)\subseteq F_Y$.
\end{definition}




\section{Pomsets with Interfaces}
\label{se:ipoms}

We now return to posets and pomsets and introduce interfaces for them,
building on our work in \cite{DBLP:conf/RelMiCS/FahrenbergJST20} but
enriched with event orders.  Recall that $\Sigma$ is a fixed finite
set.

\subsection{Ipomsets}


\begin{definition}
  \label{de:ipomset}
  An \emph{ipomset} is a tuple
  $(P, \mathord<_P, \mathord{\intord}_P, \lambda_P, S_P, T_P)$, where
  $P$ is a finite set;
  \begin{itemize}
  \item $<_P$ is a strict partial order on $P$ called
    \emph{precendence order};
  \item $\intord_P$ is a strict partial order on $P$ called
    \emph{event order};
  \item $\lambda_P: P\to \Sigma$ is a function called \emph{labeling};
  \item $S_P$ is a subset of the $<$-minimal elements of $P$ called
    \emph{source set};
  \item $T_P$ is a subset of the $<$-maximal elements of $P$ called
    \emph{target set}.
  \end{itemize}
  We require that the relation
  $\mathord{ <_P}\cup \mathord{ \intord_P}$ is total: 
  if $x\neq y\in P$,
  then $x$ and $y$ are comparable by $\mathord{ <_P}$  or by ${ \intord_P}$.
\end{definition}

Note that $\mathord{<_P}\cup \mathord{\intord_P}$ need not be a
partial order, see Example \ref{ex:ordernotorder} below.  The linear
pomset
$( S_P, \mathord{\intord_P}\cap( S_P\times S_P),
\lambda_P\rest{S_P})$, where $\lambda_P\rest{S_P}$ denotes the domain
restriction of $\lambda_P: P\to \Sigma$ to $S_P$, is called
\emph{source interface} of $P$; we often simply write
$( S_P, \intord_P, \lambda_P)$.  Similarly,
$( T_P, \intord_P, \lambda_P)$ is the \emph{target interface} of $P$.
If $S= T= \emptyset$, then $P$ is a pomset in the classical sense
\cite{Pratt86pomsets, DBLP:journals/tcs/Gischer88} (ignoring the event
order).  If $S= T= \emptyset$ and $<$ is linear, then $P$ corresponds
to a string.

\begin{remark}
  In \cite{DBLP:conf/RelMiCS/FahrenbergJST20} we defined ipomsets
  without an event order.  Instead we picked out sources and targets
  using injections $s:[ n]\to P$ and $t:[ m]\to P$.  This implicitly
  defines (linear) event orders on the subsets
  $S_P= s([ n])\subseteq P$ and $T_P= t([ m])\subseteq P$, so the only
  essential difference between
  \cite{DBLP:conf/RelMiCS/FahrenbergJST20} and our present setting is
  that the event order is extended to the whole of $P$.  When
  $S_P\cap T_P= \emptyset$, such an extension is always possible; and
  we will see later that the ordered structures properly corresponding
  to HDAs are our present event-ordered ipomsets, see Definitions
  \ref{de:label-track} and~\ref{de:pobj}.
\end{remark}

\begin{remark}
  The ordered structure $( P, \mathord{ <_P}, \mathord{ \intord_P})$
  underlying an ipomset is a \emph{biposet} in the sense of
  \cite{DBLP:journals/jalc/EsikN04}, but because of the requirement
  that $\mathord{ <_P}\cup \mathord{ \intord_P}$ be total, not all
  biposets may be used.  \cite{DBLP:journals/jalc/EsikN04} is
  concerned with $n$-posets, \ie finite sets with $n$ partial orders,
  and then introduces a notion of higher-dimensional automata as
  recognizers of such structures.  Except for the name, the
  higher-dimensional automata of \cite{DBLP:journals/jalc/EsikN04}
  have nothing to do with our HDAs.
\end{remark}

\begin{example}
  \label{ex:ordernotorder}
  By definition, the maximal antichains of the precedence order are
  linearly ordered by the event order, but the event order may contain
  further arrows.  As an example, consider the ipomset
  \begin{equation*}
    P= \ipomset{ & a \ar[r] & b \ar@{.>}[dl] & \\ & c \ar@{.>}[u]}
  \end{equation*}
  with precedence order $a< b$ and event order $b\intord c\intord a$.
  Both maximal antichains $a\incomp c$ and $b\incomp c$ are linearly
  $\intord$-ordered, but by transitivity, also $b\intord a$.
\end{example}

Let $Q\subseteq P$ be a subset of the ipomset
$( P, <, \intord, \lambda, S, T)$. Then the restriction
\begin{equation*}
  P\rest{ Q}:=(Q, \mathord<\cap (Q\times Q), \mathord{\intord}\cap
  (Q\times Q), \lambda\rest{Q}, S\cap Q, T\cap Q)
\end{equation*}
is also an ipomset.

\begin{definition}
  Ipomsets $P$ and $Q$ are \emph{isomorphic} if there exists a
  bijection $f: P\to Q$ (an \emph{ipomset isomorphism}) that
  \begin{itemize}
  \item respects precedence: for all $x, y\in P$, $x<_P y$ iff
    $f(x)<_Q f(y)$;
  \item respects essential event ordering: for all $x, y\in P$
    with $x\incomp_P y$, $x\intord_P y$ iff $f(x)\intord_Q f(y)$;
  \item respects labels and interfaces:
    $\lambda_Q\circ f=\lambda_P$, $f(S_P)=S_Q$, and
    $f(T_P)=T_Q$.
  \end{itemize}
\end{definition}

By definition, $f$ is only required to respect the part of the event
ordering which orders events in antichains.  In Section
\ref{se:subsu} we will introduce a notion of morphism between ipomsets for
which the above form the isomorphisms.

Isomorphisms between ipomsets are unique:

\begin{lemma}
  \label{le:CanonicalIso}
  There is at most one isomorphism between any two
  ipomsets.
\end{lemma}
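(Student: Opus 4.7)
The plan is to reduce the claim to the statement that every automorphism of an ipomset is the identity. Given two isomorphisms $f, g: P\to Q$, the composite $h= g^{-1}\circ f: P\to P$ is an ipomset automorphism of $P$, so proving $h= \id_P$ will yield $f= g$.

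To analyse $h$, I would stratify $P$ by precedence depth. For $x\in P$, let $d(x)$ be the length of the longest $<_P$-chain ending at $x$, and set $P_k= d^{-1}(k)$. Because both $h$ and $h^{-1}$ respect $<_P$, depth is preserved, so $h$ restricts to a bijection on each $P_k$. Moreover, if $x<_P y$ then any chain ending at $x$ extends through $y$, so $d(y)\ge d(x)+ 1$; thus each stratum $P_k$ is a $<_P$-antichain.

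Now the totality requirement in Definition \ref{de:ipomset} enters: any two distinct $<_P$-incomparable elements are comparable by $\intord_P$. Consequently $\intord_P$ restricts to a strict linear order on each $P_k$. Since $h$ respects the event order on $<_P$-incomparable pairs, the restriction $h\rest{P_k}$ is an order-automorphism of the finite linear order $(P_k, \intord_P)$, which forces $h\rest{P_k}= \id_{P_k}$. Letting $k$ range gives $h= \id_P$, and hence $f= g$.

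The main delicacy I would anticipate is invoking totality at the right place: globally the event order is only a strict partial order whose union with $<_P$ need not even be transitive (cf.\ Example \ref{ex:ordernotorder}), so $\intord_P$ is not itself a linear extension of $<_P$. The depth stratification sidesteps this by restricting to a single $<_P$-antichain, on which totality produces an honest finite linear order, and finite linear orders admit no nontrivial automorphisms. Labels and interfaces play no role in this argument; they are preserved along for the ride.
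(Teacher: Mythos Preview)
Your proof is correct and follows essentially the same route as the paper. The paper's filtration $P_0\sqcup P_1\sqcup\dotsm$ by successive $<_P$-minimal elements is exactly your depth stratification $P_k= d^{-1}(k)$, and both arguments hinge on the same observation: each stratum is a $<_P$-antichain, hence linearly ordered by $\intord_P$, and any isomorphism preserves the stratification and restricts to an order isomorphism on each stratum. The only cosmetic difference is packaging: the paper assembles the strata into a single global linear order $\prec_P$ and invokes uniqueness of linear-order isomorphisms directly between $P$ and $Q$, whereas you first reduce to an automorphism $h= g^{-1}\circ f$ and then argue stratum by stratum.
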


\begin{proof}
  Using poset filtrations, we can combine the two orders on an ipomset
  into a linear order and then use the fact that isomorphisms between
  linearly ordered sets are unique:

  Let $P$ be an ipomset and $P_0$ its set of $<_P$-minimal elements.
  Let $P_1$ be the set of $<_P$-minimal elements of the sub-ipomset
  $P\setminus P_0$, $P_2$ the set of $<_p$-minimal elements of
  $P\setminus P_0\setminus P_1$, and so on.  The finite disjoint union
  $P= P_0\sqcup P_1\sqcup P_2\sqcup\dotsc$ is called \emph{filtration}
  of $P$.  (More precisely, one can set $P_{ > -1}= P$ and then
  inductively for $i\ge 0$, until exhaustion, let $P_i$ be the
  $<_P$-minimal elements of $P_{ > i- 1}$ and
  $P_{ > i}= P_{ > i- 1}\setminus P_i$.)

  Now all $P_i$ are $<_P$-antichains and hence linearly ordered by
  $\intord_P$.  Let $\prec_P$ be the relation on $P$ defined by
  $x\prec_P y$ if $x\in P_i$ and $y\in P_j$ for $i< j$, or
  $x, y\in P_i$ for some common $i$ and $x\intord_P y$.  Then
  $\prec_P$ is a linear order on $P$.  Further, if $f: P\to Q$ is an
  ipomset isomorphism, then
  $f:( P, \mathord{ \prec_P})\to( Q, \mathord{ \prec_Q})$ is an
  isomorphism of linear orders; hence $f$ is unique.
\end{proof}

\subsection{Gluing and Parallel Compositions}

The \emph{gluing composition} $P* Q$ of two ipomsets is defined if the
target interface of $P$ is isomorphic to the source interface of $Q$,
in which case it identifies the targets of $P$ with their
corresponding sources in $Q$ and makes all non-interface elements in
$P$ precede all non-interface elements in $Q$.  Below, $(\cdot)^+$ is
used for transitive closure.

\begin{definition}
  Let $P$ and $Q$ be ipomsets such that
  $( T_P, \mathord{\intord_P}, \lambda_P)$ is isomorphic to
  $( S_Q, \mathord{\intord_Q}, \lambda_Q)$.  The \emph{gluing
    composition} of $P$ and $Q$ is
  $P* Q=( P\sqcup( Q\setminus S_Q), \mathord<, \mathord{\intord},
  \lambda, S_P, T_Q)$ with $<$, $\intord$, and $\lambda$ defined as
  follows:
  \begin{equation*}
    \begin{aligned}
      \mathord< &= \mathord{<_P}\cup \mathord{<_Q}\cup( P\setminus
      T_P)\times( Q\setminus S_Q), \\
      \mathord{\intord} &= ( \mathord{\intord_P}\cup
      \mathord{\intord_Q})^+,
    \end{aligned}
    \qquad\quad
    \lambda( x)=
    \begin{cases}
      \lambda_P( x) &\text{if } x\in P, \\
      \lambda_Q( x) &\text{if } x\in Q.
    \end{cases}
  \end{equation*}
\end{definition}

It is clear that $P* Q$, if defined, is indeed again an ipomset: the
only non-trivial property to check is irreflexivity of $\intord$,
which follows from the fact that the restrictions of $\intord$ to $P$
and $Q$ are precisely $\intord_P$ and $\intord_Q$, respectively.
It is also clear that gluing composition
respects isomorphisms:

\begin{lemma}
  \label{le:glue-iso}
  If $P\cong P'$ and $Q\cong Q'$, then $P* Q$ is defined iff
  $P'* Q'$ is, and in that case, $P* Q\cong P'* Q'$.  \qed
\end{lemma}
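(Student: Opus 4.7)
The plan is to package the given isomorphisms $f\colon P\to P'$ and $g\colon Q\to Q'$ into a single ipomset isomorphism $h\colon P* Q\to P'* Q'$.

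For the first assertion, recall that an ipomset isomorphism preserves precedence, labels, and interfaces, hence restricts to isomorphisms of linear pomsets on the interfaces: $f\rest{T_P}\colon T_P\to T_{P'}$ and $g\rest{S_Q}\colon S_Q\to S_{Q'}$. Given an interface isomorphism $\phi\colon T_P\to S_Q$ witnessing that $P* Q$ is defined, the composite $g\rest{S_Q}\circ \phi\circ (f\rest{T_P})^{-1}\colon T_{P'}\to S_{Q'}$ witnesses that $P'* Q'$ is defined; the converse is symmetric.

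Assume both gluings are defined, and let $\phi\colon T_P\to S_Q$ and $\phi'\colon T_{P'}\to S_{Q'}$ denote the (unique, by Lemma \ref{le:CanonicalIso}) interface isomorphisms implicit in the gluing construction. On the underlying set $P\sqcup(Q\setminus S_Q)$ of $P* Q$, define
\begin{equation*}
  h(x)=
  \begin{cases}
    f(x) & \text{if } x\in P,\\
    g(x) & \text{if } x\in Q\setminus S_Q.
  \end{cases}
\end{equation*}
For $h$ to be compatible with the identification of $T_P$ with $S_Q$ (respectively of $T_{P'}$ with $S_{Q'}$) baked into the gluing, I need $\phi'(f(x))= g(\phi(x))$ for every $x\in T_P$. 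Both $\phi'\circ f\rest{T_P}$ and $g\rest{S_Q}\circ \phi$ are isomorphisms $T_P\to S_{Q'}$ between the same pair of linear pomsets, and such isomorphisms are unique (Lemma \ref{le:CanonicalIso}), so the two composites coincide.

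Bijectivity of $h$ is immediate, and it preserves labels and interfaces component-wise from $f$ and $g$. For the precedence order $<_{P* Q}= <_P\cup <_Q\cup (P\setminus T_P)\times (Q\setminus S_Q)$, preservation follows by a case analysis on where $x$ and $y$ lie, using that $h$ bijects $P$ onto $P'$ and $Q\setminus S_Q$ onto $Q'\setminus S_{Q'}$. The main obstacle is the event order $\intord_{P* Q}= (\intord_P\cup \intord_Q)^+$, since the transitive closure prevents a direct appeal to the event-preservation properties of $f$ and $g$. The argument is to track an arbitrary witnessing chain $x= z_0 \intord z_1 \intord\dotsm \intord z_n= y$ of generating steps through $h$: each step is either an $\intord_P$- or an $\intord_Q$-step lying entirely in $P$ or in $Q$, where $h$ restricts to $f$ or $g$; chains crossing the boundary transfer consistently thanks to the compatibility established above at the interface. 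Since ipomset isomorphisms are only required to preserve event order on precedence-incomparable pairs, and incomparability is preserved by $h$ because precedence is, the transferred chain establishes the required $\intord_{P'* Q'}$-relation, and the symmetric argument via $h^{-1}$ closes the iff.
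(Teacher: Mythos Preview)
The paper states this lemma without proof (the \qed\ follows immediately, prefaced by ``It is also clear that\ldots''), so there is no argument to compare against.  Your overall strategy is the natural one, and the first assertion together with the compatibility $\phi'\circ f\rest{T_P}=g\rest{S_Q}\circ\phi$ via uniqueness of interface isomorphisms is clean.

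There is, however, a gap in the event-order paragraph.  You transfer a witnessing chain $x=z_0\intord z_1\intord\dotsm\intord z_n=y$ step by step through $h$, but ipomset isomorphisms $f$ and $g$ are only required to preserve $\intord$ on \emph{precedence-incomparable} pairs, and nothing forces consecutive $z_i,z_{i+1}$ to be incomparable.  You flag this restriction in the next sentence yet still conclude that ``the transferred chain establishes the required $\intord_{P'*Q'}$-relation''; that step is unsupported as written, since an individual $\intord_P$-link between $<_P$-comparable elements need not map to an $\intord_{P'}$-link.

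A direct repair avoids chains.  The paper notes, just after defining gluing, that the restrictions of $\intord_{P*Q}$ to $P$ and to $Q$ are exactly $\intord_P$ and $\intord_Q$.  Separately, any $<_{P*Q}$-incomparable pair lies entirely in $P$ or entirely in $Q$, because the only cross relations added are $(P\setminus T_P)\times(Q\setminus S_Q)\subseteq\mathord{<_{P*Q}}$.  Hence for $x\incomp_{P*Q} y$ with, say, $x,y\in P$, one has $x\intord_{P*Q}y$ iff $x\intord_P y$, and moreover $x\incomp_P y$, so $f$ preserves and reflects this; the case $x,y\in Q$ is symmetric.  This closes the iff for the essential event order without tracking chains through the transitive closure.
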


Gluing composition of \emph{pomsets}, \ie ipomsets with empty
interfaces $S_P= T_P= \emptyset$, is the same as the standard
\emph{serial composition} \cite{DBLP:journals/fuin/Grabowski81,
  DBLP:journals/tcs/Gischer88}.  Gluing composition of \emph{strings}
is concatenation.
We also introduce a parallel composition of ipomsets that generalizes
the eponymous operation for pomsets.

\begin{definition}
  The \emph{parallel composition} of ipomsets $P$ and $Q$ is
  $P\para Q=( P\sqcup Q, <, \intord, \lambda,$ \linebreak 
  $S_P\sqcup S_Q, T_P\sqcup
  T_Q)$ with $<$, $\intord$, and $\lambda$ defined as
  follows:
  \begin{equation*}
    \begin{aligned}
      \mathord< &= \mathord{<_P}\cup \mathord{<_Q}, \\
      \mathord{\intord} &= \mathord{\intord_P}\cup
      \mathord{\intord_Q}\cup P\times Q,
    \end{aligned}
    \qquad\quad
    \lambda( x)=
    \begin{cases}
      \lambda_P( x) &\text{if } x\in P, \\
      \lambda_Q( x) &\text{if } x\in Q.
    \end{cases}
  \end{equation*}
\end{definition}

It is easy to see that $P\para Q$ is again an ipomset.  Parallel
composition of ipomsets is not commutative because of the event order.
It is again clear that parallel composition respects isomorphisms:

\begin{lemma}
  If $P\cong P'$ and $Q\cong Q'$, then $P\para Q\cong P'\para Q'$.  \qed
\end{lemma}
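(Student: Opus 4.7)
The plan is to take the hypothesized isomorphisms $f: P \to P'$ and $g: Q \to Q'$ and combine them into a single map $h = f \sqcup g: P \sqcup Q \to P' \sqcup Q'$ on the underlying sets. Since $f$ and $g$ are bijections, $h$ is too. What remains is to verify the three clauses in the definition of ipomset isomorphism, which I will do by case analysis on which side of the disjoint union the elements of $P \para Q$ lie on. The proof is essentially parallel to the preceding Lemma~\ref{le:glue-iso}, but simpler because parallel composition introduces no identification of interfaces.

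For precedence, note that $\mathord{<_{P \para Q}} = \mathord{<_P} \cup \mathord{<_Q}$ contains no relations between elements of $P$ and elements of $Q$. Hence $x <_{P\para Q} y$ forces $x$ and $y$ to lie in the same component, and the corresponding relation on the primed side follows immediately from the precedence-preservation of $f$ or $g$. This also shows that two elements are incomparable in $P \para Q$ precisely when they are both in $P$ and incomparable there, both in $Q$ and incomparable there, or lie on opposite sides.

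For the essential event order, I treat these three incomparability cases separately. If $x, y \in P$ are incomparable, then $(x, y) \notin P \times Q$, so $x \intord_{P \para Q} y$ iff $x \intord_P y$, and the event-order-preservation of $f$ gives the desired equivalence with $h(x) \intord_{P' \para Q'} h(y)$; the case $x, y \in Q$ is symmetric. If instead $x \in P$ and $y \in Q$ (or vice versa), the $P \times Q$ summand in the definition of $\intord_{P \para Q}$ forces $x \intord_{P \para Q} y$ unconditionally, and likewise $h(x) \intord_{P' \para Q'} h(y)$ holds unconditionally on the primed side, so both sides of the equivalence are true.

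Finally, labels and interfaces transfer componentwise: $\lambda_{P' \para Q'} \circ h$ equals $\lambda_{P'} \circ f = \lambda_P$ on $P$ and $\lambda_{Q'} \circ g = \lambda_Q$ on $Q$, which together agree with $\lambda_{P \para Q}$; and $h(S_P \sqcup S_Q) = f(S_P) \sqcup g(S_Q) = S_{P'} \sqcup S_{Q'}$, with the analogous identity for the target sets. There is no real obstacle here — the only thing to be careful about is checking that when $x, y$ lie on the same side the relevant $P \times Q$ summand contributes nothing to $\intord_{P \para Q}$, so that the event order really does restrict to $\intord_P$ (resp.\ $\intord_Q$) on that side.
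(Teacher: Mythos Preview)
Your proof is correct; the paper itself omits the proof entirely (the lemma is stated with an immediate \qed), so your explicit verification of the three isomorphism clauses via the componentwise bijection $h = f \sqcup g$ is exactly the intended argument spelled out. One tiny wording slip: in the ``vice versa'' case $x \in Q$, $y \in P$, both sides of the event-order biconditional are \emph{false} rather than true, but the equivalence still holds, so the argument is unaffected.
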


\subsection{Interval Orders}
\label{se:interval}

An \emph{interval order} is a poset $P$ in which $x< z$ and $y< w$
imply $x< w$ or $y< z$ for all $x, y, z, w\in P$.

\begin{lemma}[\cite{journals/mpsy/Fishburn70, book/Fishburn85,
    DBLP:journals/tcs/JanickiK93}]
  \label{le:intord}
  The following are equivalent for any poset $P$:
  \begin{enumerate}
  \item $P$ is an interval order;
  \item $P$ does not contain an induced subposet
    $\twotwo= \pomset{\bullet \ar[r] & \bullet \\ \bullet \ar[r] & \bullet}$;
  \item $P$ has an \emph{interval representation}: a pair of functions
    $s, t: P\to Q$ into a linear poset $(Q, <_Q)$ such that for all
    $x, y\in P$, $s( x)<_Q t( x)$, and $x<_P y$ iff $t( x)<_Q s( y)$;
  \item \label{en:intord.maxanti} the order $\prec$ on maximal
    antichains of $P$ defined by $X\prec Y$ if $X\ne Y$ and
    $y\not<_P x$ for all $x\in X$, $y\in Y$ is linear.
  \end{enumerate}
\end{lemma}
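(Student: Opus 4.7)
The plan is to prove the equivalences cyclically as $(2)\Rightarrow(1)\Rightarrow(4)\Rightarrow(3)\Rightarrow(2)$. The implications $(1)\Leftrightarrow(2)$ and $(3)\Rightarrow(2)$ are essentially bookkeeping. For $(2)\Rightarrow(1)$, working contrapositively, any failure of $(1)$ supplies witnesses $x<z$, $y<w$ with $x\not<w$ and $y\not<z$; using transitivity one shows that $x,y,z,w$ must be pairwise distinct with $x\incomp y$, $z\incomp w$, $x\incomp w$, $y\incomp z$ (e.g.\ $w<x$ would give $y<w<x<z$ and hence $y<z$), so they span an induced $\twotwo$. For $(3)\Rightarrow(2)$, an induced $\twotwo$ with $x<z$, $y<w$ would yield $t(x)<s(z)$, $t(y)<s(w)$, $t(x)\not<s(w)$ and $t(y)\not<s(z)$; linearity of $Q$ turns the last two into $s(w)\le t(x)$ and $s(z)\le t(y)$, producing the impossible chain $s(w)\le t(x)<s(z)\le t(y)<s(w)$.

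For $(1)\Rightarrow(4)$, I would suppose $A$ and $B$ are distinct maximal antichains with neither $A\prec B$ nor $B\prec A$. By definition of $\prec$ there exist $a\in A$, $b\in B$ with $b<a$ and $a'\in A$, $b'\in B$ with $a'<b'$. Applying the interval-order condition to the pair $b<a$, $a'<b'$ forces $b<b'$ or $a'<a$; the first contradicts $B$ being an antichain and the second contradicts $A$ being one.

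The main obstacle is $(4)\Rightarrow(3)$, since it requires actually constructing the interval representation. I would take $Q=\mcal M\times\{0,1\}$, where $\mcal M$ is the (finite) set of maximal antichains of $P$ linearly ordered by $\prec$, equipped with the lexicographic order. For each $x\in P$ let $A_x$ (resp.\ $B_x$) be the $\prec$-smallest (resp.\ $\prec$-largest) element of $\mcal M$ containing $x$, and set $s(x)=(A_x,0)$ and $t(x)=(B_x,1)$; the second coordinate ensures $s(x)<_Q t(x)$ even when $A_x=B_x$. The forward direction of $x<_P y\Leftrightarrow t(x)<_Q s(y)$ reduces to $B_x\prec A_y$: with $x\in B_x$, $y\in A_y$, and $x<_P y$, linearity of $\prec$ rules out $A_y\preceq B_x$. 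For the converse, if $x=y$ or $x\incomp y$ then $x$ and $y$ lie in a common maximal antichain $A$, whence $A_y\preceq A\preceq B_x$, contradicting $B_x\prec A_y$; while $y<_P x$ would yield $B_y\prec A_x$ by the forward direction, and chasing $A_y\preceq B_y\prec A_x\preceq B_x\prec A_y$ gives the same contradiction. Thus $x<_P y$ is the only surviving option.
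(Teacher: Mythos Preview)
The paper does not prove this lemma; it is stated with citations to Fishburn and to Janicki--Koutny as a classical characterisation of interval orders, so there is no in-paper argument to compare against.

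Your cycle $(2)\Rightarrow(1)\Rightarrow(4)\Rightarrow(3)\Rightarrow(2)$ is correct, and the construction in $(4)\Rightarrow(3)$ via $\mcal M\times\{0,1\}$ with the lexicographic order is a clean way to manufacture an interval representation. One small omission: in $(1)\Rightarrow(4)$ you only argue that distinct maximal antichains are $\prec$-comparable, but for $\prec$ to be \emph{linear} it must also be a strict partial order. Irreflexivity is built into the definition. Asymmetry holds because $A\prec B$ and $B\prec A$ together make every $a\in A$ and $b\in B$ incomparable, so $A\cup B$ is an antichain and maximality forces $A=B$. Transitivity also holds in general: if $A\prec B\prec C$ and $c<_P a$ for some $a\in A$, $c\in C$, then since $B$ is a maximal antichain either $c\in B$ (contradicting $A\prec B$), or $c<_P b$ for some $b\in B$ (contradicting $B\prec C$), or $b<_P c$ for some $b\in B$ whence $b<_P a$ (contradicting $A\prec B$). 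None of this uses the interval-order hypothesis, so your comparability argument is exactly where $(1)$ enters, as it should be; you just need to state that $\prec$ is always a partial order on maximal antichains.
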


\begin{definition}
  \label{def_evilio}
  An \emph{interval ipomset} is an ipomset $P$ for which the
  underlying precedence poset $( P, \mathord<_P)$ is an interval
  order.
\end{definition}

\begin{figure}
  \centering
  \begin{tikzpicture}[label distance=-.2cm,shorten <=-3pt, shorten >=-3pt]
    \begin{scope}[->, >=latex', xscale=.4, yscale=.75]
      \begin{scope}
        \node (1) at (1.5,0) {$\vphantom{b}a$};
        \node [label=right:{\tiny 1}] (2) at (4.5,0) {$b$};
        \node (3) at (1.5,-1) {$\vphantom{b}c$};
        \node (4) at (4.5,-1) {$d$};
        \node [label=right:{\tiny 2}] (5) at (4.5,-2) {$\vphantom{b}e$};
        \foreach \i/\j in {1/2,3/2,3/4,3/5} \path (\i) edge (\j);
        \node at (6.5,-1) {$*$};
      \end{scope}
      \begin{scope}[xshift=7.5cm]
        \node [label=left:{\tiny 1}]  (6) at (.5,0) {$b$};
        \node (7) at (3.5,0) {$\vphantom{b}\smash[b]g$};
        \node [label=left:{\tiny 2}] (8) at (.5,-2) {$\vphantom{b}e$};
        \node (9) at (3.5,-2) {$\smash[b]f$};
        \foreach \i/\j in {6/7,8/7,8/9} \path (\i) edge (\j);
        \node at  (5.7,-1) {$=$};
      \end{scope}
      \begin{scope}[xshift=16cm]
        \node (1) at (-.5,0) {$\vphantom{b}a$};
        \node (2) at (4.5,0) {$b$};
        \node (3) at (-.5,-1) {$\vphantom{b}c$};
        \node (4) at (2.5,-1) {$d$};
        \node (5) at (4.5,-2) {$\vphantom{b}e$};
      \end{scope}
      \begin{scope}[xshift=20cm]
        \node (6) at (.5,0) {\phantom{$b$}};
        \node (7) at (5.5,0) {$\vphantom{b}\smash[b]{g}$};
        \node (8) at (.5,-2) {\phantom{$\vphantom{b}e$}};
        \node (9) at (5.5,-2) {$\smash[b]{f}$};
      \end{scope}
      \foreach \i/\j in {1/2,3/2,3/4,3/5} \path (\i) edge (\j);
      \foreach \i/\j in {6/7,8/7,8/9} \path (\i) edge (\j);
      \foreach \i/\j in {1/9,4/7,4/9} \path (\i) edge (\j);
    \end{scope}
    \begin{scope}[-, xscale=.4, yshift=-3cm, shorten <=-4.5pt,
      shorten >=-4.5pt, yscale=.9]]
      \begin{scope}
        \node (1l) at (0,0) {{\tiny $|$}};
        \node (1r) at (3,0) {{\tiny $|$}};
        \node (2l) at (4,0) {{\tiny $|$}};
        \node (2r) at (5,0) {{\tiny $|$}};
        \node (3l) at (0,-1) {{\tiny $|$}};
        \node (3r) at (1,-1) {{\tiny $|$}};
        \node (4l) at (2,-1) {{\tiny $|$}};
        \node (4r) at (4.6,-1) {{\tiny $|$}};
        \node (5l) at (2,-2) {{\tiny $|$}};
        \node (5r) at (5,-2) {{\tiny $|$}};
        \path (1l) edge node[above] {\small $I(a)$}  (1r);
        \path (2l) edge node[above] {\small $I(b)$}  (2r);
        \path (3l) edge node[above] {\small $I(c)$}  (3r);
        \path (4l) edge node[above] {\small $I(d)$}  (4r);
        \path (5l) edge node[above] {\small $I(e)$}  (5r);
        \node at (6,-1) {$*$};
      \end{scope}
      \begin{scope}[xshift=7cm]
        \node (6l) at (0,0) {{\tiny $|$}};
        \node (6r) at (3,0) {{\tiny $|$}};
        \node (7l) at (4,0) {{\tiny $|$}};
        \node (7r) at (5,0) {{\tiny $|$}};
        \node (8l) at (0,-2) {{\tiny $|$}};
        \node (8r) at (1,-2) {{\tiny $|$}};
        \node (9l) at (2,-2) {{\tiny $|$}};
        \node (9r) at (5,-2) {{\tiny $|$}};
        \path (6l) edge node[above] {\small $I(b)$}  (6r);
        \path (7l) edge node[above] {\small $I(g)$}  (7r);
        \path (8l) edge node[above] {\small $I(e)$}  (8r);
        \path (9l) edge node[above] {\small $I(f)$}  (9r);
        \node at  (6.5,-1) {$=$};
      \end{scope}
      \begin{scope}[xshift=15.2cm]
        \node (1l') at (0,0) {{\tiny $|$}};
        \node (1r') at (3,0) {{\tiny $|$}};
        \node (2/6l') at (4,0) {{\tiny $|$}};
        \node (3l') at (0,-1) {{\tiny $|$}};
        \node (3r') at (1,-1) {{\tiny $|$}};
        \node (4l') at (2,-1) {{\tiny $|$}};
        \node (4r') at (4.6,-1) {{\tiny $|$}};
        \node (5/8l') at (2,-2) {{\tiny $|$}};
      \end{scope}
      \begin{scope}[xshift=21cm]
        \node (2/6r') at (3,0) {{\tiny $|$}};
        \node (7l') at (4,0) {{\tiny $|$}};
        \node (7r') at (5,0) {{\tiny $|$}};
        \node (5/8r') at (1,-2) {{\tiny $|$}};
        \node (9l') at (2,-2) {{\tiny $|$}};
        \node (9r') at (5,-2) {{\tiny $|$}};
      \end{scope}
      \path (1l') edge node[above] {\small $I(a)$}  (1r');
      \path (2/6l') edge node[above] {\small $I(b)$}  (2/6r');
      \path (3l') edge node[above] {\small $I(c)$}  (3r');
      \path (4l') edge node[above] {\small $I(d)$}  (4r');
      \path (5/8l') edge node[above] {\small $I(e)$}  (5/8r');
      \path (7l') edge node[above] {\small $I(g)$}  (7r');
      \path (9l') edge node[above] {\small $I(f)$}  (9r');
    \end{scope}
  \end{tikzpicture}
  \bigskip\medskip
  \caption{Two interval ipomsets and their gluing: above as ipomsets,
    below using interval representations (event order not shown)}
  \label{fi:intcomp}
\end{figure}
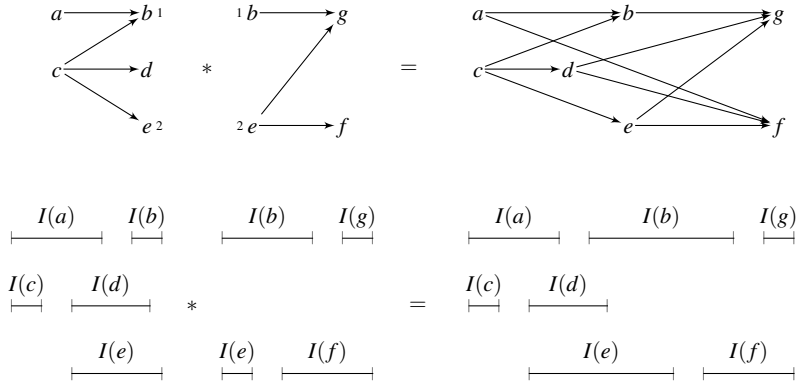

Restrictions of interval ipomsets are again interval.  The following
is shown in~\cite{DBLP:conf/RelMiCS/FahrenbergJST20} using interval
representations, see Figure \ref{fi:intcomp} for an example.

\begin{lemma}
  \label{le:intcomp}
  If $P$ and $Q$ are interval ipomsets and $P* Q$ exists, then $P* Q$
  is an interval ipomset. \qed
\end{lemma}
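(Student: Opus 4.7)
My plan is to apply item~(2) of Lemma~\ref{le:intord} and show that $P*Q$ is $\twotwo$-free. Write $A = P \setminus T_P$, $B = T_P = S_Q$, and $C = Q \setminus S_Q$, so that the underlying set of $P*Q$ decomposes as $A \sqcup B \sqcup C$. From the definition of gluing: (i)~the order restricts to $<_P$ on $P = A \cup B$; (ii)~it restricts to $<_Q$ on $Q = B \cup C$; and (iii)~every element of $A$ precedes every element of $C$ in $P*Q$. Moreover, since $B = T_P$ is an antichain of $<_P$-maximal elements and $B = S_Q$ is an antichain of $<_Q$-minimal elements, (iv)~no two $B$-elements are $<$-comparable in $P*Q$, no $B$-element lies $<$-below an $A$-element, and no $C$-element lies $<$-below a $B$-element.

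Suppose for contradiction that distinct $x,y,z,w \in P*Q$ form an induced $\twotwo$, \ie $x < z$, $y < w$, with all other pairs incomparable. The crux is to show that $\{x,y,z,w\}$ cannot contain both an $A$-element and a $C$-element. From $x<z$ together with~(iv), $z \in A$ forces $x \in A$, and symmetrically $w \in A$ forces $y \in A$; by the same analysis, $x \in C$ forces $z \in C$ and $y \in C$ forces $w \in C$. Combined with the exclusions $(x,w) \notin A\times C$ (from $x \not< w$ and~(iii)) and $(y,z) \notin A\times C$ (from $y \not< z$ and~(iii)), a short case split on the positions of $x,y,z,w$ shows every layout containing both an $A$- and a $C$-element contradicts one of $x<z$, $y<w$, $x\not<w$, $y\not<z$. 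For instance, if $x \in A$ and $z \in C$, then $w \in A \cup B$ (to avoid $x<w$) and $y \in B \cup C$ (to avoid $y<z$), but then $y<w$ is impossible by~(iv).

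Hence $\{x,y,z,w\} \subseteq A \cup B = P$ or $\{x,y,z,w\} \subseteq B \cup C = Q$, and by~(i) or~(ii) this exhibits a $\twotwo$ in $P$ or in $Q$, contradicting the hypothesis that both are interval orders. Therefore $P*Q$ is $\twotwo$-free, and Lemma~\ref{le:intord} concludes that $P*Q$ is an interval ipomset. The main obstacle is the case bookkeeping; once rules (i)--(iv) are in place, each remaining subcase reduces to one of these observations. An alternative route would be to construct an explicit interval representation of $P*Q$ by rescaling the representations of $P$ and $Q$ and aligning them along the $B$-intervals, but the $\twotwo$-free argument avoids the delicate endpoint arithmetic.
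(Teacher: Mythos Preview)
Your argument is correct. The paper itself does not prove this lemma; it cites \cite{DBLP:conf/RelMiCS/FahrenbergJST20} and indicates that the proof there goes via interval representations (item~(3) of Lemma~\ref{le:intord}), with Figure~\ref{fi:intcomp} illustrating how the two representations are concatenated end to end. You take the complementary route through the $\twotwo$-free characterization (item~(2)), which you yourself flag as the alternative. The interval-representation argument is constructive and yields an explicit witness for $P*Q$; your combinatorial argument is self-contained within the present paper and avoids the endpoint arithmetic, trading it for the case analysis on the partition $A\sqcup B\sqcup C$.

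One small nitpick: in your illustrative case $x\in A$, $z\in C$, the subcase $y\in C$, $w\in A$ is not literally covered by your statement of~(iv); it follows instead from~(iii) together with irreflexivity of $<$ (if $y<w$ with $y\in C$ and $w\in A$, then also $w<y$ by~(iii), contradicting strictness). Adding ``no $C$-element lies $<$-below an $A$-element'' to~(iv), or invoking~(iii) at that point, would make the appeal clean.
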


We develop a decomposition property for interval ipomsets which will
be useful later.

\begin{definition}
  An ipomset $P$ is \emph{discrete} if $<_P$ is empty (thus, $\intord_P$
  is a linear order).  In addition, $P$ is a \emph{starter} if
  $T_P= P$, and $P$ is a \emph{terminator} if $S_P= P$.  A starter $P$
  is \emph{elementary} if $P\setminus S_P$ is a singleton, and a
  terminator $P$ is elementary if $P\setminus T_P$ is a singleton.
\end{definition}

Starters may be used to start events and terminators to terminate
them.  In compositions, they can switch off parts of starting or
terminating interfaces, see Figure \ref{fi:discrete}.


Every starter is a gluing of elementary starters, and every terminator
a gluing of elementary terminators (both not necessarily unique).
Identity ipomsets are both starters and terminators, and any discrete
ipomset can be written as a gluing of a starter followed by a
terminator.

We introduce special notation for discrete ipomsets: for subsets
$S, T\subseteq U$ of a linear pomset $( U, \intord, \lambda)$ we write
\begin{equation*}
  \subid{S}{U}{T}=( U, \emptyset, \intord, \lambda, S, T).
\end{equation*}
The next lemma follows easily.

\begin{lemma}
  \label{le:SubIdComposition}
  Let $S$, $T$, $U$ be linear pomsets.  If $S\subseteq T\subseteq U$,
  then $\subid{S}{T}{T}*\subid{T}{U}{U}\cong \subid{S}{U}{U}$ and
  $\subid{U}{U}{T}*\subid{T}{T}{S}\cong \subid{U}{U}{S}$.  If
  $S, T\subseteq U$, then
  $\subid{S}{U}{U}*\subid{U}{U}{T}\cong \subid{S}{U}{T}$. \qed
\end{lemma}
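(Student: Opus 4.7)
The plan is to unfold the definition of gluing composition in each of the three cases; each claim reduces to a direct computation after observing that discreteness forces the crossing term $(P\setminus T_P)\times(Q\setminus S_Q)$ to vanish.

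First I would note the generalities. In all three cases both factors are discrete, so the individual precedence orders are empty, the event orders are linear, and each gluing is well-defined because the matching target and source interfaces literally coincide on the nose (as subsets of the common ambient linear pomset $U$). After gluing, the underlying set is $P\sqcup(Q\setminus S_Q)$; in each of the three instances this set is canonically identified with $U$ (either directly, or using $S\subseteq T\subseteq U$). The event order of the gluing is the transitive closure of $\mathord{\intord_P}\cup\mathord{\intord_Q}$, which in each case is simply the restriction of $\intord$ on $U$ to the relevant elements, hence linear. Labels and source/target interfaces are read off from the definition.

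It remains in each case to verify that the gluing's precedence order is empty, which is the one non-trivial point but is immediate. For claim 1, $\subid{S}{T}{T}*\subid{T}{U}{U}$, the first factor has $T_P=T=P$, so $P\setminus T_P=\emptyset$ and the crossing set is empty; the source is $S_P=S$ and the target is $T_Q=U$, giving $\subid{S}{U}{U}$. For claim 2, $\subid{U}{U}{T}*\subid{T}{T}{S}$, the second factor has $S_Q=T=Q$, so $Q\setminus S_Q=\emptyset$; the source is $U$ and target is $S$, giving $\subid{U}{U}{S}$. For claim 3, $\subid{S}{U}{U}*\subid{U}{U}{T}$, the first factor satisfies $P=T_P=U$ and the second $Q=S_Q=U$, so both conditions force the crossing empty; the source is $S$ and target is $T$, giving $\subid{S}{U}{T}$.

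There is no real obstacle — the only bookkeeping is to confirm that the resulting event orders agree and that the canonical bijection between the formal disjoint union $P\sqcup(Q\setminus S_Q)$ and $U$ is an ipomset isomorphism (which by Lemma \ref{le:CanonicalIso} is automatic once we check it respects $<$, $\intord$ on antichains, labels, and interfaces). The mildly subtle case is claim 1 where one must use the hypothesis $S\subseteq T\subseteq U$ to ensure $T\sqcup(U\setminus T)=U$ and that $S$ remains a subset of the $<$-minimal elements (which is trivial since $<$ is empty). The rest is routine.
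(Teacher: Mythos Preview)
Your proposal is correct. The paper provides no proof for this lemma (it is marked with \qed and preceded by ``The next lemma follows easily''), so there is nothing to compare against; your direct unfolding of the gluing composition, with the observation that in each case one side of the crossing product $(P\setminus T_P)\times(Q\setminus S_Q)$ is empty, is exactly the routine verification the authors have in mind.
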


\begin{proposition}
  \label{pr:DecompositionOfEvilio}
  For an ipomset $P$ the following are equivalent:
  \begin{enumerate}
  \item \label{en:DecompositionOfEvilio.intv} $P$ is an interval
    ipomset;
  \item \label{en:DecompositionOfEvilio.gludisc} $P$ is a finite
    gluing of discrete ipomsets;
  \item \label{en:DecompositionOfEvilio.glust} $P$ is a finite gluing
    of elementary starters and terminators.
  \end{enumerate}
\end{proposition}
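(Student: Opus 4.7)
The plan is to prove the cycle $(\ref{en:DecompositionOfEvilio.glust}) \Rightarrow (\ref{en:DecompositionOfEvilio.gludisc}) \Rightarrow (\ref{en:DecompositionOfEvilio.intv}) \Rightarrow (\ref{en:DecompositionOfEvilio.glust})$, routing the last implication through $(\ref{en:DecompositionOfEvilio.gludisc})$. The first implication is immediate, since every elementary starter or terminator is itself discrete. For $(\ref{en:DecompositionOfEvilio.gludisc}) \Rightarrow (\ref{en:DecompositionOfEvilio.intv})$, a discrete ipomset is trivially an interval ipomset (its precedence order is empty, hence $\twotwo$-free), and Lemma~\ref{le:intcomp} propagates the interval property along gluing.

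For $(\ref{en:DecompositionOfEvilio.intv}) \Rightarrow (\ref{en:DecompositionOfEvilio.gludisc})$ the plan is to cut $P$ along its maximal antichains. Use Lemma~\ref{le:intord}(\ref{en:intord.maxanti}) to enumerate the maximal antichains of $(P, \mathord{<_P})$ linearly as $X_1 \prec X_2 \prec \dots \prec X_n$. Each $X_i$ is a $<_P$-antichain and is therefore linearly ordered by $\intord_P$ (because $\mathord{<_P} \cup \mathord{\intord_P}$ is total), so $X_i$ underlies a discrete ipomset. A short argument shows that $X_1$ coincides with the set of $<_P$-minimal elements of $P$ (any $x \in X_1$ with a strict predecessor would inhabit a strictly $\prec$-smaller maximal antichain), and symmetrically $X_n$ with the $<_P$-maximal elements; in particular $S_P \subseteq X_1$ and $T_P \subseteq X_n$. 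Define discrete ipomsets $D_i$ on underlying set $X_i$ with event order and labelling restricted from $P$ and with interfaces $S_{D_1} = S_P$, $T_{D_n} = T_P$, and $T_{D_i} = S_{D_{i+1}} = X_i \cap X_{i+1}$ for $1 \le i < n$. The claim is $P \cong D_1 * D_2 * \dots * D_n$.

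The backbone of this isomorphism is the contiguity of lifetimes: for every $x \in P$, the set $\{i : x \in X_i\}$ is a non-empty interval $[i_x^-, i_x^+]$ in $\{1, \dots, n\}$, which follows from the interval representation of Lemma~\ref{le:intord}(3) by transferring $[s(x), t(x)]$ to the indices of antichains that meet it. Given contiguity, the underlying set of the gluing equals $\bigcup_i X_i = P$, the interfaces match $S_P$ and $T_P$ by construction, and the event order of the gluing agrees with $\intord_P$ on each $<_P$-antichain, because every antichain of the gluing must lie inside a single $X_i$. The crucial verification is that the precedence order produced by iterated gluing coincides with $<_P$: an event $x$ leaves the target interface of the partial gluing $D_1 * \dots * D_i$ precisely at step $i = i_x^+$, and the gluing rule $(A \setminus T_A) \times (B \setminus S_B)$ then forces $x <_? y$ exactly when $i_x^+ < i_y^-$, which by the antichain-range characterisation of interval orders is exactly the condition $x <_P y$.

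For $(\ref{en:DecompositionOfEvilio.gludisc}) \Rightarrow (\ref{en:DecompositionOfEvilio.glust})$ it suffices to decompose any single discrete ipomset $D = (D, \emptyset, \intord, \lambda, S, T)$. Lemma~\ref{le:SubIdComposition} splits $D \cong \subid{S}{D}{D} * \subid{D}{D}{T}$ into a starter followed by a terminator. Enumerating $D \setminus S = \{e_1 \intord \dots \intord e_k\}$ in event order and iterating the same lemma expresses the starter as a gluing of $k$ elementary starters, each inserting one new event; symmetrically, the terminator decomposes into elementary terminators. The main obstacle overall is the middle step: translating the combinatorial interval-order data into the precedence relation synthesised by iterated gluing compositions, with lifetime contiguity serving as the bridge between the two descriptions.
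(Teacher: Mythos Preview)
Your proof is correct and follows the same approach as the paper: both establish (\ref{en:DecompositionOfEvilio.intv})$\Rightarrow$(\ref{en:DecompositionOfEvilio.gludisc}) by slicing $P$ along its linearly ordered maximal antichains and verifying that the gluing of the resulting discrete pieces reconstitutes $P$. You are somewhat more explicit than the paper about lifetime contiguity, the characterization $x<_P y \iff i_x^+ < i_y^-$, and the decomposition of discrete ipomsets via Lemma~\ref{le:SubIdComposition}, but the underlying argument is the same.
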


\begin{proof}
  Equivalence of \eqref{en:DecompositionOfEvilio.gludisc} and
  \eqref{en:DecompositionOfEvilio.glust} is clear.  Given that
  discrete ipomsets are interval,
  \eqref{en:DecompositionOfEvilio.gludisc} implies
  \eqref{en:DecompositionOfEvilio.intv} by Lemma \ref{le:intcomp}.

  To show that \eqref{en:DecompositionOfEvilio.intv} implies
  \eqref{en:DecompositionOfEvilio.gludisc}, let $P$ be an interval
  ipomset and $P_1\prec\dotsm\prec P_m$ the sequence of maximal
  antichains in $P$ given by Lemma
  \ref{le:intord}\eqref{en:intord.maxanti}.  Each $P_i$ is linearly
  ordered by the restriction $\intord_i$ of $\intord_P$ to $P_i$.  Let
  $S_1= S_P$, $T_m= T_P$, and $T_i= S_{ i+ 1}= P_i\cap P_{ i+ 1}$ for
  $i\in[ m- 1]$, and define ipomsets
  $P_i=( P_i, \emptyset, \mathord{\intord_i}, \lambda\rest{P_i}, S_i,
  T_i)$.  Then the gluing $Q= P_1*\dotsm* P_m$ is defined; we show
  that $Q= P$.

  It is clear that the underlying sets of $Q$ and $P$ are equal and
  that the source and target interfaces agree.  Further,
  $\mathord{\intord_Q}=( \mathord{\intord_1}\cup\dotsm\cup
  \mathord{\intord_m})^+= \mathord{\intord_P}$.  To see that
  $\mathord{<_Q}= \mathord{<_P}$, we note that $x<_P y$ implies that
  $x\in P_i$ and $y\in P_j$ with $i< j$ and $x, y\notin P_i\cap P_j$,
  and vice versa.
\end{proof}

\begin{figure}
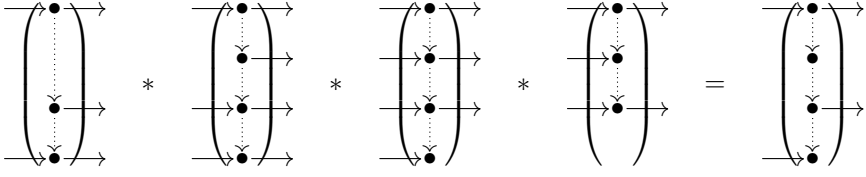

  \centering
  $\ipomset{ \ar[r] & \bullet \ar@{.>}[dd] \ar[r] & \\ &
    \phantom{\bullet} & \\ & \bullet
    \ar@{.>}[d] \ar[r] & \\ \ar[r] & \bullet \ar[r] &} \quad*\quad%
  \ipomset{ \ar[r] & \bullet \ar@{.>}[d] \ar[r] & \\ & \bullet
    \ar@{.>}[d] \ar[r] & \\ \ar[r] & \bullet \ar@{.>}[d] \ar[r] & \\
    \ar[r] & \bullet \ar[r] &} \quad*\quad%
  \ipomset{ \ar[r] & \bullet \ar@{.>}[d] \ar[r] & \\ \ar[r] & \bullet
    \ar@{.>}[d] \ar[r] & \\ \ar[r] & \bullet \ar@{.>}[d] \ar[r] & \\
    \ar[r] & \bullet &} \quad*\quad%
  \ipomset{ \ar[r] & \bullet \ar@{.>}[d] \ar[r] & \\ \ar[r] & \bullet
    \ar@{.>}[d] & \\ \ar[r] & \bullet \ar[r] & \\ & \phantom{\bullet} &} \quad=\quad%
  \ipomset{ \ar[r] & \bullet \ar@{.>}[d] \ar[r] & \\ & \bullet
    \ar@{.>}[d] & \\ & \bullet \ar@{.>}[d] \ar[r] & \\ \ar[r] &
    \bullet &}$ \bigskip
  \caption{Decomposition of discrete ipomset into elementary starters
    and terminators.}
  \label{fi:discrete}
\end{figure}

Figure \ref{fi:discrete} shows an example of a decomposition of a
discrete ipomset into elementary starters and terminators.  This
proposition also gives an alternate proof of Lemma \ref{le:intcomp}.

\subsection{Subsumption}
\label{se:subsu}

Pomsets may be \emph{smoothened}, or made less concurrent, by
strengthening precedence relations.  The corresponding relation
between pomsets has been introduced by Grabowski
\cite{DBLP:journals/fuin/Grabowski81} and is nowadays often called
\emph{subsumption} \cite{DBLP:journals/tcs/Gischer88}. We adapt it to
ipomsets.

\begin{definition}
  An ipomset $Q$ \emph{subsumes} an ipomset $P$ if there exists a bijection
  $f: P\to Q$, called a \emph{subsumption map}, such that
  \begin{itemize}
  \item for all $x, y\in P$, $f( x)<_Q f( y)$ implies $x<_P y$;
  \item for all $x, y\in P$ with $x\incomp_P y$, $x\intord_P y$
    implies $f( x)\intord_Q f( y)$; and
  \item $\lambda_Q\circ f=\lambda_P$, $f(S_P)=S_Q$, and $f(T_P)=T_Q$.
  \end{itemize}
\end{definition}

We write $P\subsu Q$ if $Q$ subsumes $P$.  That is, the points of $P$
and $Q$ are in bijection, but $P$ may be more precedence ordered than
$Q$, and $Q$ may be more event ordered than $P$.

If $P$ is discrete or $Q$ is linear, then any subsumption map
$f: P\to Q$ is an isomorphism and, in particular, unique.  We extend
gluing composition to subsumptions:

\begin{definition}
  \label{de:comp-subsu}
  Let $f: P\to P'$ and $g: Q\to Q'$ be subsumption maps and assume $P*
  Q$ and $P'* Q'$ to be defined.  Define $h= f* g: P* Q\to P'* Q'$ by
  \begin{equation*}
    h( x)=
    \begin{cases}
      f( x) &\text{if } x\in P, \\
      g( x) &\text{if } x\in Q,
    \end{cases}
  \end{equation*}
\end{definition}

\begin{lemma}
  The map $h$ from Definition~\ref{de:comp-subsu} is well-defined and
  a subsumption map.
\end{lemma}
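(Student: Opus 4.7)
The plan is to verify the defining properties of a subsumption map for $h$: bijectivity, reflection of the precedence order, preservation of the event order on precedence-incomparable pairs, and preservation of labels and interfaces. For well-definedness I need to check that the two pieces of $h$ coincide on the interface $T_P = S_Q$ where the two components of $P*Q$ overlap under the gluing identification. Since $f$ restricts to an isomorphism $T_P \to T_{P'}$ of linear pomsets and $g$ restricts to one $S_Q \to S_{Q'}$, and such isomorphisms are unique by Lemma~\ref{le:CanonicalIso}, these restrictions must agree under the canonical identifications $T_P \cong S_Q$ and $T_{P'} \cong S_{Q'}$. Bijectivity of $h$ then follows from bijectivity of $f$ together with the restriction of $g$ to a bijection $Q \setminus S_Q \to Q' \setminus S_{Q'}$.

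For reflection of $<_{P'*Q'}$, I would split into cases by which of the three parts of the gluing precedence order the pair $(h(x), h(y))$ lies in. If $h(x) <_{P'} h(y)$, both images lie in $P'$, so $x, y \in P$ and reflection by $f$ gives $x <_P y$, hence $x <_{P*Q} y$; the $<_{Q'}$ case is analogous; and if $h(x) \in P' \setminus T_{P'}$ and $h(y) \in Q' \setminus S_{Q'}$, then $x \in P \setminus T_P$ and $y \in Q \setminus S_Q$ using $f(T_P) = T_{P'}$ and $g(S_Q) = S_{Q'}$, placing $(x, y)$ in the corresponding cross part of $<_{P*Q}$. Preservation of labels and interfaces is then routine: the piecewise definition of $h$ combined with $\lambda_{P'} \circ f = \lambda_P$ and $\lambda_{Q'} \circ g = \lambda_Q$ yields $\lambda_{P'*Q'} \circ h = \lambda_{P*Q}$, while $h(S_{P*Q}) = f(S_P) = S_{P'} = S_{P'*Q'}$ and $h(T_{P*Q}) = g(T_Q) = T_{Q'} = T_{P'*Q'}$.

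The main obstacle is preservation of event order on $\incomp_{P*Q}$-pairs, which I would attack via the following characterization: if $x \incomp_{P*Q} y$ and $x \intord_{P*Q} y$, then either $x, y \in P$ with $x \intord_P y$, or $x, y \in Q$ with $x \intord_Q y$. The mixed case, $x \in P \setminus T_P$ and $y \in Q \setminus S_Q$ (or vice versa), is excluded because the cross part of $<_{P*Q}$ forces such pairs to be comparable. For same-component pairs, any chain witnessing $x \intord_{P*Q} y$ via $\intord_P \cup \intord_Q$ can be shortened: consecutive steps in the same component collapse by transitivity of $\intord_P$ or $\intord_Q$, and transitions occur at interface elements of $T_P = S_Q$, on which $\intord_P$ and $\intord_Q$ agree by the interface isomorphism built into the definition of gluing. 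Given the characterization, subsumption by $f$ (respectively $g$) on the relevant component immediately yields $h(x) \intord_{P'*Q'} h(y)$.
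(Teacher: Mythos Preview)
Your proof is correct and follows essentially the same approach as the paper, which merely invokes Lemma~\ref{le:glue-iso} to assume $T_P=S_Q$ and $T_{P'}=S_{Q'}$ and then declares that well-definedness, bijectivity, and ``the other properties follow easily.'' You have spelled out those easy properties in full, and your use of Lemma~\ref{le:CanonicalIso} to match $f\rest{T_P}$ with $g\rest{S_Q}$ is exactly the content hidden behind the paper's appeal to Lemma~\ref{le:glue-iso}.
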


\begin{proof}
  By Lemma \ref{le:glue-iso} we may assume that $T_P= S_Q$ and $T_{
    P'}= S_{ Q'}$, showing that $h$ is well-defined and a bijection.
  The other properties follow easily.
\end{proof}

\begin{lemma}
  \label{le:CompSubsu}
  If $P\subsu P'$ and $Q\subsu Q'$, then $P* Q$ is defined iff
  $P'* Q'$ is, and in that case, $P* Q\subsu P'* Q'$.
\end{lemma}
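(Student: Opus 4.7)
The plan is to break the biconditional into two parts: first, that $P*Q$ is defined iff $P'*Q'$ is; and second, that whenever both are defined, we have $P*Q \subsu P'*Q'$.

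For the first part, the key step is to show that the subsumption map $f\colon P\to P'$ restricts to an isomorphism of linear pomsets $T_P\to T_{P'}$, and analogously that $g$ restricts to an isomorphism $S_Q\to S_{Q'}$. By the definition of a subsumption map, $f(T_P)=T_{P'}$ and $\lambda_{P'}\circ f=\lambda_P$, so the restriction is a label-preserving bijection. To check that it respects the event order, I would use that the elements of $T_P$ are $<_P$-maximal and hence pairwise $<_P$-incomparable, so by the totality requirement of Definition \ref{de:ipomset} any two distinct elements of $T_P$ are comparable by $\intord_P$. The event-order clause in the definition of subsumption then transports this comparison to $T_{P'}$; since an order-preserving bijection between finite linearly ordered sets is automatically an order isomorphism, this gives $T_P\cong T_{P'}$. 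The same argument applied to $g$ yields $S_Q\cong S_{Q'}$. Combining these two isomorphisms, the condition $T_P\cong S_Q$ that defines $P*Q$ holds precisely when $T_{P'}\cong S_{Q'}$ does.

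For the second part, once both gluings exist one may invoke the preceding lemma on the composite $h=f*g$ from Definition \ref{de:comp-subsu}: that lemma asserts that $h$ is a subsumption map $P*Q\to P'*Q'$, which is exactly what is needed for $P*Q \subsu P'*Q'$.

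I expect the only nontrivial step to be the restriction argument in the second paragraph; its heart is that interface subsets are $<$-antichains, so the totality axiom for ipomsets collapses the (partial) event-order clause of subsumption into a full preservation of event order on interfaces. Everything else — well-definedness of $h$, the bijection property, and the remaining subsumption clauses — is already packaged by the preceding lemma, so the result follows by combining it with the interface isomorphisms.
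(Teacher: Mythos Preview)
Your proposal is correct and follows essentially the same approach as the paper. The paper's proof is terser---it simply states that the first claim is ``clear as $f$ and $g$ respect interfaces and labels'' and that the second follows from Definition~\ref{de:comp-subsu} (together with the preceding lemma)---while you spell out explicitly why the restrictions $f|_{T_P}$ and $g|_{S_Q}$ are linear-pomset isomorphisms; but the underlying argument is the same.
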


\begin{proof}
  Let $f:P\to P'$ and $g:Q\to Q'$ be the subsumption maps.  The first
  claim is clear as $f$ and $g$ respect interfaces and labels.  The
  second claim follows from Definition \ref{de:comp-subsu}.
\end{proof}

Using subsumption maps as 2-morphisms, ipomsets assemble as morphisms
into a bicategory.  Below, the \emph{identity} on a linear pomset
$( S, \intord, \lambda)$ is the discrete ipomset
$\id_S= \subid{S}{S}{S}=( S, \emptyset, \mathord{\intord}, \lambda, S,
S)$, with trivial precedence order and all points in both interfaces.

\begin{proposition}
  \label{pr:ipomscat}
  Ipomsets form a (large) bicategory $\iPoms$ with objects linear
  pomsets $( S, \mathord{\intord}, \lambda)$, ipomsets
  $( P, <, \intord, \lambda, S, T)$ as morphisms from
  $( S, \intord, \lambda)$ to $( T, \intord, \lambda)$ with gluing as
  composition and identities $\id_S$, subsumptions as 2-morphisms, and
  ipomset isomorphisms as 2-isomorphisms.
\end{proposition}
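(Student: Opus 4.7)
The plan is to unpack the bicategory axioms and reduce each to what the preceding lemmas already supply, with the uniqueness result of Lemma \ref{le:CanonicalIso} doing most of the heavy lifting on coherence.

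First I would fix the data. Horizontal composition is gluing $*$, which is defined precisely when the target interface of the first morphism is isomorphic to the source interface of the second; this matches the hom-setup, using Lemma \ref{le:glue-iso} to ensure gluing descends to a well-defined operation on morphisms between objects (where objects are linear pomsets up to the intrinsic identifications). Vertical composition of 2-morphisms is ordinary composition of subsumption maps viewed as functions: if $f\colon P\to Q$ and $g\colon Q\to R$ are subsumption maps, then $g\circ f$ is a bijection that preserves labels and interfaces, and an elementary check shows $f(x)<_Q f(y)\Rightarrow x<_P y$ together with $g(u)<_R g(v)\Rightarrow u<_Q v$ give $(gf)(x)<_R (gf)(y)\Rightarrow x<_P y$, and similarly for the event-order clause. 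Horizontal composition of 2-morphisms is given by Definition \ref{de:comp-subsu} and the already-proved Lemma \ref{le:CompSubsu}. Interchange of vertical and horizontal composition reduces to the set-theoretic identity $(g_1*g_2)\circ (f_1*f_2)=(g_1\circ f_1)*(g_2\circ f_2)$, which holds pointwise.

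Next I would construct the associators and unitors. For associativity, whenever $(P*Q)*R$ and $P*(Q*R)$ are both defined, both sides have underlying set $P\sqcup(Q\setminus S_Q)\sqcup(R\setminus S_R)$ (modulo the two possible bracketings of the disjoint union), and one checks directly that their precedence orders, event orders, labels, and interfaces agree under the canonical bijection; this bijection is an ipomset isomorphism, hence a 2-isomorphism. The left unitor $\id_S*P\xrightarrow\sim P$ and right unitor $P*\id_T\xrightarrow\sim P$ are defined by the canonical bijection that identifies the glued interface copies; because $\id_S=\subid{S}{S}{S}$ has empty precedence and all of $S$ in both interfaces, gluing with it adds no new precedence relations and only reshuffles event-order constraints within the interface, so by Lemma \ref{le:SubIdComposition} (or a direct check) the canonical bijection is an ipomset isomorphism. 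All of these 2-cells are by construction invertible, hence ipomset isomorphisms, matching the claim that the 2-isomorphisms are exactly ipomset isomorphisms; the converse is immediate from the definition.

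Finally, the coherence conditions---the pentagon for associativity and the triangle for units---are the place where, in general bicategories, genuine work is required. Here the hard part evaporates: all five legs of the pentagon and all three legs of the triangle are 2-isomorphisms between the same pair of ipomsets, and by Lemma \ref{le:CanonicalIso} there is at most one ipomset isomorphism between two ipomsets. Thus every coherence diagram commutes automatically. The only residual task is to verify naturality of the associator and unitors with respect to subsumption 2-cells, which again follows from the uniqueness of isomorphisms (or, equivalently, from the pointwise description of $f*g$ in Definition \ref{de:comp-subsu}). The main conceptual obstacle is simply keeping the bookkeeping of interface identifications honest, since the disjoint-union construction of $P*Q$ is only associative up to relabeling; the uniqueness lemma is what makes this bookkeeping harmless.
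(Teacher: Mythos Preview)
Your proposal is correct and takes essentially the same approach as the paper: verify the bicategory data directly and then invoke Lemma~\ref{le:CanonicalIso} (uniqueness of ipomset isomorphisms) to collapse all coherence diagrams. The only noteworthy discrepancy is that the paper asserts the units $\id_S$ are strict (``on-the-nose''), so it does not need the triangle identity at all, whereas you construct unitors as canonical isomorphisms and then appeal to uniqueness; either reading works, and your more explicit treatment of interchange and naturality simply spells out what the paper leaves implicit.
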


\begin{proof}
  It is clear that subsumption maps compose associatively and are
  invertible precisely when they are ipomset isomorphisms.  Gluing
  composition is associative up-to 2-isomorphism, and the ipomsets
  $\id_S$ are on-the-nose units for $*$.  The pentagon identity is
  trivially satisfied due to uniqueness of 2-isomorphisms.
\end{proof}

\begin{remark}
  The bicategory $\iPoms$ is large as its objects and morphisms form a
  proper class.  However, given that any ipomset is uniquely
  isomorphic to one on points $[ k]$ and with interfaces $[ n]$ and
  $[ m]$, for some $k, n, m\ge 0$, $\iPoms$ is equivalent to its
  skeleton, which is a small 2-category; hence $\iPoms$ is
  \emph{essentially small}.
\end{remark}


By Lemma \ref{le:intcomp}, interval ipomsets form a sub-bicategory
of $\iPoms$ which we will denote $\iiPoms$.

\section{Tracks and their labels}
\label{se:tracks}

We are now ready to introduce \emph{tracks} in precubical sets, which are our model of
computations, \ie
sequences of cells connected at faces.  We define labels of tracks as interval ipomsets and
show, conversely, how interval ipomsets give rise to tracks.

\subsection{Tracks}

Let $X$ be an event consistent precubical set.  For $x,y\in X$, we say
that $x$ is an \emph{elementary lower face} of $y$, denoted
$x\face y$, if $x=\delta^0_i y$ for some $i$; $x$ is an
\emph{elementary upper face} of $y$, denoted $y\ecaf x$, if
$x=\delta^1_i y$.  The reflexive transitive closures of the relations
$\face$ and $\ecaf$ are denoted $\face^*$ and $\ecaf^*$.

We say that $x$ is a \emph{lower}, resp.\ \emph{upper face} of $y$ if
$x\face^* y$, resp.\ $y\ecaf^* x$.  This is equivalent to the
condition that $x=\delta^{0,\dotsc,0}_A y$ for some (possibly empty)
$A$, resp.\ $x=\delta^{1,\dotsc,1}_A y$.  By Lemma \ref{le:SemiNSI},
$A$ is determined uniquely by $x$ and $y$.

\begin{definition}
  A \emph{track} in $X$ is a non-empty sequence
  $\rho=( x_1,\dotsc, x_m)$, $m\ge 1$, of elements of $X$ such that
  for all $i= 1,\dotsc, m- 1$, $x_i\face^* x_{ i+ 1}$ or
  $x_{ i+ 1}\ecaf^* x_i$.  A track $\rho$ as above is \emph{from $x_1$
    to $x_m$}, denoted $\rho: x_1\leadsto x_m$.
\end{definition}


We allow repeated cells $x_i= x_{ i+ 1}$ in tracks for notational
convenience.  A track $( x_1,\dotsc, x_m)$ is \emph{full} if it does
not contain such repeated cells and all face relations are elementary,
that is, $x_i\face x_{ i+ 1}$ or $x_i\ecaf x_{ i+ 1}$ for all
$i= 1,\dotsc, m- 1$.  Any track without repeated cells may be
\emph{filled} to a full track by inserting appropriate (not
necessarily unique) cells.  In \cite{DBLP:journals/tcs/Glabbeek06},
full tracks are called \emph{execution paths}.

\begin{example}
  Figure \ref{fi:track} in the introduction displays the track
  $( x_1, x_2, x_3, x_4, x_5, x_6)$.  As $x_3\face^3 x_4$, this track
  is not full; it may be filled by inserting appropriate faces of
  $x_4$ and $x_6$, for example
  \begin{equation*}
    ( x_1, x_2, x_3, \delta_1^0 \delta_1^0 x_4, \delta_1^0 x_4, x_4,
    \delta_1^1 x_4, x_5, x_6).
  \end{equation*}
\end{example}

\begin{definition}
  Let $\rho=( x_1,\dotsc, x_m)$, $\tau=( y_1,\dotsc, y_k)$ be tracks
  in $X$.  The \emph{concatenation} $\rho* \tau$ of $\rho$ and
  $\tau$ is defined if $x_m= y_1$, and in that case,
  $\rho* \tau=( x_1,\dotsc, x_m, y_2,\dotsc, y_k)$.
\end{definition}

The \emph{unit} tracks are $( x)$ for $x\in X$.  Tracks containing
exactly two cells are called \emph{basic}.  Concatenation is
associative, and every non-unit track is a unique concatenation of
basic tracks.
The following is clear.

\begin{lemma}
  \label{le:trackcat}
  Tracks in $X$ form a small category\/ $\Track{ X}$ with objects
  $x\in X$, tracks $\rho: x\leadsto y$ as morphisms, $*$ as
  composition, and identities $\id_x=( x)$.
  \qed
\end{lemma}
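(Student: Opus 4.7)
The plan is to verify the three category axioms (identities, identity laws, and associativity) directly from the definitions of track and concatenation, and then to observe smallness. None of these steps should be difficult; the main point is to confirm that the definitions are set up so everything goes through without obstruction.

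First I would check that the proposed identities $\id_x = (x)$ are indeed tracks: a one-cell sequence is vacuously a track because the condition $x_i \face^* x_{i+1}$ or $x_{i+1} \ecaf^* x_i$ quantifies over an empty set of consecutive pairs. Next, I would verify that when $\rho = (x_1, \dotsc, x_m)$ and $\tau = (y_1, \dotsc, y_k)$ with $x_m = y_1$, the concatenation $\rho * \tau = (x_1, \dotsc, x_m, y_2, \dotsc, y_k)$ is again a track: the only new consecutive pair occurs at position $m$--$(m+1)$, where we have $x_m = y_1 \face^* y_2$ or $y_2 \ecaf^* y_1 = x_m$ by hypothesis on $\tau$ (if $k \geq 2$), so the required relation holds.

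For the identity laws, I would unpack the definition: $\id_x * \rho = (x, y_2, \dotsc, y_k) = (y_1, y_2, \dotsc, y_k) = \rho$ since $x = y_1$, and similarly $\rho * \id_y$ reduces to $\rho$ because the tail $y_2, \dotsc, y_k$ of the one-cell track $(y)$ is empty. Associativity is an equally routine calculation: both $(\rho * \tau) * \sigma$ and $\rho * (\tau * \sigma)$ expand to the sequence $(x_1, \dotsc, x_m, y_2, \dotsc, y_k, z_2, \dotsc, z_\ell)$ where $\sigma = (z_1, \dotsc, z_\ell)$ and $y_k = z_1$.

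Finally, smallness follows because tracks are finite sequences of elements of the set $X = \bigcup_{n \geq 0} X_n$, hence form a subset of $\bigcup_{m \geq 1} X^m$, which is a set. The only point requiring any attention is the concatenation at the junction, and this is trivial since the face relations $\face^*$ and $\ecaf^*$ are reflexive. Consequently there is no genuine obstacle; the lemma amounts to a bookkeeping check that the sequence-concatenation structure on tracks satisfies the category axioms.
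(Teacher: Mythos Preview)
Your verification is correct and complete; the paper itself omits any argument (the lemma is stated with a bare \qed), so your sketch simply spells out the routine checks the authors deemed obvious. One tiny nit: the remark that the junction ``is trivial since the face relations $\face^*$ and $\ecaf^*$ are reflexive'' is not quite the right reason---reflexivity would matter only if the concatenation kept both copies of the shared cell, whereas in fact the pair $(x_m, y_2)=(y_1, y_2)$ already satisfies the track condition because $\tau$ is a track; but this does not affect the validity of your argument.
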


\subsection{Labels of Tracks}

Let $( X, \lambda)$ be a labeled precubical set.

\begin{definition}
  \label{de:label-track}
  The \emph{label} of a track $\rho$ in $X$ is the ipomset
  $\ell( \rho)$ defined recursively as follows:
  \begin{itemize}
  \item If $\rho=( x)$ is a unit track, then
    $\ell(\rho)= \id_{\ell(x)}$: the identity ipomset on $\ell(x)$.
  \item If $\rho=( x, y)$ with $x\face^* y$, then
    $\ell(\rho)= \subid{\ell(x)}{\ell(y)}{\ell(y)}$: a starter.
  \item If $\rho=( y, x)$ with $y\ecaf^* x$, then
    $\ell(\rho)= \subid{\ell(y)}{\ell(y)}{\ell(x)}$: a terminator.
  \item If $\rho= \tau* \rho'$ with $\tau$ a basic track, then
    $\ell(\rho)=\ell(\tau)*\ell(\rho')$.
  \end{itemize}
\end{definition}

By Proposition \ref{pr:DecompositionOfEvilio}, labels of tracks are
interval ipomsets, and
$\ell( \rho_1* \rho_2)\cong \ell( \rho_1)* \ell( \rho_2)$ for all
tracks $\rho_1$, $\rho_2$.  The following is therefore clear.

\begin{proposition}
  \label{pr:lambda-track}
  Labeling defines a functor $\ell: \Track{X}\to \iiPoms$.
\end{proposition}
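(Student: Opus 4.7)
The plan is to verify the four defining properties of a pseudofunctor $\ell: \Track{X} \to \iiPoms$, noting that by Proposition \ref{pr:ipomscat} the target is a bicategory so composition need only be preserved up to canonical 2-isomorphism. On objects, $\ell$ sends a cell $x \in X$ to the linear pomset label $\ell(x)$; on morphisms, it is given by the recursive definition \ref{de:label-track}. The nontrivial tasks are well-definedness, target-category membership, and functoriality.

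First I would check that the recursion is unambiguous: a non-unit track has a unique leftmost basic sub-track $\tau$, so case 4 applies unambiguously, while cases 2 and 3 overlap only on degenerate basic tracks $(x,x)$, where both assign $\subid{\ell(x)}{\ell(x)}{\ell(x)} = \id_{\ell(x)}$ consistently. Then, by induction on the number of basic sub-tracks, I would verify that for $\rho : x \leadsto y$ the ipomset $\ell(\rho)$ has source interface $\ell(x)$ and target interface $\ell(y)$, hence constitutes a morphism $\ell(x) \to \ell(y)$ in $\iiPoms$. The base cases are immediate from the forms $\id_{\ell(x)}$, $\subid{\ell(x)}{\ell(y)}{\ell(y)}$, $\subid{\ell(y)}{\ell(y)}{\ell(x)}$, relying on Lemma \ref{le:EventsAreSubsets} for the pomset inclusions $\ell(x) \subseteq \ell(y)$ when $x$ is a face of $y$. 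The inductive step uses that gluing $P * Q$ inherits the source of $P$ and the target of $Q$. That $\ell(\rho)$ is in fact interval (not merely an ipomset) follows from Proposition \ref{pr:DecompositionOfEvilio}, as every $\ell(\rho)$ is by construction a finite gluing of discrete ipomsets.

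Identity preservation is then the first clause of Definition \ref{de:label-track}. For composition preservation, $\ell(\rho_1 * \rho_2) \cong \ell(\rho_1) * \ell(\rho_2)$, I would induct on the number of basic sub-tracks in $\rho_1$. If $\rho_1 = (x)$ is a unit track, then $\rho_1 * \rho_2 = \rho_2$ and $\id_{\ell(x)} * \ell(\rho_2) \cong \ell(\rho_2)$ by the unit law of Proposition \ref{pr:ipomscat}. Otherwise, write $\rho_1 = \tau * \rho_1'$ with $\tau$ basic; then $\rho_1 * \rho_2 = \tau * (\rho_1' * \rho_2)$, so the recursive clause and the inductive hypothesis give $\ell(\rho_1 * \rho_2) = \ell(\tau) * \ell(\rho_1' * \rho_2) \cong \ell(\tau) * (\ell(\rho_1') * \ell(\rho_2))$, which is canonically isomorphic to $\ell(\rho_1) * \ell(\rho_2)$ via the associator of $\iiPoms$.

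The main point of care, rather than an obstacle, is simply tracking interfaces through the inductive steps; there is no genuine difficulty because Lemma \ref{le:CanonicalIso} (uniqueness of ipomset isomorphisms) makes all coherence diagrams commute automatically, which is why the authors can afford to call the proof clear.
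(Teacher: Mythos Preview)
Your proof is correct and follows the same approach as the paper, which simply declares the result ``clear'' after noting (in the sentence preceding the proposition) that labels of tracks are interval ipomsets by Proposition~\ref{pr:DecompositionOfEvilio} and that $\ell(\rho_1*\rho_2)\cong\ell(\rho_1)*\ell(\rho_2)$. You have merely spelled out the well-definedness, interface-matching, and composition-preservation arguments that the paper leaves implicit.
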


Next we see that filling a track with extra cells does not change its
label.  Let $\sim$ be the equivalence on sets of tracks generated by
$( x_1,\dotsc, x_m)\sim( x_1,\dotsc, x_{i-1}, x_{i+1},\dotsc, x_m)$
for $x_{i-1}\face^* x_{i}\face^* x_{i+1}$ or
$x_{i-1}\ecaf^* x_{i}\ecaf^* x_{i+1}$.

\begin{lemma}
  If $\rho\sim \tau$, then $\ell(\rho)\cong \ell(\tau)$.
\end{lemma}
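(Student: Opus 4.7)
The plan is to reduce to a single elementary move and then invoke Lemma~\ref{le:SubIdComposition}.

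First, since $\sim$ is the equivalence relation generated by the elementary moves described, a straightforward induction on the number of moves needed to connect $\rho$ and $\tau$ reduces the statement to the case of a single move, namely showing that
\begin{equation*}
  \ell( x_1,\dotsc, x_{i-1}, x_i, x_{i+1},\dotsc, x_m) \cong
  \ell( x_1,\dotsc, x_{i-1}, x_{i+1},\dotsc, x_m)
\end{equation*}
whenever $x_{i-1}\face^* x_i\face^* x_{i+1}$ or $x_{i-1}\ecaf^* x_i\ecaf^* x_{i+1}$.

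Next, I would use Proposition~\ref{pr:lambda-track} to factor both labels through the common prefix $(x_1,\dotsc,x_{i-1})$ and common suffix $(x_{i+1},\dotsc,x_m)$; since $\ell$ is a functor into $\iiPoms$ and gluing composition respects isomorphism (Lemma~\ref{le:glue-iso}), it suffices to check that $\ell(x_{i-1}, x_i, x_{i+1})\cong \ell(x_{i-1}, x_{i+1})$. I then split into the two symmetric cases. In the lower-face case $x_{i-1}\face^* x_i\face^* x_{i+1}$, Definition~\ref{de:label-track} unfolds the two sides to
\begin{equation*}
  \subid{\ell( x_{i-1})}{\ell( x_i)}{\ell( x_i)} \,*\,
  \subid{\ell( x_i)}{\ell( x_{i+1})}{\ell( x_{i+1})}
  \quad\text{and}\quad
  \subid{\ell( x_{i-1})}{\ell( x_{i+1})}{\ell( x_{i+1})}.
\end{equation*}
By Lemma~\ref{le:EventsAreSubsets}, iterated face relations give pomset inclusions $\ell(x_{i-1})\subseteq \ell(x_i)\subseteq \ell(x_{i+1})$, so the first identity of Lemma~\ref{le:SubIdComposition} yields the required isomorphism. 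The upper-face case $x_{i-1}\ecaf^* x_i\ecaf^* x_{i+1}$ is entirely symmetric, using the second identity of Lemma~\ref{le:SubIdComposition} with $\ell(x_{i+1})\subseteq \ell(x_i)\subseteq \ell(x_{i-1})$.

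There is no real obstacle here: the decomposition of labels in Definition~\ref{de:label-track} is designed precisely so that inserting or deleting an intermediate face reduces, after functoriality, to a single identity from Lemma~\ref{le:SubIdComposition}. The only point requiring a moment of care is observing that Definition~\ref{de:label-track} is well-defined on non-basic tracks, i.e.\ that the value $\ell(x,y)=\subid{\ell(x)}{\ell(y)}{\ell(y)}$ for $x\face^* y$ does not depend on whether we view $(x,y)$ as a single basic track or as the concatenation of basic tracks obtained from a filling; this is the special case $m=3$ of the statement being proved and follows from the same invocation of Lemma~\ref{le:SubIdComposition}. Once this consistency is noted, the general argument is exactly the induction above.
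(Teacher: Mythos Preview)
Your proposal is correct and follows essentially the same approach as the paper: reduce by induction to a single elementary move, then verify $\ell((x,y,z))\cong\ell((x,z))$ using Lemma~\ref{le:SubIdComposition}. Your write-up is merely more explicit than the paper's, spelling out the use of functoriality (Proposition~\ref{pr:lambda-track}) and the inclusions from Lemma~\ref{le:EventsAreSubsets} that the paper leaves implicit.
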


\begin{proof}
  If $x\face^* y\face^* z$, then
  \begin{equation*}
    \ell((x, y, z))=
    \ell((x, y))* \ell((y, z)) =
    \subid{\ell(x)}{\ell(y)}{\ell(y)}*\subid{\ell(y)}{\ell(z)}{\ell(z)}=
    \subid{\ell(x)}{\ell(z)}{\ell(z)}=
    \ell((x, z))
  \end{equation*}
  from Lemma \ref{le:SubIdComposition}. The computations for
  $x\ecaf^* y\ecaf^* z$ are similar, and the result then follows by
  induction.
\end{proof}

The next lemma shows that labels of tracks that consist of a cell and
two of its faces on either side are discrete ipomsets. Its proof is a
straightforward application of the definition.

\begin{lemma}
  \label{le:UpDownTrackLabel}
  If $x\face^* y \ecaf^* z$, then
  $\ell(( x, y, z))=\subid{\ell(x)}{\ell(y)}{\ell(z)}$. \qed
\end{lemma}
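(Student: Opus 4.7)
The plan is to unfold the recursive Definition \ref{de:label-track} and then apply the third part of Lemma \ref{le:SubIdComposition}. First, since the track $(x,y,z)$ decomposes as the concatenation of the basic tracks $(x,y)$ and $(y,z)$, we have
\begin{equation*}
  \ell((x,y,z)) = \ell((x,y)) * \ell((y,z)).
\end{equation*}
The hypothesis $x \face^* y$ places $(x,y)$ in the starter case of the definition, giving $\ell((x,y)) = \subid{\ell(x)}{\ell(y)}{\ell(y)}$, while $y \ecaf^* z$ places $(y,z)$ in the terminator case, giving $\ell((y,z)) = \subid{\ell(y)}{\ell(y)}{\ell(z)}$.

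Next, I would invoke Lemma \ref{le:EventsAreSubsets}: since $x$ is a face of $y$ and $z$ is a face of $y$, the linear pomsets $\ell(x)$ and $\ell(z)$ embed as sub-pomsets of $\ell(y)$. This verifies the hypothesis $S, T \subseteq U$ (with $S = \ell(x)$, $T = \ell(z)$, $U = \ell(y)$) needed to apply the third clause of Lemma \ref{le:SubIdComposition}, which yields
\begin{equation*}
  \subid{\ell(x)}{\ell(y)}{\ell(y)} * \subid{\ell(y)}{\ell(y)}{\ell(z)} \;\cong\; \subid{\ell(x)}{\ell(y)}{\ell(z)}.
\end{equation*}
Combining the two displayed equalities gives the claim.

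There is essentially no obstacle here: the result is a direct bookkeeping exercise, and the definitions and lemmas in Sections \ref{se:ipoms} and \ref{se:tracks} are arranged precisely so that this goes through. The only thing to verify carefully is that the gluing $\ell((x,y)) * \ell((y,z))$ is actually defined, i.e.\ that the target interface of the starter coincides with the source interface of the terminator; but both are literally $\ell(y)$ by construction, so this is automatic. This is why the authors simply remark that the proof is a straightforward application of the definition.
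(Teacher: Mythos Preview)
Your proof is correct and is exactly the argument the paper has in mind: unfold Definition~\ref{de:label-track} on the concatenation $(x,y)*(y,z)$ and apply the third clause of Lemma~\ref{le:SubIdComposition}, using Lemma~\ref{le:EventsAreSubsets} to justify the inclusions $\ell(x),\ell(z)\subseteq\ell(y)$. This is why the paper omits the proof as ``a straightforward application of the definition.''
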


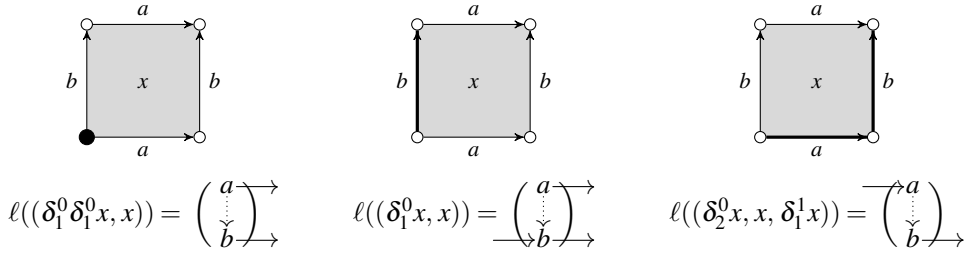
\begin{figure}
  \centering
  \setlength{\tabcolsep}{.5cm}
  \begin{tabular}{ccc}
    \begin{tikzpicture}[x=1.5cm, y=1.5cm]
      \path[fill=black!15] (0,0) to (1,0) to (1,1) to (0,1);
      \foreach \x in {0, 1} \foreach \y in {0, 1} \node[state] (\x\y)
      at (\x,\y) {};
      \path (00) edge node[below] {$a$} (10);
      \path (01) edge node[above] {$a$} (11);
      \path (00) edge node[left] {$b$} (01);
      \path (10) edge node[right] {$b$} (11);
      \node at (.5,.5) {$\vphantom{b}x$};
      \node[state, fill=black, minimum size=2mm] at (0,0) {};
    \end{tikzpicture}
    &
    \begin{tikzpicture}[x=1.5cm, y=1.5cm]
      \path[fill=black!15] (0,0) to (1,0) to (1,1) to (0,1);
      \foreach \x in {0, 1} \foreach \y in {0, 1} \node[state] (\x\y)
      at (\x,\y) {};
      \path (00) edge node[below] {$a$} (10);
      \path (01) edge node[above] {$a$} (11);
      \path (00) edge node[left] {$b$} (01);
      \path (10) edge node[right] {$b$} (11);
      \node at (.5,.5) {$\vphantom{b}x$};
      \path (00) edge[-, very thick] (01);
    \end{tikzpicture}
    &
    \begin{tikzpicture}[x=1.5cm, y=1.5cm]
      \path[fill=black!15] (0,0) to (1,0) to (1,1) to (0,1);
      \foreach \x in {0, 1} \foreach \y in {0, 1} \node[state] (\x\y)
      at (\x,\y) {};
      \path (00) edge node[below] {$a$} (10);
      \path (01) edge node[above] {$a$} (11);
      \path (00) edge node[left] {$b$} (01);
      \path (10) edge node[right] {$b$} (11);
      \node at (.5,.5) {$\vphantom{b}x$};
      \path (00) edge[-, very thick] (10);
      \path (10) edge[-, very thick] (11);
    \end{tikzpicture}
    \\
    $\ell(( \delta_1^0 \delta_1^0 x, x))= \!\!\ipomset{ & a
      \ar@{.>}[d] \ar[r] & \\ & b \ar[r] &}$ &
    $\ell(( \delta_1^0 x, x))= \!\!\ipomset{ & a \ar@{.>}[d]
      \ar[r] & \\ \ar[r] & b \ar[r] &}$ &
    $\ell(( \delta_2^0 x, x, \delta_1^1 x))= \!\!\ipomset{ \ar[r] & a \ar@{.>}[d]
      \\ & b \ar[r] &}$
  \end{tabular}
  \bigskip
  \caption{Some tracks and their labels in $X=\square^{\{a\intord
      b\}}$ (solid cells indicate the respective faces).}
  \label{fi:label-iface}
\end{figure}

Figure \ref{fi:label-iface} shows some examples of simple tracks and
their labels.

\subsection{Interval Ipomsets as Tracks}

We have seen how labels of tracks in HDAs can be computed as interval
ipomsets.  Now we show the inverse: how interval ipomsets may be
converted into HDAs consisting essentially of a single track.  To this
end, first introduce a relation $\prec$ on the set $\{0, \exec, 1\}$
by
\begin{equation*}
  \mathord{ \prec}= \{(0,0),(\exec,0),(1,0),(1,\exec),(1,1)\}.
\end{equation*}

The relation $\prec$ (which is \emph{not} a partial order, given that
it is neither reflexive nor irreflexive) corresponds to the meaning
that we associate to the elements of $\{ 0, \exec, 1\}$: $0$ meaning
the event has not yet started; $\exec$ for an executing event; and $1$
if the event has terminated.  The intuition is that when $x$ and $y$
are events so that $x< y$ in the precedence order, then either $y$ has
not yet started, in which case $x$ may be in any state, hence the
first three pairs $( 0, 0)$, $( \exec, 0)$, $( 1, 0)$; or $x$ has
terminated and $y$ may be in any state, hence the last three pairs
$( 1, 0)$, $( 1, \exec)$, $( 1, 1)$.  In particular, it is impossible
that both are active, so that $(\exec, \exec)\notin \mathord{ \prec}$.

\begin{remark}
  In Chu spaces for concurrency \cite{pratt95chu},
  $K_{3}=\{0 \precchu \exec \precchu 1\}$ is the structure that
  defines the possible execution forms that an event can take.  The
  intuition of the order $\precchu$ is that $0$ (the event has not yet
  started) can happen only before $\exec$ (the event is executing),
  which can happen only before $1$ (the event has terminated).  This
  order can be extended to sets of (execution values of) events, which
  in Chu terminology is called a state. For states, the order
  $\precchu$ expresses admissible sequences of executions of the
  system.  For two independent events $x$, $y$, all execution forms
  $( e_x, e_y)$ with $e_x, e_y\in\{ 0, \exec, 1\}$ would thus be
  admissible; but if there is a precendence order $x< y$, then the
  allowed tuples are precisely the ones in the relation $\prec$
  above.
\end{remark}

The following generalizes Definition \ref{de:pobj-S} of the standard
$S$-cube $\square^S$ on a linear poset $S$ to arbitrary ipomsets.

\begin{definition}
  \label{de:pobj}
  For an ipomset $P$, define the HDA
  $( \pobj{ P}, I_{ \pobj{P}}, F_{ \pobj{P}}, \lambda_{ \pobj{P}})$ as
  follows:
  \begin{itemize}
  \item $\pobj{ P}_k$ is the set of all relation-preserving functions
    $x:( P, \mathord{<_P})\to(\{ 0, \exec, 1\}, \mathord{\prec})$
    taking value $\exec$ on exactly $k$ elements;
  \item for $x\in \pobj{ P}_k$ and
    $x^{-1}(\exec)=\{p_1\intord_P\dotsm\intord_P p_k\}$,
    \begin{equation*}
      \delta_i^\nu( x)( p)=
      \begin{cases}
        \nu &\text{for } p= p_i, \\
        x( p) &\text{for } p\ne p_i;
      \end{cases}
    \end{equation*}
  \item $I_{ \pobj{ P}}=\{ i_{ \pobj{ P}}\}$ and
    $F_{ \pobj{ P}}=\{ f_{ \pobj{ P}}\}$ are given by
    \begin{equation*}
      i_{ \pobj{ P}}( p)=
      \begin{cases}
        \exec & \text{if } p\in S_P, \\
        0 & \text{if } p\not\in S_P,
      \end{cases}
      \qquad
      f_{ \pobj{ P}}( p)=
      \begin{cases}
        \exec & \text{if } p\in T_P, \\
        1 & \text{if } p\not\in T_P;
      \end{cases}
    \end{equation*}
  \item For $x\in \pobj{P}_k$ with
    $x^{ -1}( \exec)=\{p_1\intord_P\dotsm\intord_P p_k\}$,
    $\lambda_{\pobj{P}}(x)=(\lambda_P(p_1),\dotsc,\lambda_P(p_k))$.
  \end{itemize}
\end{definition}


Above, $x^{ -1}( \exec)$ is indeed an antichain in the precedence
order and hence linearly ordered by~$\intord_P$.

\begin{example}
  \label{ex:N-to-HDA}
  Let $P$ be the ipomset
  \begin{equation*}
    P= \!\!\ipomset[3]{& a \ar[r] & b & \\ & c \ar[r] \ar@{.>}[u] \ar[ur] & d
      \ar@{.>}[u] \ar@{.>}[ul] &}\!\!.
  \end{equation*}
  The cells of $\pobj{P}$ are as follows, in increasing order of
  dimension (and with the event order omitted):
  \begin{gather*}
    x_1= \pomset{0 \ar[r] & 0 \\ 0 \ar[r] \ar[ur] & 0} \qquad%
    x_2= \pomset{1 \ar[r] & 0 \\ 0 \ar[r] \ar[ur] & 0} \qquad%
    x_3= \pomset{0 \ar[r] & 0 \\ 1 \ar[r] \ar[ur] & 0} \qquad%
    x_4= \pomset{1 \ar[r] & 0 \\ 1 \ar[r] \ar[ur] & 0} \\%
    x_5= \pomset{0 \ar[r] & 0 \\ 1 \ar[r] \ar[ur] & 1} \qquad%
    x_6= \pomset{1 \ar[r] & 0 \\ 1 \ar[r] \ar[ur] & 1} \qquad%
    x_7= \pomset{1 \ar[r] & 1 \\ 1 \ar[r] \ar[ur] & 0} \qquad%
    x_8= \pomset{1 \ar[r] & 1 \\ 1 \ar[r] \ar[ur] & 1} \\%
    y_1= \pomset{\exec \ar[r] & 0 \\ 0 \ar[r] \ar[ur] & 0} \qquad%
    y_2= \pomset{0 \ar[r] & 0 \\ \exec \ar[r] \ar[ur] & 0} \qquad%
    y_3= \pomset{\exec \ar[r] & 0 \\ 1 \ar[r] \ar[ur] & 0} \qquad%
    y_4= \pomset{0 \ar[r] & 0 \\ 1 \ar[r] \ar[ur] & \exec} \\%
    y_5= \pomset{\exec \ar[r] & 0 \\ 1 \ar[r] \ar[ur] & 1} \qquad%
    y_6= \pomset{1 \ar[r] & 0 \\ \exec \ar[r] \ar[ur] & 0} \qquad%
    y_7= \pomset{1 \ar[r] & \exec \\ 1 \ar[r] \ar[ur] & 0} \qquad%
    y_8= \pomset{1 \ar[r] & 0 \\ 1 \ar[r] \ar[ur] & \exec} \\%
    y_9= \pomset{1 \ar[r] & \exec \\ 1 \ar[r] \ar[ur] & 1} \qquad%
    y_{10}= \pomset{1 \ar[r] & 1 \\ 1 \ar[r] \ar[ur] & \exec} \\%
    z_1= \pomset{\exec \ar[r] & 0 \\ \exec \ar[r] \ar[ur] & 0} \qquad%
    z_2= \pomset{\exec \ar[r] & 0 \\ 1 \ar[r] \ar[ur] & \exec} \qquad%
    z_3= \pomset{1 \ar[r] & \exec \\ 1 \ar[r] \ar[ur] & \exec}
  \end{gather*}

  \begin{figure}
    \centering
    \begin{tikzpicture}[x=1.8cm, y=1.8cm]
      \begin{scope}
        \path[fill=black!15] (0,0) to (2,0) to (2,2) to (1,2) to (1,1)
        to (0,1);
        \node[state, initial left] (00) at (0,0) {};
        \node[state] (10) at (1,0) {};
        \node[state] (20) at (2,0) {};
        \node[state] (01) at (0,1) {};
        \node[state] (11) at (1,1) {};
        \node[state] (21) at (2,1) {};
        \node[state] (12) at (1,2) {};
        \node[state, accepting] (22) at (2,2) {};
        \node[below] at (00.south) {$x_1$};
        \node[left] at (01.west) {$x_2$};
        \node[below] at (10.south) {$x_3$};
        \node[above left] at (11.north west) {$x_4$};
        \node[below] at (20.south) {$x_5$};
        \node[right] at (21.east) {$x_6$};
        \node[above] at (12.north) {$x_7$};
        \node[above] at (22.north) {$x_8$};
        \node at (.5,.5) {$\vphantom{y}z_1$};
        \node at (1.5,.5) {$\vphantom{y}z_2$};
        \node at (1.5,1.5) {$\vphantom{y}z_3$};
        \path (00) edge node[below] {$\vphantom{d}c$} node[above] {$y_2$} (10);
        \path (10) edge node[below] {$d$} node[above] {$y_4$} (20);
        \path (01) edge node[below] {$y_6$} (11);
        \path (11) edge node[below] {$y_8$} (21);
        \path (12) edge node[above] {$y_{10}$} (22);
        \path (00) edge node[left] {$y_1$} (01);
        \path (10) edge node[left] {$y_3$} (11);
        \path (11) edge node[left] {$y_7$} (12);
        \path (20) edge node[right] {$\vphantom{y_5}a$} node[left] {$y_5$} (21);
        \path (21) edge node[right, pos=.55] {$b$} node[left] {$y_9$} (22);
      \end{scope}
      \begin{scope}[shift={(3.5,0)}]
        \path[fill=black!15] (0,0) to (2,0) to (2,2) to (0,2);
        \node[state, initial left] (00) at (0,0) {};
        \node[state] (10) at (1,0) {};
        \node[state] (20) at (2,0) {};
        \node[state] (01) at (0,1) {};
        \node[state] (11) at (1,1) {};
        \node[state] (21) at (2,1) {};
        \node[state] (02) at (0,2) {};
        \node[state] (12) at (1,2) {};
        \node[state, accepting] (22) at(2,2) {};
        \node[below] at (00.south) {$x_1$};
        \node[left] at (01.west) {$x_2$};
        \node[below] at (10.south) {$x_3$};
        \node[above left] at (11.north west) {$x_4$};
        \node[below] at (20.south) {$x_5$};
        \node[right] at (21.east) {$x_6$};
        \node[above] at (12.north) {$x_7$};
        \node[above] at (22.north) {$x_8$};
        \node[left] at (02.west) {$x_9$};
        \node at (.5,.5) {$\vphantom{y}z_1$};
        \node at (1.5,.5) {$\vphantom{y}z_2$};
        \node at (1.5,1.5) {$\vphantom{y}z_3$};
        \node at (.5,1.5) {$\vphantom{y}z_4$};
        \path (00) edge node[below] {$y_2$} (10);
        \path (10) edge node[below] {$y_4$} (20);
        \path (01) edge node[below] {$y_6$} (11);
        \path (11) edge node[below] {$y_8$} (21);
        \path (12) edge node[above] {$y_{10}$} (22);
        \path (00) edge node[left] {$y_1$} (01);
        \path (10) edge node[left] {$y_3$} (11);
        \path (11) edge node[left] {$y_7$} (12);
        \path (20) edge node[right] {$y_5$} (21);
        \path (21) edge node[right] {$y_9$} (22);
        \path (01) edge node[left] {$y_{11}$} (02);
        \path (02) edge node[above] {$y_{12}$} (12);
      \end{scope}
    \end{tikzpicture}
    \bigskip
    \caption{The HDAs $\pobj{P}$ (left) and $\pobj{Q}$ (right) from
      Examples \ref{ex:N-to-HDA} and \ref{ex:2+2-to-HDA}.}
    \label{fi:pobj-ex}
  \end{figure}

  The cells $y_1$, $y_3$ and $y_5$ are labeled by $a$, $y_7$ and $y_9$
  by $b$, $y_2$ and $y_6$ by $c$, $y_4$, $y_8$, $y_{10}$ by $d$.  The
  labels of $z_1$, $z_2$ and $z_3$ are $(c,a)$, $(d,a)$ and $(d,b)$,
  respectively.  The order of letters in these pairs is determined be
  the event order on $P$.  Geometrically these are arranged as shown
  in Figure \ref{fi:pobj-ex} (left).
\end{example}

We will later apply Definition \ref{de:pobj} to \emph{interval}
ipomsets to conclude in Proposition \ref{pr:XPSubsumption} that the
language of $\pobj{P}$ is generated by $P$.  Our definition applies to
general ipomsets, but as we will see, Proposition
\ref{pr:XPSubsumption} fails for ipomsets which are not interval.  It
is an interesting open problem to characterize those HDA which are
isomorphic to some $\pobj{P}$.

\begin{example}
  \label{ex:2+2-to-HDA}
  If we instead of the ipomset $P$ of Example \ref{ex:N-to-HDA} take
  $Q$ to be the $\twotwo$-ipomset
  \begin{equation*}
    Q= \!\!\ipomset[3]{& a \ar[r] & b & \\ & c \ar[r] \ar@{.>}[u]
      \ar@{.>}[ur] & d \ar@{.>}[u] \ar@{.>}[ul] &}\!\!,
  \end{equation*}
  then $\pobj{Q}$ contains $\pobj{P}$ and the following extra cells:
  \begin{equation*}
    x_9= \pomset{1 \ar[r] & 1 \\ 0 \ar[r] & 0} \qquad%
    y_{11}= \pomset{1 \ar[r] & \exec \\ 0 \ar[r] & 0} \qquad%
    y_{12}= \pomset{1 \ar[r] & 1 \\ \exec \ar[r] & 0} \qquad%
    z_4= \pomset{1 \ar[r] & \exec \\ \exec \ar[r] & 0}
  \end{equation*}
  Geometrically this amounts to adding the top-left square to
  $\pobj{P}$, see Figure \ref{fi:pobj-ex} (right).
\end{example}

\begin{figure}
  \centering
  \begin{tikzpicture}[x=1.5cm, y=1.5cm]
    \begin{scope}
      \node[font=\normalsize] (ab) at (.5,2.2) {$\ipomset{& a
          \ar@{.>}[d] & \\ & b &}$};
      \path[fill=black!15] (0,0) to (1,0) to (1,1) to (0,1);
      \node[state, initial left] (00) at (0,0) {};
      \node[state] (01) at (0,1) {};
      \node[state] (10) at (1,0) {};
      \node[state, accepting] (11) at (1,1) {};
      \path (00) edge node[below] {$a$} (10);
      \path (01) edge node[above] {$a$} (11);
      \path (00) edge node[left] {$b$} (01);
      \path (10) edge node[right] {$b$} (11);
      \node[coordinate] (ableft) at (-.5,.5) {};
      \node[coordinate] (abright) at (1.5,.5) {};
    \end{scope}
    \begin{scope}[shift={(3,0)}]
      \node[font=\normalsize] (a-b) at (.5,2.2) {$\ipomset{& a
          \ar[r] & b &}$};
      \node[state, initial left] (00) at (0,0) {};
      \node[state] (10) at (1,0) {};
      \node[state, accepting] (11) at (1,1) {};
      \path (00) edge node[below] {$a$} (10);
      \path (10) edge node[right] {$b$} (11);
      \node[coordinate] (a-bleft) at (.3,.5) {};
    \end{scope}
    \begin{scope}[shift={(-2.6,0)}]
      \node[font=\normalsize] (b-a) at (.5,2.2) {$\ipomset{& b
          \ar[r] & a &}$};
      \node[state, initial left] (00) at (0,0) {};
      \node[state] (01) at (0,1) {};
      \node[state, accepting] (11) at (1,1) {};
      \path (01) edge node[above] {$a$} (11);
      \path (00) edge node[left] {$b$} (01);
      \node[coordinate] (b-aright) at (.7,.5) {};
    \end{scope}
    \path (a-b.west) edge (ab.east);
    \path (b-a.east) edge (ab.west);
    \path (a-bleft) edge (abright);
    \path (b-aright) edge (ableft);
  \end{tikzpicture}
  \bigskip
  \caption{Subsumptions (top) give rise to HDA inclusions (bottom)}
  \label{fi:subsu-tracks}
\end{figure}
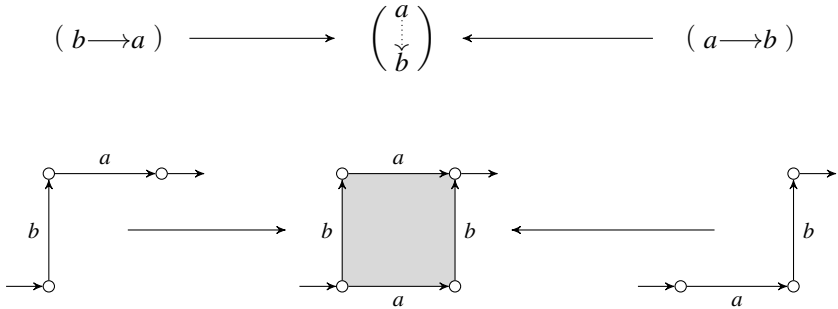

The following can be shown by easy calculations; Figure
\ref{fi:subsu-tracks} shows some simple examples.

\begin{lemma}
  \label{le:SubsumptionToPObj}
  If $f: P\to Q$ is a subsumption map, then the function
  $\pobj{f}: \pobj{P}\to \pobj{Q}$ given by $\pobj{f}(x)(p)=x(f(p))$
  is an injective HDA map. \qed
\end{lemma}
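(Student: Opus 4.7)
The plan is to verify in turn the four properties needed: that $\pobj{f}(x)$ actually lies in $\pobj{Q}$, that $\pobj{f}$ commutes with elementary face maps, that it preserves labels and initial/accepting cells, and finally that it is injective. Since subsumption maps are bijections, I shall view the formula as composition with $f^{-1}$: for $q \in Q$, $\pobj{f}(x)(q) = x(f^{-1}(q))$.

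For well-definedness, fix $x \in \pobj{P}_k$. If $q_1 <_Q q_2$ in $Q$, then by the defining property of a subsumption map we get $f^{-1}(q_1) <_P f^{-1}(q_2)$, and since $x$ is relation-preserving into $(\{0,\exec,1\}, \mathord{\prec})$ we obtain $\pobj{f}(x)(q_1) \prec \pobj{f}(x)(q_2)$. Moreover $\pobj{f}(x)^{-1}(\exec) = f(x^{-1}(\exec))$ has exactly $k$ elements since $f$ is a bijection, so $\pobj{f}(x) \in \pobj{Q}_k$.

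The key step is commutation with face maps. Note first that $x^{-1}(\exec)$ must be a $<_P$-antichain, since $(\exec,\exec) \notin \mathord{\prec}$, and its image $f(x^{-1}(\exec))$ is a $<_Q$-antichain: otherwise two images would be $<_Q$-comparable, forcing their preimages to be $<_P$-comparable by the subsumption property, a contradiction. Because subsumption preserves the event order on $<_P$-antichains, the restriction of $f$ to $x^{-1}(\exec)$ is an isomorphism of linearly $\intord_P$-ordered sets onto $\pobj{f}(x)^{-1}(\exec)$ with its $\intord_Q$-ordering. Writing $x^{-1}(\exec) = \{p_1 \intord_P \dotsm \intord_P p_k\}$, the $i$-th $\exec$-element of $\pobj{f}(x)$ in the event order of $Q$ is precisely $f(p_i)$, and a direct comparison of the definitions yields $\pobj{f}(\delta^\nu_i x) = \delta^\nu_i \pobj{f}(x)$ for every $i$ and $\nu$.

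The remaining verifications are bookkeeping. Label preservation $\lambda_{\pobj{Q}} \circ \pobj{f} = \lambda_{\pobj{P}}$ follows from $\lambda_Q \circ f = \lambda_P$ together with the matching of event orders on $\exec$-elements established above. Since $f(S_P) = S_Q$ and $f(T_P) = T_Q$, one computes $\pobj{f}(i_{\pobj{P}}) = i_{\pobj{Q}}$ and $\pobj{f}(f_{\pobj{P}}) = f_{\pobj{Q}}$. Injectivity is immediate: if $\pobj{f}(x) = \pobj{f}(y)$, then $x \circ f^{-1} = y \circ f^{-1}$, and precomposition with the bijection $f^{-1}$ is injective. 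The only delicate point, and the part that really uses the two order-theoretic clauses in the definition of subsumption, is the face-map compatibility in the second step; everything else is a consequence of $f$ being a bijection that respects labels and interfaces.
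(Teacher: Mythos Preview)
Your proof is correct and follows exactly the approach the paper intends: the paper itself gives no proof beyond ``easy calculations'' and a $\qed$, and you have carried those calculations out in detail. Your reformulation of the defining formula as $\pobj{f}(x)=x\circ f^{-1}$ is the right reading (the paper's $\pobj{f}(x)(p)=x(f(p))$ is a slip of notation, since $\pobj{f}(x)$ is a function on $Q$), and your identification of face-map compatibility as the one step genuinely requiring both clauses of the subsumption definition is on the mark.
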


For the next lemma, recall the notions of event consistency and
universal events from Section~\ref{se:labelevent}.

\begin{lemma}
  \label{le:events-pobj}
  Let $P$ be an ipomset.  Then $\pobj{P}$ is event consistent, and
  $E_{ \pobj{ P}}\cong P$ as labeled sets.  For every $x\in \pobj{P}$,
  $\ev(x)=x^{-1}(\exec)$ as linear posets.
\end{lemma}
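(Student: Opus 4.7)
The plan is to construct an explicit event identification $\ev : \pobj{P} \to \bbang P$ and show it realizes the universal identification $E_{\pobj{P}} \cong P$. Define $\ev_1 : \pobj{P}_1 \to P$ by sending a $1$-cell $x$ to the unique $p \in P$ with $x(p) = \exec$. For any $y \in \pobj{P}_2$ with $y^{-1}(\exec) = \{p_1 \intord_P p_2\}$, the elementary face $\delta_i^\nu y$ converts the $i$-th occurrence of $\exec$ to $\nu$, so $(\delta_i^\nu y)^{-1}(\exec) = \{p_{3-i}\}$ independently of $\nu$, and $p_1 \ne p_2$; Lemma~\ref{le:bang-free}(2) therefore extends $\ev_1$ uniquely to a precubical map $\ev : \pobj{P} \to \bbang P$, proving event consistency. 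For the last assertion, given $x \in \pobj{P}_n$ with $x^{-1}(\exec) = \{p_1 \intord_P \dotsm \intord_P p_n\}$, I invoke Lemma~\ref{le:PCIdFull} to express the projection onto the $i$-th $\exec$-coordinate as a composite of elementary face maps that eliminate every other $\exec$-entry; this yields $\ev_i(x) = p_i$, so $\ev(x) = (p_1, \dotsc, p_n)$ coincides with $x^{-1}(\exec)$ equipped with its induced $\intord_P$-linear order.

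To establish $E_{\pobj{P}} \cong P$ as labeled sets, I identify the quotient by $\eveq$ with $\ev_1$. The relation $\mathord{\sim}$ on $\pobj{P}_1$ defined by $x \sim y \iff \ev_1(x) = \ev_1(y)$ satisfies the conditions of Lemma~\ref{le:evcons}, so by minimality $\mathord{\eveq} \subseteq \mathord{\sim}$. Surjectivity of $\ev_1$ is witnessed, for each $p \in P$, by the $1$-cell $x$ with $x(p) = \exec$, $x(q) = 1$ for $q <_P p$, and $x(q) = 0$ otherwise; a routine check against $\prec$ confirms that $x$ is relation-preserving. Label preservation is automatic since $\lambda_{\pobj{P}}(x) = (\lambda_P(p))$ for any such $1$-cell.

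The main obstacle is the reverse inclusion $\mathord{\sim} \subseteq \mathord{\eveq}$: I must show any two $1$-cells with the same $\ev_1$-value are $\eveq$-connected through $2$-cells. Setting $Q_p := \{q \in P \setminus \{p\} : p \not<_P q\}$, a case analysis against $\prec$ shows that $1$-cells $x$ with $\ev_1(x) = p$ are in bijection with subsets $D \subseteq Q_p$ that are downward-closed in $Q_p$ and contain $\{q \in P : q <_P p\}$, via $D = x^{-1}(1)$. Given admissible $D \subsetneq D' = D \cup \{q\}$ with $q$ maximal in $D'$, I would first verify $q \incomp_P p$ and then check that the function $z \in \pobj{P}_2$ with $z^{-1}(\exec) = \{p,q\}$, $z^{-1}(1) = D$, and $z = 0$ elsewhere is relation-preserving; assuming WLOG $p \intord_P q$, the faces $\delta_2^0 z$ and $\delta_2^1 z$ correspond exactly to the $1$-cells for $D$ and $D'$, so these cells are $\eveq$-identified. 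Finally, between any two admissible $D_1, D_2$ one can retract to $D_1 \cap D_2$ by successively removing maximal elements and then expand to $D_2$ by adding minimal ones, with all intermediate subsets admissible; this chains the single-element $\eveq$-steps into a full connection between any two $1$-cells sharing an $\ev_1$-value, yielding $\mathord{\sim} = \mathord{\eveq}$ and the claimed isomorphism.
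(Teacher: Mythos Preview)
Your proof is correct and follows essentially the same strategy as the paper: both establish injectivity of $E_{\pobj{P}}\to P$ by flipping one $0/1$ entry at a time via a connecting $2$-cell, with the same monotonicity check. The only cosmetic differences are that you obtain event consistency directly from Lemma~\ref{le:bang-free}(2) (the paper instead embeds $\pobj{P}$ into a standard cube via a subsumption into a discrete ipomset), and you connect arbitrary $1$-cells through $D_1\cap D_2$ rather than each to a fixed canonical $x_p$ as the paper does.
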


\begin{proof}
  To show that $\pobj{P}$ is event consistent, let $f: P\to Q$ be a
  subsumption map into a discrete ipomset $Q$; $Q$ may be obtained
  from any linearization of $\intord_P$.  Then the precubical set
  underlying~$\pobj{Q}$ is a standard cube, and by Lemma
  \ref{le:SubsumptionToPObj}, $\pobj{f}: \pobj{P}\to \pobj{Q}$ is an
  embedding.  By Example \ref{ex:scube-evcont}, $\pobj{P}$ is event
  consistent.
  
  There is a $P$-labeling of $\pobj{P}$, \ie a precubical map
  $\pobj{P}\to \bang P$ that sends $x\in\pobj{P}$ to
  $x^{-1}(\exec)=(p_1\intord_P\dotsm\intord_P p_n)$.  This induces a
  function $\pi:E_{\pobj{P}}\to P$ (Proposition
  \ref{prp:EventsAreUniversalLabeling}).  For every $p\in P$, define
  $x_p\in \pobj{P}_1$ by
  \begin{equation*}
    x_p( q)=
    \begin{cases}
      1 &\text{if } q<_P p, \\
      \exec &\text{if } q= p, \\
      0 &\text{otherwise}.
    \end{cases}
  \end{equation*}
  Since $\pi(x_p)=p$, $\pi$ is surjective.

  It remains to show that $\pi$ is injective.  Let $x\in \pobj{P}_1$
  satisfy $\pi(x)=p$; we will show that $x\eveq x_p$.  Note that
  $x( q)= \exec$ iff $q= p$.  We proceed by induction on the number of
  elements in the set $D_x=\{q\in P\mid x(q)\neq x_p(q)\}$, see Figure
  \ref{fi:events-pobj} for an illustration.

  \begin{figure}
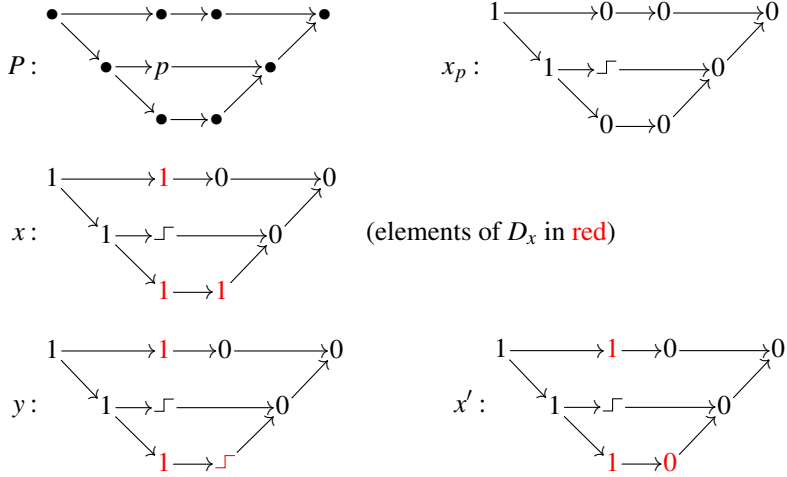

    \begin{align*}
      P &: \pomsetwop{2.7ex}{1.3em}{1pt}{%
        \bullet \ar[rr] \ar[dr] && \bullet \ar[r] & \bullet \ar[rr] &&
        \bullet \\%
        & \bullet \ar[r] \ar[rd] & p \ar[rr] && \bullet \ar[ur] \\%
        && \bullet \ar[r] & \bullet \ar[ur]}%
      \qquad\qquad x_p: \pomsetwop{2.7ex}{1.3em}{1pt}{%
        1 \ar[rr] \ar[dr] && 0 \ar[r] & 0 \ar[rr] && 0 \\%
        & 1 \ar[r] \ar[rd] & \exec \ar[rr] && 0 \ar[ur] \\%
        && 0 \ar[r] & 0 \ar[ur]} \\[1.5ex]
      x &: \pomsetwop{2.7ex}{1.3em}{1pt}{%
        1 \ar[rr] \ar[dr] && \red1 \ar[r] & 0 \ar[rr] && 0 \\%
        & 1 \ar[r] \ar[rd] & \exec \ar[rr] && 0 \ar[ur] \\%
        && \red1 \ar[r] & \red1 \ar[ur]}%
      \quad \text{(elements of $D_x$ in \red{red})} \\[2ex]
      y &: \pomsetwop{2.7ex}{1.3em}{1pt}{%
        1 \ar[rr] \ar[dr] && \red1 \ar[r] & 0 \ar[rr] && 0 \\%
        & 1 \ar[r] \ar[rd] & \exec \ar[rr] && 0 \ar[ur] \\%
        && \red1 \ar[r] & \red{\exec} \ar[ur]}%
      \qquad\qquad x': \pomsetwop{2.7ex}{1.3em}{1pt}{%
        1 \ar[rr] \ar[dr] && \red1 \ar[r] & 0 \ar[rr] && 0 \\%
        & 1 \ar[r] \ar[rd] & \exec \ar[rr] && 0 \ar[ur] \\%
        && \red1 \ar[r] & \red0 \ar[ur]}
    \end{align*}
    \bigskip
    \caption{Pomsets and cells in the proof of Lemma
      \ref{le:events-pobj}.}
    \label{fi:events-pobj}
  \end{figure}
  
  All elements of $D_x$ are parallel with $p$, since monotonicity of
  $x$ and $x_p$ implies that $x(q)=x_p(q)=1$ if $q<_P p$ and
  $x(q)=x_p(q)=0$ if $p<_P q$.  Thus, $x_p(q)=0$ and $x(q)=1$ for all
  $q\in D_x$.

  Let $q\in D_x$ be a $<_P$-maximal element.  Let
  $y:P\to \{0,\exec,1\}$ be given by $y(q)= \exec$ and $y(r)=x(r)$ for
  $r\neq q$.  We show that $y$ is monotone and hence a $2$-cell in
  $\pobj{P}$.

  We have $y(r)=x(r)=1$, hence $y(r)\prec y(q)$, for all $r<_P q$, since $x$
  preserves $<_P$.  For $q<_P r$, on the other hand, maximality of $q$
  in $D_x$ implies that $r\notin D_x$, so that $y(r)=x(r)=x_p(r)=0$,
  hence $y(q)\prec y(r)$.  Given that $y(r)= x(r)$ for $r\ne q$, we
  have shown that $y$ is monotone.

  Now $y( p)= y( q)= \exec$, so $y$ is a $2$-cell in $\pobj{P}$.  Let
  $i=1$ if $q\intord p$ and $i=2$ if $p\intord q$, then
  $\delta^1_i(y)=x$.  Let $x'=\delta^0_i(y)$, then also
  $\pi(x')=p$, and $D_{x'}=D_x\setminus\{q\}$.  The inductive
  hypothesis asserts that
  $x=\delta^1_i(y)\eveq \delta^0_i(y)= x'\eveq x_p$.
\end{proof}

Next we see that gluings of ipomsets correspond to pushouts of their
HDA objects.  Recall the Yoneda inclusions $\ineda_x$ from Lemma
\ref{le:ineda}.

\begin{lemma}
  \label{le:pobjJoin}
  Let $Q$ and $R$ be composable ipomsets with $T_Q\cong S\cong S_R$
  and $P= Q* R$.  There is a pushout
  \begin{equation*}
    \xymatrix{%
      \square^S \ar[r]^{\ineda_{f_{\pobj{Q}}}}
      \ar[d]_{\ineda_{i_{\pobj{R}}}} \ar@{}[dr]|>{\Big\ulcorner} &
      \pobj{Q} \ar[d]^{\jneda^0_{Q\subseteq P}} \\
      \pobj{R} \ar[r]_{\jneda^1_{R\subseteq P}} & \pobj{P}
    }
  \end{equation*}
  where 
  \begin{equation*}
    \jneda^0_{Q\subseteq P}(x)(p)=
    \begin{cases}
      x(p) & \text{for $p\in Q$}, \\
      0 & \text{otherwise},
    \end{cases}
    \qquad
    \jneda^1_{R\subseteq P}(x)(p)=
    \begin{cases}
      x(p) & \text{for $p\in R$}, \\
      1 & \text{otherwise}.
    \end{cases}
  \end{equation*}
\end{lemma}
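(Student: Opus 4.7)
My plan is to prove the pushout property by working levelwise in $\pcSet$ (pushouts in a presheaf category are computed pointwise in $\Set$), then to verify compatibility with the face structure, labels, and where appropriate the initial/accepting data. The two subgoals to address are (i) well-definedness and commutativity of the square, and (ii) the universal property.

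For well-definedness of $\jneda^0_{Q\subseteq P}$, I would verify $\prec$-monotonicity of $\jneda^0(x)$ case-by-case according to $p <_P p'$ in the gluing $P=Q*R$: for $p,p'\in Q$ monotonicity transfers from $x$; for $p\in Q$, $p'\in R\setminus S_R$ we get $\jneda^0(x)(p')=0$, and $(v,0)\in\mathord{\prec}$ for any $v$; for $p,p'\in R\setminus S_R$ both values are $0$. The dimension, interface, and label conditions are immediate from the defining formula. The analogous argument (using that $(1,v)\in\mathord{\prec}$ for any $v$) shows that $\jneda^1_{R\subseteq P}$ is a well-defined HDA map. For commutativity, starting from $\epsilon\in\square^S$ the two composites both yield the cell in $\pobj{P}$ that agrees with $\epsilon$ on $T_Q\cong S\cong S_R$, takes value $1$ on $Q\setminus T_Q$, and takes value $0$ on $R\setminus S_R$.

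The main technical step is to characterize the images. I claim
\begin{equation*}
  \jneda^0(\pobj{Q})=\{x\in\pobj{P}\mid x(q)=0\text{ for all }q\in R\setminus S_R\},\qquad
  \jneda^1(\pobj{R})=\{x\in\pobj{P}\mid x(p)=1\text{ for all }p\in Q\setminus T_Q\},
\end{equation*}
with $\jneda^0,\jneda^1$ injective and cells in the images uniquely recovered from their restrictions to $Q$ or $R$. The crucial step is showing \emph{every} cell of $\pobj{P}$ lies in one of these two images: if $x\in\pobj{P}$ is not in $\jneda^0(\pobj{Q})$, then some $q^*\in R\setminus S_R$ has $x(q^*)\in\{\exec,1\}$; since by the gluing definition $p<_P q^*$ for every $p\in Q\setminus T_Q$, monotonicity $x(p)\prec x(q^*)$ combined with the explicit list of pairs in $\prec$ forces $x(p)=1$, so $x\in\jneda^1(\pobj{R})$. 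Finally, the intersection $\jneda^0(\pobj{Q})\cap\jneda^1(\pobj{R})$ consists of cells determined entirely by their restriction to $T_Q=S_R$, and these are exactly the images of $\square^S$ under either composite, giving the needed overlap identification.

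With these facts the universal property is straightforward: given $g:\pobj{Q}\to Y$ and $h:\pobj{R}\to Y$ with $g\circ\ineda_{f_{\pobj{Q}}}=h\circ\ineda_{i_{\pobj{R}}}$, I define $u:\pobj{P}\to Y$ by $u(\jneda^0(x_Q))=g(x_Q)$ and $u(\jneda^1(x_R))=h(x_R)$; well-definedness on the overlap is exactly the hypothesis, and uniqueness follows because the images cover $\pobj{P}$. The step I expect to require the most care is checking that $u$ is a precubical (and HDA) map: the subtle point is that the images of $\jneda^0$ and $\jneda^1$ are closed under elementary face maps, since $\delta_i^\nu$ only touches an $\exec$-coordinate and therefore preserves the $0$-values on $R\setminus S_R$ (respectively the $1$-values on $Q\setminus T_Q$) that characterize each image. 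Once this closure is in hand, compatibility of $u$ with face maps, labels, and (via the identifications $i_{\pobj{P}}=\jneda^0(i_{\pobj{Q}})$ and $f_{\pobj{P}}=\jneda^1(f_{\pobj{R}})$) with initial and accepting cells reduces to the corresponding properties of $g$ and $h$.
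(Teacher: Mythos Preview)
Your proposal is correct and follows essentially the same route as the paper: both arguments establish that $\jneda^0$ and $\jneda^1$ are injective, that their images cover $\pobj{P}$ (via the key observation that $p<_P q$ for all $p\in Q\setminus T_Q$ and $q\in R\setminus S_R$, combined with the shape of $\prec$), and that the overlap is precisely the image of $\square^S$. The paper stops after recording these three facts and leaves the pushout conclusion implicit, whereas you spell out well-definedness of the maps, the universal property, and closure of the images under face maps; this extra detail is sound but not a different approach.
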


\begin{proof}
  It is clear that all the maps in the diagram are injective.  Fix
  $x\in \pobj{P}$.
  \begin{itemize}
  \item If there exists $q\in Q\setminus T_Q$ with
    $x(q)\in \{0,\exec\}$, then obviously
    $x\not\in \jneda^1_{R\subseteq P} (\pobj{R})$. But for all
    $r\in R\setminus S_R$ we have $q<r$ and then $x(r)=0$.  It is easy
    to verify that the restriction $x\rest{Q}\in \pobj{Q}$ and then
    $x=\jneda^0_{Q\subseteq P}(x\rest{Q})\in \jneda^0_{Q\subseteq P}
    (\pobj{Q})$.
  \item Similarly, if $x(r)\in\{\exec,1\}$ for some
    $r\in R\setminus S_R$, then $x\in \jneda^1_{R\subseteq P} (\pobj{R})$.
  \item We have
    \begin{equation*}
      \jneda^0_{Q\subseteq P}\circ \ineda_{f_{\pobj{Q}}}(y)(p) = \jneda^1_{R\subseteq
        P}\circ \ineda_{i_{\pobj{R}}}(y)(p) =
      \begin{cases}
        1 & \text{for $p\in Q\setminus T_Q$}, \\
        \exec & \text{for $p\in S\cong T_Q\cong S_R$}, \\
        0 & \text{for $p\in R\setminus S_R$}.
      \end{cases}
    \end{equation*}
    Thus, the diagram commutes. Denote
    $j=\jneda^0_{Q\subseteq P}\circ \ineda_{f_{\pobj{Q}}}=\jneda^1_{R\subseteq
      P}\circ \ineda_{i_{\pobj{R}}}$.  The condition $x(q)=1$ for all
    $q\in Q\setminus T_Q$ and $x(r)=0$ for all $r\in R\setminus S_R$
    is equivalent to both
    $x\in \jneda^0_{Q\subseteq P} (\pobj{Q}) \cup \jneda^1_{R\subseteq P}
    (\pobj{R})$ and $x\in j(\pobj{P})$.
  \end{itemize}
  As a consequence,
  $\pobj{P}=\jneda^0_{Q\subseteq P} (\pobj{Q}) \cup \jneda^1_{R\subseteq
    P}(\pobj{R})$ and
  $j(\square^S)=\jneda^0_{Q\subseteq P} (\pobj{Q}) \cap \jneda^1_{R\subseteq
    P}(\pobj{R})$.
\end{proof}

\begin{lemma}
  \label{le:track-to-map}
  Let $X$ be a labeled precubical set,
  $\rho: x\leadsto y\in \Track{ X}$, and $P= \ell( \rho)$.  There
  is a map of labeled precubical sets $g:\pobj{P}\to X$ such that
  $g( i_{ \pobj{ P}})=x$ and $g( f_{ \pobj{ P}})= y$.
\end{lemma}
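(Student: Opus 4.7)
The plan is to proceed by induction on the length $m$ of the track $\rho=(x_1,\dotsc,x_m)$, using the Yoneda-type lemma (Lemma \ref{le:ineda}) to handle single cells and the pushout decomposition of $\pobj{-}$ (Lemma \ref{le:pobjJoin}) to glue along shared cells.

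For the base cases I would first handle the unit track $\rho=(x)$: here $P=\id_{\ell(x)}$, so $\pobj{P}$ has the same underlying precubical set as the standard cube $\square^{\ell(x)}$ and both $i_{\pobj{P}}$ and $f_{\pobj{P}}$ equal the top cell $\yoneda_{\ell(x)}$. The precubical map $\ineda_x:\square^{\ell(x)}\to X$ of Lemma \ref{le:ineda} then sends $\yoneda_{\ell(x)}$ to $x$, and preservation of labels is immediate from the construction of $\bang \Sigma$ together with Proposition \ref{prp:LabelsOfCells}. Next I would treat a basic track $\rho=(x,y)$ with $x\face^* y$: the label $P=\subid{\ell(x)}{\ell(y)}{\ell(y)}$ has $<_P=\emptyset$, so $\pobj{P}$ again has $\square^{\ell(y)}$ as underlying precubical set; the accepting cell is the top cell $\yoneda_{\ell(y)}$ and the initial cell is precisely the cell whose support equals $\ell(x)\subseteq\ell(y)$. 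Thus $\ineda_y$ sends the top to $y$ and the initial cell (corresponding under Yoneda to the inclusion $\ell(x)\hookrightarrow \ell(y)$ in $\dotsquare$) to the face $\delta_A^{0,\dotsc,0} y = x$ for the appropriate $A$; use of Lemma \ref{le:SemiNSI} guarantees the face index is well-defined. The case $\rho=(y,x)$ with $y\ecaf^* x$ is entirely symmetric, giving a starter/terminator swap.

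For the inductive step, write $\rho=\sigma * \tau$ with $\sigma=(x_1,x_2)$ a basic track and $\tau=(x_2,\dotsc,x_m)$, so $\ell(\rho)=\ell(\sigma)*\ell(\tau)$ by Definition \ref{de:label-track}. Set $S=\ell(x_2)$, which is simultaneously the target interface of $\ell(\sigma)$ and the source interface of $\ell(\tau)$. By induction there are maps $g_\sigma:\pobj{\ell(\sigma)}\to X$ and $g_\tau:\pobj{\ell(\tau)}\to X$ sending endpoints to endpoints; in particular $g_\sigma(f_{\pobj{\ell(\sigma)}})=x_2=g_\tau(i_{\pobj{\ell(\tau)}})$. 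By the uniqueness part of Lemma \ref{le:ineda}, both composites $g_\sigma\circ \ineda_{f_{\pobj{\ell(\sigma)}}}$ and $g_\tau\circ \ineda_{i_{\pobj{\ell(\tau)}}}$ from $\square^S$ to $X$ agree with $\ineda_{x_2}$. Applying Lemma \ref{le:pobjJoin} to this commuting square yields the desired $g:\pobj{\ell(\rho)}\to X$, and a straightforward check using the explicit formulas for $\jneda^0$ and $\jneda^1$ shows that $g(i_{\pobj{P}})=g_\sigma(i_{\pobj{\ell(\sigma)}})=x_1$ and $g(f_{\pobj{P}})=g_\tau(f_{\pobj{\ell(\tau)}})=x_m$.

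The main obstacle, if there is one, is administrative rather than conceptual: verifying that $g$ is a map of \emph{labeled} precubical sets, i.e.\ $\lambda_X\circ g=\lambda_{\pobj{P}}$. In the base cases this reduces to checking that under $\ineda_y$ the labeling of a face-cell of $\square^{\ell(y)}$ matches the labeling of the corresponding face of $y$ in $X$, which follows from Proposition \ref{prp:LabelsOfCells} together with the definition of $\lambda_{\pobj{P}}$ via the event order on $P$. In the inductive step the label-preservation is inherited automatically from $g_\sigma$ and $g_\tau$ through the pushout property.
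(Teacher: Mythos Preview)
Your proposal is correct and follows essentially the same approach as the paper: induction on track length, handling unit and basic tracks via the Yoneda map $\ineda_y$ on the standard cube, and gluing the inductive pieces via the pushout of Lemma \ref{le:pobjJoin}. The only cosmetic difference is that you peel off the first basic step $\sigma=(x_1,x_2)$ while the paper allows an arbitrary shorter decomposition $\rho=\sigma*\tau$; your added remarks on label preservation via Proposition \ref{prp:LabelsOfCells} and on using Yoneda uniqueness to verify the pushout compatibility make explicit what the paper leaves implicit.
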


\begin{proof}
  Induction on the number of cells in $\rho$.
  \begin{itemize}
  \item If $\rho=(x)$, then $P= \id_{ \ell(x)}$ and
    $\pobj{P}=\square^{\ev(x)}$.  The Yoneda map
    $\ineda_x:\square^{\ev(x)}\to X$ satisfies the required condition.
  \item If $\rho=(x, y)$ with $x\face^* y$, then
    $P=\subid{\ell(x)}{\ell(y)}{\ell(y)}$.  Again, we may take $g=\ineda_y$.
  \item The case $\rho=(y, x)$ with $y\ecaf^* x$ is similar.
  \item In case $\rho=\sigma*\tau$, where both $\sigma:x\leadsto z$
    and $\tau:z\leadsto y$ are shorter than $\rho$, let
    $Q=\ell(\sigma)$, $R=\ell(\tau)$.  By the inductive hypothesis,
    there are labeled precubical maps $g_Q:\pobj{Q}\to X$ and
    $g_R:\pobj{R}\to X$ such that $g_Q(i_{\pobj{Q}})=x$,
    $g_R(f_{\pobj{R}})=y$, and
    $g_Q(f_{\pobj{Q}})=g_R(i_{\pobj{R}})=z$.  The last equality,
    together with Lemma \ref{le:pobjJoin}, guarantees that $g_Q$ and
    $g_R$ glue to a map $g: \pobj{P}\to X$.  It is clear that
    $g(i_{\pobj{P}})=g_Q(i_{\pobj{Q}})=x$ and
    $g(f_{\pobj{P}})=g_R(f_{\pobj{R}})=y$. \qedhere
  \end{itemize}
\end{proof}

\begin{proposition}
  \label{pr:internal-track}
  For any interval ipomset $P$ there exists a track
  $\rho:i_{\pobj{P}}\leadsto f_{\pobj{P}}$ such that
  $\ell(\rho)\cong P$.
\end{proposition}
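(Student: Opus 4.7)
The plan is to decompose $P$ into elementary building blocks via Proposition \ref{pr:DecompositionOfEvilio} and then directly construct a sequence of cells in $\pobj{P}$ that realises this decomposition. Specifically, a decomposition $P \cong Q_1 * Q_2 * \dotsm * Q_N$ into elementary starters and terminators determines a sequence of antichains $A_0 = S_P, A_1, \dotsc, A_N = T_P$ of $(P, \mathord{<_P})$, where each $A_j$ differs from $A_{j-1}$ either by addition of a single element (starter step) or by removal of one (terminator step).

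For each $j$, I would define a function $c_j: P \to \{0, \exec, 1\}$ by setting $c_j(p) = \exec$ if $p \in A_j$, $c_j(p) = 1$ if $p \in A_k$ for some $k < j$ but $p \notin A_j$, and $c_j(p) = 0$ otherwise. Intervalness of $<_P$ ensures that the ``lifetime'' $\{k : p \in A_k\}$ of each $p$ is a contiguous interval and that whenever $p <_P q$ the lifetime of $p$ ends strictly before that of $q$ begins; a short case analysis against the definition of $\prec$ then shows that each $c_j$ is monotone and hence a cell of $\pobj{P}$. By construction $c_0 = i_{\pobj{P}}$ and $c_N = f_{\pobj{P}}$, and consecutive $c_{j-1}, c_j$ differ only by flipping a single value between $0$ and $\exec$ (starter step, giving $c_{j-1} \face c_j$) or between $\exec$ and $1$ (terminator step, giving $c_{j-1} \ecaf c_j$).

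Setting $\rho = (c_0, c_1, \dotsc, c_N)$ then gives a track from $i_{\pobj{P}}$ to $f_{\pobj{P}}$. To compute $\ell(\rho)$, I would use Lemma \ref{le:events-pobj} to identify $\ev(c_j) = A_j$ as linear pomsets carrying the restrictions of $\intord_P$ and $\lambda_P$. Definition \ref{de:label-track} then yields $\ell((c_{j-1}, c_j)) = \subid{A_{j-1}}{A_j}{A_j}$ for starter steps and $\subid{A_{j-1}}{A_{j-1}}{A_j}$ for terminator steps, each matching $Q_j$ on the nose. Proposition \ref{pr:lambda-track} finally gives $\ell(\rho) \cong Q_1 * \dotsm * Q_N \cong P$, as desired.

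The main technical obstacle is verifying monotonicity of the $c_j$'s, which is precisely where intervalness of $P$ enters: without the $\twotwo$-free hypothesis, the antichains $A_j$ arising from the decomposition could interleave in ways that put comparable events simultaneously in incompatible states of the progression $\{0, \exec, 1\}$, so that the proposed $c_j$ would fail to be monotone into $(\{0,\exec,1\},\mathord{\prec})$.
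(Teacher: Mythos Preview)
Your argument is correct and takes a genuinely different route from the paper's.  The paper proceeds by induction on a decomposition $P=Q*R$: tracks in $\pobj{Q}$ and $\pobj{R}$ are obtained from the inductive hypothesis and then pushed into $\pobj{P}$ along the maps $\jneda^0_{Q\subseteq P}$, $\jneda^1_{R\subseteq P}$ of the pushout Lemma~\ref{le:pobjJoin}, where they concatenate.  You instead unfold the decomposition all the way to elementary starters and terminators and write down the resulting cells $c_j\in\pobj{P}$ explicitly, never touching Lemma~\ref{le:pobjJoin}.  Your approach is more concrete and produces a full track with a closed formula for every cell; the paper's is more modular and reuses the pushout description of $\pobj{P}$, which it needs anyway for Lemma~\ref{le:track-to-map}.

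One small comment on your final paragraph: intervalness enters a step earlier than you indicate.  It is precisely what makes the decomposition of Proposition~\ref{pr:DecompositionOfEvilio} available; once that decomposition is fixed, contiguity of the lifetimes and the implication $p<_P q\Rightarrow d_p<b_q$ follow from the mechanics of iterated gluing rather than from $\twotwo$-freeness directly.  Without intervalness there would simply be no sequence $A_0,\dotsc,A_N$ of the required shape, so the monotonicity question for the $c_j$ would never arise.
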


\begin{proof}
  If $P=\subid{S}{U}{T}$ is discrete, then
  $\ell(( i_{\pobj{P}}, \yoneda_U, f_{\pobj{P}}))\cong P$.  If $P$ is
  not discrete, then there is a presentation $P=Q*R$ (Proposition
  \ref{pr:DecompositionOfEvilio}). If
  $\sigma:i_{\pobj{Q}}\leadsto f_{\pobj{Q}}$ is a track with
  $\ell(\sigma)\cong Q$ and $\tau:i_{\pobj{R}}\leadsto f_{\pobj{R}}$ a
  track with $\ell(\tau)\cong R$, then
  \begin{equation*}
    \ell(\jneda^0_{Q\subseteq
      P}(\sigma)*\jneda^1_{R\subseteq P}(\tau))\cong Q*R= P
  \end{equation*}
  by Lemma \ref{le:pobjJoin}.
\end{proof}

\begin{example}
  We follow up on Example \ref{ex:N-to-HDA}.  For
  \begin{equation*}
    P= \!\!\ipomset[3]{& a \ar[r] & b & \\ & c \ar[r] \ar@{.>}[u] \ar[ur] & d
      \ar@{.>}[u] \ar@{.>}[ul] &}\!\!,
  \end{equation*}
  we have $i_{ \pobj{P}}= x_1$ and $f_{ \pobj{P}}= x_8$ (see also
  Figure \ref{fi:pobj-ex}), and the track $\rho$ of the proposition is
  given by $\rho=( x_1, z_1, y_3, z_2, y_8, z_3, x_8)$.  If we add
  interfaces to $P$, for example
  \begin{equation*}
    Q= \ipomset[1.5]{ \ar[r] & a \ar[rr] && b & \\ & c \ar[rr]
      \ar@{.>}[u] \ar[urr] && d \ar[r]
      \ar@{.>}[u] \ar@{.>}[ull] &},
  \end{equation*}
  then $i_{ \pobj{Q}}= y_1$, $f_{ \pobj{Q}}= y_{10}$, and $\rho=( y_1,
  z_1, y_3, z_2, y_8, z_3, y_{10})$.
\end{example}

\section{The Geometric View}
\label{se:geo}

Precubical sets may be realized as directed topological spaces, and
then directed paths through these spaces give an intuitive model of
computations.  In this section we first recap the geometric
realization and then introduce labels of directed paths in HDAs.  We
will see that for every directed path there exists a track with the
same label, and vice versa, so that HDA languages defined using tracks
and using directed paths are the same.

\subsection{Geometric Realization}

Recall that the \emph{concatenation} $\alpha* \beta$ of two paths
$\alpha, \beta: I=[ 0, 1]\to \X$ in a topological space $\X$ is
defined, if $\alpha( 1)= \beta( 0)$, as
\begin{equation*}
  \alpha* \beta( t)=
  \begin{cases}
    \alpha( 2 t) &\text{for } t\le \frac{ 1}{ 2}, \\
    \beta( 2 t- 1) &\text{for } t\ge \frac{ 1}{ 2}.
  \end{cases}
\end{equation*}

A \emph{directed topological space}, or \emph{d-space} \cite{book/Grandis09}
is a pair $( \X, \po P \X)$ consisting of a topological space $\X$ and
a set $\po P \X\subseteq \X^I$ of paths in $\X$ such that $\po P \X$
\begin{itemize}
\item contains all constant paths;
\item is closed under concatenation: if $\alpha, \beta\in \po P \X$
  and $\alpha( 1)= \beta( 0)$, then $\alpha* \beta\in \po P \X$;
\item is closed under reparametrization and subpath: for any
  $\alpha\in \po P \X$ and $h: I\to I$ continuous and (weakly)
  increasing, also $\alpha\circ h\in \po P \X$.
\end{itemize}
The elements of $\po P\X$ are called directed paths or \emph{d-paths}.

Prominent examples of d-spaces are the directed interval
$\po I=[ 0, 1]$ with the natural ordering on the real numbers and the
directed $n$-cubes $\po I^n$ for $n\ge 0$.  Similarly, there are
directed Euclidean spaces $\po \Real^n$ for all $n\ge 0$.  In each of
these, the d-paths a precisely the paths which are (weakly) increasing
in each coordinate, that is, $\alpha: I\to \po \Real^n$ is a d-path
iff $t_1\le t_2$ implies $\alpha( t_1)\le \alpha( t_2)$ in the usual
ordering $( x_1,\dotsc, x_n)\le( y_1,\dotsc, y_n)$ iff $x_i\le y_i$
for all $i$.

Morphisms $f:( \X, \po P\X)\to( \Y, \po P\Y)$ of d-spaces are
\emph{d-maps}; they are those continuous functions that also preserve
directedness, \ie $f\circ \alpha\in \po P \Y$ for all
$\alpha\in \po P\X$.  For any d-space $(\X, \po P\X)$ we have
$\po P\X= \X^{ \po I}$ as function spaces.

The so-defined category $\dTop$ of d-spaces is complete and cocomplete
\cite{book/Grandis09}.  In particular, quotients of d-spaces are
well-defined.  If $\X$ is a d-space and $\sim$ an equivalence on $\X$\!,
then d-paths in the quotient space $\X/ \mathord\sim$ are of the form
\begin{equation*}
  \alpha=(\pi(\beta_1)*\dotsm *\pi(\beta_m))\circ h
\end{equation*}
where all $\beta_i$ are d-paths in $\X$ such that
$\beta_i(1)\sim \beta_{i+1}(0)$ and $h: \po I\to \po I$ is a
surjective d-map.

Surjective d-maps $h: \po I\to \po I$ as above are called
\emph{reparametrizations} and will play a central role below.

\begin{definition}
  \label{de:georel}
  The \emph{geometric realization} of a precubical set $X$ is the
  d-space
  \begin{equation*}
    \georel{ X}= \bigsqcup_{ n\ge 0} X_n\times \smash[t]{\po I^n} /
    \mathord\sim,
  \end{equation*}
  where the equivalence relation $\sim$ is generated by
  \begin{equation*}
    ( \delta_i^\nu x,( t_1,\dotsc, t_{ n- 1}))\sim( x,( t_1,\dotsc, t_{
      i- 1}, \nu, t_{ i+ 1},\dotsc, t_{ n- 1})).
  \end{equation*}
  The geometric realization of a precubical map $f:X\to Y$ is the
  d-map $\georel{f}:\georel{X}\to \georel{Y}$ given by
  $\georel{f}([ x,( t_1,\dotsc, t_n)])=[ f( x),( t_1,\dotsc,
  t_n)]$.
\end{definition}

Above, $[ x,( t_1,\dotsc, t_n)]$ is used to denote equivalence classes
of $\sim$.  Geometric realization is a functor from $\pcSet$ to $\dTop$.

\begin{example}
  The geometric realization of the $n$-cube $\square^n$ is the
  directed cube $\po I^n$.  The purpose of the equivalence relation
  $\sim$ in the definition is to embed faces as subspaces, for
  example, the elementary face $\delta_1^0 \yoneda_n$ of the top cell
  of $\square^n$ is the subset
  $\{( 0, t_2,\dotsc, t_n)\mid 0\le t_i\le 1\}\subseteq I^n$.
\end{example}

The \emph{interior image} $\intimg{ x}\subseteq| X|$ of a cell
$x\in X_n$ in a precubical set $X$ is defined as
\begin{equation*}
  \intimg{ x}= \{[ x,( t_1,\dotsc, t_n)]\mid 0< t_i< 1 \text{ for all
  } i\in[ n]\}.
\end{equation*}
The set
$\intimg{ x}$ is open for $x\notin X_0$; for $x\in X_0$,
$\intimg x=\{ x\}$.

\begin{definition}
  The \emph{carrier} $\carr(p)$ of a point $p\in \georel{ X}$ is the
  unique cell $x\in X$ such that $p\in \intimg x$.
\end{definition}

For later use we record the following lemma, whose proof easily
follows from the definition; see also \cite{Fahrenberg05-thesis}:

\begin{lemma}
  \label{le:fcarrp}
  For a precubical map $f: X\to Y$ and $p\in \georel{X}$, $\carr(
  \georel{f}( p))= f( \carr( p))$. \qed
\end{lemma}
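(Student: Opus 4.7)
The proof is a matter of unpacking definitions, relying on the fact that interior images of distinct cells are disjoint, so the carrier of a point is uniquely determined by any representative with all coordinates strictly between $0$ and $1$.

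First I would fix $p\in \georel{X}$ and set $x= \carr( p)$, say $x\in X_n$. By definition of the carrier, $p\in \intimg{x}$, so we may write $p=[ x,( t_1,\dotsc, t_n)]$ with $0< t_i< 1$ for all $i\in[ n]$. Since $f: X\to Y$ is a precubical map, i.e.\ a natural transformation of presheaves, it restricts to a function $X_n\to Y_n$, and in particular $f(x)\in Y_n$. Applying the definition of $\georel{f}$ then yields
\begin{equation*}
  \georel{f}( p)=[ f( x),( t_1,\dotsc, t_n)].
\end{equation*}

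Next I would observe that since $0< t_i< 1$ for all $i$, this expression witnesses $\georel{f}( p)\in \intimg{ f( x)}$. The main (and essentially only) point to check is that the carrier is well-defined, i.e.\ that distinct cells $y, y'\in Y$ have disjoint interior images. This is a standard property of CW-type gluings: the equivalence relation $\sim$ in Definition \ref{de:georel} identifies $( y,( s_1,\dotsc, s_m))$ with $( y',( s_1',\dotsc, s_{m'}'))$ only via a chain of elementary face identifications, each of which forces some coordinate to lie in $\{ 0, 1\}$. Hence if both tuples have all coordinates in $( 0, 1)$, then $y= y'$ (and the coordinates agree). This ensures uniqueness of the carrier.

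Applying this uniqueness to $\georel{f}( p)\in \intimg{ f( x)}$, we conclude $\carr( \georel{f}( p))= f( x)= f( \carr( p))$, as desired. No step here is really an obstacle; the only thing that requires a moment of care is the disjointness of interior images, which is an immediate consequence of the form of the equivalence relation defining $\georel{ X}$.
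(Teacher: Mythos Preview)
Your proof is correct and is precisely the routine unpacking of definitions that the paper has in mind; the paper itself gives no proof beyond the remark that it ``easily follows from the definition'' (with a \qed\ after the statement), so your write-up is simply a careful spelling-out of that claim.
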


We conclude with a description of d-paths on $\georel{X}$.  Recall the
Yoneda inclusions $\ineda_x: \square^n\to X$ from
Lemma \ref{le:ineda}. These induce d-maps
$\georel{ \ineda_x}: \po I^n\to \georel{ X}$.

\begin{lemma}
  \label{le:DipathPresentation}
  Every d-path $\alpha\in \po P\georel{X}$ has a presentation
  \begin{equation}
    \label{eq:DipathPresentation} \tag{$\ast$}
    \alpha= \big( ( \georel{\ineda_{x_1}}\circ \beta_1)*(
    \georel{\ineda_{x_2}}\circ \beta_2)*\dotsm*(
    \georel{\ineda_{x_m}}\circ \beta_m) \big)\circ h,
  \end{equation}
  where $x_1,\dotsc,x_m\in X$, $\beta_i\in \po P (\po I^{\dim x_i})$,
  $[x_i,\beta_i(1)]=[x_{i+1},\beta_{i+1}(0)]\in \georel{X}$, and
  $h:\po I\to \po I$ is a reparametrization. Moreover, we can assume
  that $\carr(\beta_i(\tfrac{1}{2}))=\yoneda_{\dim(x_i)}$.
\end{lemma}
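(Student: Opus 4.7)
The plan is to partition $\po I$ into finitely many subintervals on each of which $\alpha$ stays within a single closed cube of $\georel{X}$, lift each restriction to a d-path in the corresponding standard cube, and assemble the lifts via a reparametrization.

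For the first step, I would use that $\georel{X}$ carries a CW-structure whose open cells are the interior images $\intimg{x}$. For each $t\in \po I$ with carrier $x_t=\carr(\alpha(t))$, the closed image $\georel{\ineda_{x_t}}(\po I^{\dim x_t})$, together with the closed images of finitely many cells having $x_t$ as a face, forms a neighborhood of $\alpha(t)$; its preimage under $\alpha$ is then a neighborhood of $t$ in $\po I$. The carrier function $t\mapsto \carr(\alpha(t))$ takes only finitely many values along the compact image $\alpha(\po I)$ and, by directedness and monotonicity along each cell, changes only finitely often. By compactness of $\po I$, I would extract a finite subdivision $0=t_0<\dotsm<t_m=1$ together with cells $x_1,\dotsc,x_m\in X$ such that $\alpha((t_{i-1},t_i))\subseteq \intimg{x_i}$, and therefore $\alpha([t_{i-1},t_i])\subseteq \georel{\ineda_{x_i}}(\po I^{\dim x_i})$, for each $i$.

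For the second step, each map $\georel{\ineda_{x_i}}:\po I^{\dim x_i}\to \georel{X}$ restricts to a homeomorphism from the open cube $(0,1)^{\dim x_i}$ onto $\intimg{x_i}$, and the full map is a quotient onto its closed image which identifies only faces according to the precubical structure (cf.\ Lemma~\ref{le:fcarrp}). This allows a unique continuous lift $\tilde\beta_i:[t_{i-1},t_i]\to \po I^{\dim x_i}$ of $\alpha|_{[t_{i-1},t_i]}$: it is determined on the dense open set $(t_{i-1},t_i)$, where $\alpha$ takes values in $\intimg{x_i}$, and extends by continuity to the endpoints. The lift is automatically a d-path since $\georel{\ineda_{x_i}}$ restricts to a homeomorphism of d-spaces on the cube interior, and directedness propagates to the closure. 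Rescaling $\tilde\beta_i$ affinely to $\beta_i:\po I\to \po I^{\dim x_i}$, and letting $h:\po I\to \po I$ be the piecewise-affine reparametrization determined by $h(i/m)=t_i$, produces the presentation~\eqref{eq:DipathPresentation}.

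The ``moreover'' claim follows immediately from the stronger inclusion $\alpha((t_{i-1},t_i))\subseteq \intimg{x_i}$: the midpoint $(t_{i-1}+t_i)/2$ lies in this open subinterval, so $\alpha((t_{i-1}+t_i)/2)\in \intimg{x_i}$ and hence $\tilde\beta_i((t_{i-1}+t_i)/2)$ lies in $(0,1)^{\dim x_i}$, whose carrier in $\square^{\dim x_i}$ is $\yoneda_{\dim x_i}$; under the affine rescaling this translates to $\carr(\beta_i(\tfrac{1}{2}))=\yoneda_{\dim x_i}$. The main obstacle is the preliminary claim that the carrier function along a d-path changes only finitely often: this requires a combinatorial argument that combines monotonicity of d-paths in each coordinate of a cube with finiteness of the cells visited by $\alpha$, and arguably carries the bulk of the technical work behind the lemma.
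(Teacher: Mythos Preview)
Your approach is correct in outline but differs from the paper's in an instructive way.  The paper obtains the basic presentation~\eqref{eq:DipathPresentation} \emph{for free}: by the construction of colimits in $\dTop$, d-paths in a quotient are by definition finite concatenations of projected d-paths up to reparametrization, so the first sentence of the paper's proof is literally ``this follows immediately from the description of d-paths on quotient d-spaces.''  The paper then spends all its effort on the ``moreover'' clause, using a well-ordering on the multiset of dimensions appearing in a presentation and choosing a minimal one: if some $\beta_i$ never enters the interior of its cube, it lies in the boundary and can be re-presented through strictly lower-dimensional cells, contradicting minimality.

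You invert this economy: you aim for a \emph{carrier-based} partition $0=t_0<\dotsm<t_m=1$ with $\alpha((t_{i-1},t_i))\subseteq\intimg{x_i}$, from which the ``moreover'' is indeed immediate and the lifting argument you sketch (unique lift on the open interval via the homeomorphism onto $\intimg{x_i}$, extension to endpoints by monotone boundedness of each coordinate) is clean.  What you flag as ``the main obstacle''---that the carrier changes only finitely often---is genuine if you insist on deriving it from CW structure and compactness alone; continuity and compactness do \emph{not} suffice (a continuous path can wind infinitely around a loop).  But this obstacle dissolves once you use the very input the paper invokes: the definitional finite presentation gives finitely many pieces, and within each piece (a d-path in a single $\po I^n$) monotonicity of each coordinate forces at most $2n$ carrier changes.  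So your ``combinatorial argument'' is really the paper's first step followed by an easy refinement.  In short: the paper buys the presentation cheaply and pays for the interior condition; you pay for a finer presentation and get the interior condition for free; but the currency in both cases is the quotient description of $\po P\georel{X}$, which you should cite rather than re-derive.
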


\begin{figure}
  \centering
  \begin{tikzpicture}[x=1.5cm, y=1.5cm]
    \begin{scope}
      \path[fill=black!15] (0,0) to (3,0) to (3,2) to (2,2) to (2,1)
      to (0,1);
      \node[state] (00) at (0,0) {};
      \node[state] (10) at (1,0) {};
      \node[state] (20) at (2,0) {};
      \node[state] (30) at (3,0) {};
      \node[state] (01) at (0,1) {};
      \node[state] (11) at (1,1) {};
      \node[state] (21) at (2,1) {};
      \node[state] (31) at (3,1) {};
      \node[state] (22) at (2,2) {};
      \node[state] (32) at (3,2) {};
      \path (00) edge (10);
      \path (10) edge (20);
      \path (20) edge (30);
      \path (01) edge (11);
      \path (11) edge (21);
      \path (21) edge (31);
      \path (22) edge (32);
      \path (00) edge (01);
      \path (10) edge (11);
      \path (20) edge (21);
      \path (30) edge (31);
      \path (21) edge (22);
      \path (31) edge (32);
      \node at (.5,.5) {$x_1$};
      \node at (1.5,.5) {$x_2$};
      \node at (2.5,.5) {$x_3$};
      \node at (2.5,1.5) {$x_4$};
      \draw[-, very thick, orange] plot[smooth] coordinates {(.3,.2) (1.6,.3)
        (2.3,.7) (2.8,1.7)};
    \end{scope}
    \begin{scope}[shift={(5,0)}]
      \path[fill=black!15] (0,0) to (3,0) to (3,2) to (2,2) to (2,1)
      to (0,1);
      \node[state] (00) at (0,0) {};
      \node[state] (10) at (1,0) {};
      \node[state] (20) at (2,0) {};
      \node[state] (30) at (3,0) {};
      \node[state] (01) at (0,1) {};
      \node[state] (11) at (1,1) {};
      \node[state] (21) at (2,1) {};
      \node[state] (31) at (3,1) {};
      \node[state] (22) at (2,2) {};
      \node[state] (32) at (3,2) {};
      \path (00) edge (10);
      \path (10) edge (20);
      \path (20) edge (30);
      \path (01) edge (11);
      \path (11) edge (21);
      \path (21) edge (31);
      \path (22) edge (32);
      \path (00) edge (01);
      \path (10) edge (11);
      \path (20) edge (21);
      \path (30) edge (31);
      \path (21) edge (22);
      \path (31) edge (32);
      \node at (.5,.5) {$x_1$};
      \node at (1.5,.5) {$x_2$};
      \node at (2.5,.5) {$x_3$};
      \node at (2.5,1.5) {$x_4$};
      \draw[-, very thick, orange] plot[smooth] coordinates {(.3,.2) (1.6,.3)
        (2,.5)};
      \draw[-, very thick, orange] (2,.5) -- (2,1) -- (2.45,1);
      \draw[-, very thick, orange] plot[smooth] coordinates {(2.45,1)
        (2.8,1.7)};
    \end{scope}
  \end{tikzpicture}
  \bigskip\bigskip\medskip
  \caption{Left: d-path $\alpha$ with presentation
    $(( \georel{ \ineda_{ x_1}}\circ \beta_1)*( \georel{ \ineda_{
        x_2}}\circ \beta_2)*( \georel{ \ineda_{ x_3}}\circ \beta_3)*(
    \georel{ \ineda_{ x_4}}\circ \beta_4))\circ h$ (Lemma
    \ref{le:DipathPresentation}); right: counterexample in the proof
    of Lemma \ref{le:DipathPresentation}, with presentation
    $(( \georel{ \ineda_{ x_1}}\circ \beta_1)*( \georel{ \ineda_{
        x_2}}\circ \beta_2)*$ $( \georel{ \ineda_{ \delta_1^0 x_3}}\circ
    \gamma_1)*$ $( \georel{ \ineda_{ \delta_1^1 x_3}}\circ \gamma_2)*(
    \georel{ \ineda_{ x_4}}\circ \beta_4))\circ h''$.}
  \label{fi:dpath-reali}
\end{figure}

Figure \ref{fi:dpath-reali} shows an example: on the left, a d-path
and a presentation; on the right, the counterexample used below in the
proof.

\begin{proof}
  Apart from the last statement, this follows immediately from the
  description of d-paths on quotient d-spaces and the definition of
  the geometric realization.
	
  Let $\mathfrak{S}$ be the set of sequences $(d_0,d_1,\dotsc)$ of
  natural numbers which are eventually vanishing, that is, there
  exists $n\ge 0$ such that $d_j=0$ for all $j> n$.  Equip
  $\mathfrak{S}$ with the reverse lexicographic order, \ie
  $(d_j)<(d'_j)$ if there exists $n$ such that $d_n<d'_n$ and
  $d_j=d'_j$ for $j> n$.  For every presentation
  \eqref{eq:DipathPresentation} of $\alpha$ we associate the sequence
  $(d_j)\in \mathfrak{S}$ such that $d_j$ is the number of indices $i$
  such that $\dim(x_i)=j$.  Choose a presentation
  \eqref{eq:DipathPresentation} with a minimal associated sequence
  $(d_j)\in \mathfrak{S}$.  Denote $n_i=\dim x_i$.

  Assume that for some $i$,
  $\beta_i(t)\not \in \oil \yoneda_{n_i}\oir$ for all $t$.  But then
  $\beta_i\in \po P\georel{\square^{ n_i}_{ n_i- 1}}$, the set of
  d-paths in the $( n_i- 1)$-restriction of $\square^{ n_i}$, and
  hence it has a presentation
  \begin{equation*}
    \beta_i= \big( ( \georel{\ineda_{y_1}}\circ \gamma_1)*\dotsm*(
    \georel{\ineda_{y_l}}\circ \gamma_l) \big) \circ h'.
  \end{equation*}
  Obviously $\dim(y_k)<n_i$ for all $k$. Collecting these two
  presentations, we have
  \begin{align*}
    \alpha &= \big( ( \georel{ \ineda_{ x_1}}\circ \beta_1)*\dotsm*(
    \georel{ \ineda_{ x_{ i-1}}}\circ \beta_{ i-1})*( \georel{
      \ineda_{ x_i}}\circ(( \georel{\ineda_{y_1}}\circ
    \gamma_1)*\dotsm \\
    &\hspace*{9em}
    \dotsm*( \georel{\ineda_{y_l}}\circ \gamma_l)) \circ h')*(
    \georel{\ineda_{x_{i+1}}}\circ \beta_{i+1})*\dotsm*(
    \georel{\ineda_{x_{m}}}\circ \beta_{m}) \big) \circ h \\
    &= \big( ( \georel{\ineda_{x_1}}\circ \beta_1)*\dotsm*(
    \georel{\ineda_{x_{i-1}}}\circ \beta_{i-1})*(
    \georel{\ineda_{\ineda_{x_i}(y_1)}}\circ \gamma_1)*\dotsm \\
    &\hspace*{9em}
    \dotsm*( \georel{\ineda_{\ineda_{x_i}(y_l)}}\circ \gamma_l)*(
    \georel{\ineda_{x_{i+1}}}\circ \beta_{i+1})*\dotsm*(
    \georel{\ineda_{x_{m}}}\circ \beta_{m}) \big) \circ h''
  \end{align*}
  for some reparametrization $h''$ obtained from $h$ and $h'$.  Let
  $(d'_j)$ be the associated sequence of this presentation.  Then
  $d'_j=d_j$ for $j>n_i$ and $d'_{n_i}=d_{n_i}-1$, since $x_i$ no
  longer appears in the presentation and cells $\ineda_{ x_i}(y_k)$
  have smaller dimensions: a contradiction to the minimality of $( d_j)$.

  As a consequence, for every $i$ there exists $t_i$ with
  $\carr(\beta_i(t_i))=\yoneda_{n_i}$. By reparametrizing $\beta_i$
  and adjusting $h''$ we can ensure that $t_i=\tfrac{1}{2}$.
\end{proof}

\subsection{Labels of d-paths}

For the rest of this section, $( X, \lambda)$ is a labeled precubical
set (which is, by definition, event consistent).  We will associate to
every d-path $\alpha$ in $\georel{X}$ its \emph{label} $\ell(\alpha)$
as an interval ipomset.  In order to do so, we first need to find the
(universal) events in $X$ that are active during the execution
$\alpha$.


We say that an event $e\in E_X$ is \emph{active} at point
$p=[ x,( t_1,\dotsc, t_n)]\in \georel{ X}$, for $x\in X_n$, if there
is $i\in[n]$ such that $\ev_i(x)=e$ and $0< t_i< 1$. Otherwise, $e$ is
\emph{inactive} at $p$. It is easy to verify that this does not depend
on the choice of a presentation of $p$. Let $U^X_e\subseteq \georel{X}$
be the set of points in which $e$ is active.  The following is clear.

\begin{lemma}
  \label{le:UeX}
  $U^X_e=\bigcup \{\intimg{x}\mid e\in \ev(x) \}=\{p\in \georel{X}\mid
  e\in \ev(\carr(p))\}$. \qed
\end{lemma}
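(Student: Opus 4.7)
The plan is to prove the two equalities separately, using the canonical representation of a point via its carrier as the main tool.

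For the equality $\bigcup \{\intimg{x}\mid e\in \ev(x) \}=\{p\in \georel{X}\mid e\in \ev(\carr(p))\}$, I would argue directly from the definition of the carrier. Every $p\in \georel{X}$ lies in the interior image of a unique cell, namely $\carr(p)$; hence $p$ lies in $\intimg{x}$ for some $x$ with $e\in \ev(x)$ if and only if $e\in \ev(\carr(p))$.

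For the equality $U^X_e=\bigcup \{\intimg{x}\mid e\in \ev(x) \}$, I would use the fact (stated just before the lemma) that activity of $e$ at $p$ does not depend on the chosen presentation. The forward inclusion is immediate: if $p\in \intimg{x}$ with $e\in \ev(x)$, then $p$ has a representation $p=[x,(t_1,\dotsc,t_n)]$ with all $0<t_i<1$ by definition of interior image, and since $e\in \ev(x)$ there is some index $i$ with $\ev_i(x)=e$, so $e$ is active at $p$. For the reverse inclusion, let $p\in U^X_e$ and set $y=\carr(p)$; the unique canonical representation $p=[y,(s_1,\dotsc,s_m)]$ satisfies $0<s_k<1$ for every $k$. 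By presentation-independence, $e$ is active at $p$ with respect to this representation, so $\ev_k(y)=e$ for some $k$, giving $e\in \ev(y)$ and therefore $p\in \intimg{y}$ with $y$ in the indexing family.

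The only subtle point is justifying the presentation-independence. Although the lemma invokes this as easy, the clean argument passes through Proposition~\ref{prp:LabelsOfCells} applied to $\ev:X\to\bbang E_X$: any two presentations of $p$ differ by a composition with some $\delta^\nu_A$, and the proposition says that $\ev(\delta^\nu_A x)=(\ev_j(x))_{j\in[n]\setminus A}$. The indices $j\in[n]\setminus A$ are precisely those for which the corresponding coordinate in the original presentation lies strictly between $0$ and $1$, so the set of events with at least one active coordinate is the same in both presentations. I expect this invariance check to be the only step that requires any care; everything else is a direct unwinding of definitions.
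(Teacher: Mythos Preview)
Your proposal is correct and is exactly the kind of definition-unwinding the paper has in mind: the paper gives no explicit proof at all (the lemma is marked \qed\ immediately after being stated, preceded only by the remark that presentation-independence is ``easy to verify''). Your argument simply makes this explicit, and your justification of presentation-independence via Proposition~\ref{prp:LabelsOfCells} is the right way to fill in that step.
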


Note that all events are inactive at vertices, exactly one event is
active along an edge, and so on: if $\dim(\carr(p))=n$, then exactly
the $n$ events in $\ev(\carr(p))$ are active at $p$.  We will write
$U_e$ for $U_e^X$ when $X$ is clear.

Now fix a d-path $\alpha\in \po P\georel{X}$.  For every event
$e\in E_X$, let
\begin{equation*}
  J_e^\alpha=\alpha^{-1}(U_e)\subseteq[ 0, 1]
\end{equation*}
be the set of points in time when $e$ is active.  $J_e^\alpha$ is an
open subset of $[0,1]$, since $U_e$ is open.  Moreover, by Lemma
\ref{le:DipathPresentation} it has a finite number of connected
components.  Thus, there is a unique presentation
\begin{equation}
  \label{eq:int-arr}
  J_e^\alpha= I_{e,1}^\alpha\cup \dots\cup I^\alpha_{e,n_e^\alpha}
\end{equation}
as a union of connected components ordered increasingly.  Each of
these components is open in $[0,1]$, though not necessarily in
$\Real$: possibly $I^\alpha_{e,1}=[0,t\oir$ or
$T^\alpha_{e,n^\alpha_e}=\oil t,1]$, or even $I^\alpha_{e,1}=[0,1]$
for $n^\alpha_e=1$.  The collection of presentations
\eqref{eq:int-arr} is called the \emph{interval arrangement} of
$\alpha$.

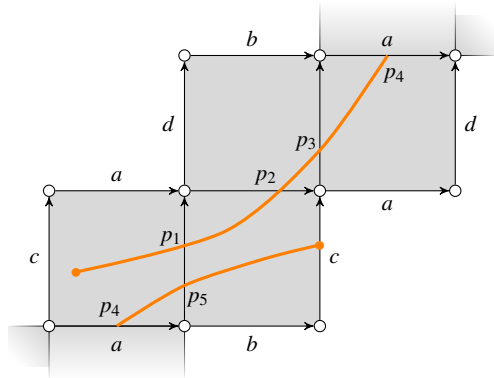
\begin{figure}
  \centering
  \begin{tikzpicture}[x=1.8cm,y=1.8cm]
    \path[fill=black!15] (0,0) to (2,0) to (2,1) to (3,1) to (3,2) to
    (1,2) to (1,1) to (0,1);
    \fill[-, fill=black!15, path fading=east, fading
    transform={rotate=45}] (3,2) -- (3.3,2) -- (3.3,2.3) -- (3,2.3) --
    (3,2);
    \filldraw[-, fill=black!15, path fading=north] (2,2.4) -- (2,2) --
    (3,2) -- (3,2.4);
    \draw[-, path fading=east] (3,2) -- (3.3,2);
    \fill[-, fill=black!15, path fading=west, fading
    transform={rotate=45}] (0,0) -- (-.3,0) -- (-.3,-.3) -- (0,-.3) --
    (0,0);
    \filldraw[-, fill=black!15, path fading=south] (0,-.4) -- (0,0) --
    (1,0) -- (1,-.4);
    \draw[-, path fading=west] (0,0) -- (-.3,0);
    \node[state] (00) at (0,0) {};
    \node[state] (10) at (1,0) {};
    \node[state] (20) at (2,0) {};
    \node[state] (01) at (0,1) {};
    \node[state] (11) at (1,1) {};
    \node[state] (21) at (2,1) {};
    \node[state] (31) at (3,1) {};
    \node[state] (12) at (1,2) {};
    \node[state] (22) at(2,2) {};
    \node[state] (32) at (3,2) {};
    \path (00) edge node[below] {$\vphantom{b}a$} (10);
    \path (10) edge node[below] {$b$} (20);
    \path (01) edge node[above] {$a$} (11);
    \path (11) edge (21);
    \path (21) edge node[below] {$a$} (31);
    \path (12) edge node[above] {$b$} (22);
    \path (22) edge node[above]{$a$} (32);
    \path (00) edge node[left] {$c$} (01);
    \path (10) edge (11);
    \path (11) edge node[left] {$d$} (12);
    \path (20) edge node[right] {$c$} (21);
    \path (21) edge (22);
    \path (31) edge node[right] {$d$} (32);
    \draw[-, very thick, orange] plot[smooth] coordinates {(.2,.4) (1.3,.7)
      (2,1.3) (2.5,2)};
    \draw[-, very thick, orange] plot[smooth] coordinates {(.5,0) (1,.3)
      (1.6,.5) (2,.6)};
    \node[orange] at (.2,.39) {$\bullet$};
    \node[orange] at (2,.59) {$\bullet$};
    \node at (.9,.65) {$p_1$};
    \node at (1.6,1.1) {$p_2$};
    \node at (1.9,1.35) {$p_3$};
    \node at (2.55,1.85) {$p_4$};
    \node at (.45,.12) {$p_4$};
    \node at (1.1,.2) {$p_5$};
  \end{tikzpicture}
  \bigskip
  \caption{Directed path which wraps around a two-dimensional loop
    (bottom left and top right edges identified).}
  \label{fi:dpath-loop}
\end{figure}

\begin{example}
  Figure \ref{fi:dpath-loop} shows a d-path $\alpha$ through a labeled
  precubical set with a two-dimensional loop: $\alpha$ starts inside
  the bottom-left square with events $a$ and $c$, continues until the
  upper face of the top-right square, which is identified with the
  lower face of the bottom-left square, and finishes in the right
  $c$-labeled edge.  Assuming that $\alpha$ is parametrized so that
  $\alpha( \frac{i}{6})= p_i$ for $i\in[ 5]$ (the intersection points
  $p_1,\dotsc, p_5$ of $\alpha$ with the edges are indicated in the
  figure), its interval arrangement is
  \begin{equation*}
    J^\alpha_a= \bigl[ 0, \tfrac{1}{6}\bigoir\cup \bigoil \tfrac{1}{2},
    \tfrac{5}{6} \bigr], \qquad%
    J^\alpha_b= \bigoil \tfrac{1}{6}, \tfrac{1}{2} \bigoir\cup \bigoil
    \tfrac{5}{6}, 1\bigoir, \qquad%
    J^\alpha_c= \bigl[ 0, \tfrac{1}{3}\bigoir\cup \bigoil \tfrac{2}{3},
    1\bigr], \qquad%
    J^\alpha_d= \bigoil \tfrac{1}{3}, \tfrac{2}{3} \bigoir.
  \end{equation*}
\end{example}

Now, for every $x\in X$, define a relation $\intord_x$ on $\ev(x)$ by
$e\intord_x e'$ if $e=\ev_i(x)$ and $e'=\ev_j(x)$ for $i<j$.  From
Lemma \ref{le:EventsAreSubsets} we immediately get

\begin{lemma}
  If $e, e'\in \delta_i^\nu x$ for some $i$ and $\nu$, then
  $e\intord_{ \delta_i^\nu x} e'$ iff $e\intord_x e'$. \qed
\end{lemma}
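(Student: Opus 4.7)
The plan is to reduce the statement to a direct computation using the explicit formula for how event identifications behave under elementary face maps. Specifically, I would apply Proposition \ref{prp:LabelsOfCells} to the event identification $\ev: X \to \bbang E_X$ to obtain
\begin{equation*}
  \ev(\delta_i^\nu x) = (\ev_1(x), \dotsc, \ev_{i-1}(x), \ev_{i+1}(x), \dotsc, \ev_n(x)).
\end{equation*}
This tells us exactly which event sits at position $j$ in $\delta_i^\nu x$: introducing the order-preserving bijection $d_i: \{1, \dotsc, n-1\} \to \{1, \dotsc, n\} \setminus \{i\}$ with $d_i(j) = j$ for $j < i$ and $d_i(j) = j+1$ for $j \geq i$, we have $\ev_j(\delta_i^\nu x) = \ev_{d_i(j)}(x)$.

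Next I would use event consistency, more precisely the fact that the coordinates of $\ev(x) \in \bbang E_X$ are pairwise distinct, to conclude that if $e \in \ev(\delta_i^\nu x) \subseteq \ev(x)$, then the index $j$ with $\ev_j(\delta_i^\nu x) = e$ and the index $d_i(j)$ with $\ev_{d_i(j)}(x) = e$ are both uniquely determined by $e$. The same applies to $e'$.

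Unfolding the definition of the two event orders, $e \intord_{\delta_i^\nu x} e'$ says $e = \ev_j(\delta_i^\nu x)$ and $e' = \ev_k(\delta_i^\nu x)$ with $j < k$, which by the previous paragraph is equivalent to $e = \ev_{d_i(j)}(x)$ and $e' = \ev_{d_i(k)}(x)$ with $j < k$. Since $d_i$ is strictly order-preserving, $j < k \iff d_i(j) < d_i(k)$, and the right-hand side is precisely the definition of $e \intord_x e'$.

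There is no real obstacle here beyond keeping the indexing bookkeeping clean; the key ingredients are Proposition \ref{prp:LabelsOfCells} (which gives the formula for $\ev$ on faces) and the injectivity of the tuple $\ev(x)$ (which is what event consistency buys us, \ie the image lands in $\bbang E_X$ rather than $\bang E_X$). This injectivity is exactly what guarantees that the event order on a face is nothing but the restriction of the event order on the whole cell.
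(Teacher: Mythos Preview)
Your argument is correct and is essentially an unpacking of the paper's one-line justification. The paper simply says the lemma follows immediately from Lemma~\ref{le:EventsAreSubsets} (the pomset inclusion $\ell(\delta_i^\nu x)\subseteq \ell(x)$), which in turn rests on Proposition~\ref{prp:LabelsOfCells} and event consistency---exactly the ingredients you invoke directly; you have just made the index bookkeeping explicit.
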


As a consequence, on every connected component
$C\subseteq U_e\cap U_{e'}$ there is a well-defined relation
$\intord_C$ between $e$ and $e'$ (although it may differ between
different components).  We write $\mathord{\intord_p}=
\mathord{\intord_C}$ for any point $p\in C$.

\begin{definition}
  \label{de:label-dpath}
  The \emph{label} of $\alpha$ is the ipomset
  $\ell( \alpha)=( P, <_P, \intord_P, \lambda_P, S_P, T_P)$ given
  as follows:
  \begin{itemize}
  \item $P=\{(e,i)\mid e\in E_X, 1\leq i\leq n^\alpha_e\}$;
  \item $(e,i)<_P (e',i')$ if $I^\alpha_{e,i}<I^\alpha_{e',i'}$;
  \item $\intord_P$ is the transitive closure of the relations
    $\smash[t]{( e, i)\intord_{ \alpha( t)}( e', i')}$ for
    $\smash[b]{t\in I^\alpha_{e,i}\cap I^\alpha_{e',i'}}$ (this does
    not depend on the choice of $t$ since
    $I^\alpha_{e,i}\cap I^\alpha_{e',i'}$ is connected);
  \item $\lambda_P((e,i))=\lambda^{\ev}(e)$,
    $S_P=\{(e,i)\in P\mid 0\in I^\alpha_{e,i}\}$, and
    $T_P=\{(e,i)\in P\mid 1\in I^\alpha_{e,i}\}$.
  \end{itemize}
\end{definition}

Hence all elements of $S_P$ are of the form $(e,1)$ and all elements
of $T_P$ are of the form $(e,n^\alpha_e)$.  Further,
$S_P\cong \carr(\alpha(0))$ and $T_P\cong \carr(\alpha(1))$ as linear
posets.

\begin{proposition}
  \label{pr:lamalpha}
  The label $\ell( \alpha)$ is an interval ipomset.
\end{proposition}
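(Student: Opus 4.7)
The plan is to verify two claims about $\ell(\alpha)$: that it is a valid ipomset, and that its precedence order $<_P$ is an interval order. The interval property admits a short direct argument via the $\twotwo$-free characterization of Lemma~\ref{le:intord}; the remaining ipomset axioms are mostly routine, but irreflexivity of $\intord_P$ will be the main obstacle.

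For the interval property, I assume four elements with $(e_1,i_1) <_P (e_2,i_2)$ and $(e_3,i_3) <_P (e_4,i_4)$ are given, and suppose for contradiction that neither $(e_1,i_1) <_P (e_4,i_4)$ nor $(e_3,i_3) <_P (e_2,i_2)$. The two negations provide $s_1 \in I^\alpha_{e_1,i_1}$, $s_4 \in I^\alpha_{e_4,i_4}$ with $s_1 \ge s_4$ and $s_3 \in I^\alpha_{e_3,i_3}$, $s_2 \in I^\alpha_{e_2,i_2}$ with $s_3 \ge s_2$, while the two hypotheses yield $s_1 < s_2$ and $s_3 < s_4$. Chaining these gives $s_1 \ge s_4 > s_3 \ge s_2 > s_1$, a contradiction.

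The remaining ipomset axioms are routine: $<_P$ is a strict partial order because strict precedence on intervals is; the source and target conditions hold since no interval in $[0,1]$ can strictly precede or succeed one containing $0$ or $1$; and totality of $\mathord{<_P} \cup \mathord{\intord_P}$ follows from a dichotomy for distinct $(e,i), (e',i')$, where either the intervals $I^\alpha_{e,i}, I^\alpha_{e',i'}$ are disjoint (giving $<_P$-comparability) or their intersection is a non-empty connected subinterval of $[0,1]$, on which $\intord_{\alpha(t)}$ is well-defined and yields a local relation $R$ contributing to $\intord_P$.

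For the main obstacle, irreflexivity of $\intord_P$, I plan to show that $R$ admits no directed cycles, whence its transitive closure is irreflexive. Consider a minimal $R$-cycle $(e_1,i_1) R \cdots R (e_k,i_k) R (e_1,i_1)$ of length $k \ge 2$. The case $k = 2$ contradicts antisymmetry of the local event order on cells. For $k = 3$, Helly's theorem in $\Real$ yields a common time $t$ in all three intervals; at $\alpha(t)$ the three events lie in one cell, whose event order is linear and hence cannot accommodate a 3-cycle. For $k \ge 4$, the intervals $\{I^\alpha_{e_j,i_j}\}_{j=1}^{k}$ induce a subgraph of an interval graph, which is chordal, so the cycle has a chord---some non-consecutive pair $(e_j,i_j),(e_{j'},i_{j'})$ with overlapping intervals. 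Whichever direction of $R$ holds across this chord can be spliced with the appropriate arc of the cycle to produce a strictly shorter $R$-cycle, contradicting minimality.
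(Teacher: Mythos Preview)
Your proof is correct and, for the crux (irreflexivity of $\intord_P$), takes a genuinely different and arguably cleaner route than the paper.  Both arguments start from a shortest cycle in the generating relation; the paper then records that in such a cycle non-consecutive intervals are disjoint and the total intersection is empty, and from there runs a hands-on ``marching'' argument (each $H_{k+1}$ lies strictly to one side of $H_k$, so the sequence of intervals progresses monotonically and cannot close up).  You instead dispatch the small cases directly ($k=2$ by antisymmetry of $\intord_{\alpha(t)}$, $k=3$ by Helly in $\Real$ together with linearity of the event order on a single cell) and eliminate $k\ge 4$ in one stroke by invoking chordality of interval graphs: any such cycle must have a chord, and either orientation of the chord splices to a shorter $R$-cycle.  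Your approach trades the paper's self-contained interval combinatorics for two standard structural facts (Helly and chordality), yielding a shorter argument; the paper's version has the virtue of not importing outside results.  One minor remark on the interval-order part: the paper simply observes that the family $\{I^\alpha_{e,i}\}$ is already an interval representation in the sense of Lemma~\ref{le:intord}\eqref{en:intord.maxanti}'s neighbour, so $<_P$ is interval by definition; your direct $\twotwo$-free verification is of course equally valid.
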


\begin{proof}
  By definition, $<_P$ is an interval order, $S_P$ contains only
  $<_P$-minimal elements, and $T_P$ contains only $<_P$-maximal
  elements.  Assume that $(e,i)$ and $(e',i')$ are $<_P$-incomparable,
  then $I^\alpha_{e,i}\cap I^\alpha_{e',i'}\ne \emptyset$.  Let
  $t\in I^\alpha_{e,i}\cap I^\alpha_{e',i'}$, then
  $( e, i)\intord_{ \alpha( t)}( e', i')$ or
  $( e', i')\intord_{ \alpha( t)}( e, i)$, hence $(e,i)$ and $(e',i')$
  are $\intord_P$-comparable.

  It remains to show that $\intord_P$ is irreflexive. So let
  \begin{equation*}
    (e_1,i_1)\intord_{ \alpha(t_1)} \dotsm \intord_{ \alpha( t_{ r-
        1})} (e_r,i_r)\intord_{ \alpha( t_r)} (e_1,i_1)
  \end{equation*}
  be a shortest loop of elementary relations and denote
  $H_k=I^\alpha_{e_k,i_k}= \oil a_k, b_k\oir$ (or $[ a_k, b_k\oir$,
  $\oil a_k, b_k]$, or $[ a_k, b_k]$, in case $a_k= 0$ or $b_k= 1$;
  this will not matter for our argument below).

  We have $H_k\cap H_{k+1}\neq\emptyset$ for $k\in [r-1]$, and also
  $H_r\cap H_1\neq \emptyset$.  On the other hand,
  $H_k\cap H_l=\emptyset$ for $k<l-1$ and $(k,l)\neq (1,r)$; otherwise
  we can construct a shorter loop.  Further,
  $H_1\cap\dots\cap H_r=\emptyset$; otherwise, these elements would be
  linearly ordered by $\intord_{ \alpha( t)}$ for some
  $t\in \bigcap H_k$.

  We show that for every $k$, $H_{ k+ 1}$ is either to the right or to
  the left of $H_k$.  Let $k\in[ r- 2]$ and assume
  $H_{ k+ 1}\subseteq H_k$.  Then
  $H_k\cap H_{ k+ 2}\supseteq H_{ k+ 1}\cap H_{ k+ 2}\ne \emptyset$,
  forcing $k= 1$ and $r= k+ 2= 3$; but now also
  $H_1\cap H_2\cap H_3\ne \emptyset$, a contradiction.  A similar
  contradiction is obtained when assuming $H_k\subseteq H_{ k+ 1}$,
  and also for $H_r\subseteq H_{ r- 1}$, $H_{ r- 1}\subseteq H_r$,
  $H_1\subseteq H_r$, and $H_r\subseteq H_1$.
  
  \begin{figure}
    \centering
    \begin{tikzpicture}[-, y=.5cm]
      \path (0,0) edge node[below] {$H_1$} (3,0);
      \path (2.5,-1) edge node[below] {$H_2$} (5.5,-1);
      \path (5,-2) edge node[below] {$H_3$} (8,-2);
      \node[left] at (0,0) {$\vphantom{b}a_1$};
      \node[right] at (3,0) {$b_1$};
      \node[left] at (2.5,-1) {$\vphantom{b}a_2$};
      \node[right] at (5.5,-1) {$b_2$};
      \node[left] at (5,-2) {$\vphantom{b}a_3$};
      \node[right] at (8,-2) {$b_3$};
      \node at (7.3,-2.7) {$\ddots$};
    \end{tikzpicture}
    \caption{Progression of intervals in the proof of Proposition
      \ref{pr:lamalpha}.}
    \label{fi:interval-cycle}
  \end{figure}
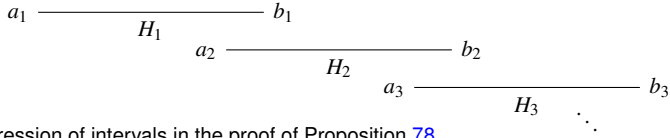

  Now assume that $H_2$ is to the \emph{right} of $H_1$ (the argument
  for the other case is similar), then $a_1< a_2< b_1< b_2$, see
  Figure \ref{fi:interval-cycle} for an illustration.

  We proceed by induction.  Let $k\in[ r- 2]$ and assume $H_{ k+ 1}$
  is to the right of $H_k$, then $a_k< a_{ k+ 1}< b_k< b_{ k+ 1}$.  We
  show that also $H_{ k+ 2}$ is to the right of $H_{ k+ 1}$.  Assume
  otherwise, then $a_{ k+2}< a_{ k+ 1}< b_{ k+ 2}$, hence
  $a_{ k+ 2}< b_k$ and $a_k< b_{ k+ 2}$, which implies
  $H_k\cap H_{ k+ 2}\ne \emptyset$, again forcing $k= 1$ and
  $r= k+ 2= 3$ and then a contradiction.

  Hence if $H_2$ is to the right of $H_1$, then the sequence of
  intervals $H_1,\dotsc, H_r$ proceeds to the right; but the same
  argument as above then also shows that $H_1$ is to the right of
  $H_r$ which is impossible.  Similarly, if $H_2$ is to the left of
  $H_1$, then the sequence proceeds to the left, and $H_1$ then has
  the impossible task of being to the left of $H_r$.  Overwhelmed by
  contradictions, we are forced to accept that $\intord_P$ is
  irreflexive.
\end{proof}


\subsection{Properties of d-path labels}

The main goal of this section is to prove that for every d-path
$\alpha$ in $\georel{X}$ there is a track $\rho$ in $X$ with the same
labeling and vice versa.  First, we show several properties of labels
of d-paths.

\begin{lemma}\label{le:PathLabelRepar}
  Let $\alpha\in \po P\georel{X}$ and $h: \po I\to \po I$ a
  (surjective) reparametrization.  Then
  $\ell(\alpha)\cong \ell(\alpha\circ h)$.
\end{lemma}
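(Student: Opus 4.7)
The plan is to exhibit an explicit bijection $f:\ell(\alpha)\to \ell(\alpha\circ h)$ acting as the identity on pairs $(e,i)$ and then check that it respects each piece of the ipomset structure. Everything should follow from the basic observation that $J_e^{\alpha\circ h}=(\alpha\circ h)^{-1}(U_e)=h^{-1}(J_e^\alpha)$, combined with the fact that a surjective continuous weakly increasing map $h:\po I\to\po I$ behaves nicely on intervals.

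First I would establish the key lemma that $h^{-1}$ sends the interval arrangement of $\alpha$ to that of $\alpha\circ h$ component-by-component and order-preservingly. Since $h$ is continuous and surjective, $h(0)=0$ and $h(1)=1$. Since $h$ is weakly increasing and continuous, the preimage of any open connected subset of $\po I$ is again open and connected (the preimage of a point is a closed subinterval, possibly degenerate, so the preimage of an open interval $\oil a,b\oir$ is of the form $\oil a',b'\oir$ for some $a'\le b'$). Applying this to each connected component $I_{e,i}^\alpha$ of $J_e^\alpha$ yields a disjoint decomposition $h^{-1}(J_e^\alpha)=\bigcup_i h^{-1}(I_{e,i}^\alpha)$ whose pieces are open, connected, and listed in increasing order. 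Hence $n_e^{\alpha\circ h}=n_e^\alpha$ and $I_{e,i}^{\alpha\circ h}=h^{-1}(I_{e,i}^\alpha)$.

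Next I would define $f$ as the identity on pairs $(e,i)$ and verify the three clauses of ipomset isomorphism. For precedence, $(e,i)<(e',i')$ in $\ell(\alpha)$ means $I_{e,i}^\alpha$ lies strictly to the left of $I_{e',i'}^\alpha$; because $h$ is weakly increasing, this property transfers to their preimages, giving the same relation in $\ell(\alpha\circ h)$ (and conversely, since $h$ is surjective). For the essential event order on $<$-antichains, suppose $(e,i)\incomp(e',i')$, so $I_{e,i}^\alpha\cap I_{e',i'}^\alpha\ne\emptyset$. The event order between them in $\ell(\alpha)$ is determined by $\intord_{\alpha(t)}$ for any $t$ in this intersection, which depends only on $\carr(\alpha(t))$. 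Pick such a $t$, and pick any $s\in h^{-1}(t)$; then $(\alpha\circ h)(s)=\alpha(t)$, so $\carr((\alpha\circ h)(s))=\carr(\alpha(t))$, and $s$ lies in $I_{e,i}^{\alpha\circ h}\cap I_{e',i'}^{\alpha\circ h}$, showing the two local event orders agree. Labels are preserved because $f$ is the identity on the event-index label. For interfaces, since $h(0)=0$ we have $0\in I_{e,i}^{\alpha\circ h}=h^{-1}(I_{e,i}^\alpha)$ iff $0\in I_{e,i}^\alpha$, and similarly for $1$, so $f(S_{\ell(\alpha)})=S_{\ell(\alpha\circ h)}$ and analogously for target interfaces.

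The only even mildly delicate point is the opening claim that $h^{-1}$ preserves connected components of open sets in $\po I$; this is a routine consequence of monotonicity and continuity (preimages of intervals under monotone continuous maps are intervals). Once this is in hand, everything else is direct from the definition of $\ell$, and the isomorphism $f$ is in fact unique by Lemma~\ref{le:CanonicalIso}.
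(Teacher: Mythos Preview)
Your proposal is correct and takes the same approach as the paper: you show $J_e^{\alpha\circ h}=h^{-1}(J_e^\alpha)$ and that $h^{-1}$ carries the interval arrangement of $\alpha$ to that of $\alpha\circ h$ component-by-component and in order. The paper's proof stops there with ``the result follows from the definition of d-path label'', whereas you go on to spell out explicitly why the identity map on pairs $(e,i)$ respects precedence, essential event order, labels, and interfaces; this is exactly the unpacking the paper leaves implicit.
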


\begin{proof}
  If
  $J^\alpha_e=I_{e,1}^\alpha\cup \dots\cup I^\alpha_{e,n_e^\alpha}$,
  then
  \begin{equation*}
    J^{\alpha\circ h}_e=h^{-1}(J^\alpha_e)= h^{-1}(
    I_{e,1}^\alpha)\cup \dots\cup h^{-1}(I^\alpha_{e,n_e^\alpha})
  \end{equation*}
  is a presentation as a union of connected components, so that
  $n_e^{\alpha\circ h}=n_e^\alpha$ and
  $I^{\alpha\circ h}_{e,i}=h^{-1}(I^\alpha_{e,i})$.  The result
  follows from the definition of d-path label.
\end{proof}

\begin{lemma}
  \label{le:FunctorialUe}
  Let $f:X\to Y$ be a map of labeled precubical sets and $e\in E_Y$. Then
  \begin{equation*}
    \georel{f}^{-1}(U^Y_e)= \bigsqcup_{\smash[b]{e'\in E_f^{-1}(e)}} U^X_{e'}
  \end{equation*}
  as a disjoint union.
\end{lemma}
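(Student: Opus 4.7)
The plan is to unravel both sides using Lemmas \ref{le:UeX} and \ref{le:fcarrp}, exploit the functoriality of the universal-event assignment, and use event consistency of $Y$ to obtain disjointness. First I would recall, from the end of Section \ref{se:labelevent}, that the precubical map $f$ induces a function $E_f:E_X\to E_Y$ making the square
\begin{equation*}
  \xymatrix{
    X \ar[r]^-{\ev_X} \ar[d]_f & \bbang E_X \ar[d]^{\bbang E_f} \\
    Y \ar[r]_-{\ev_Y} & \bbang E_Y
  }
\end{equation*}
commute; concretely, if $x\in X_n$ and $\ev_X(x)=(e'_1\intord\dotsm\intord e'_n)$, then $\ev_Y(f(x))=(E_f(e'_1)\intord\dotsm\intord E_f(e'_n))$, so that $e\in \ev_Y(f(x))$ iff $e=E_f(e')$ for some $e'\in \ev_X(x)$.

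For the set-theoretic equality, I would chain equivalences: $p\in \georel{f}^{-1}(U_e^Y)$ iff $\georel{f}(p)\in U_e^Y$, iff $e\in \ev_Y(\carr(\georel{f}(p)))$ by Lemma \ref{le:UeX}, iff $e\in \ev_Y(f(\carr(p)))$ by Lemma \ref{le:fcarrp}, iff there exists $e'\in \ev_X(\carr(p))$ with $E_f(e')=e$ by the previous paragraph, iff $p\in U_{e'}^X$ for some $e'\in E_f^{-1}(e)$, again by Lemma \ref{le:UeX}. This establishes the union $\georel{f}^{-1}(U_e^Y)=\bigcup_{e'\in E_f^{-1}(e)} U_{e'}^X$.

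The disjointness is where event consistency of $Y$ enters essentially. Suppose $p\in U_{e'_1}^X\cap U_{e'_2}^X$ with $e'_1,e'_2\in E_f^{-1}(e)$. By Lemma \ref{le:UeX}, both $e'_1$ and $e'_2$ belong to $\ev_X(\carr(p))$, so they are distinct components of the tuple $\ev_X(\carr(p))\in \bbang E_X$. Applying $\bbang E_f$, both $E_f(e'_1)=e$ and $E_f(e'_2)=e$ occur as components of $\ev_Y(f(\carr(p)))\in \bbang E_Y$. But by definition of $\bbang$ the components of any tuple in $\bbang E_Y$ are pairwise distinct, forcing $e'_1=e'_2$.

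No serious obstacle is anticipated: the inclusion argument is purely a chain of definitions, and the only conceptually substantive step is recognising that disjointness of the union is exactly the content of event consistency of $Y$, applied to the cell $f(\carr(p))$.
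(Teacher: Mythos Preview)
Your proposal is correct and follows essentially the same approach as the paper's own proof: the same chain of equivalences via Lemmas \ref{le:UeX} and \ref{le:fcarrp}, with the step you justify by the commuting square $\ev_Y\circ f=\bbang E_f\circ\ev_X$ being exactly what the paper records as Lemma \ref{le:CellParmsetFun}. The disjointness argument is likewise the same, phrased in terms of $\bbang E_Y$ having pairwise distinct components where the paper invokes the isomorphism $\ell(f(x))\cong\ell(x)$ from Lemma \ref{le:CellParmsetFun}; only your phrasing ``so they are distinct components'' tacitly assumes $e'_1\neq e'_2$, which you should make explicit before concluding by contradiction.
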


\begin{proof}
  For $p\in \georel{X}$ we have
  \begin{alignat*}{2}
    p\in \georel{f}^{-1}(U^Y_e) &\IFF \georel{f}(p)\in U^Y_e \\
    &\IFF e\in \ev(\carr(\georel{f}(p))) &\qquad&\eqref{le:UeX} \\
    &\IFF e\in \ev(f(\carr(p))) &&\eqref{le:fcarrp} \\
    &\IFF e\in E_f(\ev(\carr(p))) &&\eqref{le:CellParmsetFun} \\
    &\IFF \exists e'\in E_f^{-1}(e): e'\in\ev(\carr(p)) \\
    &\IFF \exists e'\in E_f^{-1}(e): p\in U^X_{e'}.  &&\eqref{le:UeX}
  \end{alignat*}
  If $p\in U^X_{e'}\cap U^X_{e''}$ for $e'\neq e''\in E_X$, then
  $e', e''\in \ev(\carr(p))$. By Lemma \ref{le:CellParmsetFun},
  $E_f(e'),E_f(e'')\in \ev(\carr( \georel{f}(p)))$, so $E_f(e')\neq
  E_f(e'')$. Consequently, $e'$ and $e''$ cannot both belong to
  $E_f^{-1}(e)$.
\end{proof}

\begin{lemma}
  \label{le:PathLabelFun}
  For any map of labeled precubical sets $f:X\to Y$ and
  $\alpha\in \po P\georel{X}$,
  $\ell(\alpha)\cong \ell(\georel{f}\circ \alpha)$.
\end{lemma}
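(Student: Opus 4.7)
The plan is to build an explicit ipomset isomorphism $\phi: \ell(\alpha) \to \ell(\beta)$ where $\beta = \georel{f}\circ \alpha$, obtained by identifying connected components of the relevant interval arrangements via Lemma \ref{le:FunctorialUe}. Write $P_\alpha = \ell(\alpha)$ and $P_\beta = \ell(\beta)$.

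First, for every $e\in E_Y$, I would apply Lemma \ref{le:FunctorialUe} and pull back under $\alpha$ to obtain the disjoint decomposition
\begin{equation*}
  J_e^\beta = \alpha^{-1}\!\big(\georel{f}^{-1}(U_e^Y)\big) = \bigsqcup_{e'\in E_f^{-1}(e)} \alpha^{-1}(U_{e'}^X) = \bigsqcup_{e'\in E_f^{-1}(e)} J_{e'}^\alpha.
\end{equation*}
Because each $J_{e'}^\alpha$ is open in $[0,1]$ and the union is disjoint, the connected components of $J_e^\beta$ are exactly the intervals $I_{e',i}^\alpha$ for $e'\in E_f^{-1}(e)$ and $i\in [n_{e'}^\alpha]$. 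Ordering these intervals left to right gives, for each $e\in E_Y$, a bijection from $\{(e',i): e'\in E_f^{-1}(e),\ i\le n_{e'}^\alpha\}$ to $\{(e,j): j\le n_e^\beta\}$; I define $\phi(e',i)=(E_f(e'),j)$ where $I_{e,j}^\beta = I_{e',i}^\alpha$. This $\phi$ is visibly a bijection $P_\alpha \to P_\beta$.

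Next I would verify the four clauses of ipomset isomorphism. Preservation of precedence is immediate: $(e_1',i_1) <_{P_\alpha} (e_2',i_2)$ means $I_{e_1',i_1}^\alpha$ lies entirely to the left of $I_{e_2',i_2}^\alpha$, and by construction these are the very same intervals $I_{E_f(e_1'),j_1}^\beta$ and $I_{E_f(e_2'),j_2}^\beta$, so the precedence relations coincide. For labels and interfaces, I would use that $\lambda^\ev_Y\circ E_f = \lambda^\ev_X$ (part of Proposition \ref{prp:EventsAreUniversalLabeling} applied to the map $f$) together with the fact that membership of $0$ or $1$ in $I_{e',i}^\alpha$ corresponds to the same membership in the matching component $I_{E_f(e'),j}^\beta$.

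The one remaining point — essentially the only place where any real argument is needed — is preservation of the event order on $<_P$-antichains. If $(e_1',i_1)\incomp_{P_\alpha} (e_2',i_2)$, pick $t\in I_{e_1',i_1}^\alpha \cap I_{e_2',i_2}^\alpha$, set $x=\carr(\alpha(t))$, and note by Lemma \ref{le:fcarrp} that $\carr(\beta(t)) = f(x)$. Since $f$ is a precubical map, $\ev_i(f(x))=E_f(\ev_i(x))$ for all $i\le \dim(x)$, and event consistency of $Y$ keeps these distinct; hence $e_1' \intord_x e_2'$ iff $E_f(e_1')\intord_{f(x)} E_f(e_2')$. This shows $\phi$ respects the essential event relations, and because $\intord_{P_\alpha}$ and $\intord_{P_\beta}$ are transitive closures of these essential relations, $\phi$ respects the full event orders as well. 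The main obstacle is thus purely bookkeeping: tracking that $E_f$ transports the local event orders at carriers correctly along the d-path, which is exactly what the formula $\ev_i(f(x))=E_f(\ev_i(x))$ guarantees.
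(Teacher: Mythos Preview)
Your proposal is correct and follows exactly the paper's approach: invoke Lemma~\ref{le:FunctorialUe} to identify the connected components of $J^{\georel{f}\circ\alpha}_e$ with those of the disjoint union $\bigsqcup_{e'\in E_f^{-1}(e)} J^\alpha_{e'}$, obtain the induced bijection between the underlying sets of $\ell(\alpha)$ and $\ell(\georel{f}\circ\alpha)$, and then verify that this bijection is an ipomset isomorphism. The paper leaves the verification at ``it is easy to check,'' whereas you have carried it out in detail; in particular your treatment of the event order via Lemma~\ref{le:fcarrp} and the identity $\ev_i(f(x))=E_f(\ev_i(x))$ (which is Lemma~\ref{le:CellParmsetFun}) is precisely the intended argument.
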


\begin{proof}
  By Lemma \ref{le:FunctorialUe} there is a bijection between
  connected components of $J^{\georel{f}\circ \alpha}_{e}$ and
  $\bigcup_{e'\in E_f^{-1}(e)} J^\alpha_{e'}$ for every $e\in
  E_Y$. These induce a bijection between the ipomsets
  $\ell(\georel{f}\circ \alpha)$ and $\ell(\alpha)$.  It is easy to
  check that this is an ipomset isomorphism.
\end{proof}

\begin{proposition}\label{le:PathLabelComp}
  Let $\alpha,\beta\in \po P\georel{X}$ be such that
  $\alpha( 1)= \beta( 0)$. Then
  $\ell(\alpha*\beta)=\ell(\alpha)*\ell(\beta)$.
\end{proposition}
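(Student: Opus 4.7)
The plan is to compare the interval arrangements of $\alpha*\beta$, $\alpha$, and $\beta$ directly. By Lemma~\ref{le:PathLabelRepar} labels of d-paths are invariant under reparametrization, so I may work with the standard piecewise formula for $\alpha*\beta$ with gluing point $\tfrac12$. Writing $\phi_\alpha(t)=t/2$ and $\phi_\beta(t)=(t+1)/2$, this yields $J_e^{\alpha*\beta}= \phi_\alpha(J_e^\alpha)\cup \phi_\beta(J_e^\beta)$ for every event $e\in E_X$. A component of $\phi_\alpha(J_e^\alpha)$ touches $\tfrac12$ only via $I^\alpha_{e,n_e^\alpha}$, and does so precisely when $1\in I^\alpha_{e,n_e^\alpha}$, i.e., $(e,n_e^\alpha)\in T_{\ell(\alpha)}$; symmetrically on the $\beta$-side, this happens iff $(e,1)\in S_{\ell(\beta)}$. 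By Lemma~\ref{le:UeX} both conditions are equivalent to $e\in \ev(\carr(\alpha(1)))$, hence they hold simultaneously, and in that case the two boundary components merge into a single component of $J_e^{\alpha*\beta}$, while all other components stay separate.

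This merging pattern exactly matches the identification $T_{\ell(\alpha)}\cong S_{\ell(\beta)}$ performed by gluing composition. I would therefore define a bijection $\phi:\ell(\alpha)*\ell(\beta)\to \ell(\alpha*\beta)$ sending $(e,i)\in \ell(\alpha)$ to $(e,i)$ and $(e,j)\in \ell(\beta)\setminus S_{\ell(\beta)}$ to $(e,n_e^\alpha+j-c_e)$, where $c_e=1$ if $e\in \ev(\carr(\alpha(1)))$ and $0$ otherwise, and check that $\phi$ preserves the remaining ipomset structure. Labels are immediate, since both sides assign $\lambda^\ev(e)$ to every element with event component $e$. For interfaces, $(e,i)\in S_{\ell(\alpha*\beta)}$ iff $0\in I^{\alpha*\beta}_{e,i}$, which happens iff the corresponding component comes from $S_{\ell(\alpha)}$; the target interface is handled symmetrically. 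For precedence, a case split on which half of $[0,1]$ each of the two intervals lies in reproduces exactly the gluing formula $\mathord{<_{\ell(\alpha)}}\cup\mathord{<_{\ell(\beta)}}\cup(\ell(\alpha)\setminus T_{\ell(\alpha)})\times(\ell(\beta)\setminus S_{\ell(\beta)})$, because non-$T$-interface elements of $\ell(\alpha)$ have intervals ending strictly before $\tfrac12$ and non-$S$-interface elements of $\ell(\beta)$ have intervals starting strictly after $\tfrac12$.

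The main obstacle will be the event order. On $\ell(\alpha*\beta)$ it is generated by the pointwise relations $\intord_{(\alpha*\beta)(t)}$ for $t\in[0,1]$, while on $\ell(\alpha)*\ell(\beta)$ it is the transitive closure of $\intord_{\ell(\alpha)}\cup\intord_{\ell(\beta)}$. To show the two closures agree under $\phi$, I would observe that a generator with witness $t\neq \tfrac12$ pulls back via $\phi_\alpha^{-1}$ or $\phi_\beta^{-1}$ to a generator of $\intord_{\ell(\alpha)}$ or $\intord_{\ell(\beta)}$, and conversely every generator of the two summands is realized by such a witness. The delicate case is $t=\tfrac12$, where $(\alpha*\beta)(\tfrac12)=\alpha(1)=\beta(0)$ equals a single cell $y$ and the relation $\intord_y$ on $\ev(y)$ coincides with the restrictions of $\intord_{\ell(\alpha)}$ and of $\intord_{\ell(\beta)}$ to the identified interface $T_{\ell(\alpha)}\cong S_{\ell(\beta)}\cong \ell(y)$; hence no new relations are introduced at the junction. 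Consequently the generators, and thus their transitive closures, match, so $\phi$ is the required ipomset isomorphism $\ell(\alpha)*\ell(\beta)\cong \ell(\alpha*\beta)$.
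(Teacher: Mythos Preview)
Your proposal is correct and follows essentially the same approach as the paper: both compute the interval arrangement of $\alpha*\beta$ via the affine maps $t\mapsto t/2$ and $t\mapsto (t+1)/2$, split on whether $e\in\ev(\carr(\alpha(1)))$ to see which boundary components merge, and then write down the resulting bijection $\ell(\alpha)*\ell(\beta)\to\ell(\alpha*\beta)$. The paper stops at ``it is elementary to check that $i$ is an ipomset isomorphism,'' whereas you spell out the verification of labels, interfaces, precedence, and event order; your treatment of the event-order generators at the junction $t=\tfrac12$ is a nice explicit touch that the paper omits.
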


\begin{proof}
  Let $p=\alpha(1)=\beta(0)$ and $l,r:[0,1]\to [0,1]$,
  $l(t)=\tfrac{t}{2}$, $r(t)=\tfrac{t+1}{2}$.  Then, for each $e\in
  E_X$,
  \begin{equation*}
    J^{\alpha*\beta}_e = \tfrac{1}{2}J^\alpha_e \cup
    (\tfrac{1}{2}(J^\beta_e+\tfrac{1}{2})=l(J^\alpha_e)\cup
    r(J^\beta_e).
  \end{equation*}
  If $e\not\in \carr(p)$, then $1\not\in J^\alpha_e$ and
  $0\not\in J^\beta_e$. Thus $l(J^\alpha_e)$ and
  $r(J^\beta_e)$ are disjoint,
  $n_e^{\alpha*\beta}=n_e^\alpha+n_e^\beta$ and
  \begin{equation*}
    I^{\alpha*\beta}_{e,i}=
    \begin{cases}
      l(I^\alpha_{e,i}) & \text{for $1\leq i\leq n^\alpha_e$,}\\
      r(I^\beta_{e,i-n^\alpha_e}) & \text{for
        $n^\alpha_e<i\leq n^{\alpha*\beta}_e$.}
    \end{cases}
  \end{equation*}
  If $e\in \carr(p)$, then $1\in J^\alpha_e$ and $0\in
  J^\beta_e$.
  Therefore $l(J^\alpha_e)$ and $r(J^\beta_e)$ are glued along
  $\smash{\frac{1}{2}}$ and consequently
  $n_e^{\alpha*\beta}=n_e^\alpha+n_e^\beta-1$ and
  \begin{equation*}
    I^{\alpha*\beta}_{e,i}=
    \begin{cases}
      l(I^\alpha_{e,i}) & \text{for $1\leq i< n^\alpha_e$},\\
      l(I^\alpha_{e,n^\alpha_e})\cup r(I^\beta_{e,1}) & \text{for $i=n^\alpha_e$},\\
      r(I^\beta_{e,i-n^\alpha_e}) & \text{for
        $n^\alpha_e<i< n^{\alpha*\beta}_e$}.
    \end{cases}
  \end{equation*}

  It follows that the maps
  $i^\alpha_e:\ell(\alpha)\ni (e,i)\mapsto (e,i)\in
  \ell(\alpha*\beta)$ and
  \[
    i^\beta_e:\ell(\beta)\ni (e,i) \mapsto
    \begin{cases}
      (e,i+n^\alpha_e) & \text{if $e\not\in \carr(p)$}\\
      (e,i+n^\alpha_e-1) & \text{if $e\in \carr(p)$}
    \end{cases}
  \]
  glue to the bijection
  $i:\ell(\alpha)*\ell(\beta)\to \ell(\alpha*\beta)$. It is elementary
  to check that $i$ is an ipomset isomorphism.
\end{proof}

We record the following easy fact for use in the next proof.

\begin{lemma}
  \label{le:PathLabelCube}
  Let $S$ be a linear pomset and $\alpha\in\po P\georel{\square^S}$ a path
  such that $\carr(\alpha(t))=\yoneda_S$ for some $t\in[0,1]$.  Let
  $x= \carr( \alpha( 0))$ and $y= \carr( \alpha( 1))$, then
  $\ell(\alpha)=\subid{\ell(x)}{S}{\ell(y)}$. \qed
\end{lemma}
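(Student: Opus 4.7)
The plan is to identify $\georel{\square^S}$ with the directed cube $\po I^S = [0,1]^S$ (under coordinatewise order), so that a d-path $\alpha: \po I \to \georel{\square^S}$ amounts to a tuple of coordinate functions $\alpha_s(\tau) := \alpha(\tau)(s)$ for $s\in S$, each weakly increasing by directedness. Under the identification of the universal events of $\square^S$ with $S$ itself (Example \ref{ex:scube-evcont}), Lemma \ref{le:UeX} gives $U_s = \{p\in[0,1]^S \mid 0 < p(s) < 1\}$, so $J^\alpha_s = \alpha_s^{-1}(\oil 0,1\oir)$.

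First I would show that each $J^\alpha_s$ has exactly one connected component. Monotonicity of $\alpha_s$ forces $J^\alpha_s$ to be an interval, hence $n^\alpha_s \le 1$; the hypothesis $\carr(\alpha(t)) = \yoneda_S$ means $0 < \alpha_s(t) < 1$ for every $s\in S$, so $t$ lies in every $J^\alpha_s$ and $n^\alpha_s = 1$. Consequently, the underlying set of $\ell(\alpha)$ is $\{(s,1)\mid s\in S\}$, which I would identify with $S$ via $(s,1)\mapsto s$.

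Under this identification I would then read off the remaining data of $\ell(\alpha)$. All intervals $I^\alpha_{s,1}$ share the point $t$, so none precedes another and $<_P = \emptyset$; hence $\ell(\alpha)$ is discrete. Evaluating the event-order clause at $\tau = t$, the relation $\intord_{\alpha(t)}$ is precisely $\intord_{\yoneda_S}$, which by Definition \ref{de:pobj-S} coincides with $\intord_S$; and $\lambda_P$ matches the labeling of $S$ by construction. Finally, $(s,1)\in S_P$ iff $0\in I^\alpha_{s,1}$ iff $0 < \alpha_s(0) < 1$ iff $s\in \ev(\carr(\alpha(0))) = \ell(x)$, and symmetrically $T_P$ corresponds to $\ell(y)$. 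Assembling these pieces yields $\ell(\alpha) = \subid{\ell(x)}{S}{\ell(y)}$.

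The only step that requires any care is the identification of d-paths in $\georel{\square^S}$ with coordinatewise-monotone paths in $[0,1]^S$, which is what ultimately forces each $J^\alpha_s$ to be a single interval; once this is in place, everything else is a routine unwinding of Definition \ref{de:label-dpath}.
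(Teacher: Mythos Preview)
Your proof is correct and is exactly the unwinding of Definition \ref{de:label-dpath} that the paper leaves implicit (the lemma is stated with a bare \qedsymbol\ as an ``easy fact''). The key observation---that monotonicity of each coordinate $\alpha_s$ forces $J^\alpha_s$ to be a single interval containing the witness $t$---is precisely what makes the label discrete with underlying set $S$, and the rest follows as you say.
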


\begin{proposition}
  \label{prp:LabelCoherence}
  For every d-path $\alpha\in\po P\georel{X}$ there exists a track
  $\rho:\carr(\alpha(0))\leadsto \carr(\alpha(1))$ in $X$ such that
  $\ell(\alpha)\cong \ell(\rho)$.
\end{proposition}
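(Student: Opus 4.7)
The plan is to decompose $\alpha$ using Lemma \ref{le:DipathPresentation} and convert each segment into a three-cell piece of a track. Start by applying Lemma \ref{le:DipathPresentation} to obtain a presentation
\begin{equation*}
  \alpha=\bigl((\georel{\ineda_{x_1}}\circ\beta_1)*\dotsm*(\georel{\ineda_{x_m}}\circ\beta_m)\bigr)\circ h,
\end{equation*}
with $\carr(\beta_i(\tfrac{1}{2}))=\yoneda_{\dim x_i}$ and $[x_i,\beta_i(1)]=[x_{i+1},\beta_{i+1}(0)]$.  By Lemma \ref{le:PathLabelRepar} we may discard $h$ and work with the concatenation itself.

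For each $i$, set $y_i=\ineda_{x_i}(\carr(\beta_i(0)))$ and $z_i=\ineda_{x_i}(\carr(\beta_i(1)))$.  Because $\beta_i$ passes through the interior of $\square^{\dim x_i}$ at $t=\tfrac{1}{2}$ and is coordinatewise non-decreasing, the coordinates of $\beta_i(0)$ which are not in $(0,1)$ must equal $0$, so $\carr(\beta_i(0))$ is a lower face of $\yoneda_{\dim x_i}$; symmetrically, $\carr(\beta_i(1))$ is an upper face.  Since $\ineda_{x_i}$ is a precubical map, this yields $y_i\face^* x_i$ and $x_i\ecaf^* z_i$ in $X$.  Moreover, applying Lemma \ref{le:fcarrp} to $[x_i,\beta_i(1)]=[x_{i+1},\beta_{i+1}(0)]$ gives $z_i=y_{i+1}$; the same lemma shows $\carr(\alpha(0))=y_1$ and $\carr(\alpha(1))=z_m$.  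Consequently
\begin{equation*}
  \rho=(y_1,x_1,z_1,y_2,x_2,z_2,\dotsc,y_m,x_m,z_m)
\end{equation*}
is a well-defined track from $\carr(\alpha(0))$ to $\carr(\alpha(1))$ (allowing repeated cells in degenerate cases, as permitted).

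To match labels, apply Proposition \ref{le:PathLabelComp} to get $\ell(\alpha)\cong \ell(\georel{\ineda_{x_1}}\circ\beta_1)*\dotsm*\ell(\georel{\ineda_{x_m}}\circ\beta_m)$.  By Lemma \ref{le:PathLabelFun}, each factor equals $\ell(\beta_i)$, and by Lemma \ref{le:PathLabelCube} this in turn equals $\subid{\ell(\carr(\beta_i(0)))}{[\dim x_i]}{\ell(\carr(\beta_i(1)))}$; transporting along $\ineda_{x_i}$ using Lemma \ref{le:CellParmsetFun} identifies this with the discrete ipomset $\subid{\ell(y_i)}{\ell(x_i)}{\ell(z_i)}$.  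On the track side, Lemma \ref{le:UpDownTrackLabel} gives $\ell((y_i,x_i,z_i))=\subid{\ell(y_i)}{\ell(x_i)}{\ell(z_i)}$, and the recursive clause of Definition \ref{de:label-track} (together with $z_i=y_{i+1}$ and Lemma \ref{le:SubIdComposition}) assembles these factor-by-factor into $\ell(\rho)\cong \ell(\alpha)$.

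The routine parts are Lemmas \ref{le:PathLabelRepar}, \ref{le:PathLabelFun}, \ref{le:PathLabelComp}, \ref{le:PathLabelCube}, already proved earlier.  The only real care is the geometric argument that $\carr(\beta_i(0))$ and $\carr(\beta_i(1))$ are lower, respectively upper, faces of $\yoneda_{\dim x_i}$; this is the step where the directedness of $\beta_i$ combined with $\beta_i(\tfrac{1}{2})$ being interior is used, and it is what guarantees that the resulting sequence is actually a track rather than an arbitrary sequence of face-related cells.
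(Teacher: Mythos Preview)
Your proof is correct and follows essentially the same route as the paper's: both decompose $\alpha$ via Lemma \ref{le:DipathPresentation} and then chain Lemmas \ref{le:PathLabelRepar}, \ref{le:PathLabelFun}, \ref{le:PathLabelCube}, \ref{le:UpDownTrackLabel} and Proposition \ref{le:PathLabelComp} to match $\ell(\alpha)$ with the label of the track $(y_1,x_1,z_1,\dotsc,y_m,x_m,z_m)$. Your version is in fact slightly more careful, since you supply the geometric argument (directedness of $\beta_i$ together with $\beta_i(\tfrac12)$ lying in the interior) that $y_i\face^* x_i\ecaf^* z_i$, which the paper leaves implicit.
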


\begin{proof}
  By Lemma \ref{le:DipathPresentation} there exists a presentation
  \begin{equation*}
    \alpha= \big( ( \georel{\ineda_{x_1}}\circ \beta_1)*(
    \georel{\ineda_{x_2}}\circ \beta_2)*\dots*( \georel{\ineda_{x_m}}\circ
    \beta_m) \big) \circ h
  \end{equation*}
  such that $x_i\in X_{n_i}$,
  $\beta_i\in \po P\georel{\square^{\ell(x_i)}}$ and
  $\ineda_{x_i}:\square^{\ell(x_i)}\to X$ is the unique HDA map sending
  $\yoneda_{\ell(x_i)}$ into $x_i$ (we replace $\square^{n_i}$ with
  $\square^{\ell(x_i)}$ to obtain compatible labelings and make
  $\ineda_{x_i}$ HDA maps). We have
  \begin{alignat*}{2}
    \ell(\alpha) &= \ell((( \georel{\ineda_{x_1}}\circ
    \beta_1)*\dots*(
    \georel{\ineda_{x_m}}\circ \beta_m))\circ h) \\
    &= \ell(( \georel{\ineda_{x_1}}\circ \beta_1)*\dots*(
    \georel{\ineda_{x_m}}\circ \beta_m)) &\qquad
    &\eqref{le:PathLabelRepar} \\
    &= \ell( \georel{\ineda_{x_1}}\circ \beta_1)*\dots* \ell(
    \georel{\ineda_{x_m}}\circ \beta_m) &&\eqref{le:PathLabelComp} \\
    &= \ell(\beta_1)*\dots * \ell(\beta_m)) &&\eqref{le:PathLabelFun}
    \\
    &=
    \subid{\ell(\carr(\beta_1(0)))}{\ell(x_1)}{\ell(\carr(\beta_1(1)))}*\dotsm*
    \subid{\ell(\carr(\beta_m(0)))}{\ell(x_m)}{\ell(\carr(\beta_m(1)))}
    &&\eqref{le:PathLabelCube} \\
    &= \ell(\carr(\beta_1(0)), x_1,
    \carr(\beta_1(1)))*\dotsm*\ell(\carr(\beta_m(0)), x_m,
    \carr(\beta_m(1))) &&\eqref{le:UpDownTrackLabel} \\
    &= \ell(\carr(\alpha(0)), x_1, \carr(\beta_1(1)),\dotsc,
    \carr(\beta_m( 0)), x_m, \carr(\alpha(1))),
  \end{alignat*}
  hence we can set
  $\rho=( \carr(\alpha(0)), x_1, \carr(\beta_1(1)),\dotsc,
  \carr(\beta_m( 0)), x_m, \carr(\alpha(1)))$.
\end{proof}

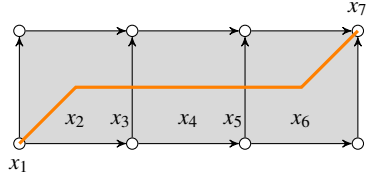
\begin{figure}
  \centering
  \begin{tikzpicture}[x=1.5cm, y=1.5cm]
    \path[fill=black!15] (0,0) to (3,0) to (3,1) to (0,1);
    \foreach \x in {0, 1, 2, 3} \foreach \y in {0, 1} \node[state] (\x\y)
    at (\x,\y) {};
    \foreach \x in {0, 1, 2, 3} \path (\x0) edge (\x1);
    \foreach \y in {0, 1} \path (0\y) edge (1\y);
    \foreach \y in {0, 1} \path (1\y) edge (2\y);
    \foreach \y in {0, 1} \path (2\y) edge (3\y);
    \draw[-, very thick, orange] (0,0) -- (.5,.5) -- (2.5,.5) --
    (3,1);
    \node[below] at (00.south) {$x_1$};
    \foreach \i/\x in {2/.5, 3/.9, 4/1.5, 5/1.9, 6/2.5} \node at
    (\x,.2) {$\vphantom{p}x_\i$};
    \node[above] at (31.north) {$x_7$};
  \end{tikzpicture}
  \medskip
  \caption{Track $\rho=( x_1,\dotsc, x_7)$ together with d-path
    $\alpha$ through center points of $\rho$.}
  \label{fi:cptrack}
\end{figure}

For the converse result, we construct a d-path through the center
points of a given track, see also \cite{Fajstrup05-cubcomp} and Figure
\ref{fi:cptrack} for an example.

\begin{proposition}
  \label{prp:TracksHavePaths}
  For every track $\rho: x\leadsto y$ in $X$ there is a d-path
  $\alpha$ with $\carr( \alpha( 0))= x$, $\carr( \alpha( 1))= y$, and
  $\ell(\alpha)=\ell(\rho)$.
\end{proposition}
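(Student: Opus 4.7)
The plan is to induct on the length of the track, reducing everything to basic tracks via the concatenation identity for d-path labels (Proposition \ref{le:PathLabelComp}).

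For the base case of a unit track $\rho=(x)$, take $\alpha$ to be the constant d-path at the ``barycenter'' $p_x=[x,(\tfrac{1}{2},\dotsc,\tfrac{1}{2})]\in\intimg{x}$. Then $J^\alpha_e=[0,1]$ for every $e\in \ev(x)$ and is empty otherwise, so all intervals $I^\alpha_{e,1}$ overlap, giving empty precedence order and $S_P=T_P=\ev(x)$; the induced event order is $\intord_x$. Hence $\ell(\alpha)=\id_{\ell(x)}=\ell(\rho)$.

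For a basic track $\rho=(x,y)$ with $x\face^* y$ we have $x=\delta^0_A y$ for a unique $A\subseteq[n]$, $n=\dim y$. Work inside $\square^n$ and push forward via $\georel{\ineda_y}$. Define $\beta:\po I\to \po I^n$ coordinatewise by $\beta_i(t)=t/2$ for $i\in A$ and $\beta_i(t)=1/2$ otherwise, and set $\alpha=\georel{\ineda_y}\circ \beta$. Then $\alpha(0)=[y,(\dotsc,0,\dotsc,1/2,\dotsc)]=[x,(1/2,\dotsc,1/2)]\in \intimg{x}$, while $\alpha(t)\in \intimg{y}$ for all $t>0$, so the carriers match. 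A direct calculation of the interval arrangement gives $J^\alpha_e=[0,1]$ for $e\in \ev(x)$ and $J^\alpha_e=\oil 0,1]$ for $e\in \ev(y)\setminus \ev(x)$, each contributing a single interval; all intervals overlap, so $<_P$ is empty, and the event order is $\intord_y$. Thus $\ell(\alpha)=\subid{\ell(x)}{\ell(y)}{\ell(y)}=\ell(\rho)$. The dual case $\rho=(y,x)$ with $y\ecaf^* x$ is analogous, using $\beta_i(t)=(t+1)/2$ for $i\in B$ where $x=\delta^1_B y$, and yields $\ell(\alpha)=\subid{\ell(y)}{\ell(y)}{\ell(x)}$.

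For a general track $\rho=(x_1,\dotsc,x_m)$, decompose into basic tracks $\tau_i=(x_i,x_{i+1})$ and apply the base construction to each to obtain d-paths $\alpha_i$ with $\carr(\alpha_i(0))=x_i$, $\carr(\alpha_i(1))=x_{i+1}$, and $\ell(\alpha_i)=\ell(\tau_i)$. By construction each $\alpha_i$ starts and ends at the barycenter of its endpoint cell (or at the image of such under a face inclusion), so $\alpha_i(1)=\alpha_{i+1}(0)$ and the concatenation $\alpha=\alpha_1*\dotsm*\alpha_{m-1}$ is well-defined. Iterated application of Proposition \ref{le:PathLabelComp} then yields
\begin{equation*}
  \ell(\alpha)=\ell(\alpha_1)*\dotsm*\ell(\alpha_{m-1})=\ell(\tau_1)*\dotsm*\ell(\tau_{m-1})=\ell(\rho),
\end{equation*}
and $\carr(\alpha(0))=\carr(\alpha_1(0))=x_1$, $\carr(\alpha(1))=\carr(\alpha_{m-1}(1))=x_m$, as required.

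The main technical point is verifying the label computation for basic tracks: one has to check that the interval arrangement of the straight-line d-path really produces a discrete ipomset with the correct interfaces, which hinges on the fact that for $e\in \ev(x)$ the coordinate of $e$ in $y$ never leaves $(0,1)$ during $\alpha$, whereas for $e\in \ev(y)\setminus \ev(x)$ it leaves $\{0\}$ immediately. Everything else is bookkeeping, using Lemmas \ref{le:UeX} and \ref{le:fcarrp} together with $E_{\ineda_y}:\ev(y)\to E_X$ to identify universal events across the Yoneda map.
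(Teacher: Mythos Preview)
Your proof is correct and uses the same d-path construction as the paper (straight-line paths through barycenters, then concatenation). The only difference is in the final verification: the paper observes that the concatenated path is already in the form of Lemma \ref{le:DipathPresentation} and appeals to Proposition \ref{prp:LabelCoherence} wholesale, whereas you compute the labels of the basic d-paths directly from their interval arrangements and glue via Proposition \ref{le:PathLabelComp}. Your route is slightly more self-contained but essentially unpacks the same chain of lemmas (the paper's \ref{le:PathLabelCube} and \ref{le:UpDownTrackLabel} do exactly your direct calculation), so there is no genuinely different idea here.
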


\begin{proof}
  If $\rho=( x)$ is a unit track, we can let
  $\beta\in \po P( \po I^{ \dim x})$ be the constant d-path
  $\beta( t)=( \frac{1}{2},\dotsc, \frac{1}{2})$ and
  $\alpha= \georel{ \ineda_x}\circ \beta$.  Otherwise, write
  $\rho=( x_1,\dotsc, x_m)$ with $m\ge 2$ and let $n_i= \dim x_i$ for
  $i\in[ m]$.  We construct $\alpha$ as a concatenation of d-paths
  $\alpha_1*\dotsm* \alpha_{ m- 1}$.  Let $i\in[ m- 1]$.
  \begin{itemize}
  \item If $x_i\face^* x_{ i+ 1}$, then
    $x_i= \delta_A^{ 0,\dotsc, 0} x_{ i+ 1}$ for a unique set
    $A\subseteq[ n_{ i+ 1}]$.  Let
    $\beta_i\in \po P( \po I^{ n_{ i+ 1}})$ be the d-path
    \begin{equation*}
      \beta_i( t)=( t_1,\dotsc, t_{ n_{ i+ 1}}), \qquad t_j=
      \begin{cases}
        \frac{1}{2} t &\text{if } j\in A,\\
        \frac{1}{2} &\text{if } j\notin A
      \end{cases}
    \end{equation*}
    and $\alpha_i= \georel{ \ineda_{ x_{ i+ 1}}}\circ \beta_i$.  Then
    $\carr( \alpha_i(0))= x_i$ and $\carr( \alpha_i(t))= x_{ i+ 1}$
    for $0< t\le 1$.
  \item If $x_i\ecaf^* x_{ i+ 1}$, then
    $x_{ i+ 1}= \delta_A^{ 1,\dotsc, 1} x_i$ for a unique set
    $A\subseteq[ n_i]$.  Let $\beta_i\in \po P( \po I^{ n_i})$ be the
    d-path
    \begin{equation*}
      \beta_i( t)=( t_1,\dotsc, t_{ n_i}), \qquad t_j=
      \begin{cases}
        \frac{1}{2}+ \frac{1}{2} t &\text{if } j\in A,\\
        \frac{1}{2} &\text{if } j\notin A
      \end{cases}
    \end{equation*}
    and $\alpha_i= \georel{ \ineda_{ x_i}}\circ \beta_i$.  Then
    $\carr( \alpha_i(t))= x_i$ for $0\le t< 1$ and
    $\carr( \alpha_i( 1))= x_{ i+ 1}$.
  \end{itemize}
  By construction, $\alpha_i(1)= \alpha_{ i+ 1}(0)$ for all
  $i\in[ m- 1]$, so the concatenation
  $\alpha= \alpha_1*\dotsm* \alpha_{ m- 1}$ exists.  Further, this is a
  representation as in Lemma \ref{le:DipathPresentation}, hence
  $\ell(\alpha)=\ell(\rho)$ by Proposition \ref{prp:LabelCoherence}.
\end{proof}

\section{Languages of Higher-Dimensional Automata}
\label{se:lang}

We define languages of HDAs and discuss some of their properties.

\subsection{Languages}
\label{se:langlang}

Using the work in Sections \ref{se:tracks} and \ref{se:geo}, we can
define languages of HDAs in two different ways.  The first one is a
straight application of van Glabbeek's track-based approach from
\cite{DBLP:journals/tcs/Glabbeek06}, and the second one uses d-paths
through geometric realizations in the spirit of
\cite{DBLP:journals/tcs/FajstrupRG06}.

\begin{definition}
  \label{de:lang}
  A track $\rho: x\leadsto y$ in an HDA $( X, I, F, \lambda)$ is
  \emph{accepting} if $x\in I$ and $y\in F$.  The
  \emph{track language} of $X$ is
  $L_t( X)=\{ \lambda( \rho)\in \iiPoms\mid \rho \text{ accepting
    track in } X\}$.

  A d-path $\alpha\in\po P\georel X$ is \emph{accepting} if
  $\carr(\alpha(0))\in I$ and $\carr(\alpha(1))\in F$.  The \emph{path
    language} of $X$ is
  $L_p( X)=\{ \lambda( \alpha)\in \iiPoms\mid \alpha \text{ accepting
    d-path in } \georel X\}$.
\end{definition}

\begin{therm}
  \label{th:track=path}
  For every HDA $X$, $L_t(X)=L_p(X)$.
\end{therm}

\begin{proof}
  Immediate from Propositions \ref{prp:LabelCoherence} and
  \ref{prp:TracksHavePaths}.
\end{proof}
	
From now on we write $L(X)= L_t(X) =L_p(X)$ and call this set simply
the \emph{language} of $X$.  It follows immediately from Proposition
\ref{pr:lambda-track} that languages of HDAs are sets of interval
ipomsets:

\begin{proposition}
  \label{pr:LXinterval}
  For any HDA $X$, $L( X)\subseteq \iiPoms$. \qed
\end{proposition}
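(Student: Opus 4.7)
The plan is to observe that Proposition \ref{pr:LXinterval} is essentially a consequence of results already established: the labels of tracks (and of d-paths) have been shown to be interval ipomsets, not merely arbitrary ipomsets, so the language, which consists of such labels, automatically lands in $\iiPoms$.

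First I would unfold the definition. By Theorem \ref{th:track=path} we have $L(X) = L_t(X)$, so it suffices to show that for every accepting track $\rho$ in $X$, the label $\ell(\rho)$ is an interval ipomset. (Working with $L_p(X)$ instead would work equally well.)

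Next I would invoke Proposition \ref{pr:lambda-track}, which asserts that labeling defines a functor $\ell \colon \Track{X} \to \iiPoms$. In particular, for every track $\rho$ (accepting or not), $\ell(\rho) \in \iiPoms$. This is really the heart of the matter: the recursive Definition \ref{de:label-track} builds $\ell(\rho)$ by gluing starters, terminators and identities, each of which is a discrete ipomset, hence interval; and Proposition \ref{pr:DecompositionOfEvilio} together with Lemma \ref{le:intcomp} ensure that gluings of discrete ipomsets are interval. Since accepting tracks form a subclass of all tracks, the restriction of $\ell$ to accepting tracks still takes values in $\iiPoms$, so $L_t(X) \subseteq \iiPoms$.

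For completeness, one could verify the path-based version directly: by Proposition \ref{pr:lamalpha} the label $\ell(\alpha)$ of any d-path is an interval ipomset, so $L_p(X) \subseteq \iiPoms$ as well, consistent with Theorem \ref{th:track=path}. There is no real obstacle here — the proposition is in effect a summary observation consolidating Propositions \ref{pr:lambda-track} and \ref{pr:lamalpha}, each of which has already done the genuine work (the recursive construction in the track case, and the interval-arrangement analysis in the d-path case).
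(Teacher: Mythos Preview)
Your proposal is correct and matches the paper's own justification: the paper presents this proposition with a \qed symbol and the sentence ``It follows immediately from Proposition \ref{pr:lambda-track} that languages of HDAs are sets of interval ipomsets.'' Your additional remarks about Proposition \ref{pr:lamalpha} and the underlying mechanism (gluing discrete ipomsets via Lemma \ref{le:intcomp} and Proposition \ref{pr:DecompositionOfEvilio}) are accurate elaborations of why Proposition \ref{pr:lambda-track} holds, but the core argument is identical.
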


The following property allows us to reason about languages using maps
from objects $\pobj{P}$.

\begin{proposition}
  \label{pr:Representability}
  For any HDA $X$ and any interval ipomset $P$, $P\in L( X)$ iff there
  is an HDA map $\pobj{ P}\to X$.
\end{proposition}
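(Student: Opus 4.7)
The plan is to prove the two directions independently, with both reducing to the correspondence between tracks and interval ipomsets developed in Section \ref{se:tracks}.

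For the forward direction, suppose $P \in L(X)$. By Theorem \ref{th:track=path} we may use the track-based definition, so there exists an accepting track $\rho: x \leadsto y$ in $X$ with $\ell(\rho) \cong P$, $x \in I_X$, $y \in F_X$. Lemma \ref{le:track-to-map} then gives a map of labeled precubical sets $g: \pobj{P} \to X$ with $g(i_{\pobj{P}}) = x \in I_X$ and $g(f_{\pobj{P}}) = y \in F_X$. Since $I_{\pobj{P}} = \{i_{\pobj{P}}\}$ and $F_{\pobj{P}} = \{f_{\pobj{P}}\}$ by construction, $g$ preserves initial and accepting cells and is therefore an HDA map.

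For the backward direction, suppose $g: \pobj{P} \to X$ is an HDA map. By Proposition \ref{pr:internal-track}, since $P$ is interval, there is a track $\rho: i_{\pobj{P}} \leadsto f_{\pobj{P}}$ in $\pobj{P}$ with $\ell(\rho) \cong P$. Write $\rho = (x_1, \dotsc, x_m)$. Because $g$ is a precubical map it commutes with face maps, so the relations $x_i \face^* x_{i+1}$ or $x_{i+1} \ecaf^* x_i$ transfer to $g(x_i)$ and $g(x_{i+1})$, and hence $g(\rho) := (g(x_1), \dotsc, g(x_m))$ is a track in $X$. This track runs from $g(i_{\pobj{P}}) \in I_X$ to $g(f_{\pobj{P}}) \in F_X$, so it is accepting.

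It remains to see that $\ell(g(\rho)) \cong \ell(\rho) \cong P$. By Lemma \ref{le:CellParmsetFun}, $\ell(g(x_i)) \cong \ell(x_i)$ for each $i$, and the gluings used in the recursive definition of track labels (Definition \ref{de:label-track}) are composed from these cell labels in a way that is preserved by $g$. Alternatively, one can invoke Proposition \ref{pr:lambda-track}: the labeling functor $\ell$ on $\Track{\pobj{P}}$ factors through $\Track{X}$ via the induced functor $\Track{g}$, and the induced map on interval ipomsets is an isomorphism by Lemma \ref{le:CellParmsetFun} applied cellwise. Hence $g(\rho)$ is an accepting track in $X$ with label (isomorphic to) $P$, so $P \in L_t(X) = L(X)$. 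No step here is a serious obstacle; the only point requiring care is the labels-preservation claim in the backward direction, which just needs a careful unfolding of Definition \ref{de:label-track} together with Lemma \ref{le:CellParmsetFun}.
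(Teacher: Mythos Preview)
Your proof is correct and follows essentially the same approach as the paper's own proof: the forward direction via Lemma~\ref{le:track-to-map}, and the backward direction via Proposition~\ref{pr:internal-track} together with preservation of track labels under precubical maps. Your version is more detailed---you spell out explicitly why $g(\rho)$ is a track in $X$ and why its label agrees with that of $\rho$ (via Lemma~\ref{le:CellParmsetFun}), whereas the paper simply cites Proposition~\ref{pr:lambda-track} for the latter---but the argument is the same.
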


\begin{proof}
  For the forward direction, assume $P\in L(X)$, then there exists a
  track $\rho:x\leadsto y$ with $x\in I_X$, $y\in F_X$, and
  $\lambda(\rho)=P$.  The conclusion follows from Lemma
  \ref{le:track-to-map}.

  For the reverse direction, let $g: \pobj{ P}\to X$. Then, by
  Proposition \ref{pr:internal-track}, there exists a track
  $\rho: i_{\pobj{P}}\leadsto f_{\pobj{P}}$ such that
  $\lambda(\rho)=P$, $g(i_{\pobj{P}})\in I_X$, and
  $g(f_{\pobj{P}})\in F_X$.  Now Proposition \ref{pr:lambda-track}
  implies that $\lambda(f(\rho))=P$.
\end{proof}

\begin{remark}
  Thanks to Proposition \ref{pr:Representability}, the language of an
  HDA $X$ may alternatively be defined as the set of interval ipomsets
  $P$ that admit an HDA map $\pobj{P}\to X$.  This definition remains
  valid even if we do \emph{not} assume event consistency, hence it
  may be used to introduce languages also of HDA which are not event
  consistent.  We will expand on this in future work.
\end{remark}

We finish this section with some properties of languages of HDAs
generated by interval ipomsets.  The following is immediate from
Proposition \ref{pr:internal-track}.

\begin{lemma}
  $P\in L(\pobj{P})$ for every interval ipomset $P$. \qed
\end{lemma}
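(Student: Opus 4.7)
The plan is to invoke \Cref{pr:internal-track} directly, which provides a track witnessing that $P$ is in the language of $\pobj{P}$. The verification that the track is accepting is immediate from the construction of $\pobj{P}$ in \Cref{de:pobj}.

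More concretely, given an interval ipomset $P$, \Cref{pr:internal-track} yields a track $\rho: i_{\pobj{P}} \leadsto f_{\pobj{P}}$ in $\pobj{P}$ such that $\ell(\rho) \cong P$. By \Cref{de:pobj}, the sets of initial and accepting cells of $\pobj{P}$ are the singletons $I_{\pobj{P}} = \{i_{\pobj{P}}\}$ and $F_{\pobj{P}} = \{f_{\pobj{P}}\}$. Therefore $\rho$ starts at an initial cell and ends at an accepting cell, so $\rho$ is accepting in the sense of \Cref{de:lang}.

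Consequently $P \cong \ell(\rho) \in L_t(\pobj{P}) = L(\pobj{P})$, where the equality of $L_t$ and $L$ follows from \Cref{th:track=path}. Since the language is understood up to isomorphism of ipomsets (as labels are defined up to isomorphism throughout Sections \ref{se:tracks} and \ref{se:geo}), this gives the claim.

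There is essentially no obstacle here: all the technical work has already been done in proving \Cref{pr:internal-track}, which handled the nontrivial inductive construction of the track via the decomposition of $P$ into discrete ipomsets (\Cref{pr:DecompositionOfEvilio}) combined with the pushout description of $\pobj{Q * R}$ from \Cref{le:pobjJoin}. The present lemma is merely the observation that the track produced there happens to run between the unique initial and unique accepting cell of $\pobj{P}$, which is built into its endpoints by construction.
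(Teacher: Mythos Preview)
Your proposal is correct and matches the paper's approach exactly: the paper simply states that the lemma is immediate from Proposition~\ref{pr:internal-track}, and you have spelled out precisely that immediacy. There is nothing to add.
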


\begin{proposition}
  \label{pr:XPSubsumption}
  For all interval ipomsets $P$ and $Q$, $Q\in L(\pobj{P})$ iff
  $Q\subsu P$.
\end{proposition}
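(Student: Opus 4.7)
The plan is to prove each direction separately, with the forward direction being the more substantial one.

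$(\Leftarrow)$ Assume $Q\subsu P$ via a subsumption map $g:Q\to P$. Lemma \ref{le:SubsumptionToPObj} yields an injective HDA map $\pobj{g}:\pobj{Q}\to \pobj{P}$; a direct calculation from the formulas for $i_{\pobj{Q}},f_{\pobj{Q}}$, together with $g(S_Q)=S_P$ and $g(T_Q)=T_P$, shows that $\pobj{g}$ sends $i_{\pobj{Q}}$ to $i_{\pobj{P}}$ and $f_{\pobj{Q}}$ to $f_{\pobj{P}}$. The previous lemma gives $Q\in L(\pobj{Q})$, realized by a track $\rho$ with $\ell(\rho)\cong Q$; the pushforward $\pobj{g}(\rho)$ is then an accepting track in $\pobj{P}$ whose label is isomorphic to $Q$, since precubical maps preserve cell labels (Lemma \ref{le:CellParmsetFun}) and face relations. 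Hence $Q\in L(\pobj{P})$.

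$(\Rightarrow)$ Assume $Q\in L(\pobj{P})$, realized by an accepting track $\rho=(x_1,\dotsc,x_m)$ in $\pobj{P}$ with $x_1=i_{\pobj{P}}$, $x_m=f_{\pobj{P}}$, and $\ell(\rho)\cong Q$. Identify $E_{\pobj{P}}$ with $P$ via Lemma \ref{le:events-pobj}. The central observation is that for every $p\in P$, the values $x_1(p),\dotsc,x_m(p)$ evolve monotonically through $\{0,\exec,1\}$ along $\rho$: a $\face^*$-step changes some values from $0$ to $\exec$ (starting events), a $\ecaf^*$-step changes some values from $\exec$ to $1$ (terminating events), and all other values are preserved. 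Since $i_{\pobj{P}}(p)\in\{0,\exec\}$ is $\exec$ iff $p\in S_P$ and $f_{\pobj{P}}(p)\in\{\exec,1\}$ is $\exec$ iff $p\in T_P$, each $p$ takes value $\exec$ on exactly one non-empty connected sub-sequence of $\rho$. The points of $\ell(\rho)$—built by iteratively gluing starters and terminators while identifying continuing events along the decomposition of Proposition \ref{pr:DecompositionOfEvilio}—are therefore in canonical bijection with $P$ via a map $f:\ell(\rho)\to P$.

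The remaining task is to verify that $f$ is a subsumption map. Labels are preserved by construction, and $f(S_{\ell(\rho)})=S_P$, $f(T_{\ell(\rho)})=T_P$ because the source and target interfaces of $\ell(\rho)$ record the events active at $x_1$ and $x_m$ respectively. For precedence, if $p<_P q$ then the defining relation $\prec$ of $\pobj{P}$ forbids the pair $(\exec,\exec)$ and moreover forces $x(p)=1$ whenever $x(q)=\exec$; hence $p$'s sub-track strictly precedes $q$'s, giving $f^{-1}(p)<_{\ell(\rho)} f^{-1}(q)$. For the event order, if $f^{-1}(p)$ and $f^{-1}(q)$ are incomparable in $\ell(\rho)$ then their sub-tracks overlap, so both events are active at some cell $x_i$; the event order on $\ev(x_i)=x_i^{-1}(\exec)$ is by Definition \ref{de:pobj} the restriction of $\intord_P$, and this propagates through transitive closure to yield $f^{-1}(p)\intord_{\ell(\rho)} f^{-1}(q)\Rightarrow p\intord_P q$. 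Hence $Q\cong \ell(\rho)\subsu P$.

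The main obstacle is the monotonicity argument that each $p\in P$ is active on exactly one sub-track, since it produces the canonical bijection on which everything else hinges; this requires careful case analysis of the two kinds of face-map steps together with the boundary behavior at $i_{\pobj{P}}$ and $f_{\pobj{P}}$. Once this is secured, the subsumption conditions reduce to immediate consequences of the way $\pobj{P}$ encodes $P$ through $\prec$ and $\intord_P$.
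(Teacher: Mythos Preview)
Your proposal is correct. The backward direction matches the paper's, just spelled out slightly more (the paper simply cites Lemma~\ref{le:SubsumptionToPObj} and Proposition~\ref{pr:Representability}).

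The forward direction takes a genuinely different route. The paper proceeds by induction on the length of the accepting track $\rho$: it writes $\rho=(x,y)*\tau$, distinguishes the two cases $x\face^* y$ and $x\ecaf^* y$, in each case identifies an ipomset $Q'$ with $\pobj{Q'}$ containing $\tau$ as an accepting track, applies the inductive hypothesis to get $\ell(\tau)\subsu Q'$, and then uses Lemma~\ref{le:CompSubsu} (compatibility of gluing with subsumption) to conclude $\ell(\rho)\subsu P$. Your argument instead is global: you exploit that for each $p\in P$ the sequence $x_1(p),\dotsc,x_m(p)$ is monotone in $0<\exec<1$ (since $\face^*$-steps only turn $0$'s into $\exec$'s and $\ecaf^*$-steps only turn $\exec$'s into $1$'s), so each $p$ is active on a single non-empty interval; this yields the bijection $\ell(\rho)\to P$ directly, and then you verify the subsumption axioms by reading off properties of~$\prec$ and~$\intord_P$.

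Both approaches are sound. Yours is more self-contained and arguably more illuminating about \emph{why} the result holds---it makes explicit the correspondence between elements of $\ell(\rho)$ and activation intervals of events of $P$, which is really the content of the proposition. The paper's approach is more modular, reusing the algebraic lemma~\ref{le:CompSubsu} that is needed elsewhere anyway, and avoids re-analysing what $<_{\ell(\rho)}$ means in terms of the gluing decomposition. One small point worth tightening in your write-up: the step ``$p$'s sub-track strictly precedes $q$'s, giving $f^{-1}(p)<_{\ell(\rho)} f^{-1}(q)$'' implicitly uses that a single track step cannot change a value from $0$ directly to $1$, so that the $\exec$-intervals of $p$ and $q$ are genuinely separated by at least one cell where neither is active; this is true but deserves a sentence.
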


\begin{proof}
  The backwards direction is immediate from Lemma
  \ref{le:SubsumptionToPObj} and Proposition
  \ref{pr:Representability}: a subsumption map $f: Q\to P$ gives rise
  to $\pobj{f}: \pobj{Q}\to \pobj{P}$, thus $Q\in L( \pobj{P})$.  For
  the forward direction, let $\rho: i_{\pobj{P}}\leadsto f_{\pobj{P}}$
  be an accepting track in $\pobj{P}$.  We show that
  $\lambda(\rho)\subsu P$ by induction on the length of $\rho$.

  If $\rho=(x)$, then $i_{\pobj{P}}=x=f_{\pobj{P}}$, which implies
  that $P=\subid{P}{P}{P}=\ell(\rho)$.  Otherwise, there is a
  presentation $\rho=(x,y)*\tau$.  Note that $\ell(x)\cong S_P$.
  There are two cases to consider:
  \begin{itemize}
  \item $x\face^* y$.  Then $y(p)=\exec$ for $p\in \ev(y)$ and
    $y(p)=0$ otherwise.  Let $Q$ be an interval ipomset with the same
    elements as $P$, $\mathord{<_Q}= \mathord{<_P}$,
    $\mathord{\intord_Q}= \mathord{\intord_P}$, $\lambda_Q=\lambda_P$,
    $T_Q=T_P$; the only difference is that $S_Q=\ev(y)$.  Then
    $\pobj{P}$ and $\pobj{Q}$ are naturally isomorphic as labeled
    precubical sets and $\tau$ can be regarded as an accepting track
    in $\pobj{Q}$.  Moreover, $P=\subid{\ell(x)}{\ell(y)}{\ell(y)}*Q$.
    By induction, $\ell(\tau)\subsu Q$; using Lemma
    \ref{le:CompSubsu}, $\ell(\rho)=\ell(x, y)*\ell(\tau)\subsu P$.
  \item $x\ecaf^* y$.  Then
    \[
      y(p)=
      \begin{cases}
        0 & \text{for $p\in P\setminus \ev(x)$,}\\
        \exec & \text{for $p\in \ev(y)$,}\\
        1 & \text{for $p\in \ev(x)\setminus \ev(y)$.}
      \end{cases}
    \]
    Let $Q$ be the restriction of $P$ to
    $P\setminus(\ev(x)\setminus \ev(y))$, then the precubical map
    $\jneda^1_{Q\subseteq P}:\pobj{Q}\to \pobj{P}$ is an injection
    onto $\{x\mid x(p)=1 \text{ for } p\in \ev(x)\setminus\ev(y)\}$.
    Furthermore, $\tau$ is a track from
    $\jneda^1_{Q\subseteq P}( i_{\pobj{Q}})$ to
    $\jneda^1_{Q\subseteq P}( f_{\pobj{Q}})$ lying in
    $\jneda^1_{Q\subseteq P}(\pobj{Q})$.  Thus, $\tau$ lifts uniquely
    to an accepting track $\tau'$ on $\pobj{Q}$.  By induction
    hypothesis, $\ell(\tau)\subsu Q$, and then with Lemma
    \ref{le:CompSubsu},
    $\ell(\rho)=\ell(x, y) * \ell(\tau)\subsu
    \subid{\ell(x)}{\ell(x)}{\ell(y)}*Q= P$. \qedhere
  \end{itemize}
\end{proof}

\begin{remark}
  Example \ref{ex:2+2-to-HDA} shows that the above proposition fails
  if $P$ is not an interval ipomset: for $P= \twotwo$,
  $P\notin L( \pobj{P})$.  In general, Proposition \ref{pr:LXinterval}
  implies that if $P\notin \iiPoms$, then $P\notin L( \pobj{P})$.  We
  will get back to this issue in Example \ref{ex:para} below.
\end{remark}

\subsection{Languages are Subsumption-Closed}

Because of Proposition \ref{pr:LXinterval} we henceforth restrict
ourselves to \emph{interval} ipomsets.

\begin{definition}
  The \emph{weak closure} of a set $\mcal S\subseteq \iiPoms$ is
  $\mcal S\down=\{ Q\in \iiPoms\mid \exists P\in \mcal S: Q\subsu
  P\}$.
\end{definition}

That is, $\mcal S\down$ is the smallest subsumption-closed superset of
$\mcal S$.  The set $\mcal S$ is called \emph{weak} if $\mcal S\down=
\mcal S$.

\pagebreak 

\begin{therm}
  \label{th:LXsubsucl}
  For every HDA $X$, $L( X)\subseteq \iiPoms$ is weak.
\end{therm}

\begin{proof}
  This follows from subsumption closedness of $L( \pobj{P})$,
  Proposition \ref{pr:XPSubsumption}: Choose interval ipomsets
  $Q\subsu P$ with $P\in L(X)$. Proposition \ref{pr:Representability}
  gives a map $f: \pobj{P}\to X$ and Proposition
  \ref{pr:XPSubsumption} gives a map $g: \pobj{Q}\to \pobj{P}$. The
  composition $f\circ g$ with Proposition \ref{pr:Representability}
  again gives the conclusion.
\end{proof}

As a partial converse, we will see in Theorem
\ref{th:hda-finitesetiipoms} below that any \emph{finite}
subsumption-closed set of interval ipomsets can be generated by an
HDA.

\begin{lemma}
  \label{le:HDAmap-L}
  If $f: X\to Y$ is an HDA map, then $L(X)\subseteq L(Y)$.
\end{lemma}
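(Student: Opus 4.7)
The plan is to exploit Proposition \ref{pr:Representability}, which characterises $L(X)$ as the set of interval ipomsets $P$ admitting an HDA map $\pobj{P}\to X$. This reduces the whole lemma to a one-line functoriality argument.

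Concretely, I would take an arbitrary $P\in L(X)$. By Proposition \ref{pr:Representability} there exists an HDA map $g:\pobj{P}\to X$. Composing with the given HDA map $f:X\to Y$ yields an HDA map $f\circ g:\pobj{P}\to Y$ (HDA maps compose: precubical maps compose, and the preservation of labels and of initial/accepting cells is preserved under composition since $f(I_X)\subseteq I_Y$, $f(F_X)\subseteq F_Y$, and $\lambda_Y\circ f=\lambda_X$, $\lambda_X\circ g=\lambda_{\pobj{P}}$). Applying Proposition \ref{pr:Representability} in the other direction to this composite gives $P\in L(Y)$.

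Alternatively, one could argue directly from Definition \ref{de:lang}: if $\rho=(x_1,\dotsc,x_m)$ is an accepting track in $X$, then since $f$ is a precubical map it commutes with all face maps, so $f(\rho)=(f(x_1),\dotsc,f(x_m))$ is a track in $Y$, and it is accepting because $f(I_X)\subseteq I_Y$ and $f(F_X)\subseteq F_Y$. Equality of labels $\ell(f(\rho))\cong\ell(\rho)$ then follows by induction on the length of $\rho$ from Lemma \ref{le:CellParmsetFun} (which handles unit tracks and the linear-pomset labels of individual cells) together with Proposition \ref{pr:lambda-track} (functoriality of $\ell$ on track concatenation). Either route works; the first is cleaner because the technical work was already done in Proposition \ref{pr:Representability}.

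There is no real obstacle here: the lemma is essentially a statement that the representability characterisation of $L(-)$ is manifestly covariant in HDA maps. The only thing to verify carefully is that the composition of two HDA maps is again an HDA map, which is immediate from Definition \ref{de:hda}.
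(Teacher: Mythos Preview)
Your first argument is exactly the paper's proof: take $P\in L(X)$, invoke Proposition~\ref{pr:Representability} to get $\pobj{P}\to X$, compose with $f$, and apply Proposition~\ref{pr:Representability} again to conclude $P\in L(Y)$. The alternative track-based route you sketch is also valid but unnecessary here.
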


\begin{proof}
  Let $P\in L(X)$, then Proposition \ref{pr:Representability} gives a
  map $\pobj{P}\to X$.  Composition with $f$ yields a map $\pobj{P}\to
  Y$, hence $P\in L(Y)$.
\end{proof}

For HDAs generated by interval pomsets, Proposition
\ref{pr:XPSubsumption} implies the following.

\begin{lemma}
  \label{le:langXP}
  $L(\pobj{P})=\{P\}\down$. \qed
\end{lemma}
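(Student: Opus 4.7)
The plan is to observe that this lemma is essentially an immediate unpacking of Proposition \ref{pr:XPSubsumption} combined with Proposition \ref{pr:LXinterval} and the definition of weak closure; no new constructions are required.

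First I would expand the right-hand side: by definition,
\begin{equation*}
  \{P\}\down = \{Q\in \iiPoms \mid Q\subsu P\}.
\end{equation*}
Then I would expand the left-hand side in two steps. By Proposition \ref{pr:LXinterval}, every element of $L(\pobj{P})$ is an interval ipomset, so $L(\pobj{P}) \subseteq \iiPoms$. By Proposition \ref{pr:XPSubsumption}, for any $Q\in \iiPoms$ we have $Q\in L(\pobj{P})$ iff $Q\subsu P$. Combining these two observations gives
\begin{equation*}
  L(\pobj{P}) = \{Q\in \iiPoms \mid Q\subsu P\} = \{P\}\down,
\end{equation*}
which is the desired equality.

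There is no real obstacle here, since all the work has been done in Proposition \ref{pr:XPSubsumption}. The lemma serves mainly to repackage that proposition in the language of weak closures, so that it can be cited conveniently in the sequel (for instance in the discussion of closure properties of languages of HDAs).
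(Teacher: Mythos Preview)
Your proposal is correct and matches the paper's own reasoning exactly: the paper states the lemma with a bare \qed, prefaced by the remark that Proposition \ref{pr:XPSubsumption} implies it, and your unpacking via the definition of $\{P\}\down$ together with Proposition \ref{pr:LXinterval} is precisely the intended justification.
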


\subsection{Languages are Closed under Union}

We now show that languages of HDAs are closed under union (that is,
they form filters).  To this end, we introduce \emph{coproducts} of
HDAs.  First, the coproduct of precubical sets $X$ and $Y$ is
$Z= X\sqcup Y$ given by
\begin{equation*}
  Z_n= X_n\sqcup Y_n, \qquad
  \delta_i^\nu( z)=
  \begin{cases}
    ( \delta_X)_i^\nu( z) &\text{if } z\in X, \\
    ( \delta_Y)_i^\nu( z) &\text{if } z\in Y.
  \end{cases}
\end{equation*}

\begin{definition}
  The \emph{coproduct} of HDAs $( X, I_X, F_X, \lambda_X)$ and
  $( Y, I_Y, F_Y, \lambda_Y)$ is the HDA
  $X\sqcup Y=( X\sqcup Y, I_X\cup I_Y, F_X\cup F_Y, \lambda)$ with
  $\lambda( z)= \lambda_X( z)$ if $z\in X$ and
  $\lambda( z)= \lambda_Y( z)$ if $z\in Y$.
\end{definition}

It can easily be shown that these are in fact the categorical
coproducts in the categories of precubical sets and HDAs,
respectively.  Next we note that subsumption closure of sets of
interval ipomsets distributes over union
\cite{DBLP:journals/fuin/Grabowski81}:

\begin{lemma}
  \label{le:subsu-union}
  For any subsets $\mcal S_1, \mcal S_2\subseteq \iiPoms$,
  $( \mcal S_1\cup \mcal S_2)\down= \mcal S_1\down\cup \mcal
  S_2\down$. \qed
\end{lemma}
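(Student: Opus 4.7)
The plan is to prove both inclusions directly by unfolding the definition of weak closure, as this is a purely set-theoretic statement about the operator $(\cdot)\down$ applied to a union.

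For the inclusion $\mcal S_1\down \cup \mcal S_2\down \subseteq (\mcal S_1 \cup \mcal S_2)\down$, I would observe that weak closure is clearly monotone: whenever $\mcal S \subseteq \mcal S'$, any $Q \subsu P$ with $P \in \mcal S$ also witnesses $Q \in \mcal S'\down$ via the same $P \in \mcal S'$. Applying this to the two inclusions $\mcal S_1, \mcal S_2 \subseteq \mcal S_1 \cup \mcal S_2$ and taking union yields this direction.

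For the reverse inclusion $(\mcal S_1 \cup \mcal S_2)\down \subseteq \mcal S_1\down \cup \mcal S_2\down$, I would pick an arbitrary $Q \in (\mcal S_1 \cup \mcal S_2)\down$ and produce a witness $P \in \mcal S_1 \cup \mcal S_2$ with $Q \subsu P$. Since $P$ lies in one of the two sets, $Q$ lies in the corresponding weak closure, hence in the union. There is no obstacle here; the argument is a one-line case distinction on which of $\mcal S_1$, $\mcal S_2$ contains the witness $P$, and the step relies only on the definition of $\down$ and the definition of union. No property of interval ipomsets or subsumption beyond the existential form of $\down$ is needed, so the proof is essentially a tautology at the level of the operator.
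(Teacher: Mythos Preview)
Your proposal is correct; the paper itself gives no proof at all, marking the lemma with a bare \qed{} and a citation to Grabowski, treating it as an immediate observation. Your explicit unfolding of both inclusions via monotonicity and a case split on the witness $P$ is precisely the routine verification the paper leaves implicit.
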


\begin{therm}
  \label{th:hdlang-union}
  For HDAs $X$ and $Y$, $L( X\sqcup Y)= L( X)\cup L(Y)$.
\end{therm}

\begin{proof}
  By construction of $X\sqcup Y$, any accepting track in $X\sqcup Y$
  is an accepting track in $X$ or in $Y$, and vice versa.  The result
  follows with Lemma \ref{le:subsu-union}.
\end{proof}

\begin{therm}
  \label{th:hda-finitesetiipoms}
  Let $\mcal S\subseteq \iiPoms$ be weak and finite.  There is an HDA
  $X$ with $L( X)= \mcal S$.
\end{therm}

\begin{proof}
  Write $\mcal S=\{ P_1,\dotsc, P_n\}\down$ and let
  $X= \pobj{ P_1}\sqcup\dotsm\sqcup \pobj{ P_n}$.  By Lemma
  \ref{le:langXP}, $L( \pobj{P_i})=\{ P_i\}\down$ for all
  $i= 1,\dotsc, n$, so using Theorem \ref{th:hdlang-union},
  $L( X)= \{ P_1\}\down\cup\dotsm\cup\{ P_n\}\down=( P_1\cup\dotsm
  P_n)\down$ by Lemma~\ref{le:subsu-union}.
\end{proof}

\subsection{Languages are Closed under Parallel Composition}

We show below that parallel compositions of HDA languages are
languages of \emph{tensor products} of HDAs.  First, the tensor
product of precubical sets $X$ and $Y$ is $Z= X\otimes Y$ given
by
\begin{equation*}
  Z_n= \bigsqcup_{ k+ l= n} X_k\times Y_l, \qquad
  \delta_i^\nu(( x, y))=
  \begin{cases}
    ( \delta_X)_i^\nu( x) &\text{if } i\le \dim x, \\
    ( \delta_Y)_{ i- \dim x}^\nu( y) &\text{if } i> \dim x.
  \end{cases}
\end{equation*}

We will below use the important fact that geometric realizations of
tensor products are products of geometric realizations
\cite{book/Grandis09}:

\begin{lemma}
  For precubical sets $X$ and $Y$,
  $\georel{ X\otimes Y}= \georel{X}\times \georel{Y}$. \qed
\end{lemma}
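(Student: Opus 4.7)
The plan is to construct an explicit natural isomorphism of d-spaces
\begin{equation*}
  \Phi: \georel{X\otimes Y}\longrightarrow \georel{X}\times \georel{Y}
\end{equation*}
and verify that it respects the quotient relations and the directed structures on both sides. Concretely, on representatives I would define $\Phi$ by
\begin{equation*}
  \Phi\bigl[( x, y),( t_1,\dotsc, t_{ k+ l})\bigr]
  = \bigl( [ x,( t_1,\dotsc, t_k)],\, [ y,( t_{ k+ 1},\dotsc, t_{ k+ l})]\bigr),
\end{equation*}
where $( x, y)\in X_k\times Y_l\subseteq( X\otimes Y)_{ k+ l}$.

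First, I would check well-definedness by inspecting the generating relation in $\georel{X\otimes Y}$. Applying an elementary face $\delta_i^\nu$ to $( x, y)$ and inserting $\nu$ at coordinate $i$ splits into two cases, which by the definition of the face maps on $X\otimes Y$ correspond exactly to either a face relation of $X$ applied to the first $k$ coordinates (when $i\le k$) or a face relation of $Y$ applied to the last $l$ coordinates (when $i> k$). In each case the two sides of $\Phi$ agree after passing to the quotient. Continuity follows because $\Phi$ is induced by the obvious continuous map on the disjoint union of cubes $\po I^{ k+ l}\cong \po I^k\times \po I^l$.

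Second, I would construct the inverse $\Psi$ by sending $([ x, s],[ y, t])$ with $s\in \po I^k$, $t\in \po I^l$ to $[( x, y),( s, t)]$, and check again that the two quotient relations on $\georel X$ and $\georel Y$ are precisely those generated by the face relations of $X\otimes Y$ restricted to the respective factors. That $\Phi$ and $\Psi$ are mutually inverse and that $\Phi$ takes the product of the two quotient topologies to the quotient topology on the tensor product uses that the precubical sets involved are finite (so the geometric realizations are compact CW complexes), ensuring that the product CW topology agrees with the product topology. Finally, the d-structure on $\georel{X\otimes Y}$ is generated by images of d-paths from cubes $\po I^{ k+ l}= \po I^k\times \po I^l$, which under $\Phi$ map precisely to pairs of d-paths in $\georel X$ and $\georel Y$; since the product d-structure on $\georel X\times \georel Y$ consists exactly of those paths whose projections are d-paths in either factor, $\Phi$ and $\Psi$ are mutually inverse d-maps.

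The main obstacle, if any, is the bookkeeping in verifying that the quotient relation on $\georel{X\otimes Y}$ coincides with the product quotient relation: this rests on the splitting $i\mapsto i$ versus $i\mapsto i- k$ in the face maps of the tensor product being precisely the shuffling required by the product of cubes $\po I^k\times \po I^l\cong \po I^{ k+ l}$. Once the cellular bijection and the matching of face relations are in place, the topological and directed aspects follow from the corresponding statements for standard cubes and finiteness of $X$ and $Y$.
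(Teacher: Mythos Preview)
The paper does not actually prove this lemma: it is stated with a \qed\ and attributed to Grandis \cite{book/Grandis09} in the sentence preceding it. So there is no ``paper's own proof'' to compare against; your sketch is in effect supplying what the paper outsources.

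Your argument is sound in outline and is the standard way one proves this. A couple of remarks on points you gloss over. First, your appeal to finiteness of $X$ and $Y$ to match the product topology with the CW topology is adequate for the paper's purposes (HDAs are finite), but the lemma as stated is for arbitrary precubical sets; in that generality one works in compactly generated spaces, as Grandis does, so that $\georel{-}$ preserves finite products without a finiteness hypothesis. Second, for the d-structure in the direction $\georel{X}\times\georel{Y}\to\georel{X\otimes Y}$, you need that a pair $(\alpha,\beta)$ of d-paths lifts to a d-path in the tensor realization. Each of $\alpha$ and $\beta$ has a presentation as in Lemma~\ref{le:DipathPresentation}, but with independent break points; one takes a common refinement of the two partitions of $[0,1]$ so that on each subinterval both components lie in single cubes $\po I^k$ and $\po I^l$, whose product is a cube of $X\otimes Y$. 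This is routine but is the one place where the bookkeeping is not entirely trivial.
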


\begin{definition}
  The \emph{tensor product} of HDAs $( X, I_X, F_X, \lambda_X)$ and
  $( Y, I_Y, F_Y, \lambda_Y)$ is
  $X\otimes Y=( X\otimes Y, I, F, \lambda)$ with
  $I=\{( x, y)\mid x\in I_X, y\in I_Y\}$,
  $F=\{( x, y)\mid x\in F_X, y\in F_Y\}$, and $\lambda(( x, y))=
  \lambda_X( x)* \lambda_Y( y)$.
\end{definition}

Above, $\lambda(( x, y))= \lambda_X( x)* \lambda_Y( y)$ denotes the
concatenation of $\lambda_X( x)$ and $\lambda_Y( y)$ as sequences in
$\bang \Sigma$.  More formally, one can easily show that
$\bang \Sigma\otimes \bang \Sigma= \bang \Sigma$, so that $\lambda$ is
the tensor product of the maps $\lambda_X$ and $\lambda_Y$.

\begin{remark}
  If $X$ and $Y$ are one-dimensional HDAs, \ie $X_2= Y_2= \emptyset$,
  then $Z= X\otimes Y$ is two-dimensional, with $Z_2= X_1\times Y_1$
  and $Z_1= X_1\times Y_0\sqcup X_0\times Y_1$.  The labels of
  $2$-cells $( x, y)\in Z_2$ are
  $\lambda(( x, y))=( \lambda_X( x), \lambda_Y( y))$, and the labels
  of $1$-cells $( x, y)\in Z_1$ are $\lambda( x, y)= \lambda_X( x)$
  for $x\in X_1$ and $\lambda( x, y)= \lambda_Y( y)$ for $y\in Y_1$.
  Hence $X\otimes Y$ can be seen as the \emph{synchronized product}
  \cite[Sec.~2.2.3]{WinskelN95-Models} of the finite automata $X$ and
  $Y$.
\end{remark}

\begin{lemma}
  \label{le:pobj-para}
  For ipomsets $P$ and $Q$,
  $\pobj{ P\para Q}\cong \pobj{P}\otimes \pobj{Q}$.
\end{lemma}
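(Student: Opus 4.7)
The plan is to exhibit the obvious dimension-respecting bijection $\phi:\pobj{P\para Q}\to \pobj{P}\otimes \pobj{Q}$ by restriction and verify that it preserves face maps, labels, and initial/accepting cells. The key observation driving every step is the asymmetry in the definition of parallel composition: in $P\para Q$ the precedence orders $<_P$ and $<_Q$ are placed side-by-side with no cross-relations, while the event order $\intord_{P\para Q}$ places every element of $P$ strictly before every element of $Q$. This is exactly what the tensor product on $\dotsquare^{\op}$-presheaves needs.

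First I would define $\phi$ on $k$-cells by $\phi(z)=(z\rest{P},z\rest{Q})$. To see it is well-defined and bijective, note that any function $z:P\sqcup Q\to\{0,\exec,1\}$ preserves $\mathord{<_P}\cup \mathord{<_Q}$ with respect to $\prec$ if and only if both restrictions do so individually, because there are no mixed pairs in $<_{P\para Q}$. If $z\rest{P}\in \pobj{P}_{k_1}$ and $z\rest{Q}\in \pobj{Q}_{k_2}$ then $z\in \pobj{P\para Q}_{k_1+k_2}$, so $\phi$ lands in the summand indexed by $(k_1,k_2)$ of $(\pobj{P}\otimes \pobj{Q})_{k_1+k_2}$ and the assignment is a bijection on each grade.

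Next, I would check that $\phi$ commutes with face maps. Fix $z\in \pobj{P\para Q}_n$ with $k_1=\dim(z\rest{P})$, and list $z^{-1}(\exec)=\{p_1\intord_{P\para Q}\dotsm\intord_{P\para Q} p_n\}$. Because $\intord_{P\para Q}$ places every $P$-event before every $Q$-event, the first $k_1$ entries in this list are exactly $(z\rest{P})^{-1}(\exec)$ in their $\intord_P$-order, and the remaining $n-k_1$ are exactly $(z\rest{Q})^{-1}(\exec)$ in their $\intord_Q$-order. Substituting $\nu$ for $p_i$ therefore agrees, via $\phi$, with $((\delta_{\pobj{P}})_i^\nu(z\rest{P}),z\rest{Q})$ when $i\le k_1$ and with $(z\rest{P},(\delta_{\pobj{Q}})_{i-k_1}^\nu(z\rest{Q}))$ when $i>k_1$, which is precisely the definition of $\delta_i^\nu$ in the tensor product. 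Hence $\phi$ is a precubical isomorphism.

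Finally, the initial and accepting cells match on the nose: the unique cell $i_{\pobj{P\para Q}}$ sends $S_{P\para Q}=S_P\sqcup S_Q$ to $\exec$ and everything else to $0$, so $\phi(i_{\pobj{P\para Q}})=(i_{\pobj{P}},i_{\pobj{Q}})$, and the dual computation handles $f_{\pobj{P\para Q}}$. For labels, the same prefix-then-suffix split of $z^{-1}(\exec)$ used above shows $\lambda_{\pobj{P\para Q}}(z)=\lambda_{\pobj{P}}(z\rest{P})*\lambda_{\pobj{Q}}(z\rest{Q})$, which is the tensor-product label. No genuine obstacle arises; the only point that requires care is the coherence between the event order on $P\para Q$ and the asymmetric face-map formula defining $\otimes$, but as just explained the two are designed to match.
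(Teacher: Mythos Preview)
Your proof is correct and follows essentially the same approach as the paper: both define the bijection by restriction $z\mapsto (z\rest{P},z\rest{Q})$, use the absence of cross-precedence in $P\para Q$ for bijectivity on cells, and verify face maps, labels, and distinguished cells. Your version is in fact more explicit than the paper's, which merely asserts that ``it is easy to see that the face maps agree''; you correctly pinpoint that the event order $\intord_{P\para Q}$ placing all of $P$ before all of $Q$ is exactly what makes the $\exec$-preimage split into a $P$-prefix and $Q$-suffix, matching the tensor-product face formula.
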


\begin{proof}
  Let $X= \pobj{ P\para Q}$.  As the underlying set of $P\para Q$ is
  the disjoint union $P\sqcup Q$ and
  $\mathord{<_{ P\para Q}}= \mathord{<_P}\sqcup \mathord{<_Q}$, any
  poset map
  $x:( P\para Q, \mathord{<_{ P\para Q}})\to\{ 0, \exec, 1\}$ has a
  unique decomposition $x= x_P\sqcup x_Q$ into poset maps
  $x_P:( P, \mathord{<_P})\to\{ 0, \exec, 1\}$ and
  $x_Q:( Q, \mathord{<_Q})\to\{ 0, \exec, 1\}$; and any two such maps
  give rise to a poset map $x$.  Hence
  $X_n\cong \bigsqcup_{ k+ l= n} \pobj{P}_k\times \pobj{Q}_l$ as sets.
  It is easy to see that the face maps agree on both sides, and the
  same holds for the labeling.  For the initial cell we have
  $i_{ \pobj{ P\para Q}}( p)= \exec$ iff $p\in S_{ P\para Q}$ (and $0$
  otherwise), iff $p\in S_P$ or $p\in S_Q$, hence
  $i_{ \pobj{ P\para Q}}$ maps to $i_{ \pobj{P}}\sqcup i_{ \pobj{Q}}$
  under the isomorphism; similarly for the accepting cell.
\end{proof}

\begin{definition}
  The \emph{parallel composition} of subsumption-closed subsets
  $\mcal S_1, \mcal S_2\subseteq \iiPoms$ is \linebreak 
  $\mcal S_1\para \mcal S_2=\{ P\| Q\mid P\in \mcal S_1, Q\in \mcal
  S_2\}\down \cap \iiPoms$.
\end{definition}

We need to take the intersection with $\iiPoms$ above because parallel
compositions of interval ipomsets may not be interval.

\begin{figure}
  \centering
  \begin{tikzpicture}
    \begin{scope}
      \node[state, initial left] (00) at (0,0) {};
      \node[state] (10) at (1,0) {};
      \node[state, accepting] (20) at (2,0) {};
      \path (00) edge node[below] {$\vphantom{b}a$} (10);
      \path (10) edge node[below] {$b$} (20);
    \end{scope}
    \begin{scope}[shift={(0,-1.5)}]
      \node[state, initial left] (00) at (0,0) {};
      \node[state] (10) at (1,0) {};
      \node[state, accepting] (20) at (2,0) {};
      \path (00) edge node[below] {$\vphantom{d}c$} (10);
      \path (10) edge node[below] {$d$} (20);
    \end{scope}
    \begin{scope}[shift={(5,-1.75)}]
      \path[fill=black!15] (0,0) to (2,0) to (2,2) to (0,2);
      \node[state, initial left] (00) at (0,0) {};
      \node[state] (10) at (1,0) {};
      \node[state] (20) at (2,0) {};
      \node[state] (01) at (0,1) {};
      \node[state] (11) at (1,1) {};
      \node[state] (21) at (2,1) {};
      \node[state] (02) at (0,2) {};
      \node[state] (12) at (1,2) {};
      \node[state, accepting] (22) at (2,2) {};
      \path (00) edge node[below] {$\vphantom{d}c$} (10);
      \path (10) edge node[below] {$d$} (20);
      \path (00) edge node[left] {$a$} (01);
      \path (01) edge node[left] {$b$} (02);
      \foreach \s/\t in {01/11, 11/21, 02/12, 12/22, 10/11, 11/12,
        20/21, 21/22} \path (\s) edge (\t);
    \end{scope}
  \end{tikzpicture}
  \medskip
  \caption{HDAs $X= \pobj{(a< b)}$ and $Y= \pobj{(c< d)}$ (left) and
    their tensor product $X\otimes Y$.}
  \label{fig:tensor-hda}
\end{figure}
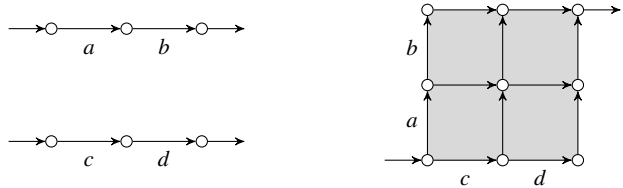

\begin{example}
  \label{ex:para}
  Let $P$ and $Q$ be the ipomsets $P=(a\longrightarrow b)$,
  $Q=(c\longrightarrow d)$.
  Figure \ref{fig:tensor-hda} shows the
  one-dimensional HDAs $X= \pobj{P}$ and $Y= \pobj{Q}$ as well as
  their tensor product $X\otimes Y= \pobj{ P\para Q}$ (\cf Example
  \ref{ex:2+2-to-HDA} and Figure \ref{fi:pobj-ex}).  Now $L( X)=\{
  P\}\down$ and $L( Y)=\{ Q\}\down$, but as $P\| Q$ is not an interval
  ipomset, $L( X\otimes Y)\ne\{ P\para Q\}\down$.  Instead,
  \begin{equation*}
    L( X\otimes Y)=\{ P\para Q\}\down\cap \iiPoms = \bigg\{ \pomset{ a
      \ar[r] & b \\ c \ar[r] \ar[ur] & d}, \pomset{ a \ar[r] \ar[dr] &
      b \\ c \ar[r] & d} \bigg\} \biggdown.
  \end{equation*}
\end{example}


\begin{therm}
  \label{th:hdapara}
  For HDAs $X$ and $Y$, $L( X\otimes Y)= L( X)\para L(Y)$.
\end{therm}

\begin{proof}
  To show $L( X)\para L(Y)\subseteq L( X\otimes Y)$, let
  $R\in L( X)\para L( Y)$, then there are $P\in L(X)$ and $Q\in L(Y)$
  such that $R\subsu P\para Q$.  Let $f:\pobj{P}\to X$ and
  $g:\pobj{Q}\to Y$ be the maps given by Proposition
  \ref{pr:Representability}.  There is a composition
  \begin{equation*}
    \pobj{R} \overset{\eqref{le:SubsumptionToPObj}}{\longrightarrow}
    \pobj{P\para Q} \overset{\eqref{le:pobj-para}}{\longrightarrow}
    \pobj{P}\otimes \pobj{Q} \xrightarrow{f\otimes g} X\otimes Y,
  \end{equation*}
  thus $R\in L(X\otimes Y)$.

  For showing $L( X\otimes Y)\subseteq L( X)\para L(Y)$ we have to do
  more work.  Let $R\in L( X\otimes Y)$, then there is a d-path
  $\gamma\in \po P\georel{ X\otimes Y}$ with $\lambda( \gamma)= R$,
  $\carr( \gamma( 0))\in I_{ X\otimes Y}$, and
  $\carr( \gamma( 1))\in F_{ X\otimes Y}$.  Now
  $\georel{ X\otimes Y}= \georel{X}\times \georel{Y}$, so let $\alpha$
  and $\beta$ be the projections of $\gamma$ to $\georel{X}$ and
  $\georel{Y}$, respectively.  Let $P= \lambda( \alpha)$ and
  $Q= \lambda( \beta)$.  We have $\carr( \alpha( 0))\in I_X$,
  $\carr( \alpha( 1))\in F_X$, $\carr( \beta( 0))\in I_Y$, and
  $\carr( \beta( 1))\in F_Y$, so that $P\in L( X)$ and $Q\in L( Y)$.

  We show that $R\subsu P\para Q$.  We have
  $E_{ X\otimes Y}= E_X\sqcup E_Y$, so for every
  $e\in E_{ X\otimes Y}$, $J_e^\alpha= \emptyset$ or
  $J_e^\beta= \emptyset$.  Further,
  $J_e^\gamma= J_e^\alpha\cup J_e^\beta$ for every
  $e\in E_{ X\otimes Y}$, so that the presentation
  $J_{\smash{e}}^\gamma=I_{\smash{e,1}}^\gamma\cup \dots\cup
  I^\gamma_{\smash{e,n_e^\gamma}}$ is the same as the one for
  $J_e^\alpha$ or $J_e^\beta$.  Hence the underlying sets
  $R= P\sqcup Q$, and $x<_P y$ or $x<_Q y$ imply $x<_R y$.

  Regarding the event orders, we work directly with the elementary
  relations $\intord_{ \gamma( t)}$.  Assume
  $x\intord_{ \gamma(t)} y$, then $t\in I^\gamma_x\cap I^\gamma_y$.
  Now, writing $\mcal J^\alpha= \{ J^\alpha_e\mid e\in E_X\}$ and
  $\mcal J^\beta= \{ J^\beta_e\mid e\in E_Y\}$,
  \begin{itemize}
  \item if $I^\gamma_x, I^\gamma_y\in \mcal J^\alpha$, then
    $x\intord_{ \alpha(t)} y$;
  \item if $I^\gamma_x, I^\gamma_y\in \mcal J^\beta$, then
    $x\intord_{ \beta(t)} y$;
  \item if $I^\gamma_x\in \mcal J^\alpha$ and
    $I^\gamma_y\in \mcal J^\beta$, then $x\in P$ and $y\in Q$, hence
    $x\intord_{ P\para Q} y$; and
  \item the case $I^\gamma_x\in \mcal J^\beta$ and
    $I^\gamma_y\in \mcal J^\alpha$ cannot occur: this would imply
    $x\in Q$ and $y\in P$ and hence $y\intord_{ \gamma(t)} x$ instead
    of $x\intord_{ \gamma(t)} y$.
  \end{itemize}
  We have shown that $x\intord_{ \gamma(t)} y$ implies
  $x\intord_{ P\para Q} y$, so this also holds for the transitive
  closure~$\intord_R$.
\end{proof}



\subsection{Language Equivalence is Implied by Bisimulation}

As a final sanity check of our notion of language, we show that
bisimilarity of HDAs implies their language equivalence.  Fahrenberg
\cite{DBLP:conf/fossacs/Fahrenberg05} has introduced a notion of
\emph{hd-bisimilarity} for HDAs which in our setting can be stated as
follows.  An \emph{hd-bisimulation} between HDAs $X$ and $Y$ is a
graded set $R= \bigcup R_n$ with $R_n\subseteq X_n\times Y_n$ such
that
\begin{enumerate}
\item \label{en:bisim.face} $R$ is closed under face maps: for all $(
  x, y)\in R_n$, $i\in[ n]$ and $\nu\in\{ 0, 1\}$, $( \delta_i^\nu x,
  \delta_i^\nu y)\in R_{ n- 1}$;
\item \label{en:bisim.l} $R$ respects labels: for all $( x, y)\in R$,
  $\lambda_X( x)= \lambda_Y( y)$;
\item \label{en:bisim.if} the restrictions $R\cap I_X\times I_Y$ and
  $R\cap F_X\times F_Y$ are bijections;
\item \label{en:bisim.e1} for all $( x, y)\in R$ and any $x'\in X$ and
  $k\in[ \dim x']$ such that $x= \delta_k^0 x'$, there exists
  $y'\in Y$ such that $y= \delta_k^0 y'$ and $( x', y')\in R$;
\item \label{en:bisim.e2} for all $( x, y)\in R$ and any $y'\in Y$ and
  $k\in[ \dim y']$ such that $y= \delta_k^0 y'$, there exists
  $x'\in X$ such that $x= \delta_k^0 x'$ and $( x', y')\in R$.
\end{enumerate}
Hence initial and accepting cells are related bijectively
(\ref{en:bisim.if}), and (\ref{en:bisim.e1}) whenever a cell in $X$
can be extended, then a related extension is available in $Y$, and
vice versa (\ref{en:bisim.e2}).  Finally, $X$ and $Y$ are
\emph{hd-bisimilar} if there exists an hd-bisimulation
$R\subseteq X\times Y$: this is an equivalence relation.

As in \cite{DBLP:conf/fossacs/Fahrenberg05}, we can express
hd-bisimilarity using \emph{open
  maps}~\cite{DBLP:journals/iandc/JoyalNW96}.  We say that an HDA map
$f: X\to Y$ is \emph{open} if $f$ is bijective
on initial and accepting cells and the following \emph{zig-zag
  property} holds for every $x\in X$: if $y'\in Y$ and
$k\in[ \dim y']$ are such that $f( x)= \delta_k^0 y'$, then there
exists $x'\in X$ with $x= \delta_k^0 x'$ and $y'= f( x')$.  The
following is shown in \cite{DBLP:conf/fossacs/Fahrenberg05}.

\begin{lemma}
  HDAs $X$ and $Y$ are hd-bisimilar iff there exists an HDA $Z$ and a
  span of open maps $X\from Z\to Y$. \qed
\end{lemma}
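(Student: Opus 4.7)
The plan is to prove both implications by the standard open map construction for bisimulation, as in Joyal--Nielsen--Winskel~\cite{DBLP:journals/iandc/JoyalNW96}, adapted to the HDA setting: the span $X\from Z\to Y$ will come from a bisimulation by taking $Z$ to be the graded pairing set itself, and conversely from a span by taking the image graph.

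For the forward direction, given an hd-bisimulation $R\subseteq X\times Y$, I would construct $Z$ with $Z_n= R_n$, face maps defined componentwise $\delta_i^\nu( x, y)=( \delta_i^\nu x, \delta_i^\nu y)$ (well-defined by (\ref{en:bisim.face})), $I_Z= R\cap( I_X\times I_Y)$, $F_Z= R\cap( F_X\times F_Y)$, and labeling $\lambda_Z( x, y)= \lambda_X( x)= \lambda_Y( y)$ (well-defined and equal by (\ref{en:bisim.l})). The precubical identities for $Z$ hold because they do in both coordinates, so the coordinate projections $\pi_X:Z\to X$ and $\pi_Y:Z\to Y$ are label-preserving precubical maps. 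Event consistency of $Z$ is obtained by pulling back an event identification of $X$: the composition $\ev_X\circ \pi_X: Z\to \bbang \Sigma$ is an event identification because $\pi_X$ is a precubical map. Openness then follows directly from the bisimulation conditions: bijectivity on initial and accepting cells is (\ref{en:bisim.if}), and the zig-zag property for $\pi_X$ (resp.\ $\pi_Y$) is exactly (\ref{en:bisim.e1}) (resp.\ (\ref{en:bisim.e2})).

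For the backward direction, given open HDA maps $f: Z\to X$ and $g: Z\to Y$, I would take $R$ graded by $R_n=\{( f( z), g( z))\mid z\in Z_n\}$. Condition (\ref{en:bisim.face}) is immediate since $f$ and $g$ are precubical: $(\delta_i^\nu x,\delta_i^\nu y)=(f(\delta_i^\nu z),g(\delta_i^\nu z))$. Condition (\ref{en:bisim.l}) holds because $f$ and $g$ preserve labels, so $\lambda_X( f( z))= \lambda_Z( z)= \lambda_Y( g( z))$. For (\ref{en:bisim.if}), openness gives bijections $f\rest{I_Z}:I_Z\to I_X$ and $g\rest{I_Z}:I_Z\to I_Y$, which together parametrize $R\cap( I_X\times I_Y)$ as the graph of a bijection $I_X\to I_Y$; similarly for accepting cells. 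Conditions (\ref{en:bisim.e1}) and (\ref{en:bisim.e2}) are direct applications of the zig-zag property of $f$ and $g$: given $(f(z),g(z))=(x,y)$ and $x=\delta_k^0 x'$, openness of $f$ yields $z'\in Z$ with $z=\delta_k^0 z'$ and $f(z')=x'$, and then $y':=g(z')$ satisfies $y=\delta_k^0 y'$ with $(x',y')\in R$.

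The only step that is not entirely formal is verifying that the $Z$ built in the forward direction is a genuine HDA, specifically that it is event consistent; the pull-back argument above handles this cleanly by invoking the earlier observation that precubical maps carry event identifications backward. Everything else is a direct one-to-one translation between the five bisimulation clauses and the two clauses (bijectivity on initial/accepting cells and zig-zag) defining openness, so I expect no further obstacles.
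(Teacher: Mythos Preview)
The paper does not actually prove this lemma: it is stated with a trailing \qed and the preceding sentence defers the argument to \cite{DBLP:conf/fossacs/Fahrenberg05}. Your proposal supplies exactly the standard Joyal--Nielsen--Winskel construction that one would expect to find there, and the forward direction is clean; in particular, the pullback of an event identification along $\pi_X$ to secure event consistency of $Z$ is the right observation in this setting.

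One point in the backward direction deserves a closer look. When you set $R_n=\{(f(z),g(z))\mid z\in Z_n\}$ and argue condition~(\ref{en:bisim.if}), you claim that $R\cap(I_X\times I_Y)$ is the graph of the bijection $g\circ (f\rest{I_Z})^{-1}$. This identifies $R\cap(I_X\times I_Y)$ with $\{(f(z),g(z))\mid z\in I_Z\}$, but the definition of $R$ allows pairs $(f(z),g(z))$ with $f(z)\in I_X$ and $g(z)\in I_Y$ coming from some $z\notin I_Z$; nothing in the definition of ``open'' rules this out, since bijectivity is only asserted for $f\rest{I_Z}$ and $g\rest{I_Z}$, not for the full preimages of $I_X$ and $I_Y$. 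Such extra pairs would spoil functionality of the restricted relation. The usual remedy is either to replace $Z$ by its sub-HDA reachable from $I_Z$ (which preserves openness of both legs), or to verify that in the formulation of \cite{DBLP:conf/fossacs/Fahrenberg05} openness is taken in the stronger sense $f^{-1}(I_Y)=I_X$. Either way the gap is routine to close, but as written your argument for (\ref{en:bisim.if}) is incomplete.
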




\begin{therm}
  \label{th:bisim-lang}
  If HDAs $X$ and $Y$ are hd-bisimilar, then $L(X)= L(Y)$.
\end{therm}

\begin{proof}
  It suffices to assume an open HDA map $f: X\to Y$; the inclusion
  $L(X)\subseteq L(Y)$ is then clear by Lemma \ref{le:HDAmap-L}.  For
  the reverse inclusion, let $\sigma=( y_1,\dotsc, y_m)$ be an
  accepting track in $Y$.  By bijectivity
  of $f$ on initial cells there is $x_1\in I_X$ such that
  $f( x_1)= y_1$, and then inductive application of the zig-zag
  property yields a track $\rho=( x_1,\dotsc, x_m)$ in $X$ with
  $f( x_i)= y_i$ for all $i$ and
  $\lambda_X( \rho)= \lambda_Y( \sigma)$, with $x_m\in F_X$ because
  $f$ is bijective on accepting cells.
\end{proof}

In \cite{DBLP:journals/tcs/Glabbeek06}, van Glabbeek introduces a
notion of \emph{ST-bisimilarity} for HDAs which in our notation is
given as follows.  An \emph{ST-bisimulation} between HDAs $X$ and $Y$
is a relation $R$ between tracks in $X$ and $Y$ such that
\begin{enumerate}
\item \label{en:stbis.i} $R$ is a bijection between initial unit
  tracks $\{( x)\mid x\in I_X\}$ and $\{( y)\mid y\in I_Y\}$;
\item \label{en:stbis.f} $R$ respects accepting cells: for all
  $( \rho, \sigma)\in R$ such that $\rho: x\leadsto x'$ and
  $\sigma: y\leadsto y'$, $x'\in F_X$ iff $y'\in F_Y$;
\item \label{en:stbis.l} $R$ respects labels: for all
  $( \rho, \sigma)\in R$, $\ell_X( \rho)= \ell_Y( \sigma)$;
\item \label{en:stbis.e1} for all $( \rho, \sigma)\in R$ and track
  $\rho'$ in $X$ such that $\rho$ and $\rho'$ may be concatenated,
  there exists a track $\sigma'$ in $Y$ such that $( \rho* \rho',
  \sigma* \sigma')\in R$;
\item \label{en:stbis.e2} for all $( \rho, \sigma)\in R$ and track
  $\sigma'$ in $Y$ such that $\sigma$ and $\sigma'$ may be
  concatenated, there exists a track $\rho'$ in $X$ such that
  $( \rho* \rho', \sigma* \sigma')\in R$.
\end{enumerate}
That is, whenever a track in $X$ can be extended, then a related
extension is available in $Y$ and vice versa.  Finally, $X$ and $Y$
are \emph{ST-bisimilar} if there exists an ST-bisimulation $R$ between
them; this is an equivalence relation.

\begin{therm}
  \label{th:stbis-lang}
  If HDAs $X$ and $Y$ are ST-bisimilar, then $L(X)= L(Y)$.
\end{therm}

\begin{proof}
  By symmetry it suffices to show the inclusion $L(X)\subseteq L(Y)$.
  Let $P\in L(X)$, then there is a track $\rho: x\leadsto x'$ in $X$
  with $x\in I_X$, $x'\in F_X$ and $\ell(\rho)= P$.  By
  \eqref{en:stbis.i}, there is $y\in I_Y$ such that the unit tracks
  $(( x),( y))\in R$.  Now $\rho=( x)* \rho$, so using
  \eqref{en:stbis.e1} there exists a track $\sigma: y\leadsto y'$ in
  $Y$ such that $( \rho, \sigma)\in R$, but then by
  \eqref{en:stbis.f}, $y'\in F_Y$.  Hence $\sigma$ is an accepting
  track in $Y$, and by \eqref{en:stbis.l},
  $\ell( \sigma)= \ell( \rho)= P$, so that $P\in L(Y)$.
\end{proof}

In \cite{DBLP:journals/tcs/Glabbeek06}, other notions of
\emph{history-preserving} and \emph{hereditary} history-preserving
bisimilarity for HDAs are introduced; both imply ST-bisimilarity and,
thus, language equivalence.

\bibliographystyle{plain}
\bibliography{mybib}

\end{document}